\documentclass[11pt,english]{article}
\usepackage[T1]{fontenc}
\usepackage[utf8]{inputenc}
\usepackage{xcolor}
\usepackage{babel}
\usepackage{float}
\usepackage{booktabs}
\usepackage{mathrsfs}
\usepackage{mathtools}
\usepackage{amsmath}
\usepackage{amsthm}
\usepackage{amssymb}
\usepackage{graphicx}
\usepackage{geometry}
\usepackage{setspace}
\usepackage{wasysym}
\usepackage[authoryear]{natbib}
\usepackage[pdfusetitle,hypertexnames=false,
 bookmarks=true,bookmarksnumbered=false,bookmarksopen=false,
 breaklinks=true,pdfborder={0 0 1},backref=false,colorlinks=true]
 {hyperref}
\hypersetup{
 pdfborderstyle=,linkcolor=blue,citecolor=blue}

\makeatletter

\providecommand{\tabularnewline}{\\}

\@ifundefined{date}{}{\date{}}
\usepackage{babel}
\allowdisplaybreaks

\usepackage{bbm}

\newtheoremstyle{remboldstyle}
  {}{}{}{}{\bfseries}{.}{.5em}{{\thmname{#1 }}{\thmnumber{#2}}{\thmnote{ (#3)}}}
\theoremstyle{remboldstyle}
\newtheorem{rembold}{Remark}

\providecommand{\algorithmname}{Algorithm}

\usepackage{appendix}
\appendixtitleon

\usepackage{scalerel}
\usepackage{stackengine}\stackMath
\newcommand{\reallywidecheck}[1]{%
\savestack{\tmpbox}{\stretchto{%
  \scaleto{%
    \scalerel*[\widthof{\ensuremath{#1}}]{\kern-.6pt\bigwedge\kern-.6pt}%
    {\rule[-\textheight/2]{1ex}{\textheight}}%WIDTH-LIMITED BIG WEDGE
  }{\textheight}% 
}{0.5ex}}%
\stackon[1pt]{#1}{\scalebox{-1}{\tmpbox}}%
}
\allowdisplaybreaks

\renewenvironment{proof}[1][\proofname]{%
  \par\pushQED{\qed}\normalfont%
  \topsep6\p@\@plus6\p@\relax
  \trivlist\item[\hskip\labelsep\bfseries#1\@addpunct{.}]%
  \ignorespaces
}{%
  \popQED\endtrivlist\@endpefalse
}

\newtheoremstyle{ntnboldstyle}
  {}{}{}{}{\bfseries}{.}{.5em}{{\thmname{#1}}{\thmnote{(#2)}}}
\theoremstyle{ntnboldstyle}

\def\@seccntformat#1{\@ifundefined{#1@cntformat}%
   {\csname the#1\endcsname\quad}  % default
   {\csname #1@cntformat\endcsname}% enable individual control
}
\let\oldappendix\appendix %% save current definition of \appendix
\renewcommand{\appendix}{%
    \oldappendix
    \newcommand{\section@cntformat}{\appendixname~\thesection\quad}
}

\theoremstyle{plain}
\providecommand{\assumptionname}{Assumption}
\providecommand{\lemmaname}{Lemma}
\providecommand{\theoremname}{Theorem}

\theoremstyle{plain}
\providecommand{\algorithmname}{Algorithm}
\newtheorem{assumption}{\protect\assumptionname}\newtheorem{lyxalgorithm}{\protect\algorithmname}

\providecommand{\corollaryname}{Corollary}

\makeatother

\providecommand\theoremname{Theorem}
\theoremstyle{plain}
\newtheorem{thm}{\protect\theoremname}
\providecommand\corollaryname{Corollary}
\newtheorem{cor}{\protect\corollaryname}
\providecommand\lemmaname{Lemma}
\newtheorem{lem}{\protect\lemmaname}

\makeatletter
\let\GRDmaketitle\maketitle
\let\GRDAtmaketitle\@maketitle
\let\GRDthanks\thanks
\let\GRDtitle\title
\let\GRDauthor\author
\let\GRDdate\date
\let\GRDand\and
\makeatother
\begin{document}
\title{\textbf{Generic Covariate Adjustment for }\\
\textbf{Regression Discontinuity Designs\thanks{This version: \today. We thank Matias Cattaneo, Vadim Marmer, Taisuke Otsu and Toru Kitagawa
for their helpful comments. All errors are ours. Jun Ma acknowledges
the financial support from the National Natural Science Foundation
of China (grant numbers 72394392 and 72673194). Zhengfei Yu acknowledges
the financial support from JSPS KAKENHI (grant numbers 21K01419 and
25K05032).}}}
\author{ Jun Ma\thanks{Jun Ma. School of Economics, Renmin University of China. Email: jun.ma@ruc.edu.cn (corresponding author)}
\and Yuya Sasaki\thanks{Yuya Sasaki. Department of Economics, Vanderbilt University.}
\and Zhengfei Yu\thanks{Zhengfei Yu. Graduate School of Economics, The University of Osaka.}}
\date{}
\maketitle
\begin{abstract}
It is standard practice to include covariates in regression discontinuity
designs (RDDs) and regression kink designs (RKDs), but the theoretical
justification for doing so does not generally extend beyond linear
estimands. This paper proposes a novel entropy balancing reweighting
approach for covariate adjustment within a general framework of RDDs
and RKDs. While conventional regression-based covariate adjustment
methods generally fail to deliver consistent estimation for nonlinear
estimands such as quantile treatment effects, our reweighting approach
achieves consistency while improving efficiency. Moreover, even in
settings where the regression-based covariate adjustment method already
improves efficiency, our approach can deliver additional efficiency
gains. Simulation studies corroborate these theoretical findings.
We present an empirical application in which our covariate adjustment
yields statistically significant results that would not be obtained
without covariate adjustment.

\medskip{}
\noindent \textbf{JEL Codes:} C14, C21 \\
 \textbf{Keywords:} covariate adjustment, entropy balancing, regression
discontinuity/kink design 
\end{abstract}
%%%%%%%%%%%%%%%%%%%%%%%%%%%%%%%%%%%%%%%%%%%%%%%%%%%%%%%%%%%%%

\section{Introduction}

%%%%%%%%%%%%%%%%%%%%%%%%%%%%%%%%%%%%%%%%%%%%%%%%%%%%%%%%%%%%%
It has long been standard practice to include covariates when estimating
model or causal parameters in empirical research in economics. This
convention is largely motivated by the desire to mitigate omitted
variable bias in regression models. Beyond linear regression, the
practice of including covariates has naturally extended to other econometric
frameworks, including event studies, instrumental variables designs,
randomized controlled trials, and regression discontinuity designs
(RDDs). However, incorporating covariates is not always theoretically
justified in econometric settings beyond correctly specified linear
models. In some cases, it may even lead to undesirable consequences
despite the use of otherwise credible empirical designs. RDDs are
no exception.

This paper develops a general framework for covariate adjustment in
RDDs \citep*{hahn2001identification} and regression kink designs
\citep*[RKDs;][]{Card:2015gc}; {\color{black}see \citet{Cattaneo2022} for a recent overview of the RDD literature.} Rather than simply including covariates
additively in local-to-cutoff regressions, we advocate reweighting
the sample objective to achieve covariate balance. As we show, this
balancing/reweighting approach delivers more robust and more efficient
performance across a broad class of RDDs and RKDs than the regression-based or additive
covariate adjustment method.

Specifically, we demonstrate the theoretical properties of our reweighting-based
covariate adjustment approach in three settings. First, for linear estimands in levels, such as the sharp RDD estimand, our reweighting
method achieves the same first-order efficiency as the regression-based
covariate adjustment method. Second, for nonlinear estimands, such
as the quantile RDD estimand, the conventional regression-based adjustment
approach generally fails to deliver consistent estimation. In contrast,
our reweighting approach yields consistent estimation and improves
efficiency by exploiting covariate information. Third, for derivative
estimation, such as the RKD estimand, the regression-based adjustment
approach delivers consistent estimation and enjoys efficiency gains
by exploiting derivative restrictions on the covariates. Our reweighting
approach further enhances efficiency by additionally exploiting the
level restrictions on the covariates.

%Balancing weights are typically constructed so that the weighted sample moments of the covariates within the treatment and control subsamples match the corresponding unweighted sample moments of the full sample.
In RDDs and RKDs, balancing weights are constructed so that the weighted
sample moments of the covariates coincide on both sides of the cutoff.
Among the many possible balancing weights, we focus on the generalized
entropy balancing (EB) weights,\footnote{For conciseness, we omit the term ``generalized'' hereafter and
simply refer to it as EB.} defined as the weights closest to the uniform weights in the sense
of minimizing the Cressie--Read divergence subject to the balancing
constraints. The Cressie--Read divergence includes as a special case
the relative entropy used by \citet{Hainmueller2012}. We then propose
replacing the uniform weights in the objective function with the EB
weights to achieve the desired properties of covariate adjustment
in a broad class of RDD and RKD frameworks.

While we employ the EB weights, our objective differs fundamentally
from that of \citet{Hainmueller2012}. On the one hand, \citet{Hainmueller2012}
uses EB weights to mitigate selection bias under the unconfoundedness
assumption by balancing covariate moments between the treatment and
control groups. On the other hand, we use EB weights to improve estimation
efficiency. In the context of RDDs and RKDs, a correctly specified
design already accounts for selection bias through the research design
itself. Consequently, correcting for selection bias is not our primary
concern. Instead, we exploit EB weights to improve efficiency while
preserving the identifying assumptions of the design. {\color{black}Indeed, as \citet{Cattaneo2022} note, covariates cannot restore nonparametric identification in RDDs, but they can improve precision.} Our paper thus
provides a complementary contribution to the EB literature.

We are not the first to study the theoretical properties of covariate
adjustment in the RDD framework. In particular, \citet*[CCFT,][]{calonico2019regression}
establish the consistency of covariate-adjusted RDD estimators and
develop valid inference procedures. 
%{\color{red}As \citet{Cattaneo2022} note, covariates cannot restore nonparametric identification in RDDs, but they can improve precision.} 
CCFT's regression-based adjustment
generally improves estimation efficiency. For linear estimands in
levels, such as the sharp RDD estimand, our proposed reweighting approach
is first-order equivalent to the regression-based covariate adjustment
of CCFT. However, the regression-based approach does not naturally
extend to nonlinear estimands such as the quantile RDD estimand, where
it may fail to deliver consistent estimation. In contrast, our proposed
reweighting approach continues to apply in nonlinear settings, delivering
consistent estimation while improving efficiency through the use of
covariate information. In this sense, our paper extends the framework
pioneered by CCFT by providing a generic covariate adjustment method
that applies not only to linear estimands but also to a broad class
of nonlinear RDD/RKD estimands including those studied by \citet*{Babii2023},
\citet*{Chiang2019}, \citet*{Huang2022}, \citet*{Qu2019}, \citet*{Qu2024a},
and \citet*{xu2017regression,Xu2018}.

In addition, as discussed above, our reweighting approach enjoys a
further efficiency advantage in the linear setting when the parameter
of interest involves derivatives, as in \citet{Card:2015gc} and \citet{Dong2015}.
Taken together, these results show that our proposed method provides
a generic framework for covariate adjustment across a broad class
of RDD and RKD models, delivering gains whenever they are available
while never sacrificing first-order efficiency in settings where the
conventional approach is already optimal. The proofs of all theoretical results are collected in the supplement.\footnote{The supplement is available at \url{https://ruc-econ.github.io/generic_RD_supp.pdf}.} \color{black}

\bigskip{}
\noindent \textbf{Notation.} $\sum_{i}$ is understood as $\sum^{n}_{i=1}$.
``$a\coloneqq b$'' means that $a$ is defined by $b$. For any
$k$-times differentiable univariate function $f$, let $f^{\left(k\right)}$
denote the $k$-th order derivative. Let $\mathbbm{1}\left(\cdot\right)$
denote the indicator function. For a $d$-dimensional vector $x$,
let $x^{\top}$ denote its transpose and $\left\Vert x\right\Vert $
denote its Euclidean norm. $x^{\otimes2}$ denotes a vector of the
distinct entries of $x\otimes x$ ($x^{\otimes2}\coloneqq\mathrm{vech}\left(xx^{\top}\right)$,
where $\mathrm{vech}\left(\cdot\right)$ denotes half vectorization).
Let $0_{J}$ denote the $J$-dimensional vector all of whose elements
are zero. For a real-valued function $f:\mathcal{X}\rightarrow\mathbb{R}$,
let $\left\Vert f\right\Vert _{\infty}\coloneqq\mathrm{sup}_{x\in\mathcal{X}}\left|f\left(x\right)\right|$
denote the sup-norm. $\ell^{\infty}[a,b]$ denotes
the family of bounded real-valued functions on $[a,b]$. Let
$\mathrm{e}_{k,j}$ denote the $j$-th unit vector in $\mathbb{R}^{k}$.
Let $\left[a\pm b\right]$ denote the interval $\left[a-b,a+b\right]$.
For $N\in\mathbb{N}$, $\left[N\right]$ denotes $\left\{ 1,\ldots,N\right\} $. Let $\lceil x\rceil$ denote the smallest integer greater than or equal to $x\in \mathbb{R}$.
The standard normal probability density function (PDF) and cumulative
distribution function (CDF) are denoted by $\phi$ and $\varPhi$,
respectively. Let $\rightsquigarrow$ denote convergence in distribution
in the general sense \citep[Section 18.2]{VanDerVaart1998}.

Let $X\in\mathbb{R}$ be a continuous score supported on $\left[\underline{x},\overline{x}\right]$.
Let $f_{X}$ denote its density function. Without loss of generality,
we normalize the cutoff to zero, so that $0\in\left(\underline{x},\overline{x}\right)$.
Denote $\varphi\coloneqq f_{X}\left(0\right)$ for simplicity. For
a random vector (or matrix) $A$, denote $\mu_{A}\left(x\right)\coloneqq\mathrm{E}\left[A\mid X=x\right]$
and let $\mu^{\left(k\right)}_{A}\left(\cdot\right)$ be its $k$-th
derivative. Denote $\mu^{\left(k\right)}_{A,-}\coloneqq\lim_{x\uparrow0}\mu^{\left(k\right)}_{A}\left(x\right)$
and let $\mu^{\left(k\right)}_{A,+}$ be defined similarly with $\lim_{x\uparrow0}$
replaced by $\lim_{x\downarrow0}$. For simplicity, also denote $\mu_{A,+},\mu_{A,-}\coloneqq\mu^{\left(0\right)}_{A,+},\mu^{\left(0\right)}_{A,-}$.
Let $\mu_{A,\pm}\coloneqq\mu_{A,+}+\mu_{A,-}$. For random vectors
$A$ and $A'$, $\mathrm{Var}_{+}\left[A\right]$ is understood as
$\lim_{x\downarrow0}\mathrm{Var}\left[A\mid X=x\right]$ and $\mathrm{Cov}_{+}\left[A,A'\right]$
is understood as $\lim_{x\downarrow0}\mathrm{Cov}\left[A,A'\mid X=x\right]$.
$\mathrm{Var}_{-}\left[A\right]$ and $\mathrm{Cov}_{-}\left[A,A'\right]$
are defined analogously. Also, for notational simplicity, let $\mathrm{Var}_{\pm}\left[A\right]\coloneqq\mathrm{Var}_{+}\left[A\right]+\mathrm{Var}_{-}\left[A\right]$
and $\mathrm{Cov}_{\pm}\left[A,A'\right]\coloneqq\mathrm{Cov}_{+}\left[A,A'\right]+\mathrm{Cov}_{-}\left[A,A'\right]$.
$\mathrm{Var}_{0}\left[A\right]$ and $\mathrm{Cov}_{0}\left[A,A'\right]$
are understood as $\mathrm{Var}\left[A\mid X=0\right]$ and $\mathrm{Cov}\left[A,A'\mid X=0\right]$.
Let $\mathrm{supp}\left(A\right)$ and $\mathrm{supp}\left(A\mid X=x\right)$
denote the unconditional and conditional supports of $A$. Quantile
RDD objects carry the argument $\tau$ and the argument-free symbol
always denotes the mean RDD object.

%%%%%%%%%%%%%%%%%%%%%%%%%%%%%%%%%%%%%%%%%%%%%%%%%%%%%%%%%%%%%

\section{Generic covariate adjustment framework}

%%%%%%%%%%%%%%%%%%%%%%%%%%%%%%%%%%%%%%%%%%%%%%%%%%%%%%%%%%%%%

%In observational studies, the balancing weights satisfy the requirement that the weighted sample moments of the covariates within the control (treatment) group match the unweighted sample moments of the covariates of all units.  Among all balancing weights, \citet{Hainmueller2012} defines the entropy balancing (EB) weights as those being as close as possible to the uniform weights in the sense of minimal relative entropy, i.e., the Kullback-Leibler (KL) divergence. \citeauthor{Hainmueller2012} replaces the uniform weights in simple sample means with the EB weights. \yuya{I have moved the above paragraph to intro.}

Let $Y$ denote the outcome variable and $D\in\left\{ 0,1\right\} $
the binary treatment indicator. Let $d_{z}$ denote the number of
observed predetermined covariates collected in the vector $Z$. The
variables in $Z$ can be continuous, discrete or mixed. We observe
$\left(Y,D,Z\right)$ and the score $X$. Let the data $\left\{ \left(Y_{i},D_{i},X_{i},Z_{i}\right)\right\} ^{n}_{i=1}$
be i.i.d. copies of $\left(Y,D,X,Z\right)$. In a sharp RDD model,
the treatment is assigned if $X\geq0$. Let $\left(Y\left(0\right),Y\left(1\right)\right)$
be the potential outcomes with or without treatment. Likewise, let
$\left(Z\left(0\right),Z\left(1\right)\right)$ be potential covariates.
The assumption $Z\left(1\right)=Z\left(0\right)=Z$ formalizes the
condition that the covariates in $Z$ are predetermined. If both $\mu_{Y\left(1\right)}\left(\cdot\right)$
and $\mu_{Y\left(0\right)}\left(\cdot\right)$ are continuous at 0,
then the conditional average treatment effect for individuals with
zero score $\mathrm{E}\left[Y\left(1\right)-Y\left(0\right)\mid X=0\right]$
is identified as $\vartheta\coloneqq\mu_{Y,+}-\mu_{Y,-}$, the (sharp)
RDD estimand. Consider the following assumption.

\begin{assumption} \label{assu:smoothness Z}(i) $\mu_{Z}\left(\cdot\right)$
is continuous at 0. (ii) $f_{X}$ is continuous at 0 and $\varphi\coloneqq f_{X}\left(0\right)>0$.
\end{assumption}

Assumption \ref{assu:smoothness Z}(i) imposes the local version of
the standard covariate balance condition, which can be represented
as $\mu_{Z,+}=\mu_{Z,-}$. %The two continuity requirements are testable implications of the identifying assumptions of the regression discontinuity design. 
Identification relies on the continuity of the conditional expectations
of the potential outcomes at the cutoff \citep{hahn2001identification},
a hypothesis that is itself untestable. As \citet{lee2008randomized}
argues, however, this continuity has an observable counterpart: when
agents cannot precisely manipulate the score, units just above and
just below the cutoff are comparable, so the distribution of the predetermined
covariates, and in particular its mean $\mu_{Z}\left(\cdot\right)$,
varies continuously through the cutoff, as does the density $f_{X}$.
Continuity of $\mu_{Z}\left(\cdot\right)$ and of $f_{X}$ is thus
often considered an implication of a valid design, rather than an
additional restriction placed upon it. This is precisely why predetermined
covariates and the score density serve as the routine falsification
checks of RDD validity: because a valid design implies that they exhibit
no discontinuity at the cutoff, $\mu_{Z,+}=\mu_{Z,-}$ is the null
hypothesis of the standard covariate-balance tests \citep{lee_lemieux_2010_RD,Cattaneo2019},
while the continuity of $f_{X}$ is the null of the density (manipulation)
test \citep{mccrary2008manipulation}. Assumption \ref{assu:smoothness Z},
therefore, incorporates the empirically verifiable content of the
RDD validity, and its plausibility can be assessed directly from the
data. Because verifying covariate balance and the smoothness of the
score density is already standard practice in credible RDD analysis,
our covariate adjustment method demands nothing that the empirical
researcher is not expected to check anyway. It imposes no assumptions
beyond those a conventional RDD study must already maintain.

%%%%%%%%%%%%%%%%%%%%%%%%%%%%%%%%%%%%%%%%%%%%%%%%%%%%%%%%%%%%%

\subsection{Balancing weights}

%%%%%%%%%%%%%%%%%%%%%%%%%%%%%%%%%%%%%%%%%%%%%%%%%%%%%%%%%%%%%
Now we elaborate on the EB and reweighting approach to covariate adjustment
in general RDD frameworks. Let $K\left(\cdot\right)$ denote a kernel
function satisfying the following assumption.

\begin{assumption} \label{assu:kernel}$K\left(\cdot\right)$ is
a symmetric continuous PDF supported on $\left[-1,1\right]$. \end{assumption}

Let $h$ denote the bandwidth. We assume that $h=h_{n}$ decreases
with the sample size $n$. For notational simplicity, we suppress
the dependence of $h$ on $n$. We drop the subscript $i$ when we
refer to population-level counterparts. Let $p\geq1$ be the integer-valued
local polynomial (LP) order. Let $r_{p}\left(t\right)\coloneqq\left(1,t,\ldots,t^{p}\right)^{\top}$. Denote 
\begin{equation}
\Pi_{+,p}\left(h\right)\coloneqq\frac{1}{nh}\sum_{i}r_{p}\left(\frac{X_{i}}{h}\right)r^{\top}_{p}\left(\frac{X_{i}}{h}\right)K\left(\frac{X_{i}}{h}\right)I_{i},\label{eq:PI -}
\end{equation}
where $I_{i}\coloneqq\mathbbm{1}\left(X_{i}>0\right)$. Let $\Pi_{-,p}\left(h\right)$
be defined similarly by the right-hand side of (\ref{eq:PI -}) with
$I_{i}$ replaced by $\mathbbm{1}\left(X_{i}<0\right)$. Let 
\begin{equation}
W_{+,p,i}\left(h\right)\coloneqq\mathrm{e}^{\top}_{p+1,1}\Pi^{-1}_{+,p}\left(h\right)r_{p}\left(\frac{X_{i}}{h}\right)K\left(\frac{X_{i}}{h}\right)I_{i}.\label{eq:regression weight}
\end{equation}
Let $W_{-,p,i}\left(h\right)$ be defined similarly by the right-hand
side of (\ref{eq:regression weight}) with $I_{i}$ and $\Pi_{+,p}\left(h\right)$
replaced by $\mathbbm{1}\left(X_{i}<0\right)$ and $\Pi_{-,p}\left(h\right)$,
respectively. Let $W_{p,i}\left(h\right)\coloneqq W_{+,p,i}\left(h\right)-W_{-,p,i}\left(h\right)$.

Denote $\bar{Z}_{i}\coloneqq\left(1,Z^{\top}_{i}\right)^{\top}$ and
$V_{p,i}\left(h\right)\coloneqq W_{p,i}\left(h\right)\bar{Z}_{i}$.
The EB approach looks for balancing weights closest to the uniform
weights, where ``closeness'' is measured by the Cressie--Read (CR)
divergence 
\[
\mathit{CR}_{\varrho,n}\left(w_{1},\ldots,w_{n}\right)\coloneqq\frac{1}{n\varrho\left(1+\varrho\right)}\sum_{i}\left\{ \left(nw_{i}\right)^{-\varrho}-1\right\} 
\]
from $\left(w_{1},\ldots,w_{n}\right)$ to the uniform weights $\left(1/n,\ldots,1/n\right)$,
indexed by a divergence parameter $\text{\ensuremath{\varrho}}\in\mathbb{R}$.
We define EB weights $\left(\widehat{w}_{p,1}\left(h\right),\ldots,\widehat{w}_{p,n}\left(h\right)\right)$
as the solution to the following minimum relative entropy
problem: 
\begin{align}
 & \underset{w_{1},\ldots,w_{n}}{\min}\,\mathit{CR}_{\varrho,n}\left(w_{1},\ldots,w_{n}\right)\nonumber \\
 & \textrm{subject to }\sum_{i}w_{i}V_{p,i}\left(h\right)=0_{d_{z}+1},\ensuremath{\sum_{i}w_{i}=1}\textrm{ and }w_{i}\geq0,\textrm{ }\forall i.\label{eq:generalized balancing}
\end{align}
%The objective function in \eqref{eq:generalized balancing} is the CR divergence from $\left(w_{1},...,w_{n}\right)$ to the uniform weights $\left(1/n,...,1/n\right)$. 
When $\varrho=0$, the objective function, defined as the limit as
$\varrho\rightarrow0$, is $-\sum_{i}\mathrm{log}\left(n\cdot w_{i}\right)/n$,
the Kullback--Leibler (KL) divergence from the uniform weights to $\left(w_{1},\ldots,w_{n}\right)$. Similarly, when $\varrho=-1$,
the objective function is the KL divergence $\sum_{i}w_{i}\mathrm{log}\left(n\cdot w_{i}\right)$
from $\left(w_{1},\ldots,w_{n}\right)$ to the uniform weights,
and corresponds to the relative entropy in \citet{Hainmueller2012}.

By solving the minimization problem \eqref{eq:generalized balancing},
we find the set of weights with the least CR divergence from the uniform
weights among all balancing weights. The uniform weights satisfy the
important finite-sample property $\sum_{i}W_{p,i}\left(h\right)=0$.
Imposing $\sum_{i}w_{i}\begin{array}{c}
W_{p,i}\left(h\right)\end{array}=0$ in the constraint of \eqref{eq:generalized balancing} ensures that
the balancing weights satisfy the same property. The balancing weights
should also satisfy $\sum_{i}w_{i}\begin{array}{c}
W_{+,p,i}\left(h\right)Z_{i}\end{array}=\sum_{i}w_{i}\begin{array}{c}
W_{-,p,i}\left(h\right)Z_{i}\end{array}$. This requires that the (kernel-weighted) local averages of the covariates
on both sides of the cutoff coincide in finite samples under the new
weights.

%In the setup of \citet{Hainmueller2012}, the population moments of covariates in the control or treatment group may not be the same as the unconditional population moments, since the treatment status is not independent from the covariates. \citet{Hainmueller2012} proposed reweighting using EB weights under which some intentionally misspecified (biased) balancing constraints to correct for the selection bias. In our case, the balancing constraints are correctly specified and our reweighting aims at enhancing efficiency. The smoothness assumption in Assumption \ref{assu:smoothness Z} implies a restriction $\mu_{Z,+}=\mu_{Z,-}$ on the population distribution of the observed covariates. The EB weights explicitly exploit such information from the covariates. \yuya{The above paragraph should be a remark.}

We use strong duality and concentrate out $\left(w_{1},\ldots,w_{n}\right)$
and the Lagrange multiplier for $\ensuremath{\sum_{i}w_{i}=1}$. We
obtain the concentrated Lagrangian dual problem 
\begin{eqnarray}
 &  & \max_{\lambda}\frac{1}{\varrho(1+\varrho)}\left\{ \left(\frac{1}{n}\sum_{i}(1+V^{\top}_{p,i}\left(h\right)\lambda)^{\varrho/(1+\varrho)}\right)^{1+\varrho}-1\right\} \label{eq:dual}\nonumber \\
 &  & \textrm{subject to }1+V^{\top}_{p,i}\left(h\right)\lambda>0\textrm{ for all }i=1,\ldots,n.
\end{eqnarray}
The dual characterization of the optimal weights is given by 
\begin{equation}
\widehat{w}_{p,i}\left(h\right)=\frac{(1+V^{\top}_{p,i}\left(h\right)\widehat{\lambda}_{p}\left(h\right))^{-1/(1+\varrho)}}{\sum_{j}(1+V^{\top}_{p,j}\left(h\right)\widehat{\lambda}_{p}\left(h\right))^{-1/(1+\varrho)}},\,i=1,2,\ldots,n,\label{eq:balancing weights}
\end{equation}
where $\widehat{\lambda}_{p}\left(h\right)$ denotes the optimizer
of the convex optimization problem (\ref{eq:dual}). Computing the
EB weights requires solving the well-understood convex problem \eqref{eq:dual},
e.g., by Newton's method. The domain of its objective function is
the convex set $\left\{ \lambda:1+V^{\top}_{p,i}\left(h\right)\lambda>0\textrm{ for all }i\right\} $.
The algorithm should either take these positivity constraints into
account or use a modified objective function defined for all $\lambda$.

When $\varrho=-1$, the concentrated dual problem can be simplified
to 
\[
\min_{\lambda}\,\sum_{i}\exp\bigl(-V^{\top}_{p,i}\left(h\right)\lambda\bigr)
\]
with no positivity constraint, and in this case, the optimal weights
are given by 
\[
\frac{\exp(-V^{\top}_{p,i}\left(h\right)\widehat{\lambda}_{p}\left(h\right))}{\sum_{j}\exp(-V^{\top}_{p,j}\left(h\right)\widehat{\lambda}_{p}\left(h\right))},\text{ for }i=1,2,\ldots,n.
\]

When $\varrho=-2$, the CR divergence is proportional to the square
of the Euclidean distance between $\left(w_{1},\ldots,w_{n}\right)$
and the uniform weights. Problem \eqref{eq:dual} is then a quadratic
program. Let 
\begin{equation}
\widetilde{\lambda}_{p}\left(h\right)\coloneqq-\left(\sum_{i}V_{p,i}\left(h\right)V^{\top}_{p,i}\left(h\right)\right)^{-1}\left(\sum_{i}V_{p,i}\left(h\right)\right)\label{eq:unconstrained multiplier}
\end{equation}
be the closed-form optimizer of the unconstrained version. By plugging
$\widetilde{\lambda}_{p}\left(h\right)$ into the right-hand side
of \eqref{eq:balancing weights}, we obtain balancing weights 
\begin{equation}
\widetilde{w}_{p,i}\left(h\right)\coloneqq\frac{1+V^{\top}_{p,i}\left(h\right)\widetilde{\lambda}_{p}\left(h\right)}{\sum_{j}(1+V^{\top}_{p,j}\left(h\right)\widetilde{\lambda}_{p}\left(h\right))},\,i=1,2,\ldots,n,\label{eq:w_til balancing weights}
\end{equation}
that do not require numerical optimization. The weights in \eqref{eq:w_til balancing weights}
satisfy the positivity constraints with probability approaching one.

\begin{rembold}For implementation, we recommend
that practitioners select the bandwidth $h$ first, following the
detailed guidelines in the \hyperref[app:bdw]{Appendix}. The formula for the
asymptotic mean squared error (AMSE)-optimal bandwidth incorporates
the covariate information, but the pilot estimates of the unknown
quantities entering it do not require the EB weights. Bandwidth and
weights can therefore be determined sequentially. Given the selected
$h$, the EB weights are obtained either by solving the dual problem
\eqref{eq:dual} and substituting its optimizer $\widehat{\lambda}_{p}\left(h\right)$
into \eqref{eq:balancing weights}, or, when $\varrho=-2$, by evaluating
the closed form \eqref{eq:w_til balancing weights}.\end{rembold}

%%%%%%%%%%%%%%%%%%%%%%%%%%%%%%%%%%%%%%%%%%%%%%%%%%%%%%%%%%%%%

\subsection{Our proposed approach}

\label{subsec:proposed_approach} %%%%%%%%%%%%%%%%%%%%%%%%%%%%%%%%%%%%%%%%%%%%%%%%%%%%%%%%%%%%%

RDD estimators generally correspond to the regression coefficient of $I_{i}$
in optimization problems of the form 
\begin{equation}
\underset{b_{0},b_{1}}{\min}\sum_{i}K\left(\frac{X_{i}}{h}\right)l\left(Y_{i}-r^{\top}_{p}\left(X_{i}\right)b_{0}-I_{i}\cdot r^{\top}_{p}\left(X_{i}\right)b_{1}\right),\label{eq:general_objective_uniform_weights}
\end{equation}
where $l\left(\cdot\right)$ is a loss function. For example, $l\left(\cdot\right)$ is the squared loss
for the mean RDD (as in Section \ref{subsec:cov_rd}), while it is
the check loss for the quantile RDD (as in Section \ref{subsec:Sharp-quantile-RD}).
To incorporate covariates within this general framework, we propose
replacing the uniform weights in \eqref{eq:general_objective_uniform_weights}
with the EB weights $\widehat{w}_{p,i}\left(h\right)$ given by \eqref{eq:balancing weights}.
Specifically, we consider 
\begin{equation}
\underset{b_{0},b_{1}}{\min}\sum_{i}\widehat{w}_{p,i}\left(h\right)K\left(\frac{X_{i}}{h}\right)l\left(Y_{i}-r^{\top}_{p}\left(X_{i}\right)b_{0}-I_{i}\cdot r^{\top}_{p}\left(X_{i}\right)b_{1}\right).\label{eq:general_objective_balancing_weights}
\end{equation}
Alternatively, the weights $\widehat{w}_{p,i}\left(h\right)$ in \eqref{eq:general_objective_balancing_weights}
may be replaced by $\widetilde{w}_{p,i}\left(h\right)$ given in \eqref{eq:w_til balancing weights}.

As discussed below, this reweighting can improve estimation efficiency.
The magnitude of the efficiency gain relative to existing regression-based
covariate adjustment methods depends on the specific setting. We distinguish
three main cases. First, for linear estimands in levels, such as the
sharp RDD estimand, the regression-based covariate adjustment achieves
the same first-order asymptotic efficiency as our proposed reweighting
approach; see Section \ref{subsec:cov_rd}.

Second, for nonlinear estimands, such as the quantile RDD estimand,
a naïve extension of the regression-based covariate adjustment approach
generally fails to deliver consistent estimation. In contrast, our
proposed reweighting approach not only yields consistent estimation,
but also improves efficiency by exploiting covariate information;
see Section \ref{subsec:Sharp-quantile-RD}.

Third, for linear estimands in derivatives, such as the RKD estimand,
the regression-based covariate adjustment approach yields consistent
estimation and can improve efficiency by exploiting derivative restrictions
on the covariates. Specifically, the regression-based
covariate adjustment solves the optimization problem \eqref{eq:CCFT estimator}
and takes the coefficient of $I_{i}X_{i}$. However, it does not exploit
the information contained in the level restrictions. In contrast,
by reweighting \eqref{eq:CCFT estimator} in Section \ref{subsec:cov_rd}
below with the EB weights $\widehat{w}_{p,i}\left(h\right)$ in \eqref{eq:balancing weights}
or $\widetilde{w}_{p,i}\left(h\right)$ in \eqref{eq:w_til balancing weights},
our proposed approach further enhances efficiency by utilizing both
the level and derivative restrictions; see Section \ref{subsec:Sharp-regression-kink}.

In summary, our proposed procedure delivers efficiency gains for nonlinear
estimands and in settings where the estimand involves derivatives.
At the same time, it incurs no efficiency loss relative to the regression-based
adjustment for linear estimands in levels. Consequently, it provides
a generically applicable framework for covariate adjustment across
RDD and RKD settings.

\begin{rembold}\citet{Hainmueller2012} introduced the EB weights
to mitigate selection bias by balancing covariate moments between
the treatment and control groups. While we employ the same weighting
device, our objective is fundamentally different. Because RDDs and
RKDs address selection through the research design itself, we do not
use EB weights for selection bias correction. Instead, as summarized
above, we use them to improve estimation efficiency.\end{rembold}

%\yuya{I have inserted this small subsection above that summarizes our general proposal. Without this subsection, readers would be lost about what our main estimator is.}

%%%%%%%%%%%%%%%%%%%%%%%%%%%%%%%%%%%%%%%%%%%%%%%%%%%%%%%%%%%%%

\section{Linear estimands in levels: sharp RDD}

\label{subsec:cov_rd} %%%%%%%%%%%%%%%%%%%%%%%%%%%%%%%%%%%%%%%%%%%%%%%%%%%%%%%%%%%%%
In this section, we compare our proposed covariate adjustment approach
with the existing regression-based approach in the sharp RDD. We show
that, in this setting, the regression-based approach attains the same
first-order efficiency as our proposed method. %To fix ideas, we focus throughout this section on the standard sharp RDD with covariates.
\color{black}

Let $B\coloneqq\left(Y,Z^{\top}\right)^{\top}$. The following assumption
is imposed on the population distribution of the observed variables.
Let $\mathbb{B}\subseteq\left[\underline{x},\overline{x}\right]$
denote a neighborhood of 0 and $\mathbb{B}_{0}\coloneqq\mathbb{B}\setminus\left\{ 0\right\} $.
Let $f_{Z\mid X}\left(\cdot\mid x\right)$ denote the conditional
density of $Z$ given $X=x$ with respect to some dominating measure
$m$.

\begin{assumption} \label{assu:data generating process}(i) $\mu_{B}\left(\cdot\right)$
is $\left(p+1\right)$-times continuously differentiable on either side of $\mathbb{B}_{0}$
and $\mu^{\left(p+1\right)}_{B}\left(\cdot\right)$ is uniformly continuous
on either side of $\mathbb{B}_{0}$; (ii) $\mu_{B^{\otimes2}}$ is uniformly continuous
on either side of $\mathbb{B}_{0}$; (iii) $\mathrm{Var}_{+}\left[B\right]$ and
$\mathrm{Var}_{-}\left[B\right]$ are positive definite; (iv) there
exists some envelope $g\left(\cdot\right)$ such that $\sup_{x\in\mathbb{B}_{0}}f_{Z\mid X}\left(\cdot\mid x\right)\leq g$
and $\int\left\Vert z\right\Vert ^{2+\delta}g\left(z\right)m\left(\mathrm{d}z\right)<\infty$
for some $\delta\in\left(0,1\right)$. The same condition holds for
$f_{Y\mid X}$. \end{assumption}

Parts (i)--(iii) of Assumption \ref{assu:data generating process}
are standard regularity conditions and will be directly invoked in
the proofs. These assumptions are satisfied under suitable conditions
imposed on the population distribution of the latent variables. The
smoothness level in (i) is similar to that commonly assumed in the
literature, i.e., the minimal smoothness level ($p+1$) such that
the leading smoothing bias term of the estimator (using a $p$-th
order LP) can be explicitly characterized. Let $\bar{B}\coloneqq\left(Y\left(1\right),Y\left(0\right),Z\right)$.
Part (i) is satisfied if $\mu_{\bar{B}}\left(\cdot\right)$ is $\left(p+1\right)$-times
continuously differentiable on $\mathbb{B}$ with uniformly continuous
derivatives. Part (ii) is satisfied if $\mu_{\bar{B}{}^{\otimes2}}\left(\cdot\right)$
is uniformly continuous on $\mathbb{B}$. Existence of $\mathrm{Var}_{+}\left[B\right]$
and $\mathrm{Var}_{-}\left[B\right]$ is guaranteed under these assumptions.
$\mathrm{Var}_{+}\left[B\right]$ (or $\mathrm{Var}_{-}\left[B\right]$)
is guaranteed to be positive definite if $\mathrm{Var}\left[\left(Y\left(1\right),Z^{\top}\right)^{\top}\mid X=0\right]$
(or $\mathrm{Var}\left[\left(Y\left(0\right),Z^{\top}\right)^{\top}\mid X=0\right]$)
is positive definite. Assumption \ref{assu:data generating process}(iv)
is required for the linearization of the Lagrange multiplier $\widehat{\lambda}_{p}\left(h\right)$.
We note that this condition is not required for the unconstrained
multiplier in \eqref{eq:unconstrained multiplier}, which has an exact
linear representation.

%%%%%%%%%%%%%%%%%%%%%%%%%%%%%%%%%%%%%%%%%%%%%%%%%%%%%%%%%%%%%

\subsection{Augmented regression approach}

%%%%%%%%%%%%%%%%%%%%%%%%%%%%%%%%%%%%%%%%%%%%%%%%%%%%%%%%%%%%%

The simple LP estimator without covariates is given by the regression
coefficient of $I_{i}$ in 
\begin{equation}
\underset{b_{0},b_{1}}{\min}\sum_{i}K\left(\frac{X_{i}}{h}\right)\left\{ Y_{i}-r^{\top}_{p}\left(X_{i}\right)b_{0}-I_{i}\cdot r^{\top}_{p}\left(X_{i}\right)b_{1}\right\} ^{2},\label{eq:LP without covariates}
\end{equation}
which has the closed form $\left(nh\right)^{-1}\sum_{i}W_{p,i}\left(h\right)Y_{i}$.
CCFT take an augmented regression approach to incorporate the predetermined
covariates. Their covariate-adjusted estimator $\widehat{\vartheta}^{\mathit{CCFT}}_{p}\left(h\right)$
for $\vartheta$ is given by the regression coefficient of $I_{i}$
in 
\begin{equation}
\underset{b_{0},b_{1},b_{2}}{\min}\sum_{i}K\left(\frac{X_{i}}{h}\right)\left\{ Y_{i}-r^{\top}_{p}\left(X_{i}\right)b_{0}-I_{i}\cdot r^{\top}_{p}\left(X_{i}\right)b_{1}-Z^{\top}_{i}b_{2}\right\} ^{2}.\label{eq:CCFT estimator}
\end{equation}

Denote $\Lambda_{+,p}\coloneqq\int^{1}_{0}r_{p}\left(t\right)r^{\top}_{p}\left(t\right)K\left(t\right)\mathrm{d}t$.
For $\nu=0,1,\dots,p$, define the $\nu$-th order ``equivalent kernels''
by $\mathcal{K}_{+,p,\nu}\left(t\right)\coloneqq\mathrm{e}^{\top}_{p+1,\nu+1}\Lambda^{-1}_{+,p}r_{p}\left(t\right)K\left(t\right)$,
and let $\left(\Lambda_{-,p},\mathcal{K}_{-,p,\nu}\right)$ be defined
by the same equations with the integral range $\left[0,1\right]$
replaced by $\left[-1,0\right]$. Assumption \ref{assu:kernel} implies
that $\mathcal{K}_{+,p,\nu}\left(t\right)=\left(-1\right)^{\nu}\mathcal{K}_{-,p,\nu}\left(-t\right)$.
Denote $\omega^{j,k}_{+,p,\nu}\coloneqq\int^{1}_{0}t^{j}\mathcal{K}^{k}_{+,p,\nu}\left(t\right)\mathrm{d}t$
and $\omega^{j,k}_{-,p,\nu}\coloneqq\int^{0}_{-1}t^{j}\mathcal{K}^{k}_{-,p,\nu}\left(t\right)\mathrm{d}t$.
By straightforward algebra, we have $\omega^{j,k}_{-,p,\nu}=\left(-1\right)^{\nu k+j}\omega^{j,k}_{+,p,\nu}$.
Let $\omega^{j,k}_{p,\nu}$ denote the common value if $\omega^{j,k}_{+,p,\nu}=\omega^{j,k}_{-,p,\nu}$
($\nu k+j$ is even), and let it denote $\omega^{j,k}_{+,p,\nu}$
if $\omega^{j,k}_{+,p,\nu}=-\omega^{j,k}_{-,p,\nu}$ ($\nu k+j$ is
odd).

Let 
\begin{equation}
\gamma\coloneqq\left(\mathrm{Var}_{\pm}\left[Z\right]\right)^{-1}\mathrm{Cov}_{\pm}\left[Z,Y\right]\mbox{ and }\sigma^{2}\coloneqq\mathrm{Var}_{\pm}\left[\epsilon\right],\label{eq:gamma sigma definition}
\end{equation}
where $\epsilon\coloneqq Y-Z^{\top}\gamma$. Note that 
\[
\sigma^{2}=\sum_{k\in\left\{ 0,1\right\} }\mathrm{Var}_{0}\left[Y\left(k\right)-Z\left(k\right)^{\top}\gamma\right]\textrm{ and }\gamma=\left(\sum_{k\in\left\{ 0,1\right\} }\mathrm{Var}_{0}\left[Z\left(k\right)\right]\right)^{-1}\left(\sum_{k\in\left\{ 0,1\right\} }\mathrm{Cov}_{0}\left[Z\left(k\right),Y\left(k\right)\right]\right),
\]
under the smoothness conditions on $\mu_{\bar{B}}\left(\cdot\right)$
and $\mu_{\bar{B}{}^{\otimes2}}\left(\cdot\right)$. Moreover, since
$\mathrm{Var}_{\pm}\left[Z^{\top}\gamma\right]=\mathrm{Cov}_{\pm}\left[Y,Z^{\top}\gamma\right]$,
we have 
\begin{equation}
\sigma^{2}=\mathrm{Var}_{\pm}\left[Y\right]-\gamma^{\top}\mathrm{Var}_{\pm}\left[Z\right]\gamma\leq\mathrm{Var}_{\pm}\left[Y\right]=\sum_{k\in\left\{ 0,1\right\} }\mathrm{Var}_{0}\left[Y\left(k\right)\right].\label{eq:CCFT efficiency gain}
\end{equation}
By the partitioned regression argument, we have $\widehat{\vartheta}^{\mathit{CCFT}}_{p}\left(h\right)=\left(nh\right)^{-1}\sum_{i}W_{p,i}\left(h\right)\left(Y_{i}-Z^{\top}_{i}\widehat{\gamma}_{p}\left(h\right)\right)$,
where $\widehat{\gamma}_{p}\left(h\right)$ is a consistent estimator
of $\gamma$. CCFT show that 
\begin{equation}
\sqrt{nh}\left(\widehat{\vartheta}^{\mathit{CCFT}}_{p}\left(h\right)-\vartheta-\mathscr{B}^{\mathit{CCFT}}_{p}h^{p+1}\right)\rightarrow_{d}\mathrm{N}\left(0,\mathscr{V}^{\mathit{CCFT}}_{p}\right),\label{eq:CCFT first order}
\end{equation}
where 
\[
\mathscr{B}^{\mathit{CCFT}}_{p}\coloneqq\frac{\mu^{\left(p+1\right)}_{\epsilon,+}\omega^{p+1,1}_{+,p,0}-\mu^{\left(p+1\right)}_{\epsilon,-}\omega^{p+1,1}_{-,p,0}}{\left(p+1\right)!}\textrm{ and }\mathscr{V}^{\mathit{CCFT}}_{p}\coloneqq\frac{\omega^{0,2}_{p,0}\sigma^{2}}{\varphi}
\]
under Assumptions \ref{assu:smoothness Z}, \ref{assu:kernel}, and
\ref{assu:data generating process}, and the condition that the bandwidth
satisfies $nh^{2p+3}=O\left(1\right)$ and $nh\rightarrow\infty$.
Note that the simple LP estimator without covariates has an asymptotic
variance given by $\omega^{0,2}_{p,0}\mathrm{Var}_{\pm}\left[Y\right]/\varphi\geq\mathscr{V}^{\mathit{CCFT}}_{p}$
(see, e.g., \citealp{imbens2011optimal}).

\begin{rembold}CCFT note that there is no definite ranking between their estimator and the standard LP estimator without covariates, and remark that their efficiency results ``are in perfect agreement with those in
the literature on analysis of experiments obtained using Neyman's repeated sampling (Freedman, 2008; Lin, 2013), where it is also found that incorporating covariates in randomized controlled trials using
linear regression leads to efficiency gains only under particular
assumptions'' (CCFT, Section IV.B). As the RDD is often viewed as local randomization,
the asymptotic efficiency gain and CCFT's remark from the perspective
of randomized experiments can be reconciled. Assumption \ref{assu:smoothness Z}(ii) implies that the RDD is analogous to a randomized experiment with equal assignment probabilities. \citet[Theorem 5.2(iv)]{Negi2014} show that in this case, the pooled regression adjustment, whose algorithm is analogous to that of the CCFT estimator, always leads to an asymptotic variance no larger than that
of the unadjusted estimator.\end{rembold}

We also note that including a covariate will not change the asymptotic
variance if and only if the corresponding element in $\gamma$ is
zero. Consider the partition $Z=\left(Z^{\top}_{1},Z^{\top}_{2}\right)^{\top}$
of $Z$ and let $\gamma=\left(\gamma^{\top}_{1},\gamma^{\top}_{2}\right)^{\top}$
be the conformable partition of $\gamma$ such that the dimension
of $\gamma_{j}$ coincides with that of $Z_{j}$, $j=1,2$.
A direct calculation shows that $\mathscr{V}^{\mathit{CCFT}}_{p}$
is equal to the asymptotic variance of the covariate-adjusted estimator
using only $Z_{1}$ if and only if $\gamma_{2}=0$. In this case,
$Z_{2}$ is irrelevant in the sense that dropping $Z_{2}$ has no
first-order impact: it neither leads to efficiency loss nor changes
the asymptotic smoothing bias. In conclusion, if we say that an estimator
achieves an efficiency gain when its asymptotic variance is smaller
than that of the standard estimator without covariates, then the CCFT
estimator achieves an efficiency gain as long as at least one element
of $\gamma$ is nonzero.

%%%%%%%%%%%%%%%%%%%%%%%%%%%%%%%%%%%%%%%%%%%%%%%%%%%%%%%%%%%%%

\subsection{Reweighting approach}

%%%%%%%%%%%%%%%%%%%%%%%%%%%%%%%%%%%%%%%%%%%%%%%%%%%%%%%%%%%%%
Thus far, we have reviewed the CCFT estimator. We now turn to our
reweighting approach introduced in Section \ref{subsec:proposed_approach}.
\color{black} Specifically, we replace the uniform weights in \eqref{eq:LP without covariates}
by the EB weights $\widehat{w}_{p,i}\left(h\right)$ given in \eqref{eq:balancing weights}.
Let our EB reweighting estimator $\widehat{\vartheta}^{\mathit{eb}}_{p}\left(h\right)$
be the regression coefficient of $I_{i}$ in 
\begin{equation}
\underset{b_{0},b_{1}}{\min}\sum_{i}\widehat{w}_{p,i}\left(h\right)K\left(\frac{X_{i}}{h}\right)\left\{ Y_{i}-r^{\top}_{p}\left(X_{i}\right)b_{0}-I_{i}\cdot r^{\top}_{p}\left(X_{i}\right)b_{1}\right\} ^{2}.\label{eq:entropy balancing estimator definition}
\end{equation}
Alternatively, one may replace $\widehat{w}_{p,i}\left(h\right)$
in \eqref{eq:entropy balancing estimator definition} by $\widetilde{w}_{p,i}\left(h\right)$
defined by \eqref{eq:w_til balancing weights}. In this case, the
EB reweighting estimator can be written exactly as $\left(nh\right)^{-1}\sum_{i}W_{p,i}\left(h\right)\left(Y_{i}-Z^{\top}_{i}\widetilde{\gamma}_{p}\left(h\right)\right)$,
where $\widetilde{\gamma}_{p}\left(h\right)$ is some consistent estimator
of $\gamma$. 

\begin{rembold}For expositional simplicity, all
theorems are stated for the EB reweighting estimator constructed from
$\widehat{w}_{p,i}\left(h\right)$. They apply verbatim to the estimator
constructed from $\widetilde{w}_{p,i}\left(h\right)$, which is the
$\varrho=-2$ member of the family with the nonnegativity restrictions
dropped. For each fixed $\varrho\neq-1$, Lemma \ref{lem:reg weights lambda} in Section \ref{sec:Proof-of-Theorem 1} of the supplement
gives $\widehat{\lambda}_{p}\left(h\right)=-(1+\varrho)\widetilde{\lambda}_{p}\left(h\right)+o_{p}\left(\left(nh\right)^{-1/2}\right)$.
Substitution into \eqref{eq:balancing weights} yields a common first-order expansion for the weighting schemes.\end{rembold}

We can state the following relationship between the CCFT estimator
and ours.
\begin{thm}
\label{thm:normality}Suppose that Assumptions \ref{assu:smoothness Z}--\ref{assu:data generating process}
hold. Assume that the bandwidth satisfies $nh^{2p+3}=O\left(1\right)$
and $nh\rightarrow\infty$. Then, $\widehat{\vartheta}^{\mathit{eb}}_{p}\left(h\right)$
is first-order equivalent to $\widehat{\vartheta}^{\mathit{CCFT}}_{p}\left(h\right)$: $\widehat{\vartheta}^{\mathit{eb}}_p(h) - \widehat{\vartheta}^{\mathit{CCFT}}_p(h) = o_p((nh)^{-1/2})$.
In particular, (\ref{eq:CCFT first order}) also holds for $\widehat{\vartheta}^{\mathit{eb}}_{p}\left(h\right)$. 
\end{thm}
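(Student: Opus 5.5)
We plan to write both estimators in the common form $(nh)^{-1}\sum_i W_{p,i}(h)\,(Y_i - Z_i^\top g)$ for a consistent estimator $g$ of $\gamma$, and then show that the difference in the slope vectors is negligible. Each such estimator equals the unadjusted estimator $(nh)^{-1}\sum_i W_{p,i}(h)Y_i$ minus $g^\top (nh)^{-1}\sum_i W_{p,i}(h)Z_i$. The term $(nh)^{-1}\sum_i W_{p,i}(h)Z_i$ is the unadjusted LP estimator of the jump $\mu_{Z,+}-\mu_{Z,-}$, which is zero by Assumption \ref{assu:smoothness Z}(i). Under Assumption \ref{assu:data generating process}(i), $nh^{2p+3}=O(1)$ and $nh\to\infty$, this term is therefore $O_p((nh)^{-1/2})$. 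Consequently, if $\widehat\gamma^{\mathit{eb}}$ and $\widehat\gamma_p(h)$ both converge in probability to the same $\gamma$ defined in \eqref{eq:gamma sigma definition}, the two estimators differ by $o_p(1)\cdot O_p((nh)^{-1/2})$, which is the claim. The asymptotic normality \eqref{eq:CCFT first order} for $\widehat\vartheta^{\mathit{eb}}_p(h)$ then follows from CCFT's result.

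\textbf{Step 1: exact representation for the reweighted estimator.} For any weights $w_i$, the weighted LP coefficient on $I_i$ in \eqref{eq:entropy balancing estimator definition} equals $\sum_i \tilde W_{p,i}(w)Y_i$, where $\tilde W$ is built from the weighted Gram matrices $\Pi_{\pm,p}^w$. We linearize with $w_i = n^{-1}(1+\Delta_i)$, where $\Delta_i = -V_{p,i}^\top\widehat\lambda/(1+\varrho)+o_p(\cdot)$, using \eqref{eq:balancing weights} and Lemma \ref{lem:reg weights lambda}, so that $\widehat\lambda_p(h) = -(1+\varrho)\widetilde\lambda_p(h)+o_p((nh)^{-1/2})$. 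In the scaled version, $\widetilde\lambda_p(h)$ equals $-(\sum_i V_{p,i}V_{p,i}^\top)^{-1}\sum_i V_{p,i}$, with $\sum_i V_{p,i} = nh\cdot\big(0,\ (nh)^{-1}\sum_i W_{p,i}Z_i^\top\big)^\top = O_p((nh)^{1/2})$. This is because $\sum_i W_{p,i}=0$ holds exactly, while $\sum_i V_{p,i}V_{p,i}^\top \asymp nh$. Hence $\widetilde\lambda = O_p((nh)^{-1/2})$ and $\max_i|\Delta_i| = o_p(1)$; the latter uses the moment envelope in Assumption \ref{assu:data generating process}(iv) together with a standard max bound $\max_i\|Z_i\| = o_p((nh)^{1/2})$ among kernel-active observations.

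\textbf{Step 2: expansion.} We expand $\widehat\vartheta^{\mathit{eb}} = (nh)^{-1}\sum_i W_{p,i}Y_i + (nh)^{-1}\sum_i W_{p,i}\Delta_i(Y_i-\text{fitted}_i) + o_p((nh)^{-1/2})$. The perturbation of $\Pi^{-1}$ is multiplied by LP residual sums that vanish at the first order, so it contributes only a product of two $O_p((nh)^{-1/2})$ terms. Substituting $\Delta_i \approx V_{p,i}^\top\widetilde\lambda$, the correction becomes $-\widehat{\mathrm{C}}_{YV}\widehat{\mathrm{M}}_{VV}^{-1}(nh)^{-1}\sum_i V_{p,i}$, where $\widehat{\mathrm{C}}_{YV}=(nh)^{-1}\sum_i W_{p,i}^2 Y_i\bar Z_i^\top$ and $\widehat{\mathrm{M}}_{VV}=(nh)^{-1}\sum_i W_{p,i}^2\bar Z_i\bar Z_i^\top$ (after centering $Y$ by its local fit). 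The first component of $\sum_i V_{p,i}$ is zero, so only the $Z$-block survives. Using Assumption \ref{assu:data generating process}(ii), kernel laws of large numbers give
\[
\widehat{\mathrm M}_{VV}\to_p \frac{\omega^{0,2}_{p,0}}{\varphi}\big(\mathrm E_\pm\text{-type second moments of }\bar Z\big),
\]
and the effective slope becomes a ratio of side-summed conditional covariances. After partialling out the intercept, which the structure of $\bar Z$ does automatically, this ratio is exactly $(\mathrm{Var}_\pm[Z])^{-1}\mathrm{Cov}_\pm[Z,Y]=\gamma$. The equal weighting of the two sides follows from the symmetry $\omega^{0,2}_{+,p,0}=\omega^{0,2}_{-,p,0}$ and the continuity of $f_X$ at zero.

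\textbf{Main obstacle.} The delicate step is Step 2 for general $\varrho$: controlling the remainder from the nonlinear map $\lambda\mapsto w_i$ and from the Gram-matrix perturbation uniformly over $i$, and confirming that the limit of the slope is precisely CCFT's $\gamma$. In particular, the conditional means $\mu_{Z,\pm}$ and $\mu_{Y,\pm}$ must cancel when the within-side means are subtracted. We will handle this by showing that the intercept component of $\widetilde\lambda$, paired with $W_{p,i}$, projects out the side-specific means only up to $O_p(h^{p+1})+O_p((nh)^{-1/2})$. Since this is multiplied by an $O_p((nh)^{-1/2})$ term, the contribution is $o_p((nh)^{-1/2})$.
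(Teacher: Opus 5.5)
Your proposal is correct and follows essentially the paper's route. You linearize the weights through Lemma \ref{lem:reg weights lambda}, which you take as given; its proof (localizing $\widehat{\lambda}_{p}(h)$ in a shrinking ball via concavity and expanding the first-order condition) is where the paper does most of the work for general $\varrho$. You then expand the reweighted local polynomial intercept and combine the zero intercept component of $\sum_{i}V_{p,i}(h)$ with the block inversion of $\mu_{\bar{Z}\bar{Z}^{\top},\pm}$. Its Schur complement is $\mathrm{Var}_{\pm}[Z]$ precisely because $\mu_{Z,+}=\mu_{Z,-}$, so the correction is $\gamma^{\top}(nh)^{-1}\sum_{i}W_{p,i}(h)Z_{i}+o_{p}((nh)^{-1/2})$, as in the paper.

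The only difference is bookkeeping. You use the exact weighted-least-squares perturbation identity with local-fit residuals, so the Gram-matrix perturbation is automatically second order. The paper instead keeps $Y_{i}$ uncentered and shows that the $\Pi^{\mathit{eb}}_{\pm,p}$-perturbation term equals $\mu_{Y,\pm}/2$ times the vanishing intercept component of $(nh)^{-1}\sum_{i}V_{p,i}(h)$. Relatedly, your last paragraph needs no rate: the slope multiplies an $O_{p}((nh)^{-1/2})$ term, so consistency suffices.
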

The efficiency comparison in \eqref{eq:CCFT efficiency gain} parallels
the insight of \citet{Hirano:2003cz}, whose analysis accounts for
the puzzling phenomenon that the (efficient) inverse probability weighting
estimator using the nonparametrically estimated propensity score has
a smaller asymptotic variance than the one using the true propensity
score. In the missing data example of \citet[Section 3]{Hirano:2003cz}
with a binary covariate, the efficient estimator can be written as
a reweighted version of the estimator based on the true selection
probability (propensity score). Although the latter uses the correct
probability, it does not exploit the information contained in such
knowledge. Reweighting the observations to enforce balance therefore
reduces the asymptotic variance. Our EB estimator operates through
an analogous mechanism: starting from the weights of the unadjusted
estimator, it reweights observations so that the sample counterpart
of the local covariate balance restriction is satisfied. Theorem \ref{thm:normality}
shows that the resulting estimator is first-order equivalent to CCFT's
estimator. Thus, in both settings, the efficiency gain arises because
reweighting uses valid balance information that the original estimator
leaves unexploited.

\begin{rembold}The estimator $\widehat{\vartheta}^{\mathit{eb}}_{p}\left(h\right)$
can be written as a generalized empirical likelihood estimator. The
moment restrictions are the first-order conditions from \eqref{eq:LP without covariates}
augmented by the balance restrictions as in \eqref{eq:generalized balancing}.
The balance restrictions serve as overidentifying restrictions, which
naturally give rise to efficiency gains. Theorem \ref{thm:normality}
shows that EB estimators with different $\varrho$ are all first-order
equivalent. The empirical likelihood estimator, which is also the
EB estimator with $\varrho=0$, has more favorable higher-order bias
properties (see, e.g., \citealp{newey_smith_2004_higher}). \end{rembold}

%\section{Applications}

%%%%%%%%%%%%%%%%%%%%%%%%%%%%%%%%%%%%%%%%%%%%%%%%%%%%%%%%%%%%

\section{Nonlinear estimands: quantile RDD}

\label{subsec:Sharp-quantile-RD} %%%%%%%%%%%%%%%%%%%%%%%%%%%%%%%%%%%%%%%%%%%%%%%%%%%%%%%%%%%%%
In this section, we compare our proposed covariate adjustment approach
with the regression-based approach in the context of the sharp quantile
RDD. We show that, in this setting, a naïve extension of the regression-based
approach may fail to deliver even consistent estimation. In contrast,
our reweighting approach remains consistent and can improve efficiency
by exploiting covariate information. %To fix ideas, we focus throughout this section on the sharp quantile RDD with covariates.

For a scalar random variable $A$, let $F_{A\mid X}\left(\cdot\mid x\right)$,
$f_{A\mid X}\left(\cdot\mid x\right)$, and $Q_{A\mid X}\left(\cdot\mid x\right)$
denote the conditional CDF, PDF, and quantile function, respectively,
of $A$ given $X=x$, and let $Q^{\left(k\right)}_{A\mid X}\left(\tau\mid x\right)\coloneqq\left(\partial/\partial x\right)^{k}Q_{A\mid X}\left(\tau\mid x\right)$.
Let $\kappa_{A,+}\left(\tau\right)\coloneqq\lim_{x\downarrow0}Q_{A\mid X}\left(\tau\mid x\right)$
and $\kappa^{\left(k\right)}_{A,+}\left(\tau\right)\coloneqq\lim_{x\downarrow0}Q^{\left(k\right)}_{A\mid X}\left(\tau\mid x\right)$,
and let $\kappa_{A,-}\left(\tau\right)$ and $\kappa^{\left(k\right)}_{A,-}\left(\tau\right)$
be defined similarly with $x\uparrow0$. For the outcome $Y$, let
$\varphi_{+}\left(\tau\right)\coloneqq\lim_{x\downarrow0}f_{Y\mid X}\left(Q_{Y\mid X}\left(\tau\mid x\right)\mid x\right)$
and let $\varphi_{-}\left(\tau\right)$ be defined similarly. We are
interested in estimating the $\tau$-th quantile RDD (QRD) estimand
\[
\vartheta\left(\tau\right)\coloneqq\kappa_{Y,+}\left(\tau\right)-\kappa_{Y,-}\left(\tau\right).
\]

To identify a treatment effect at the cutoff in quantile terms, let
$F_{Y(d)\mid X}(\cdot\mid x)$ and $Q_{Y(d)\mid X}(\cdot\mid x)$
denote the conditional CDF and the conditional quantile function, respectively,
of the potential outcome $Y(d)$ given $X=x$. Define the $\tau$-th
quantile treatment effect (QTE) at $X=0$ as $\vartheta_{\mathit{qte}}(\tau)\coloneqq Q_{Y(1)\mid X}(\tau\mid0)-Q_{Y(0)\mid X}(\tau\mid0)$,
the horizontal distance between the $\tau$-quantiles of the two conditional
potential-outcome distributions at the cutoff. The identifying condition
is the quantile analogue of the continuity requirement of \citet{hahn2001identification}
for the mean RDD: for every $y$ and each $d\in\{0,1\}$, $x\mapsto F_{Y(d)\mid X}(y\mid x)$
is continuous at $x=0$, or equivalently, $x\mapsto Q_{Y(d)\mid X}(\tau\mid x)$
is continuous at $0$ for every $\tau$. This is a smoothness restriction
across the cutoff that, like its mean counterpart, is untestable.
Under the sharp assignment, for $x>0$
one has $Q_{Y\mid X}(\tau\mid x)=Q_{Y(1)\mid X}(\tau\mid x)$, and
continuity gives $\kappa_{Y,+}(\tau)=Q_{Y(1)\mid X}(\tau\mid0)$. Similarly,
$\kappa_{Y,-}(\tau)=Q_{Y(0)\mid X}(\tau\mid0)$. Therefore, $\vartheta(\tau)=\vartheta_{\mathit{qte}}(\tau)$:
the QRD estimand $\vartheta\left(\tau\right)$ identifies the $\tau$-th
QTE at the cutoff \citep{Qu2019,chiang2019robust}.

Let $\rho_{\tau}\left(u\right)\coloneqq u\left(\tau-\mathbbm{1}\left(u<0\right)\right)$
be the $\tau$-th check function. The $p$-th order local polynomial
estimator $\widehat{\vartheta}^{\mathit{QY}}_{p}\left(\tau\mid h\right)$
of \citet{Qu2019} is given by the regression coefficient of $I_{i}$
in the local quantile regression problem 
\begin{equation}
\underset{b_{0},b_{1}}{\min}\sum_{i}K\left(\frac{X_{i}}{h}\right)\rho_{\tau}\left(Y_{i}-r^{\top}_{p}\left(X_{i}\right)b_{0}-I_{i}\cdot r^{\top}_{p}\left(X_{i}\right)b_{1}\right).\label{eq:quantile_RDD_QY}
\end{equation}

\citet{Qu2019} show that 
\[
\sqrt{nh}\left(\widehat{\vartheta}^{\mathit{QY}}_{p}\left(\tau\mid h\right)-\vartheta\left(\tau\right)-h^{p+1}\mathscr{B}^{\mathit{QY}}_{p}\left(\tau\right)\right)\rightarrow_{d}\mathrm{N}\left(0,\mathscr{V}^{\mathit{QY}}_{p}\left(\tau\right)\right),
\]
where 
\begin{eqnarray*}
\mathscr{B}^{\mathit{QY}}_{p}\left(\tau\right) & \coloneqq & \frac{\omega^{p+1,1}_{+,p,0}\kappa^{\left(p+1\right)}_{Y,+}\left(\tau\right)}{\left(p+1\right)!}-\frac{\omega^{p+1,1}_{-,p,0}\kappa^{\left(p+1\right)}_{Y,-}\left(\tau\right)}{\left(p+1\right)!}\\
\mathscr{V}^{\mathit{QY}}_{p}\left(\tau\right) & \coloneqq & \frac{\omega^{0,2}_{p,0}}{\varphi}\left(\frac{\tau\left(1-\tau\right)}{\varphi^{2}_{+}\left(\tau\right)}+\frac{\tau\left(1-\tau\right)}{\varphi^{2}_{-}\left(\tau\right)}\right).
\end{eqnarray*}

%%%%%%%%%%%%%%%%%%%%%%%%%%%%%%%%%%%%%%%%%%%%%%%%%%%%%%%%%%%%%

\subsection{Augmented regression approach}

%%%%%%%%%%%%%%%%%%%%%%%%%%%%%%%%%%%%%%%%%%%%%%%%%%%%%%%%%%%%%

An algorithmic extension of CCFT's covariate adjustment method is
to solve the augmented local quantile regression 
\begin{equation}
\underset{b_{0},b_{1},b_{2}}{\min}\sum_{i}K\left(\frac{X_{i}}{h}\right)\rho_{\tau}\left(Y_{i}-r^{\top}_{p}\left(X_{i}\right)b_{0}-I_{i}\cdot r^{\top}_{p}\left(X_{i}\right)b_{1}-Z^{\top}_{i}b_{2}\right).\label{eq:CCFT algorithmic extension}
\end{equation}
We now argue that this algorithmic extension of CCFT may fail to produce
even a consistent estimator of $\vartheta\left(\tau\right)$.

Consider the partial minimization problem of \eqref{eq:CCFT algorithmic extension}
for fixed $b_{2}$: 
\begin{equation}
H_{\tau}\left(b_{2}\right)\coloneqq\underset{b_{0},b_{1}}{\min}\sum_{i}K\left(\frac{X_{i}}{h}\right)\rho_{\tau}\left(Y_{i}-Z^{\top}_{i}b_{2}-r^{\top}_{p}\left(X_{i}\right)b_{0}-I_{i}\cdot r^{\top}_{p}\left(X_{i}\right)b_{1}\right).\label{eq:partial minimization}
\end{equation}
The regression coefficients of $Z_{i}$ in (\ref{eq:CCFT algorithmic extension})
can be written as $\arg\min_{b_{2}}H_{\tau}\left(b_{2}\right)$. For
fixed $b_{2}$, it can be shown that $(nh)^{-1}H_{\tau}\left(b_{2}\right)$
converges in probability to $\bar{H}_{\tau}\left(b_{2}\right)$ (up
to a constant), where 
\begin{eqnarray*}
\bar{H}_{\tau}\left(b_{2}\right) & \coloneqq & \underset{x\downarrow0}{\lim}\,\mathrm{E}\left[\rho_{\tau}\left(\left(Y-Z^{\top}b_{2}\right)-\kappa_{Y-Z^{\top}b_{2},+}\left(\tau\right)\right)\mid X=x\right]\\
 &  & +\underset{x\uparrow0}{\lim}\,\mathrm{E}\left[\rho_{\tau}\left(\left(Y-Z^{\top}b_{2}\right)-\kappa_{Y-Z^{\top}b_{2},-}\left(\tau\right)\right)\mid X=x\right].
\end{eqnarray*}
Therefore, the regression coefficients of $Z_{i}$ converge in probability
to $\bar{\gamma}\left(\tau\right)\coloneqq\arg\min_{b_{2}}\bar{H}_{\tau}\left(b_{2}\right)$.
It follows that the probability limit of the regression coefficient
of $I_{i}$ in \eqref{eq:CCFT algorithmic extension} is $\kappa_{Y-Z^{\top}\bar{\gamma}\left(\tau\right),+}\left(\tau\right)-\kappa_{Y-Z^{\top}\bar{\gamma}\left(\tau\right),-}\left(\tau\right)$,
which may not equal the QRD estimand.

We use a numerical example to illustrate the inconsistency of the
algorithmic extension defined by \eqref{eq:CCFT algorithmic extension}.
Under data-generating process (DGP) I, described in Sections~\ref{sec:simu_design} and \ref{sec:simu_qrd},
the QRD estimand specializes to $\vartheta(\tau)=1.49+(\sqrt{11.8}-1)\varPhi^{-1}(\tau)$,
and %Since $(u_{y},u_{z})$ is bivariate normal, for each fixed
%$b_{2}\in\mathbb{R}$ the one-sided limits of the conditional distribution of
%$Y-b_{2}Z$ at the cutoff are Gaussian. 
%Hence,
%\[
%\kappa_{Y-b_{2}Z,s}(\tau)=\alpha_{s}+(\beta_{s}-b_{2})m_{z,s}\left(0\right)
%+\sigma_{s}(b_{2})\varPhi^{-1}(\tau),\,s\in\left\{ -,+\right\} ,
%\]
%where
%$\sigma^{2}_{s}(b_{2})\coloneqq(\beta_{s}-b_{2})^{2}+2\rho(\beta_{s}-b_{2})+1$.
%It follows that
%\[
%\bar{H}_{\tau}(b_{2})=\phi\left(\varPhi^{-1}(\tau)\right)
%\left\{\sigma_{+}(b_{2})+\sigma_{-}(b_{2})\right\},
%\]
%where $\phi$ denotes the standard normal density. 
%Then the minimizer $\bar{\gamma}(\tau)$ does not depend on $\tau$. 
%Writing
%$\sigma_{s}(b_{2})=\{(\beta_{s}-b_{2}+\rho)^{2}+1-\rho^{2}\}^{1/2}$, the
%first-order condition leads to
%$\beta_{+}-b_{2}+\rho=-\left(\beta_{-}-b_{2}+\rho\right)$, so that
%\[
%\bar{\gamma}(\tau)=\frac{\beta_{-}+\beta_{+}}{2}+\rho=1.8,\textrm{ and }\sigma_{+}(\bar{\gamma}\left(\tau\right))=\sigma_{-}(\bar{\gamma}\left(\tau\right))=\sqrt{3.16}.
%\]
the probability limit of the coefficient of $I_{i}$ in \eqref{eq:CCFT algorithmic extension}
equals $1.49$ at every $\tau$. Therefore, the asymptotic bias, defined
as the probability limit minus $\vartheta(\tau)$, is $-2.435\times\varPhi^{-1}(\tau)$
and is reported in the first row of Table~\ref{tab:asymptotic bias}.
The asymptotic bias vanishes at the median but grows in the tails.
The remaining rows of Table~\ref{tab:asymptotic bias} report simulated
finite-sample biases with $nh$ fixed at $2{,}500$. As the sample
size increases (bandwidth decreases), the simulated bias approaches
its asymptotic counterpart. When the bandwidth is relatively large
($h=0.5$), the bias away from the median is smaller in magnitude than
its asymptotic value but remains substantial.

\begin{table}[!t]
\centering \caption{Asymptotic and finite-sample biases of the algorithmic extension \eqref{eq:CCFT algorithmic extension},
DGP I in Sections~\ref{sec:simu_design} and \ref{sec:simu_qrd}.}
\label{tab:asymptotic bias} %
\begin{tabular}{lccccc}
\toprule 
 & $\tau=0.10$  & $\tau=0.25$  & $\tau=0.50$  & $\tau=0.75$  & $\tau=0.90$ \tabularnewline
\midrule 
Asymptotic bias  & 3.121  & 1.642  & 0.000  & $-1.642$  & $-3.121$ \tabularnewline
\midrule 
\multicolumn{6}{l}{Finite-sample bias}\tabularnewline
$n=50{,}000$, $h=0.05$  & 2.735  & 1.442  & $-0.001$  & $-1.433$  & $-2.701$ \tabularnewline
$n=25{,}000$, $h=0.10$  & 2.372  & 1.250  & $-0.005$  & $-1.250$  & $-2.379$ \tabularnewline
$n=5{,}000$, $h=0.50$  & 1.106  & 0.610  & 0.055  & $-0.491$  & $-0.996$ \tabularnewline
\bottomrule
\end{tabular}
\end{table}

%%%%%%%%%%%%%%%%%%%%%%%%%%%%%%%%%%%%%%%%%%%%%%%%%%%%%%%%%%%%%

\subsection{Reweighting approach}

\label{subsec:quantile_RDD_reweighting} %%%%%%%%%%%%%%%%%%%%%%%%%%%%%%%%%%%%%%%%%%%%%%%%%%%%%%%%%%%%%

We now turn to our proposed reweighting approach presented in Section
\ref{subsec:proposed_approach}. Specifically, we replace the uniform
weights in \eqref{eq:quantile_RDD_QY} by the EB weights $\widehat{w}_{p,i}\left(h\right)$ in \eqref{eq:balancing weights}.
The QRD estimator with balancing weights $\widehat{\vartheta}^{\mathit{eb}}_{p}\left(\tau\mid h\right)$
is defined as the regression coefficient of $I_{i}$ in 
\begin{equation}
\underset{b_{0},b_{1}}{\min}\sum_{i}\widehat{w}_{p,i}\left(h\right)K\left(\frac{X_{i}}{h}\right)\rho_{\tau}\left(Y_{i}-r^{\top}_{p}\left(X_{i}\right)b_{0}-I_{i}\cdot r^{\top}_{p}\left(X_{i}\right)b_{1}\right).\label{QRD estimator bw}
\end{equation}
Alternatively, $\widehat{w}_{p,i}\left(h\right)$ can be replaced by
the closed-form weights $\widetilde{w}_{p,i}\left(h\right)$ in \eqref{eq:w_til balancing weights}. Consider the following
assumption.

\begin{assumption} \label{assu:QRD smoothness}(i) $Q_{Y\mid X}\left(\cdot\mid\cdot\right)$
is uniformly continuous on either side of $\left[\underline{\tau},\overline{\tau}\right]\times\mathbb{B}_{0}$,
for fixed $0<\underline{\tau}<\overline{\tau}<1$. (ii) $Q_{Y\mid X}\left(\tau\mid\cdot\right)$
is $\left(p+1\right)$-times continuously differentiable on either side of $\mathbb{B}_{0}$ and $Q^{\left(p+1\right)}_{Y\mid X}\left(\cdot\mid\cdot\right)$
is uniformly continuous on either side of $\left[\underline{\tau},\overline{\tau}\right]\times\mathbb{B}_{0}$.
(iii) $f_{Y\mid X}\left(\cdot\mid\cdot\right)$ is uniformly continuous
on either side of $\left\{ \left(y,x\right):y\in\mathrm{supp}\left(Y\mid X=x\right),x\in\mathbb{B}_{0}\right\} $.
(iv) There exists $\varepsilon>0$ such that $f_{Y\mid X}\left(y\mid x\right)$
is bounded away from zero, uniformly over $y\in\left[Q_{Y\mid X}\left(\tau\mid x\right)-\varepsilon,Q_{Y\mid X}\left(\tau\mid x\right)+\varepsilon\right]$,
$\tau\in\left[\underline{\tau},\overline{\tau}\right]$, and $x\in\mathbb{B}_{0}$;
in particular, $\inf_{\tau\in\left[\underline{\tau},\overline{\tau}\right]}\varphi_{s}\left(\tau\right)>0$
for $s\in\left\{ -,+\right\} $. (v) For each $z\in\mathrm{supp}\left(Z\right)$, $F_{Y\mid XZ}\left(\cdot\mid\cdot,z\right)$ (the conditional CDF of $Y$ given $X$ and $Z$) is uniformly continuous
on either side of $\mathbb{R}\times\mathbb{B}_{0}$. (vi) For each $z\in\mathrm{supp}\left(Z\right)$,
$f_{Z\mid X}\left(z\mid\cdot\right)$ is uniformly continuous on either side of $\mathbb{B}_{0}$.
\end{assumption} Assumption \ref{assu:QRD smoothness}(i)--(iv)
are standard regularity conditions. Similar assumptions are also found
in \citet{Qu2019}. Under Assumption \ref{assu:QRD smoothness}(i)
and (iii), $\tau\mapsto\varphi_{s}\left(\tau\right)$ is continuous
on $\left[\underline{\tau},\overline{\tau}\right]$. Denote $U_{i}\left(\tau\right)\coloneqq\tau-\mathbbm{1}\left(Y_{i}\leq Q_{Y\mid X}\left(\tau\mid X_{i}\right)\right)$.
Let 
\begin{eqnarray*}
\gamma\left(\tau\right) & \coloneqq & \left(\mathrm{Var}_{\pm}\left[Z\right]\right)^{-1}\left(\frac{\mu_{ZU\left(\tau\right),+}}{\varphi_{+}\left(\tau\right)}+\frac{\mu_{ZU\left(\tau\right),-}}{\varphi_{-}\left(\tau\right)}\right)\\
\mathscr{B}_{\star,p}\left(\tau\right) & \coloneqq & \mathscr{B}^{\mathit{QY}}_{p}\left(\tau\right)-\left(\frac{\mu^{\left(p+1\right)}_{Z,+}\omega^{p+1,1}_{+,p,0}}{\left(p+1\right)!}-\frac{\mu^{\left(p+1\right)}_{Z,-}\omega^{p+1,1}_{-,p,0}}{\left(p+1\right)!}\right)^{\top}\gamma\left(\tau\right).
\end{eqnarray*}
Note that the smoothness assumptions in Assumption \ref{assu:QRD smoothness}(i),
(v), and (vi) guarantee that $\tau\mapsto\mu_{ZU\left(\tau\right),s}$
is continuous on $\left[\underline{\tau},\overline{\tau}\right]$
for $s\in\left\{ -,+\right\} $. Define the covariance kernel 
\begin{multline}
\mathscr{C}_{\star,p}\left(\tau,\tau'\right)\coloneqq\\
\frac{\omega^{0,2}_{p,0}}{\varphi}\sum_{s\in\left\{ +,-\right\} }\left\{ \frac{\min\left\{ \tau,\tau'\right\} -\tau\tau'}{\varphi_{s}\left(\tau\right)\varphi_{s}\left(\tau'\right)}-\frac{\mu^{\top}_{ZU\left(\tau\right),s}\gamma\left(\tau'\right)}{\varphi_{s}\left(\tau\right)}-\frac{\gamma^{\top}\left(\tau\right)\mu_{ZU\left(\tau'\right),s}}{\varphi_{s}\left(\tau'\right)}+\gamma^{\top}\left(\tau\right)\mathrm{Var}_{s}\left[Z\right]\gamma\left(\tau'\right)\right\} ,\label{eq:QRD covariance kernel}
\end{multline}
and the asymptotic variance 
\begin{equation}
\mathscr{V}_{\star,p}\left(\tau\right)\coloneqq\mathscr{C}_{\star,p}\left(\tau,\tau\right)=\mathscr{V}^{\mathit{QY}}_{p}\left(\tau\right)-\frac{\omega^{0,2}_{p,0}\gamma^{\top}\left(\tau\right)\mathrm{Var}_{\pm}\left[Z\right]\gamma\left(\tau\right)}{\varphi}.\label{eq:QRD asymptotic variance}
\end{equation}
Denote 
\[
\mathbb{Z}^{\mathit{eb}}_{p}\left(\tau\mid h\right)\coloneqq\sqrt{nh}\left(\widehat{\vartheta}^{\mathit{eb}}_{p}\left(\tau\mid h\right)-\vartheta\left(\tau\right)-h^{p+1}\mathscr{B}_{\star,p}\left(\tau\right)\right).
\]
The following result establishes the asymptotic distribution of the reweighted estimator jointly over the quantile range. It shows asymptotic Gaussianity and provides the foundation for uniform inference.

\begin{thm}
\label{thm:sharp QRD}Suppose that Assumptions \ref{assu:smoothness Z}--\ref{assu:QRD smoothness}
hold. Assume that the bandwidth satisfies $nh^{2p+3}=O\left(1\right)$,
$nh\rightarrow\infty$, and $\log\left(n\right)/\sqrt{nh}\rightarrow0$.
Then, $\mathbb{Z}^{\mathit{eb}}_{p}\left(\cdot\mid h\right)\rightsquigarrow\mathbb{Z}_{p}$
in $\ell^{\infty}\left[\underline{\tau},\overline{\tau}\right]$,
where $\left\{ \mathbb{Z}_{p}\left(\tau\right):\tau\in\left[\underline{\tau},\overline{\tau}\right]\right\} $
is a centered tight Gaussian process with covariance kernel $\mathscr{C}_{\star,p}\left(\cdot,\cdot\right)$
given by (\ref{eq:QRD covariance kernel}). In particular, for each
fixed $\tau\in\left[\underline{\tau},\overline{\tau}\right]$, $\mathbb{Z}^{\mathit{eb}}_{p}\left(\tau\mid h\right)\rightarrow_{d}\mathrm{N}\left(0,\mathscr{V}_{\star,p}\left(\tau\right)\right)$,
with $\mathscr{V}_{\star,p}\left(\tau\right)$ given by (\ref{eq:QRD asymptotic variance}). 
\end{thm}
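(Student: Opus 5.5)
The plan is to reduce the weighted local quantile regression to a weighted Bahadur representation and then read off the covariate correction from a linearization of the EB weights. The first step is to expand the weights. Lemma \ref{lem:reg weights lambda} of the supplement gives $\widehat{\lambda}_{p}(h)=-(1+\varrho)\widetilde{\lambda}_{p}(h)+o_{p}((nh)^{-1/2})$. Substituting this into \eqref{eq:balancing weights} and Taylor expanding $(1+v)^{-1/(1+\varrho)}$ should give $n\widehat{w}_{p,i}(h)=1+V^{\top}_{p,i}(h)\widetilde{\lambda}_{p}(h)+R_{i}$, with $\max_{i}|R_{i}|$ small. Here $\max_{i}\|V_{p,i}(h)\|\,\|\widetilde{\lambda}_{p}\|=o_{p}(1)$, which follows from the moment envelope in Assumption \ref{assu:data generating process}(iv) and $\|\widetilde\lambda_p\|=O_p((nh)^{-1/2})$ after normalizing by $nh$.

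Next I would compute $\widetilde{\lambda}_{p}$ explicitly. By the LLN for kernel averages, $(nh)^{-1}\sum_{i}V_{p,i}V^{\top}_{p,i}$ converges to $\omega^{0,2}_{p,0}\varphi^{-1}\,\mathrm{E}_{\pm}$-type second moments of $\bar Z$. Meanwhile $(nh)^{-1}\sum_{i}V_{p,i}=(0,\widehat{\Delta}_{Z}^{\top})^{\top}$, where $\widehat{\Delta}_{Z}=(nh)^{-1}\sum_iW_{p,i}Z_i$ is the unadjusted LP jump estimator for $Z$. Its first entry is zero exactly, and $\widehat\Delta_Z=O_p((nh)^{-1/2})+O(h^{p+1})$ by Assumption \ref{assu:smoothness Z}(i). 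This yields $\widetilde\lambda_p=-(nh)^{-1}\varphi\,(\omega^{0,2}_{p,0})^{-1}(\cdot)\widehat\Delta_Z(1+o_p(1))$ in the $Z$-block.

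The second step is a Bahadur representation for the weighted estimator, uniform in $\tau$. Writing the objective \eqref{QRD estimator bw} in the local parametrization $\sqrt{nh}$-rescaled around $(Q_{Y\mid X}(\tau\mid\cdot)$ Taylor polynomials$)$, I would apply the convexity argument (Pollard / Hjort--Pollard, as in \citet{Qu2019}) together with a uniform-in-$\tau$ stochastic equicontinuity bound. For the latter I would use a maximal inequality for the VC class $\{\mathbbm{1}(y\le q)\}$ times kernel weights; the condition $\log(n)/\sqrt{nh}\to0$ is used here to control the remainder uniformly. The weights enter linearly as $1+O_p$-small perturbations. The result should be
\[
\widehat{\vartheta}^{\mathit{eb}}_{p}(\tau\mid h)-\vartheta(\tau)-h^{p+1}\mathscr{B}^{\mathit{QY}}_{p}(\tau)=\frac{1}{nh}\sum_{i}n\widehat{w}_{p,i}\Bigl(\frac{W_{+,p,i}U_i(\tau)}{\varphi_{+}(\tau)}-\frac{W_{-,p,i}U_i(\tau)}{\varphi_{-}(\tau)}\Bigr)+o_{p}((nh)^{-1/2}),
\]
uniformly in $\tau$.

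Inserting the weight expansion splits this into the unweighted Qu--Yoon influence term plus a cross term $\widetilde\lambda_p^{\top}(nh)^{-1}\sum_iV_{p,i}(\cdots)U_i(\tau)$. By the LLN, Assumption \ref{assu:QRD smoothness}(v)--(vi), and the fact that $\mu_{U(\tau),\pm}=0$, the second factor converges uniformly to $\omega^{0,2}_{p,0}\varphi^{-1}$ times $\mu_{ZU(\tau),+}/\varphi_{+}+\mu_{ZU(\tau),-}/\varphi_{-}$. Combined with step one, the correction becomes $-\gamma(\tau)^{\top}\widehat\Delta_Z$. The centered second-moment matrix reduces to $\mathrm{Var}_{\pm}[Z]$ because the intercept constraint absorbs the mean. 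Since $\widehat\Delta_Z$ has bias $h^{p+1}$ times the $Z$-analogue of the bias constant, this produces the bias $\mathscr{B}_{\star,p}(\tau)$.

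Finally, $\sqrt{nh}$ times the linear process $(nh)^{-1}\sum_i[W_{+,p,i}U_i(\tau)/\varphi_+ - W_{-,p,i}U_i(\tau)/\varphi_- - \gamma(\tau)^\top W_{p,i}(Z_i-\mu_Z(X_i))]$ is a triangular-array empirical process indexed by $\tau$. I would obtain fidi convergence by Lyapunov, with the covariance computing to $\mathscr{C}_{\star,p}$ as in \eqref{eq:QRD covariance kernel}, and tightness from the VC bracketing bound together with continuity of $\tau\mapsto\gamma(\tau),\varphi_s(\tau)$. The main obstacle is the uniform Bahadur step with random weights. I would handle it by proving the representation for arbitrary deterministic weights $1+V^\top\lambda$ uniformly over $\|\lambda\|\le C(nh)^{-1/2}$, and only then plugging in $\widetilde\lambda_p$.
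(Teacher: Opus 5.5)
Your proposal is correct and follows essentially the paper's route. You expand the EB weights via Lemma~\ref{lem:reg weights lambda}, obtain a uniform-in-$\tau$ Bahadur representation for the weighted local quantile score by a convexity/monotone-score argument with VC-type maximal inequalities (where $\log(n)/\sqrt{nh}\rightarrow0$ enters), read off the correction $-\gamma(\tau)^{\top}\widehat{\Delta}_{Z}$ from the cross term through the block inverse of $\mu_{\bar{Z}\bar{Z}^{\top},\pm}$, and conclude with a changing-class Donsker theorem. The paper does not need your uniformity over deterministic $\lambda$, because the score at $\varDelta=0$ is linear in the weights and the Hessian step only uses $\max_{i}|n\widehat{w}_{p,i}-1|=o_{p}(1)$.

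Two details need fixing. First, your formula for $\widetilde{\lambda}_{p}$ carries a spurious factor $(nh)^{-1}$; it should read $-\bigl((nh)^{-1}\sum_{i}V_{p,i}V^{\top}_{p,i}\bigr)^{-1}(0,\widehat{\Delta}^{\top}_{Z})^{\top}=O_{p}((nh)^{-1/2})$. Second, the remainder $R_{i}$ must be controlled in averaged form, as the paper does, for example $(nh)^{-1}\sum_{i}|R_{i}|\,|W_{s,p,i}|\leq\bigl(o_{p}((nh)^{-1/2})+C\max_{i}|V^{\top}_{p,i}\widehat{\lambda}_{p}|\,\|\widehat{\lambda}_{p}\|\bigr)(nh)^{-1}\sum_{i}|W_{s,p,i}|\,\|V_{p,i}\|=o_{p}((nh)^{-1/2})$. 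A bound on $\max_{i}|R_{i}|$ alone is not enough: the $2+\delta$ envelope only yields $\max_{i}\|V_{p,i}\|=o_{p}((nh)^{1/(2+\delta)})$, so such a bound cannot reach $o_{p}((nh)^{-1/2})$.
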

% {\color{red}[Yuya: shouldn't the covariance kernel be $\mathscr{C}_\ast\left(\cdot,\cdot\right)$ rather than $\mathscr{C}\left(\cdot,\cdot\right)$> Similarly, $\mathscr{V}\left(\tau\right)$ should change to $\mathscr{V}_\ast\left(\tau\right)$ in the last part of the theorem.]}

The second term in \eqref{eq:QRD asymptotic variance} represents
the reduction in asymptotic variance due to the covariate information.
Figure \ref{figure:qrd_variance} illustrates this efficiency gain
for $\widehat{\vartheta}^{\mathit{eb}}_{p}\left(\tau\mid h\right)$
under DGP I in Sections~\ref{sec:simu_design} and \ref{sec:simu_qrd}.

\begin{figure}[!t]
\caption{Asymptotic variance for estimating $\vartheta\left(\tau\right)$ with
and without covariate information}
\centering{}\label{figure:qrd_variance} \includegraphics{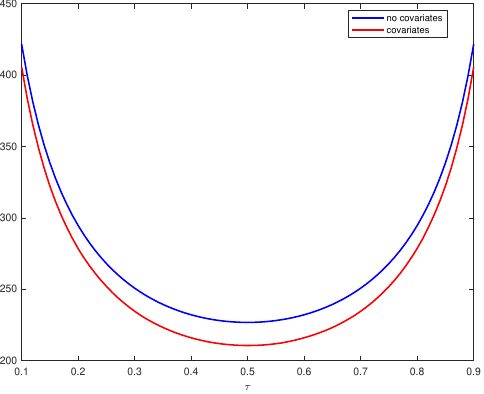} 
\end{figure}

%%%%%%%%%%%%%%%%%%%%%%%%%%%%%%%%%%%%%%%%%%%%%%%%%%%%%%%%%%%%%

\subsection{Quantile RDD inference with covariate adjustment}

\label{subsec:quantile_RDD_inference} %%%%%%%%%%%%%%%%%%%%%%%%%%%%%%%%%%%%%%%%%%%%%%%%%%%%%%%%%%%%%

The standard error constructed from the linearization of $\widehat{\vartheta}^{\mathit{eb}}_{p}\left(\tau\mid h\right)$
and the plug-in method requires choosing additional tuning parameters.
We propose using the nonparametric bootstrap to construct confidence
intervals (CIs) and uniform confidence bands (UCBs).

A nonparametric bootstrap sample $\left\{ Y^{*}_{i},X^{*}_{i},Z^{*}_{i}\right\} ^{n}_{i=1}$
consists of $n$ independent draws from the original sample with replacement.
Let $\left(W^{*}_{+,p,i}\left(h\right),W^{*}_{-,p,i}\left(h\right),W^{*}_{p,i}\left(h\right)\right)$
be the bootstrap analogues of $\left(W_{+,p,i}\left(h\right),W_{-,p,i}\left(h\right),W_{p,i}\left(h\right)\right)$:
$W^{*}_{+,p,i}\left(h\right)$ is defined by the right-hand side of
\eqref{eq:regression weight} with $X_{i}$ replaced by $X^{*}_{i}$,
$W^{*}_{-,p,i}\left(h\right)$ is defined similarly, and $W^{*}_{p,i}\left(h\right)\coloneqq W^{*}_{+,p,i}\left(h\right)-W^{*}_{-,p,i}\left(h\right)$.
Similarly, let $\widehat{\lambda}^{*}_{p}\left(h\right)$ be the bootstrap
analogue of $\widehat{\lambda}_{p}\left(h\right)$: $\widehat{\lambda}^{*}_{p}\left(h\right)$
is the optimizer of the convex optimization problem defined by \eqref{eq:dual}
with $V_{p,i}\left(h\right)$ replaced by $V^{*}_{p,i}\left(h\right)\coloneqq W^{*}_{p,i}\left(h\right)\bar{Z}^{*}_{i}$,
where $\bar{Z}^{*}_{i}\coloneqq\left(1,Z^{*\top}_{i}\right)^{\top}$.
Finally, let $\widehat{w}^{*}_{p,i}\left(h\right)$ denote the bootstrap
analogue of the balancing weights $\widehat{w}_{p,i}\left(h\right)$,
which is computed from the right-hand side of \eqref{eq:balancing weights}
with $\left(V_{p,i}\left(h\right),\widehat{\lambda}_{p}\left(h\right)\right)$
replaced by $\left(V^{*}_{p,i}\left(h\right),\widehat{\lambda}^{*}_{p}\left(h\right)\right)$.
We then define the bootstrap analogue $\widehat{\vartheta}^{*}_{p}\left(\tau\mid h\right)$
as the regression coefficient of $I^{*}_{i}$ in the bootstrap analogue of \eqref{QRD estimator bw}: 
\begin{equation}
\underset{b_{0},b_{1}}{\min}\sum_{i}\widehat{w}^{*}_{p,i}\left(h\right)K\left(\frac{X^{*}_{i}}{h}\right)\rho_{\tau}\left(Y^{*}_{i}-r^{\top}_{p}\left(X^{*}_{i}\right)b_{0}-I^{*}_{i}\cdot r^{\top}_{p}\left(X^{*}_{i}\right)b_{1}\right),\label{QRD estimator bw boot}
\end{equation}
where $I^{*}_{i}\coloneqq\mathbbm{1}\left(X^{*}_{i}>0\right)$. Alternatively,
one may replace $\widehat{w}^{*}_{p,i}\left(h\right)$ by $\widetilde{w}^{*}_{p,i}\left(h\right)$,
the bootstrap analogue of $\widetilde{w}_{p,i}\left(h\right)$.

%{\color{red}[Yuya: Shouldn't we resample $X_i$ inside $K(\cdot)$ and first $r_p(\cdot)$? ALso, don't we resample $I_i$?]}

Let $\mathbb{W}_{n}$ be a random element that depends on the bootstrap
sample and takes values in some Banach space $\mathbb{D}$. Let $\mathbb{W}$
be a tight random element in $\mathbb{D}$. ``$\mathbb{W}_{n}\rightsquigarrow_{*}\mathbb{W}$
in $\mathbb{D}$'' denotes conditional convergence in distribution
given the original sample:
\[
\underset{f\in\mathit{BL}_{1}\left(\mathbb{D}\right)}{\sup}\left|\mathrm{E}_{*}\left[f\left(\mathbb{W}_{n}\right)\right]-\mathrm{E}\left[f\left(\mathbb{W}\right)\right]\right|\rightarrow_{p}0,
\]
as $n\uparrow\infty$ (see \citealp[Section 23.2.1]{VanDerVaart1998}),
where $\mathrm{E}_{*}\left[\cdot\right]$ is the expectation with
respect to the resampling distribution (i.e., conditional distribution
given the original sample) and $\mathit{BL}_{1}\left(\mathbb{D}\right)$
denotes the collection of real-valued functions defined
on $\mathbb{D}$ that are bounded in absolute value by 1 and have Lipschitz constant at most 1. 

Let $\mathbb{Z}^{*}_{p}\left(\tau\mid h\right)\coloneqq\sqrt{nh}\left(\widehat{\vartheta}^{*}_{p}\left(\tau\mid h\right)-\widehat{\vartheta}^{\mathit{eb}}_{p}\left(\tau\mid h\right)\right)$.
The following result is a bootstrap analogue of Theorem \ref{thm:sharp QRD}. For expositional simplicity, the proof of Theorem \ref{thm:QRD bootstrap}
is presented for the bootstrap estimator constructed using the closed-form
weights $\widetilde{w}^{*}_{p,i}\left(h\right)$. The same conclusion
holds when $\widetilde{w}^{*}_{p,i}\left(h\right)$ is replaced by
$\widehat{w}^{*}_{p,i}\left(h\right)$, by a straightforward adaptation
of the argument using the corresponding bootstrap first-order expansions (a bootstrap analogue of Lemma \ref{lem:reg weights lambda} in the supplement).
\begin{thm}
\label{thm:QRD bootstrap}Suppose that Assumptions \ref{assu:smoothness Z}--\ref{assu:QRD smoothness}
hold. Assume that the bandwidth $h$ satisfies the conditions in the
statement of Theorem \ref{thm:sharp QRD}. Then, $\mathbb{Z}^{*}_{p}\left(\cdot\mid h\right)\rightsquigarrow_{*}\mathbb{Z}_{p}$
in $\ell^{\infty}\left[\underline{\tau},\overline{\tau}\right]$.
\end{thm}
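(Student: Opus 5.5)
The plan is to mirror the proof of Theorem \ref{thm:sharp QRD} on the bootstrap sample: build a uniform Bahadur-type linear representation for $\widehat{\vartheta}^{*}_{p}(\tau\mid h)$, subtract the representation of $\widehat{\vartheta}^{\mathit{eb}}_{p}(\tau\mid h)$, and show that the difference is a bootstrap empirical process that converges conditionally to $\mathbb{Z}_{p}$. Following the convention stated before the theorem, I would work with the closed-form weights $\widetilde{w}^{*}_{p,i}(h)$. Their multiplier $\widetilde{\lambda}^{*}_{p}(h)=-(\sum_i V^{*}_{p,i}V^{*\top}_{p,i})^{-1}\sum_i V^{*}_{p,i}$ is exactly linear, so no bootstrap version of Lemma \ref{lem:reg weights lambda} is needed.

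\textbf{Step 1 (bootstrap weights).} First I would show that $\sum_i V^{*}_{p,i}(h)/n$ is $O_{p^*}((nh)^{-1/2})$ in probability, and that $h\sum_i V^{*}_{p,i}V^{*\top}_{p,i}/n$ converges to the same nonsingular limit as in the original sample. This uses bootstrap laws of large numbers for kernel-weighted averages, together with the moment envelope in Assumption \ref{assu:data generating process}(iv). It gives
\[
n\widetilde{w}^{*}_{p,i}(h)=1+V^{*\top}_{p,i}(h)\widetilde{\lambda}^{*}_{p}(h)+o_{p^*}\bigl((nh)^{-1/2}\bigr)
\]
uniformly, so the bootstrap weights are positive with conditional probability approaching one. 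Up to the centering at $\widetilde{\lambda}_{p}(h)$, the leading term of $\widetilde{\lambda}^{*}_{p}(h)$ is $-(\mathrm{Var}$-type matrix$)^{-1}$ times $(nh)^{-1}\sum_i W^{*}_{p,i}\bar Z^{*}_i$.

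\textbf{Step 2 (uniform Bahadur representation).} Next I would apply the convexity argument of the original proof (Knight's identity or a Pollard/Hjort--Pollard convexity lemma, uniform in $\tau$ via monotonicity and a chaining step) to the weighted check-loss objective in \eqref{QRD estimator bw boot}. This yields, uniformly in $\tau\in[\underline{\tau},\overline{\tau}]$,
\[
\sqrt{nh}\bigl(\widehat{\vartheta}^{*}_{p}(\tau\mid h)-\widehat{\vartheta}^{\mathit{eb}}_{p}(\tau\mid h)\bigr)=\frac{1}{\sqrt{nh}}\sum_i\Bigl\{W^{*}_{p,i}\Bigl(\tfrac{U^{*}_{i}(\tau)}{\varphi_{s_i}(\tau)}-Z^{*\top}_{i}\gamma(\tau)\Bigr)-\bigl(\text{same with original data}\bigr)\Bigr\}+o_{p^*}(1).
\]
The smoothing bias $h^{p+1}\mathscr{B}_{\star,p}$ cancels because it appears identically in both representations. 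The quantity $\gamma(\tau)$ arises as the product of the balancing multiplier with the $Z$-score cross moments $\mu_{ZU(\tau),s}/\varphi_s(\tau)$.

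\textbf{Step 3 (conditional weak convergence).} The centered bootstrap sum is a bootstrap empirical process indexed by $\tau$, applied to the class $\{W\cdot(\mathbbm{1}(Y\le Q_{Y\mid X}(\tau\mid X))/\varphi_s(\tau),\,W Z)\}$. This class is VC-type, with envelope controlled by the kernel and by $\|Z\|$. I would invoke the bootstrap empirical process CLT for triangular arrays of $h$-dependent classes (e.g.\ \citet[Thm.~23.7]{VanDerVaart1998}-type arguments combined with Lindeberg conditions). Conditional finite-dimensional convergence would follow from a Lindeberg CLT, and the covariance kernel would match $\mathscr{C}_{\star,p}$ by the same computation as in Theorem \ref{thm:sharp QRD}. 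Conditional asymptotic equicontinuity would follow from the unconditional equicontinuity established there, through the standard multiplier or Poissonization inequalities, which use $\log(n)/\sqrt{nh}\to0$.

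\textbf{Main obstacle.} I expect the main obstacle to be the remainder control in Step 2. There the indicator functions are evaluated at the bootstrap estimates rather than at the true quantiles, so I need a bootstrap stochastic equicontinuity bound of the form
\[
\sup_{\tau,\ \|\delta\|\le C(nh)^{-1/2}}\Bigl|(\mathbb{P}^*_n-\mathbb{P}_n)\bigl[W\{\mathbbm{1}(Y\le q_\tau+\delta)-\mathbbm{1}(Y\le q_\tau)\}\bigr]\Bigr|=o_{p^*}\bigl((nh)^{-1/2}\bigr).
\]
My first attempt would be a maximal inequality for bootstrap processes over shrinking-variance VC classes, conditioning on the original sample. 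I would then reduce this bound to its analogue for the original sample plus a term of order $\log n/(nh)$.
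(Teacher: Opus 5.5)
Your plan has the same architecture as the paper's proof. Both use closed-form bootstrap weights. Both build bootstrap analogues of the linearization lemmas, with $\varDelta^{*}_{\pm}(\tau)$ centered at the true $\kappa_{\pm}(\tau)$ so that the smoothing bias cancels against that of $\varDelta_{\pm}(\tau)$. Both finish with a conditional functional CLT for the leading term. Your ``main obstacle'' is handled in Lemma~\ref{lem:bootstrap analogue 1} exactly as you describe.

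\textbf{A step that fails as written.} In Step~3 the bootstrap empirical process cannot be indexed by $h^{-1/2}\bar{W}_{p}Z^{\top}\gamma(\tau)$. Replacing the sample local-polynomial weights $W^{*}_{p,i}$ by their population versions $\bar{W}^{*}_{p,i}$ costs only $o^{*}_{p}(1)$ for summands that are conditionally centered given $X$, such as $U(\tau)$ and $\xi=Z-\mu_{Z}(X)$. It does not for $\mu_{Z}(X)$, because $(nh)^{-1/2}\sum_{i}K_{s,p}(X^{*}_{i}/h)\mu_{Z}(X^{*}_{i})$ is of order $\sqrt{nh}$. The fix is local:
\begin{enumerate}
\item Split $Z=\mu_{Z}(X)+\xi$.
\item Use exact polynomial reproduction of the sample weights (Lemma~\ref{lem:bootstrap reg weights basic}(ii) and its original-sample counterpart) to turn the $\mu_{Z}$ part into a bias. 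This bias cancels between the two representations, using $\mu_{Z,+}=\mu_{Z,-}$ and $nh^{2p+3}=O(1)$.
\item Only then switch to $\bar{W}^{*}_{p,i}$ in the $\xi$ part. This gives the class $\{g_{7}(\cdot\mid\tau)\}$ built from $z-\mu_{Z}(x)$.
\end{enumerate}
If you keep $Z$ in place of $\xi$, the extra term $\mathbb{G}^{*}_{n}\bigl(h^{-1/2}\bar{W}_{p}\mu_{Z}(X)^{\top}\gamma(\tau)\bigr)$ is $O^{*}_{p}(1)$ with nondegenerate conditional variance. It adds $2\omega^{0,2}_{p,0}(\mu^{\top}_{Z,+}\gamma(\tau))^{2}/\varphi$ to the limiting variance, which contradicts your claim that the covariance is $\mathscr{C}_{\star,p}$.

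Also correct the normalizations in Step~1. What you need is $(nh)^{-1/2}\sum_{i}V^{*}_{p,i}=O^{*}_{p}(1)$ and $(nh)^{-1}\sum_{i}V^{*}_{p,i}V^{*\top}_{p,i}=(\omega^{0,2}_{p,0}/\varphi)\mu_{\bar{Z}\bar{Z}^{\top},\pm}+o^{*}_{p}(1)$. Your $h\sum_{i}V^{*}_{p,i}V^{*\top}_{p,i}/n$ tends to zero. The corrected statements yield $\widetilde{\lambda}^{*}_{p}=O^{*}_{p}((nh)^{-1/2})$.

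\textbf{Where your tools differ but are legitimate.} For the argmin step, the paper reruns the subgradient-and-monotonicity argument of Lemma~\ref{lem:delta linearization}, not a convexity lemma. In the bootstrap this needs care with ties: the \citet{Cai2008} bound on the subgradient at the minimizer must be multiplied by the maximal multiplicity $E_{n}=o_{p}(\log n)$ of the resample. This gives $R^{*}_{+}(\varDelta^{*}_{+}(\tau),\tau)=O^{*}_{p}(\log(n)/\sqrt{nh})$. Your Knight-identity/convexity-lemma route sidesteps ties. However, you then need a convexity lemma that is uniform in $\tau$, which your ``monotonicity and chaining'' remark only sketches.

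For conditional tightness, the paper argues directly given the data. It gets finite-dimensional convergence from \citet{Mammen1992}. It then uses a $\delta$-net and the maximal inequality of \citet{chernozhukov2014gaussian} under the resampling law, with $\max_{i}G_{7}/\sqrt{n}=o_{p}(1)$ controlling the envelope term. Transferring unconditional equicontinuity through multiplier/Poissonization inequalities can also be made to work. For these $h$-dependent classes it is not off the shelf, though: the multiplier inequality must be applied with a truncation index $n_{0}$ satisfying $n_{0}h\to\infty$, which is more work than the direct conditional bound.
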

Theorems \ref{thm:sharp QRD} and \ref{thm:QRD bootstrap} are stated
for a bandwidth that is common to all quantiles. Quantile-dependent
bandwidths are standard in the literature on conditional quantile
processes: the bandwidth in \citet{Qu2015} varies across quantiles
to adapt to data sparsity, and \citet{Qu2019} allow for such bandwidths
in uniform inference on QTEs in the sharp RDD without covariates.
We model the bandwidth used at quantile $\tau$ as $h\left(\tau\right)\coloneqq c\left(\tau\right)h_{0}$,
$\tau\in\left[\underline{\tau},\overline{\tau}\right]$, where $h_{0}=h_{0,n}$
denotes a baseline bandwidth sequence and $c\left(\cdot\right)$ is
a scale function. We impose the following mild condition. \begin{assumption}
\label{assu:varying bandwidth}$c:\left[\underline{\tau},\overline{\tau}\right]\rightarrow\left[\underline{c},\overline{c}\right]$
is Lipschitz continuous, with $0<\underline{c}\leq\overline{c}<\infty$.
\end{assumption} Assumption \ref{assu:varying bandwidth} is the
same as \citet[Assumption 5]{Qu2019}. \begin{assumption} \label{assu:bounded variation}$K\left(\cdot\right)$
is Lipschitz continuous on $\mathbb{R}$. \end{assumption} Assumption
\ref{assu:bounded variation} strengthens Assumption \ref{assu:kernel}
only mildly: it is satisfied by many commonly used kernels (e.g.,
the triangular, Epanechnikov, and biweight kernels) and is implied
by the differentiability required in \citet[Assumption 4]{Qu2019}.\footnote{Assumption \ref{assu:bounded variation} implies that $K(\cdot)$ is of bounded
variation, which controls the uniform entropy of the function classes
indexed by the bandwidth that arise when the bandwidth varies across
quantiles.}

When the bandwidth varies across quantiles, the entire two-step procedure
is run at bandwidth $h\left(\tau\right)$ for each $\tau\in\left[\underline{\tau},\overline{\tau}\right]$.
The balancing weights $\widehat{w}_{p,i}\left(h\right)$ solve (\ref{eq:generalized balancing})
with $V_{p,i}\left(h\right)$ computed at bandwidth $h\left(\tau\right)$,
and $\widehat{\vartheta}^{\mathit{eb}}_{p}\left(\tau\mid h\left(\tau\right)\right)$
is the coefficient of $I_{i}$ in (\ref{QRD estimator bw}) with $h$
replaced by $h\left(\tau\right)$. A quantile-dependent bandwidth
alters the limiting covariance structure because the estimators at
quantiles $\tau$ and $\tau'$ average the data over local windows
of different widths $h\left(\tau\right)$ and $h\left(\tau'\right)$.
For $a,b\in\left[\underline{c},\overline{c}\right]$, define the constant
\begin{equation}
\Omega_{p}\left(a,b\right)\coloneqq\frac{1}{\sqrt{ab}}\int^{\infty}_{0}\mathcal{K}_{+,p,0}\left(\frac{u}{a}\right)\mathcal{K}_{+,p,0}\left(\frac{u}{b}\right)\mathrm{d}u.\label{eq:kernel overlap}
\end{equation}
Since $\mathcal{K}_{-,p,0}\left(t\right)=\mathcal{K}_{+,p,0}\left(-t\right)$
under Assumption \ref{assu:kernel}, the same value is obtained when
$\mathcal{K}_{+,p,0}$ is replaced by $\mathcal{K}_{-,p,0}$ and the
integration range by $\left(-\infty,0\right)$. Define the covariance
kernel 
\begin{equation}
\mathscr{C}^{c}_{\star,p}\left(\tau,\tau'\right)\coloneqq\frac{\Omega_{p}\left(c\left(\tau\right),c\left(\tau'\right)\right)}{\omega^{0,2}_{p,0}}\mathscr{C}_{\star,p}\left(\tau,\tau'\right).\label{eq:QRD covariance kernel vb}
\end{equation}
Now we extend the convergence results of Theorems \ref{thm:sharp QRD} and \ref{thm:QRD bootstrap} to bandwidths that vary across quantiles.

\begin{thm}
\label{thm:QRD varying bandwidth}Suppose that Assumptions \ref{assu:smoothness Z}--\ref{assu:bounded variation}
hold. Assume that the baseline bandwidth $h_{0}$ satisfies the conditions
imposed on $h$ in the statement of Theorem \ref{thm:sharp QRD}.
Then, $\mathbb{Z}^{\mathit{eb}}_{p}\left(\cdot\mid h\left(\cdot\right)\right)\rightsquigarrow\mathbb{Z}^{c}_{p}$
in $\ell^{\infty}\left[\underline{\tau},\overline{\tau}\right]$,
where $\left\{ \mathbb{Z}^{c}_{p}\left(\tau\right):\tau\in\left[\underline{\tau},\overline{\tau}\right]\right\} $
is a centered tight Gaussian process with covariance kernel $\mathscr{C}^{c}_{\star,p}\left(\cdot,\cdot\right)$
given by (\ref{eq:QRD covariance kernel vb}). Moreover, $\mathbb{Z}^{*}_{p}\left(\cdot\mid h\left(\cdot\right)\right)\rightsquigarrow_{*}\mathbb{Z}^{c}_{p}$
in $\ell^{\infty}\left[\underline{\tau},\overline{\tau}\right]$. 
\end{thm}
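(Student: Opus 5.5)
The argument reruns the proofs of Theorems \ref{thm:sharp QRD} and \ref{thm:QRD bootstrap} with the bandwidth indexed by $\tau$. The key point is that every object in those proofs depends on $\tau$ only through the check-function index and the bandwidth $h(\tau)=c(\tau)h_0$. The uniform Bahadur-type representation from Theorem \ref{thm:sharp QRD} therefore goes through, with all remainder bounds made uniform over $\tau$ jointly with $h\in[\underline{c}h_0,\overline{c}h_0]$.

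\textbf{Step 1: linearization.} The first task is to show that, uniformly in $\tau\in[\underline{\tau},\overline{\tau}]$,
\[
\mathbb{Z}^{\mathit{eb}}_{p}\left(\tau\mid h(\tau)\right)=\frac{1}{\sqrt{nh(\tau)}}\sum_{i}W^{0}_{p,i}\left(h(\tau)\right)\psi_{i}(\tau)+o_{p}(1),
\]
with two ingredients defined as follows.
\begin{itemize}
\item $W^{0}_{p,i}(h)$ is the population version of the regression weight, in which $\Pi_{\pm,p}(h)$ is replaced by $\varphi\Lambda_{\pm,p}$. It therefore equals $\mathcal{K}_{+,p,0}(X_i/h)I_i-\mathcal{K}_{-,p,0}(X_i/h)\mathbbm{1}(X_i<0)$ up to $\varphi^{-1}$.
\item $\psi_i(\tau)$ is the side-specific influence term, namely $U_i(\tau)/\varphi_{s}(\tau)-(Z_i-\mu_Z(X_i))^{\top}\gamma(\tau)$ for $s=\mathrm{sgn}(X_i)$.
\end{itemize}
The linearization has three parts. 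First, Lemma \ref{lem:reg weights lambda} from the supplement yields $\widehat{\lambda}_p(h)=-(1+\varrho)\widetilde{\lambda}_p(h)+o_p((nh)^{-1/2})$, and this must hold uniformly over $h$ in the band. The matrices $\Pi_{\pm,p}(h)$ and $\sum_iV_{p,i}V_{p,i}^\top$ are kernel averages indexed by $h$, so uniform consistency over $h\in[\underline{c}h_0,\overline{c}h_0]$ follows from a VC/uniform-entropy argument. That argument works because the classes $\{x\mapsto K(x/h)(x/h)^j\}$ have bounded uniform entropy when $K$ is Lipschitz (Assumption \ref{assu:bounded variation}), which is where that assumption enters. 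Second, the weighted check-function objective is convex, so the convexity lemma together with a uniform maximal inequality gives the Bahadur representation uniformly in $(\tau,h)$; the condition $\log(n)/\sqrt{nh_0}\to0$ controls the remainder. Third, the bias term $h(\tau)^{p+1}\mathscr{B}_{\star,p}(\tau)$ arises exactly as before, with $\mathscr{B}_{\star,p}$ continuous in $\tau$.

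\textbf{Step 2: weak convergence of the leading process.} The leading process is $G_n(\tau)=\sum_i f_{n,\tau}(Y_i,X_i,Z_i)$ with $f_{n,\tau}=(nh(\tau))^{-1/2}W^0_{p}(h(\tau))\psi(\tau)$. This is a triangular-array empirical process, to which I would apply the functional CLT of \citet[Theorem 2.11.23]{vanderVaart1996} (or Pollard's FCLT).

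\emph{Finite-dimensional limits and covariance.} The covariance at $(\tau,\tau')$ is obtained by a change of variables $u=x/h_0$:
\[
\frac{1}{h_0\sqrt{c(\tau)c(\tau')}}\mathrm{E}\left[\mathcal{K}_{+,p,0}\left(\frac{X}{c(\tau)h_0}\right)\mathcal{K}_{+,p,0}\left(\frac{X}{c(\tau')h_0}\right)\psi(\tau)\psi(\tau')I\right]\to\varphi\,\Omega_p\left(c(\tau),c(\tau')\right)\mathrm{E}_+\left[\psi(\tau)\psi(\tau')\right].
\]
Here $\mathrm{E}_+$ denotes the right-hand limit at the cutoff, as in the paper's notation. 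Matching this with the $\tau$-side term inside $\mathscr{C}_{\star,p}$, which carries the factor $\omega^{0,2}_{p,0}$, gives $\mathscr{C}^c_{\star,p}$. The Lindeberg condition follows from the $2+\delta$ moments in Assumption \ref{assu:data generating process}(iv).

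\emph{Tightness.} I would establish tightness through the bracketing or uniform-entropy condition together with $L_2$-equicontinuity.
\begin{itemize}
\item For the bandwidth direction, the Lipschitz continuity of $K$ and of $c$ (Assumption \ref{assu:varying bandwidth}) gives $\mathrm{E}|f_{n,\tau}-f_{n,\tau'}|^2\lesssim|\tau-\tau'|$.
\item For the indicator $\mathbbm{1}(Y\le Q_{Y|X}(\tau|X))$, the bounded conditional density gives the same type of bound.
\item The classes are products of VC-type classes, so the uniform entropy integral is finite.
\end{itemize}
I expect this equicontinuity step, with both the indicator index and the kernel scale moving together, to be the main obstacle. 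It is precisely where $h(\tau)$ makes the class depend on $\tau$ non-trivially.

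\textbf{Step 3: the bootstrap.} The bootstrap claim follows by repeating Steps 1--2 for the bootstrap analogue. This uses the same bootstrap linearization as in Theorem \ref{thm:QRD bootstrap} (bootstrap analogue of Lemma \ref{lem:reg weights lambda}), now holding uniformly in $h$. Conditional weak convergence then follows from the bootstrap empirical-process CLT for Donsker-type triangular arrays \citep[Theorem 3.6.13 / 3.6.1]{vanderVaart1996}, applied to the same function class $\{f_{n,\tau}\}$. In that application the centering at $\widehat{\vartheta}^{\mathit{eb}}_p$ removes the bias term.
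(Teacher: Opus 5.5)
Your proposal follows essentially the same route as the paper's proof. You establish the linearization uniformly over $[\underline{\tau},\overline{\tau}]\times[\underline{c}h_0,\overline{c}h_0]$, with Assumption \ref{assu:bounded variation} supplying VC-type kernel classes indexed by the bandwidth. You then evaluate at $h=h(\tau)$ and apply a changing-class functional CLT. The covariance comes from the change of variables $x=h_0u$, which produces $\Omega_p(c(\tau),c(\tau'))$. Equicontinuity comes from the Lipschitz continuity of $\mathcal{K}_{\pm,p,0}$ and $c(\cdot)$. The paper uses \citet[Theorem 19.28]{VanDerVaart1998}, the uniform-entropy version. That is the version your VC-type argument feeds, rather than a bracketing theorem.

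The step that does not work as written is Step 3. The bootstrap theorems you cite are Gin\'{e}--Zinn-type results for a \emph{fixed} Donsker class with a fixed envelope. Your class $\{f_{n,\tau}\}$ changes with $n$ and has an envelope of order $h_0^{-1/2}$, so those theorems cannot simply be applied to it.

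The paper instead reruns the argument from the proof of Theorem \ref{thm:QRD bootstrap}, uniformly in $h$:
\begin{itemize}
\item Conditional finite-dimensional convergence is proved directly, via \citet[Theorem 1]{Mammen1992} and the Cram\'{e}r--Wold device.
\item Conditional asymptotic equicontinuity uses a finite $\delta$-net discretization of $[\underline{\tau},\overline{\tau}]$. This is combined with the maximal inequality of \citet{chernozhukov2014gaussian} under the resampling distribution, where the maximal envelope over the sample is $o_p(\sqrt{n})$.
\item The bootstrap linearization there also needs the bootstrap weights to be positive with probability approaching one.
\item It further needs a bound on the maximal multiplicity of resampled observations in the subgradient inequality.
\end{itemize}
Replace your citation with this argument, made uniform over $h\in[\underline{c}h_0,\overline{c}h_0]$.
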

By a change of variables, we have $\Omega_{p}\left(a,a\right)=\omega^{0,2}_{p,0}$
and $\mathscr{C}^{c}_{\star,p}\left(\tau,\tau\right)=\mathscr{V}_{\star,p}\left(\tau\right)$:
the pointwise asymptotic bias and variance are exactly those of Theorem
\ref{thm:sharp QRD} evaluated at the bandwidth $h\left(\tau\right)$.
When $c\left(\cdot\right)$ is constant, $\mathscr{C}^{c}_{\star,p}=\mathscr{C}_{\star,p}$
and Theorem \ref{thm:QRD varying bandwidth} reduces to Theorem \ref{thm:sharp QRD}.
The overlap structure in (\ref{eq:kernel overlap})--(\ref{eq:QRD covariance kernel vb})
parallels the covariance function of the limiting process in \citet[Proposition 1]{Qu2019}.

Let $\mathbb{Z}^{\mathit{us}}_{p}\left(\tau\mid h\right)\coloneqq\sqrt{nh}\left(\widehat{\vartheta}^{\mathit{eb}}_{p}\left(\tau\mid h\right)-\vartheta\left(\tau\right)\right)$
for $\tau\in\left[\underline{\tau},\overline{\tau}\right]$. Theorem
\ref{thm:QRD varying bandwidth} implies that under the undersmoothing
assumption $nh^{2p+3}_{0}=o\left(1\right)$, the bootstrap analogue
$\mathbb{Z}^{*}_{p}\left(\cdot\mid h\left(\cdot\right)\right)$ of the process
$\mathbb{Z}^{\mathit{us}}_{p}\left(\cdot\mid h\left(\cdot\right)\right)$ conditionally converges
in distribution to the same limiting Gaussian process. By the continuous
mapping theorem (CMT) and the bootstrap CMT, $\left\Vert \mathbb{Z}^{\mathit{us}}_{p}\left(\cdot\mid h\left(\cdot\right)\right)\right\Vert _{\infty}\rightsquigarrow\left\Vert \mathbb{Z}^{c}_{p}\right\Vert _{\infty}$
and $\left\Vert \mathbb{Z}^{*}_{p}\left(\cdot\mid h\left(\cdot\right)\right)\right\Vert _{\infty}\rightsquigarrow_{*}\left\Vert \mathbb{Z}^{c}_{p}\right\Vert _{\infty}$
under $nh^{2p+3}_{0}=o\left(1\right)$. These results justify the
asymptotic validity of the bootstrap CIs and UCBs we propose below.

Let $\mathrm{Pr}_{*}\left[\cdot\right]$ be the conditional probability
given the original sample and let 
\begin{equation}
s_{p,q}\left(\tau\mid h\right)\coloneqq\inf\left\{ u\in\mathbb{R}:\mathrm{Pr}_{*}\left[\widehat{\vartheta}^{*}_{p}\left(\tau\mid h\right)\leq u\right]\geq q\right\} ,\,q\in\left(0,1\right),\label{eq:theta_rho_hat_star quantile}
\end{equation}
be the $q$-th quantile of the resampling distribution of $\widehat{\vartheta}^{*}_{p}\left(\tau\mid h\right)$.
Let $1-\alpha$ be the nominal coverage probability. The bootstrap
percentile CI $\mathit{CI}_{p,\alpha}\left(\tau\mid h\right)\coloneqq\left[s_{p,\alpha/2}\left(\tau\mid h\right),s_{p,1-\alpha/2}\left(\tau\mid h\right)\right]$
is asymptotically valid, i.e., $\Pr\left[\vartheta\left(\tau\right)\in\mathit{CI}_{p,\alpha}\left(\tau\mid h\right)\right]\rightarrow1-\alpha$
as $n\uparrow\infty$, by Theorem \ref{thm:QRD bootstrap} and standard
arguments. The following algorithm summarizes the procedure to compute
the bootstrap CI using the quantiles in \eqref{eq:theta_rho_hat_star quantile}.
Let $N_{B}$ denote the number of bootstrap replications. \begin{lyxalgorithm}[Bootstrap
percentile confidence interval] \label{alg:qrd_ci}\textbf{ }Step
1: In each of the replications $b\in\left[N_{B}\right]$, independently
draw $\left\{ Y^{*(b)}_{i},X^{*(b)}_{i},Z^{*(b)}_{i}\right\} ^{n}_{i=1}$
with replacement from the original sample. Step 2: For all $b\in\left[N_{B}\right]$,
compute the balancing weights $\widehat{w}^{*(b)}_{p,i}\left(h\right)$
using $\left\{ X^{*(b)}_{i},Z^{*(b)}_{i}\right\} ^{n}_{i=1}$. Step
3: Compute $\widehat{\vartheta}^{*(b)}_{p}\left(\tau\mid h\right)$
as the coefficient of $I^{*(b)}_{i}$ in (\ref{QRD estimator bw boot}), evaluated with $\widehat{w}^{*(b)}_{p,i}\left(h\right)$, and $\left\{ Y^{*(b)}_{i},X^{*(b)}_{i},Z^{*(b)}_{i}\right\} ^{n}_{i=1}$.
Step 4: Sort $\left\{ \widehat{\vartheta}^{*(b)}_{p}\left(\tau\mid h\right)\right\} ^{N_{B}}_{b=1}$
to obtain the order statistics $\vartheta_{\left\langle 1\right\rangle }\leq\cdots\leq\vartheta_{\left\langle N_{B}\right\rangle }$.
Step 5: Return the CI $\left[\vartheta_{\left\langle \left\lceil N_{B}\times\left(\alpha/2\right)\right\rceil \right\rangle },\vartheta_{\left\langle \left\lceil N_{B}\times\left(1-\alpha/2\right)\right\rceil \right\rangle }\right]$
for $\vartheta\left(\tau\right)$. \end{lyxalgorithm} The convergence
results $\left\Vert \mathbb{Z}^{\mathit{us}}_{p}\left(\cdot\mid h\left(\cdot\right)\right)\right\Vert _{\infty}\rightsquigarrow\left\Vert \mathbb{Z}^{c}_{p}\right\Vert _{\infty}$
and $\left\Vert \mathbb{Z}^{*}_{p}\left(\cdot\mid h\left(\cdot\right)\right)\right\Vert _{\infty}\rightsquigarrow_{*}\left\Vert \mathbb{Z}^{c}_{p}\right\Vert _{\infty}$
justify the asymptotic validity of the nonstudentized UCBs. Studentized UCBs additionally account for variation in the asymptotic variance
$\mathscr{V}_{\star,p}\left(\tau\right)$ across quantiles. We follow the
construction of \citet{Chernozhukov2018} and studentize $\mathbb{Z}^{*}_{p}\left(\cdot\mid h\left(\cdot\right)\right)$
and $\mathbb{Z}^{\mathit{us}}_{p}\left(\cdot\mid h\left(\cdot\right)\right)$ by the interquartile
range (IQR). Let $z_{q}$ be the $q$-th quantile of the standard normal distribution.
Since $\sqrt{nh}\left(s_{p,q}\left(\tau\mid h\right)-\widehat{\vartheta}^{\mathit{eb}}_{p}\left(\tau\mid h\right)\right)$
is the $q$-th quantile of the resampling distribution of $\mathbb{Z}^{*}_{p}\left(\tau\mid h\right)$,
it follows from $\mathbb{Z}^{*}_{p}\left(\cdot\mid h\right)\rightsquigarrow_{*}\mathbb{Z}_{p}$
and standard arguments that 
\[
\sqrt{nh}\left(s_{p,q}\left(\tau\mid h\right)-\widehat{\vartheta}^{\mathit{eb}}_{p}\left(\tau\mid h\right)\right)\rightarrow_{p}\sqrt{\mathscr{V}_{\star,p}\left(\tau\right)}\cdot z_{q},
\]
for all $\tau\in\left[\underline{\tau},\overline{\tau}\right]$. Therefore,
$\left(nh\right)\left(\mathit{IQR}_{p}\left(\tau\mid h\right)/\left(z_{0.75}-z_{0.25}\right)\right)^{2}$
is a consistent estimator of $\mathscr{V}_{\star,p}\left(\tau\right)$,
where $\mathit{IQR}_{p}\left(\tau\mid h\right)\coloneqq s_{p,0.75}\left(\tau\mid h\right)-s_{p,0.25}\left(\tau\mid h\right)$
is the IQR of the resampling distribution of $\widehat{\vartheta}^{*}_{p}\left(\tau\mid h\right)$.
Let 
\[
s^{\mathit{ucb}}_{p,q}\left(h\left(\cdot\right)\right)\coloneqq\inf\left\{ u\in\mathbb{R}:\mathrm{Pr}_{*}\left[\underset{\tau\in\left[\underline{\tau},\overline{\tau}\right]}{\sup}\frac{\left|\widehat{\vartheta}^{*}_{p}\left(\tau\mid h\left(\tau\right)\right)-\widehat{\vartheta}^{\mathit{eb}}_{p}\left(\tau\mid h\left(\tau\right)\right)\right|}{\mathit{IQR}_{p}\left(\tau\mid h\left(\tau\right)\right)/\left(z_{0.75}-z_{0.25}\right)}\leq u\right]\geq q\right\} 
\]
be the $q$-th quantile of the resampling distribution of the supremum
of the studentized version of $\mathbb{Z}^{*}_{p}\left(\cdot\mid h\left(\cdot\right)\right)$.
A UCB is given by the collection of intervals $\left\{ \mathit{CB}_{p,\alpha}\left(\tau\mid h\left(\cdot\right)\right):\tau\in\left[\underline{\tau},\overline{\tau}\right]\right\} $,
where 
\begin{equation}
\mathit{CB}_{p,\alpha}\left(\tau\mid h\left(\cdot\right)\right)\coloneqq\left[\widehat{\vartheta}^{\mathit{eb}}_{p}\left(\tau\mid h\left(\tau\right)\right)\pm s^{\mathit{ucb}}_{p,1-\alpha}\left(h\left(\cdot\right)\right)\cdot\frac{\mathit{IQR}_{p}\left(\tau\mid h\left(\tau\right)\right)}{z_{0.75}-z_{0.25}}\right],\,\tau\in\left[\underline{\tau},\overline{\tau}\right].\label{eq:CB_Q variable width}
\end{equation}

We summarize the procedure in the following algorithm. Let $T$ be
a large positive integer and let $\mathcal{T}\coloneqq\left\{ \tau^{\left(1\right)},\ldots,\tau^{\left(T\right)}\right\} $
be equally spaced grid points in $\left[\underline{\tau},\overline{\tau}\right]$.
\begin{lyxalgorithm}[Bootstrap uniform confidence band] \label{alg:qrd_ucb}
Steps 1--3: Same as those in Algorithm \ref{alg:qrd_ci}. Step 4:
Compute $\widehat{\vartheta}^{*(b)}_{p}\left(\tau\mid h\left(\tau\right)\right)$
for $\left(b,\tau\right)\in\left[N_{B}\right]\times\mathcal{T}$ and
compute $\widehat{\vartheta}^{\mathit{eb}}_{p}\left(\tau\mid h\left(\tau\right)\right)$
for $\tau\in\mathcal{T}$. Step 5: Compute the order statistics $\vartheta_{\left\langle 1\right\rangle }\left(\tau\right)\leq\cdots\leq\vartheta_{\left\langle N_{B}\right\rangle }\left(\tau\right)$
corresponding to $\left\{ \widehat{\vartheta}^{*(b)}_{p}\left(\tau\mid h\left(\tau\right)\right)\right\} ^{N_{B}}_{b=1}$
for all $\tau\in\mathcal{T}$. Step 6: Compute 
\[
\left\{ \underset{\tau\in\mathcal{T}}{\max}\frac{\left|\widehat{\vartheta}^{*(b)}_{p}\left(\tau\mid h\left(\tau\right)\right)-\widehat{\vartheta}^{\mathit{eb}}_{p}\left(\tau\mid h\left(\tau\right)\right)\right|}{\left(\vartheta_{\left\langle \left\lceil N_{B}\times0.75\right\rceil \right\rangle }\left(\tau\right)-\vartheta_{\left\langle \left\lceil N_{B}\times0.25\right\rceil \right\rangle }\left(\tau\right)\right)/\left(z_{0.75}-z_{0.25}\right)}\right\} ^{N_{B}}_{b=1}
\]
and obtain the corresponding order statistics $s^{\mathit{ucb}}_{\left\langle 1\right\rangle }\leq\cdots\leq s^{\mathit{ucb}}_{\left\langle N_{B}\right\rangle }$
and the critical value $s^{\mathit{ucb}}_{\left\langle \left\lceil N_{B}\left(1-\alpha\right)\right\rceil \right\rangle }$.
Step 7: Return the UCB 
\[
\left\{ \left[\widehat{\vartheta}^{\mathit{eb}}_{p}\left(\tau\mid h\left(\tau\right)\right)\pm s^{\mathit{ucb}}_{\left\langle \left\lceil N_{B}\left(1-\alpha\right)\right\rceil \right\rangle }\left(\frac{\vartheta_{\left\langle \left\lceil N_{B}\times0.75\right\rceil \right\rangle }\left(\tau\right)-\vartheta_{\left\langle \left\lceil N_{B}\times0.25\right\rceil \right\rangle }\left(\tau\right)}{z_{0.75}-z_{0.25}}\right)\right]\right\} _{\tau\in\mathcal{T}}.
\]
\end{lyxalgorithm} To compute the QRD estimator with balancing weights,
we run the weighted LP quantile regression (\ref{QRD estimator bw})
with $p=2$ and with the bandwidth determined by minimizing the AMSE
for $p=1$: 
\begin{equation}
\widehat{h}(\tau)\coloneqq\left(\frac{\widehat{\mathscr{V}}_{\star}\left(\tau\right)}{4\cdot\widehat{\mathscr{B}}^{2}_{\star}\left(\tau\right)}\right)^{1/5}n^{-1/5}.\label{eq:MSEbw_qte}
\end{equation}
The combination of quadratic local quantile regression and AMSE-optimal
bandwidth for the linear fit internalizes bias correction in the spirit
of \citet[Remark 7]{calonico2014robust}. \footnote{\color{black}\citet{calonico2018optimal} show that the AMSE-optimal bandwidth remains valid for robust bias-corrected inference, although it is not coverage-error optimal. Because our analysis focuses on first-order properties, exploring coverage-error-optimal bandwidth selection is beyond the scope of this paper. We leave this as an important direction for future research.} 
Appendix \ref{app:bdw_qrd}
provides detailed steps to obtain the estimates $\widehat{\mathscr{B}}_{\star}\left(\tau\right)$
and $\widehat{\mathscr{V}}_{\star}\left(\tau\right)$ for $\mathscr{B}_{\star,1}\left(\tau\right)$
and $\mathscr{V}_{\star,1}\left(\tau\right)$. The AMSE-optimal bandwidth
$\widehat{h}\left(\tau\right)$ varies across quantiles. We propose
the CI $\mathit{CI}_{2,\alpha}\left(\tau\mid\widehat{h}\left(\tau\right)\right)$
for fixed $\tau$ and the UCB $\left\{ \mathit{CB}_{2,\alpha}\left(\tau\mid\widehat{h}\left(\cdot\right)\right):\tau\in\left[\underline{\tau},\overline{\tau}\right]\right\} $
for practical implementation.

%%%%%%%%%%%%%%%%%%%%%%%%%%%%%%%%%%%%%%%%%%%%%%%%%%%%%%%%%%%%%

\section{Derivative estimation: sharp regression kink design}

\label{subsec:Sharp-regression-kink} %%%%%%%%%%%%%%%%%%%%%%%%%%%%%%%%%%%%%%%%%%%%%%%%%%%%%%%%%%%%%
In this section, we compare our proposed reweighting-based covariate
adjustment approach with the regression-based approach in the context
of derivative estimation for the conditional mean. In this setting, CCFT show that the regression-based approach yields consistent estimation and improves
efficiency by exploiting derivative restrictions on the covariates.
Our proposed approach can further enhance efficiency by also exploiting
the level restrictions, in addition to the derivative restrictions
on the covariates. %To fix ideas, we focus throughout this section on the sharp regression kink design (RKD) with covariates.

In the sharp RKD, $\vartheta_{\dagger}\coloneqq\mu^{\left(1\right)}_{Y,+}-\mu^{\left(1\right)}_{Y,-}$
is proportional to the treatment effect generated by a kink in the
treatment rule. The following stronger continuity assumption is implied
by a valid sharp RKD. \begin{assumption} \label{assu:RKD smoothness}$\mu_{Z}\left(\cdot\right)$
is continuously differentiable at 0. \end{assumption} Assumption
\ref{assu:RKD smoothness} is the RKD counterpart of the covariate
continuity condition imposed in the sharp RDD and plays the same role
there. Because the sharp RKD measures the treatment effect from the
change in the slope of the regression function at the cutoff rather
than from a jump in its level, identification requires smoothness
rather than mere continuity: \citet{Card:2015gc} show that identifying
the sharp RKD effect requires the conditional density of the score
given the unobservable to be continuously differentiable at the cutoff
\citep[Assumption 4]{Card:2015gc}, a strengthening of the continuity
that suffices for the sharp RDD \citep[Condition 2b]{lee2008randomized},
so that a kink in the outcome can be attributed to the policy. Continuous
differentiability of $\mu_{Z}\left(\cdot\right)$ at the cutoff is
implied by the model assumptions \citep[Proposition 1(a) and Corollary 2]{Card:2015gc}.
It imposes both $\mu_{Z,+}=\mu_{Z,-}$ and $\mu^{\left(1\right)}_{Z,+}=\mu^{\left(1\right)}_{Z,-}$,
strengthening the level balance of the sharp RDD case, so that predetermined
covariates exhibit neither a jump nor a kink at the cutoff. As in
the RDD case, this is a testable implication of a valid design rather
than an additional restriction. Like its RDD analogue, Assumption
\ref{assu:RKD smoothness} therefore requires nothing that a careful
RKD study does not already check.

Recall from Section \ref{subsec:cov_rd} the equivalent kernels $\mathcal{K}_{+,p,\nu}$
and $\mathcal{K}_{-,p,\nu}$. Let $\varsigma_{p}\coloneqq\int^{1}_{0}\mathcal{K}_{+,p,0}\left(t\right)\mathcal{K}_{+,p,1}\left(t\right)\mathrm{d}t$.
One can verify that $\int^{0}_{-1}\mathcal{K}_{-,p,0}\left(t\right)\mathcal{K}_{-,p,1}\left(t\right)\mathrm{d}t=-\varsigma_{p}$.

%%%%%%%%%%%%%%%%%%%%%%%%%%%%%%%%%%%%%%%%%%%%%%%%%%%%%%%%%%%%%

\subsection{Augmented regression approach}

\label{subsec:RKD_regression} %%%%%%%%%%%%%%%%%%%%%%%%%%%%%%%%%%%%%%%%%%%%%%%%%%%%%%%%%%%%%

CCFT's estimator $\widehat{\vartheta}^{\mathit{CCFT}}_{\dagger,p}\left(h\right)$
is the regression coefficient of $I_{i}X_{i}$ in (\ref{eq:CCFT estimator}).
CCFT show that under the condition $\mu^{\left(1\right)}_{Z,+}=\mu^{\left(1\right)}_{Z,-}$
implied by Assumption \ref{assu:RKD smoothness}, 
\[
\sqrt{nh^{3}}\left(\widehat{\vartheta}^{\mathit{CCFT}}_{\dagger,p}\left(h\right)-\vartheta_{\dagger}-h^{p}\mathscr{B}_{\dagger,p}\right)\rightarrow_{d}\mathrm{N}\left(0,\mathscr{V}_{\dagger,p}\right),
\]
where 
\[
\mathscr{B}_{\dagger,p}\coloneqq\frac{\mu^{\left(p+1\right)}_{\epsilon,+}\omega^{p+1,1}_{+,p,1}-\mu^{\left(p+1\right)}_{\epsilon,-}\omega^{p+1,1}_{-,p,1}}{\left(p+1\right)!}\mbox{ and }\mathscr{V}_{\dagger,p}\coloneqq\frac{\omega^{0,2}_{p,1}\sigma^{2}}{\varphi}.
\]
Recall that $\epsilon\coloneqq Y-Z^{\top}\gamma$ and the constants
$\left(\gamma,\sigma^{2}\right)$ are defined by (\ref{eq:gamma sigma definition}).
The simple LP derivative estimator without covariates, i.e., the coefficient
of $I_{i}X_{i}$ in \eqref{eq:LP without covariates}, has an asymptotic
variance given by $\omega^{0,2}_{p,1}\mathrm{Var}_{\pm}\left[Y\right]/\varphi\geq\mathscr{V}_{\dagger,p}$.
We can extend the result in Theorem \ref{thm:normality} to show that
$\widehat{\vartheta}^{\mathit{CCFT}}_{\dagger,p}\left(h\right)$ is
first-order equivalent to an estimator using balancing weights that
incorporate the ``derivative'' balance condition $\mu^{\left(1\right)}_{Z,+}=\mu^{\left(1\right)}_{Z,-}$
for the first-order derivatives of $\mu_{Z}\left(\cdot\right)$.

%%%%%%%%%%%%%%%%%%%%%%%%%%%%%%%%%%%%%%%%%%%%%%%%%%%%%%%%%%%%%

\subsection{Augmentation plus reweighting}

\label{subsec:RKD_reweighting}

As argued in the previous subsection, the CCFT estimator achieves
efficiency gains by exploiting the covariate information embodied
in the derivative balance condition. We show that further efficiency
gains are possible by additionally incorporating the level balance
condition, $\mu_{Z,+}=\mu_{Z,-}$, through the reweighting approach
proposed in Section \ref{subsec:proposed_approach}. Let $\widehat{\vartheta}^{\mathit{eb}}_{\dagger,p}\left(h\right)$
be the regression coefficient of $I_{i}X_{i}$ in 
\begin{equation}
\underset{b_{0},b_{1},b_{2}}{\min}\sum_{i}\widehat{w}_{p,i}\left(h\right)K\left(\frac{X_{i}}{h}\right)\left\{ Y_{i}-r^{\top}_{p}\left(X_{i}\right)b_{0}-I_{i}\cdot r^{\top}_{p}\left(X_{i}\right)b_{1}-Z^{\top}_{i}b_{2}\right\} ^{2}.\label{eq:adjusted CCFT kink}
\end{equation}

Let 
\begin{equation}
\gamma_{\dagger,p}\coloneqq\left(\frac{\varsigma_{p}}{\omega^{0,2}_{p,0}}\right)\left(\mathrm{Var}_{\pm}\left[Z\right]\right)^{-1}\left(\mathrm{Cov}_{+}\left[Z,\epsilon\right]-\mathrm{Cov}_{-}\left[Z,\epsilon\right]\right).\label{eq:gamma_RKD}
\end{equation}
The following result establishes the asymptotic normality of the proposed RKD estimator $\widehat{\vartheta}^{\mathit{eb}}_{\dagger,p}\left(h\right)$ and quantifies the additional variance reduction from incorporating the level balance condition.

\begin{thm}
\label{thm:kink}Suppose that Assumptions \ref{assu:smoothness Z}--\ref{assu:data generating process}
and \ref{assu:RKD smoothness} hold, and the bandwidth satisfies $nh^{2p+3}=O\left(1\right)$
and $nh^{3}\rightarrow\infty$. Then, 
\[
\sqrt{nh^{3}}\left(\widehat{\vartheta}^{\mathit{eb}}_{\dagger,p}\left(h\right)-\vartheta_{\dagger}-h^{p}\mathscr{B}_{\star\dagger,p}\right)\rightarrow_{d}\mathrm{N}\left(0,\mathscr{V}_{\star\dagger,p}\right),
\]
where 
\[
\mathscr{B}_{\star\dagger,p}\coloneqq\mathscr{B}_{\dagger,p}-\frac{\left(\mu^{\left(p+1\right)}_{Z,+}\omega^{p+1,1}_{+,p,0}-\mu^{\left(p+1\right)}_{Z,-}\omega^{p+1,1}_{-,p,0}\right)^{\top}\gamma_{\dagger,p}}{\left(p+1\right)!}\mbox{ and }\mathscr{V}_{\star\dagger,p}\coloneqq\mathscr{V}_{\dagger,p}-\frac{\omega^{0,2}_{p,0}\gamma^{\top}_{\dagger,p}\mathrm{Var}_{\pm}\left[Z\right]\gamma_{\dagger,p}}{\varphi}.
\]
\end{thm}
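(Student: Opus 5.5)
The plan is to linearize the EB weights around the uniform weights and then apply the partitioned-regression (Frisch--Waugh) representation of the weighted augmented regression \eqref{eq:adjusted CCFT kink}.

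\textbf{Step 1: linearize the weights.} By Lemma \ref{lem:reg weights lambda} in the supplement, $\widehat{\lambda}_{p}(h)=-(1+\varrho)\widetilde{\lambda}_{p}(h)+o_p((nh)^{-1/2})$. Substituting into \eqref{eq:balancing weights} gives
\[
n\widehat{w}_{p,i}(h)=1+V^{\top}_{p,i}(h)\widetilde{\lambda}_{p}(h)+r_{i},
\]
with a remainder that is negligible after summation. Because $\sum_i V_{p,i}(h)=nh\{\mathrm{e}$-th row of level estimates$\}$ and $(nh)^{-1}\sum_i W_{p,i}(h)Z_i=\widehat{\mu}_{Z,+}-\widehat{\mu}_{Z,-}$, the multiplier satisfies $\widetilde{\lambda}_{p}(h)=-(nh)^{-1}\times(\text{a consistent matrix})^{-1}(\widehat{\mu}_{Z,+}-\widehat{\mu}_{Z,-})$ up to higher order. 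Computing $\sum_i V_{p,i}V_{p,i}^{\top}\approx nh\,\omega^{0,2}_{p,0}\varphi^{-1}\mathrm{Var}_{\pm}[\bar Z]$-type limits (the variance-normalized block), the weight perturbation is of the form
\[
\frac{1}{nh}W_{p,i}(h)\,Z_i^{\top}\Gamma_n\,(\widehat{\mu}_{Z,+}-\widehat{\mu}_{Z,-}),\qquad \Gamma_n\to_p \frac{\varphi}{\omega^{0,2}_{p,0}}(\mathrm{Var}_{\pm}[Z])^{-1}.
\]
Since $\widehat{\mu}_{Z,+}-\widehat{\mu}_{Z,-}=O_p((nh)^{-1/2})+O(h^{p+1})$ under Assumption \ref{assu:RKD smoothness}, the perturbation is small.

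\textbf{Step 2: expand the estimator.} The RKD coefficient from \eqref{eq:adjusted CCFT kink} is a smooth functional of weighted moments. With uniform weights it equals $\widehat{\vartheta}^{\mathit{CCFT}}_{\dagger,p}(h)=(nh^{2})^{-1}\sum_i W^{(1)}_{p,i}(h)(Y_i-Z_i^{\top}\widehat{\gamma})$, where $W^{(1)}$ denotes the slope-weights built from $\mathrm{e}_{p+1,2}$. Perturbing the weights by $\delta_i$ changes this coefficient, to first order, by
\[
(nh^{2})^{-1}\sum_i \delta_i W^{(1)}_{p,i}(h)\,\hat\epsilon_i .
\]
Two further contributions (the weight-induced changes in $\Pi_{\pm}$ and in $\widehat{\gamma}$) multiply terms that vanish at the CCFT solution and are higher order. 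Plugging in $\delta_i$ from Step 1 and applying an LLN,
\[
(nh)^{-1}\sum_i W_{p,i}W^{(1)}_{p,i}\,\epsilon_i Z_i^{\top}\;\to_p\; h^{-1}\cdot h\,\varsigma_p\varphi^{-1}\varphi\bigl(\mathrm{Cov}_+[Z,\epsilon]-\mathrm{Cov}_-[Z,\epsilon]\bigr)^{\top}
\]
(up to normalization). The sign flip comes from $\int_{-1}^0\mathcal{K}_{-,p,0}\mathcal{K}_{-,p,1}=-\varsigma_p$; the mean part $\mu_{\epsilon}$ drops out because of the LP orthogonality. This yields
\[
\widehat{\vartheta}^{\mathit{eb}}_{\dagger,p}(h)=\widehat{\vartheta}^{\mathit{CCFT}}_{\dagger,p}(h)-h^{-1}\gamma_{\dagger,p}^{\top}(\widehat{\mu}_{Z,+}-\widehat{\mu}_{Z,-})+o_p\bigl((nh^{3})^{-1/2}\bigr),
\]
with $\gamma_{\dagger,p}$ as in \eqref{eq:gamma_RKD}.

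\textbf{Step 3: joint CLT.} The CCFT term is linear in $W^{(1)}_{p,i}\epsilon_i$ and the correction is linear in $W_{p,i}(Z_i-\mu_Z(X_i))$. I will apply a Lyapunov CLT to the joint vector, using Assumption \ref{assu:data generating process}(iv). The bias of the correction is $h^{p}\gamma_{\dagger,p}^{\top}(p+1)!^{-1}(\mu^{(p+1)}_{Z,+}\omega^{p+1,1}_{+,p,0}-\mu^{(p+1)}_{Z,-}\omega^{p+1,1}_{-,p,0})$, which produces $\mathscr{B}_{\star\dagger,p}$. The variance of the correction is $\omega^{0,2}_{p,0}\gamma_{\dagger,p}^{\top}\mathrm{Var}_{\pm}[Z]\gamma_{\dagger,p}/\varphi$. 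The cross-covariance is $\varsigma_p\varphi^{-1}(\mathrm{Cov}_+-\mathrm{Cov}_-)[Z,\epsilon]^{\top}\gamma_{\dagger,p}$, which by \eqref{eq:gamma_RKD} equals exactly that variance. Hence the total variance is $\mathscr{V}_{\dagger,p}-2a+a=\mathscr{V}_{\star\dagger,p}$; this matches the projection interpretation.

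\textbf{Main obstacle.} The hardest step is Step 2: verifying that the second-order terms are $o_p((nh^3)^{-1/2})$. The weight perturbation is $O_p((nh)^{-1/2})$, while the derivative estimator carries an extra $h^{-1}$ factor, so the products of weight perturbations with fluctuations of $\Pi_\pm$ and $\widehat\gamma$ must be handled carefully. The key ingredients are the finite-sample identities $\sum_i\widehat{w}_iW_{p,i}=0$ and the exact balance of $Z$. I will also need the bias of $\widehat{\mu}_{Z,+}-\widehat{\mu}_{Z,-}$ to be $O(h^{p+1})$ under $nh^{2p+3}=O(1)$, which keeps the $h^{-1}$-scaled term within $h^{p}$.
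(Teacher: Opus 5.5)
Your proposal is correct and reaches the same representation as the paper. Your correction $-h^{-1}\gamma_{\dagger,p}^{\top}(\widehat{\mu}_{Z,+}-\widehat{\mu}_{Z,-})$ is exactly the paper's term $-(nh^{2})^{-1}\sum_{i}W_{p,i}Z_{i}^{\top}\gamma_{\dagger,p}$. Your variance algebra (cross-covariance equal to the correction variance, total $\mathscr{V}_{\dagger,p}-2a+a$) is also the paper's.

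The routes differ in organization. You benchmark against $\widehat{\vartheta}^{\mathit{CCFT}}_{\dagger,p}(h)$ and perturb the whole augmented least-squares problem. This is made rigorous by the exact identity $\widehat{\theta}(w+\delta)-\widehat{\theta}(w)=\mathrm{Q}(w+\delta)^{-1}\sum_{j}\delta_{j}x_{j}e_{j}$, where $\mathrm{Q}(w)\coloneqq\sum_{j}w_{j}x_{j}x_{j}^{\top}$ runs over all regressors (LP basis and $Z_{j}$) and $e_{j}$ are the CCFT residuals.

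The paper never invokes CCFT. It writes the reweighted estimator in Frisch--Waugh form and proves only $\widehat{\gamma}^{\mathit{eb}}_{p}\to_{p}\gamma$. It disposes of the $(\widehat{\gamma}^{\mathit{eb}}_{p}-\gamma)$ term because its coefficient, the weighted slope-jump estimate for $Z$, is $O_{p}((nh^{3})^{-1/2})$ under derivative balance. It gets the bias from exact polynomial reproduction of the weighted LP, which holds for any weights. It linearizes the weights only against the centered errors $\zeta_{i}$, so the perturbation of $\Pi^{\mathit{eb}}_{\pm}$ only has to be controlled against centered sums.

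Your route displays the ``CCFT plus a projection on the level-imbalance statistic'' structure directly, paralleling Theorem \ref{thm:normality}. The cost is twofold. For the joint CLT you need CCFT's influence representation, not just their limit law; you can obtain it from the paper's argument with $\widehat{w}_{p,i}\equiv1/n$. You also need a bound on $\mathrm{Q}(w+\delta)^{-1}-\mathrm{Q}(w)^{-1}$. The paper's route needs neither.

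There are two points to tidy. First, the remainder control is easier than your ``main obstacle'' suggests, and it does not use exact finite-sample balance. In $\mathrm{H}$-scaled coordinates the first-order change is $O_{p}((nh)^{-1/2})$ and the remainders are $o_{p}((nh)^{-1/2})$; the slope multiplies both by $h^{-1}$. The $\widehat{\gamma}$ channel is negligible because $\widehat{\gamma}^{\mathit{eb}}_{p}-\widehat{\gamma}^{\mathit{CCFT}}_{p}=O_{p}((nh)^{-1/2})$ enters the slope only through an $O_{p}(1)$ slope-jump estimate for $Z$. That gives $O_{p}((nh)^{-1/2})=o_{p}((nh^{3})^{-1/2})$; it is not because it multiplies something that vanishes at the CCFT fit.

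Second, fix the constants in Steps 1--2:
\begin{itemize}
\item $\widetilde{\lambda}_{p}(h)=-\bigl((nh)^{-1}\sum_{i}V_{p,i}V_{p,i}^{\top}\bigr)^{-1}\bigl(0,(\widehat{\mu}_{Z,+}-\widehat{\mu}_{Z,-})^{\top}\bigr)^{\top}$ carries no extra $(nh)^{-1}$.
\item The relative perturbation is $n\widehat{w}_{p,i}-1\approx-W_{p,i}(Z_{i}-\mu_{Z,+})^{\top}\Gamma_{n}(\widehat{\mu}_{Z,+}-\widehat{\mu}_{Z,-})$. The minus sign and the centering both come from the block inverse of $\mu_{\bar{Z}\bar{Z}^{\top},\pm}$.
\item With $W^{\dagger}_{p,i}$ your $W^{(1)}_{p,i}$, the cross moment satisfies $(nh)^{-1}\sum_{i}W_{p,i}W^{\dagger}_{p,i}e_{i}(Z_{i}-\mu_{Z,+})^{\top}\to_{p}(\varsigma_{p}/\varphi)(\mathrm{Cov}_{+}[Z,\epsilon]-\mathrm{Cov}_{-}[Z,\epsilon])^{\top}$.
\end{itemize}
With $\Gamma_{n}\to_{p}(\varphi/\omega^{0,2}_{p,0})(\mathrm{Var}_{\pm}[Z])^{-1}$, these combine to exactly $-h^{-1}\gamma_{\dagger,p}^{\top}(\widehat{\mu}_{Z,+}-\widehat{\mu}_{Z,-})$.
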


The second term in the expression for $\mathscr{V}_{\star\dagger,p}$ represents the further reduction in asymptotic variance arising from the level balance condition, on top of the reduction from $\omega^{0,2}_{p,1}\mathrm{Var}_{\pm}\left[Y\right]/\varphi$
to $\mathscr{V}_{\dagger,p}$ delivered by the derivative balance condition
alone. Table \ref{tab:rkd_variance} illustrates the size of these
two reductions for $\widehat{\vartheta}^{\mathit{eb}}_{\dagger,p}\left(h\right)$
under the simulation designs described in Sections \ref{sec:simu_design} and \ref{sec:simu_kink}. In DGP
I, the ``Reg'' and ``EBW'' estimators attain the same efficiency
gain because $\gamma_{\dagger,p}$ in \eqref{eq:gamma_RKD} is close
to zero in that design, so that the level balance condition does
not further reduce the asymptotic variance.

\begin{table}[!t]
\centering \caption{Asymptotic standard deviation (Asy.\ SD) for estimating $\vartheta_{\dagger}$
with and without covariate information, $p=3$: ``Standard'' is the
LP derivative estimator without covariates, ``Reg'' is the CCFT estimator
and ``EBW'' is the EB reweighting estimator; covariate balance conditions
(Cov.\ Cond): $\mu_{Z,+}=\mu_{Z,-}$ (Level) and $\mu^{\left(1\right)}_{Z,+}=\mu^{\left(1\right)}_{Z,-}$
(Deri). Eff.\ gain is the proportional reduction in Asy.\ SD relative
to the standard estimator without using any covariate information.}
\label{tab:rkd_variance} %
\begin{tabular}{clcccc}
\toprule 
\multicolumn{1}{c}{DGP} & \multicolumn{1}{l}{Method} & \multicolumn{2}{l}{Cov. Cond} & \multicolumn{1}{c}{Asy.\ SD} & \multicolumn{1}{c}{Eff.\ gain}\tabularnewline
\multicolumn{1}{c}{} & \multicolumn{1}{c}{} & \multicolumn{1}{l}{Level} & \multicolumn{1}{l}{Deri} &  & \tabularnewline
\midrule 
I  & Standard  & No  & No  & 74.9  & \tabularnewline
 & Reg  & No  & Yes  & 66.0  & 12.0\%\tabularnewline
 & EBW  & Yes  & Yes  & 66.0  & 12.0\%\tabularnewline
\midrule 
II  & Standard  & No  & No  & 102.7  & \tabularnewline
 & Reg  & No  & Yes  & 85.8  & 16.5\%\tabularnewline
 & EBW  & Yes  & Yes  & 71.4  & 30.5\%\tabularnewline
\midrule 
III  & Standard  & No  & No  & 118.9  & \tabularnewline
 & Reg  & No  & Yes  & 100.8  & 15.2\%\tabularnewline
 & EBW  & Yes  & Yes  & 83.5  & 29.7\%\tabularnewline
\bottomrule
\end{tabular}
\end{table}

%%%%%%%%%%%%%%%%%%%%%%%%%%%%%%%%%%%%%%%%%%%%%%%%%%%%%%%%%%%%%

\subsection{RKD inference with covariate adjustment}

\label{subsec:RKD_inference} %%%%%%%%%%%%%%%%%%%%%%%%%%%%%%%%%%%%%%%%%%%%%%%%%%%%%%%%%%%%%
Let $\eta_{i}\coloneqq Y_{i}-\mu_{Y}\left(X_{i}\right)$ and $\xi_{i}\coloneqq Z_{i}-\mu_{Z}\left(X_{i}\right)$
be regression errors, and denote $\zeta_{i}\coloneqq\eta_{i}-\xi^{\top}_{i}\gamma$.
Let $\left(W^{\dagger}_{p,i}\left(h\right),W^{\dagger}_{+,p,i}\left(h\right),W^{\dagger}_{-,p,i}\left(h\right)\right)$
be defined by the formulas of $\left(W_{p,i}\left(h\right),W_{+,p,i}\left(h\right),W_{-,p,i}\left(h\right)\right)$
with $\mathrm{e}_{p+1,1}$ replaced by $\mathrm{e}_{p+1,2}$. We have
\begin{equation}
\sqrt{nh^{3}}\left(\widehat{\vartheta}^{\mathit{eb}}_{\dagger,p}\left(h\right)-\vartheta_{\dagger}-\mathscr{B}_{\star\dagger,p}h^{p}\right)=\frac{1}{\sqrt{nh}}\sum_{i}\left(W^{\dagger}_{p,i}\left(h\right)\zeta_{i}-W_{p,i}\left(h\right)\xi^{\top}_{i}\gamma_{\dagger,p}\right)+o_{p}\left(1\right).\label{eq:theta_hat_dag - theta_dag linearization}
\end{equation}
By taking a plug-in residual approach, we propose an estimator of
the pre-asymptotic variance of the leading term on the right-hand
side of \eqref{eq:theta_hat_dag - theta_dag linearization} for studentization.

Let $\widetilde{\eta}_{+,p,i}\left(h\right)$ denote the $i$-th fitted
regression residual from the LP regression 
\[
\underset{b}{\min}\sum_{i}K\left(\frac{X_{i}}{h}\right)I_{i}\left\{ Y_{i}-r^{\top}_{p}\left(X_{i}\right)b\right\} ^{2}.
\]
Let $\widetilde{\xi}_{+,p,i}\left(h\right)$ be the $d_{z}$-dimensional
vector of fitted residuals from the analogous LP regressions with
$Z_{i}$ as the dependent variable. Let 
\begin{equation}
\widetilde{\zeta}_{+,p,i}\left(h\right)\coloneqq\widetilde{\eta}_{+,p,i}\left(h\right)-\widetilde{\xi}^{\top}_{+,p,i}\left(h\right)\widetilde{\gamma},\label{eq:zeta_til}
\end{equation}
where $\widetilde{\gamma}$ is a consistent estimator of $\gamma$
whose formula can be found in Appendix \ref{app:imp_rkd}. The asymptotic
variance estimator can be taken as $\widetilde{\mathscr{V}}_{+,p}\left(h\right)+\widetilde{\mathscr{V}}_{-,p}\left(h\right)$:
\begin{eqnarray}
\widetilde{\mathscr{V}}_{+,p}\left(h\right) & \coloneqq & \frac{1}{nh}\sum_{i}\left\{ \left(W^{\dagger}_{+,p,i}\left(h\right)\right)^{2}\widetilde{\zeta}^{2}_{+,p,i}\left(h\right)+W^{2}_{+,p,i}\left(h\right)\left(\widetilde{\xi}^{\top}_{+,p,i}\left(h\right)\widetilde{\gamma}_{\dagger,p}\right)^{2}\right.\label{eq:scr_V_til}\nonumber \\
 &  & \left.-2\cdot W^{\dagger}_{+,p,i}\left(h\right)W_{+,p,i}\left(h\right)\widetilde{\zeta}_{+,p,i}\left(h\right)\left(\widetilde{\xi}^{\top}_{+,p,i}\left(h\right)\widetilde{\gamma}_{\dagger,p}\right)\right\} ,
\end{eqnarray}
and $\widetilde{\mathscr{V}}_{-,p}\left(h\right)$ is defined similarly,
where $\widetilde{\gamma}_{\dagger,p}$ is some consistent estimator
of $\gamma_{\dagger,p}$. The standard error for studentization is
now given by $\mathit{se}_{p}\left(h\right)\coloneqq\sqrt{\left(\widetilde{\mathscr{V}}_{+,p}\left(h\right)+\widetilde{\mathscr{V}}_{-,p}\left(h\right)\right)/\left(nh^{3}\right)}$. Under undersmoothing, the resulting studentized statistic is asymptotically standard normal.
  
\begin{cor}
Under the assumptions in the statement of Theorem \ref{thm:kink}
and $nh^{2p+3}=o\left(1\right)$, we have 
\[
\frac{\widehat{\vartheta}^{\mathit{eb}}_{\dagger,p}\left(h\right)-\vartheta_{\dagger}}{\mathit{se}_{p}\left(h\right)}\rightarrow_{d}\mathrm{N}\left(0,1\right).
\]
\end{cor}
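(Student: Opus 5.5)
The plan is to combine the linearization \eqref{eq:theta_hat_dag - theta_dag linearization} with a consistency result for the studentizing variance. Write $T_{n}\coloneqq(nh)^{-1/2}\sum_{i}a_{i}$ with
\[
a_{i}\coloneqq W^{\dagger}_{p,i}\left(h\right)\zeta_{i}-W_{p,i}\left(h\right)\xi^{\top}_{i}\gamma_{\dagger,p}.
\]
Under $nh^{2p+3}=o(1)$ we have $\sqrt{nh^{3}}h^{p}=\sqrt{nh^{2p+3}}\to0$, so the bias term $\mathscr{B}_{\star\dagger,p}h^{p}$ is negligible after scaling. Hence
\[
\sqrt{nh^{3}}\left(\widehat{\vartheta}^{\mathit{eb}}_{\dagger,p}(h)-\vartheta_{\dagger}\right)=T_{n}+o_{p}(1).
\]
By Theorem \ref{thm:kink}, $T_{n}\to_{d}\mathrm{N}(0,\mathscr{V}_{\star\dagger,p})$. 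Since $\mathit{se}_{p}(h)\sqrt{nh^{3}}=\sqrt{\widetilde{\mathscr{V}}_{+,p}(h)+\widetilde{\mathscr{V}}_{-,p}(h)}$, Slutsky's lemma reduces the claim to showing
\[
\widetilde{\mathscr{V}}_{+,p}(h)+\widetilde{\mathscr{V}}_{-,p}(h)\to_{p}\mathscr{V}_{\star\dagger,p}>0.
\]
Positivity follows because $\mathscr{V}_{\star\dagger,p}$ is the variance of a nondegenerate linear combination under Assumption \ref{assu:data generating process}(iii).

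Next I will identify the probability limit of the infeasible version. Let $\bar{\mathscr{V}}_{+}$ be \eqref{eq:scr_V_til} with the true errors $(\zeta_{i},\xi_{i})$ and the true coefficients $(\gamma,\gamma_{\dagger,p})$ in place of the fitted residuals and estimates. I will use $\Pi_{\pm,p}(h)\to_{p}\varphi\Lambda_{\pm,p}$, so that $W_{+,p,i}(h)\approx\varphi^{-1}\mathcal{K}_{+,p,0}(X_{i}/h)I_{i}$ and $W^{\dagger}_{+,p,i}(h)\approx\varphi^{-1}\mathcal{K}_{+,p,1}(X_{i}/h)I_{i}$ uniformly. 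A standard kernel LLN, using the moment bound in Assumption \ref{assu:data generating process}(iv) and the continuity in (ii), then gives
\[
\bar{\mathscr{V}}_{+}\to_{p}\varphi^{-1}\left\{ \omega^{0,2}_{+,p,1}\mathrm{Var}_{+}[\zeta]+\omega^{0,2}_{+,p,0}\gamma^{\top}_{\dagger,p}\mathrm{Var}_{+}[Z]\gamma_{\dagger,p}-2\varsigma_{p}\mathrm{Cov}_{+}[\zeta,Z]^{\top}\gamma_{\dagger,p}\right\}.
\]
The analogous limit holds on the minus side, with $\varsigma_{p}$ replaced by $-\varsigma_{p}$. Summing the two sides and using $\mathrm{Var}_{\pm}[\zeta]=\sigma^{2}$ together with definition \eqref{eq:gamma_RKD} of $\gamma_{\dagger,p}$, the cross term equals $-2\omega^{0,2}_{p,0}\gamma^{\top}_{\dagger,p}\mathrm{Var}_{\pm}[Z]\gamma_{\dagger,p}$. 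The total is therefore $\mathscr{V}_{\star\dagger,p}$. This is the same calculation that produces the variance in Theorem \ref{thm:kink}, so it can be cited from that proof.

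The main obstacle is replacing the true quantities by the estimated ones. I will write $\widetilde{\eta}_{+,p,i}=\eta_{i}+(\mu_{Y}(X_{i})-r_{p}^{\top}(X_{i})\widehat{b}_{Y})$ and do the same for $\widetilde{\xi}$. The approximation error is bounded uniformly over $|X_{i}|\le h$ by the smoothing bias $O(h^{p+1})$ plus the estimation error $O_{p}((nh)^{-1/2})$, because the LP coefficients converge at those rates. Expanding the squares, every difference term has the form $(nh)^{-1}\sum_{i}W^{2}_{i}\times(\text{error})\times(\text{error or residual})$. Cauchy--Schwarz bounds such terms by $o_{p}(1)\cdot O_{p}(1)$, since $(nh)^{-1}\sum_{i}W_{i}^{2}=O_{p}(1)$ and $(nh)^{-1}\sum_{i}W_{i}^{2}\|\xi_{i}\|^{2}=O_{p}(1)$. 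Consistency of $\widetilde{\gamma}$ and $\widetilde{\gamma}_{\dagger,p}$ handles the coefficient replacement in the same way. Together these steps give $\widetilde{\mathscr{V}}_{\pm,p}-\bar{\mathscr{V}}_{\pm}=o_{p}(1)$, and the corollary follows.
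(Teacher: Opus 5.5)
Your proposal is correct. It follows the route the paper relies on; the paper gives no separate proof, and the pieces are the linearization \eqref{eq:theta_hat_dag - theta_dag linearization}, undersmoothing to remove the $h^{p}$ bias, consistency of the plug-in variance estimator (whose infeasible limit is the variance computed at the end of the proof of Theorem \ref{thm:kink}), and Slutsky's lemma. One thing to tighten: it is the scaled coefficient vector $\mathrm{H}\widehat{b}_{Y}$, not the raw LP coefficients, that converges at rate $O(h^{p+1})+O_{p}((nh)^{-1/2})$, and that scaled rate is what gives your uniform bound on the fitted-value error over $|X_{i}|\le h$.
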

An asymptotically valid $1-\alpha$ CI for $\vartheta_{\dagger}$
is $\mathit{CI}^{\dagger}_{p,\alpha}\left(h\right)\coloneqq\left[\widehat{\vartheta}^{\mathit{eb}}_{\dagger,p}\left(h\right)\pm z_{1-\alpha/2}\mathit{se}_{p}\left(h\right)\right]$. To compute $\widehat{\vartheta}^{\mathit{eb}}_{\dagger,p}$ that incorporates
both level and derivative balance conditions, we run the LP regression
(\ref{eq:adjusted CCFT kink}) with $p=3$ in $r_{p}(x)$ and with
the following bandwidth, determined by minimizing the AMSE for $p=2$:
\begin{equation}
\widehat{h}\coloneqq\left(\frac{3}{4}\cdot\frac{\widehat{\mathscr{V}}_{\star\dagger}}{\widehat{\mathscr{B}}^{2}_{\star\dagger}}\right)^{1/7}n^{-1/7},\label{eq:rkd-h}
\end{equation}
where $\left(\widehat{\mathscr{V}}_{\star\dagger},\widehat{\mathscr{B}}_{\star\dagger}\right)$
are consistent estimators of $\left(\mathscr{V}_{\star\dagger,2},\mathscr{B}_{\star\dagger,2}\right)$.
Appendix \ref{app:imp_rkd} provides detailed steps to calculate $\left(\widehat{\mathscr{V}}_{\star\dagger},\widehat{\mathscr{B}}_{\star\dagger}\right)$.
Our proposed CI is given by $\mathit{CI}^{\dagger}_{3,\alpha}\left(\widehat{h}\right)$.

\subsection{Treatment effect derivative}

The object $\vartheta_{\dagger}$ has a distinct interpretation in
the sharp RDD. \citet{Dong2015} define the treatment effect derivative
(TED) at the cutoff by 
\begin{equation}
\vartheta_{ted}\coloneqq\left.\frac{\mathrm{d}}{\mathrm{d}x}\mathrm{E}\left[Y\left(1\right)-Y\left(0\right)\mid X=x\right]\right|_{x=0}.\label{eq:TED definition}
\end{equation}
In an RDD, it measures how rapidly the conditional treatment effect
changes with the score at the cutoff. A large absolute value indicates that
the RDD treatment effect may change substantially when the cutoff
is moved slightly. Testing $\vartheta_{ted}=0$ therefore provides
a local diagnostic for the stability and external validity of the
RDD treatment effect. Suppose that $x\mapsto\mathrm{E}\left[Y\left(d\right)\mid X=x\right]$
is continuously differentiable in a neighborhood of zero for $d\in\left\{ 0,1\right\} $.
\citet{Dong2015} show that the right-hand derivative of the observed
conditional mean identifies the derivative of $\mathrm{E}\left[Y\left(1\right)\mid X=x\right]$
at $x=0$, whereas the left-hand derivative identifies the derivative
of $\mathrm{E}\left[Y\left(0\right)\mid X=x\right]$. Consequently,
we have $\vartheta_{ted}=\vartheta_{\dagger}$. Thus, the regression-
and reweighting-based covariate adjustment methods and their relative
efficiency ordering discussed before also apply to the TED, provided
that Assumption \ref{assu:RKD smoothness} is maintained.

However, as the TED is often used to evaluate the external validity
of the RDD, a researcher may prefer the covariate balance condition
in Assumption \ref{assu:smoothness Z} to the stronger version in
Assumption \ref{assu:RKD smoothness}. In that case, the regression-based
covariate adjustment discussed in Section \ref{subsec:RKD_regression}
may not lead to consistent estimation because the derivative balance
condition $\mu^{\left(1\right)}_{Z,+}=\mu^{\left(1\right)}_{Z,-}$
is not maintained in the RDD. In contrast, our reweighting approach
still applies. By taking the regression coefficient of $I_{i}X_{i}$ in \eqref{eq:entropy balancing estimator definition},
the reweighting approach incorporates into the derivative estimation
the level balance condition $\mu_{Z,+}=\mu_{Z,-}$ maintained in
Assumption \ref{assu:smoothness Z} and enhances efficiency relative
to the no-covariate derivative estimator (the regression coefficient of $I_{i}X_{i}$
in \eqref{eq:LP without covariates}). %%%%%%%%%%%%%%%%%%%%%%%%%%%%%%%%%%%%%%%%%%%%%%%%%%%%%%%%%%%%%

\section{Monte Carlo simulations}

%%%%%%%%%%%%%%%%%%%%%%%%%%%%%%%%%%%%%%%%%%%%%%%%%%%%%%%%%%%%%

\subsection{Simulation design}

\label{sec:simu_design}

We conduct simulations to evaluate the finite-sample performance of
the EB reweighting covariate adjustment approach for the sharp
RDD and RKD. The generation of the outcome variable
$Y_{i}$, the score $X_{i}$ and the first covariate $Z_{1,i}$ is
based on the simulation design of CCFT. The incorporation of additional
covariates $\widetilde{Z}_{i}=\left(Z_{2,i},\ldots,Z_{d_{z},i}\right)^{\top}$
follows that of \citet{arai2021regression}. We consider three DGPs:
DGP I has one covariate ($d_{z}=1$), DGP II has three covariates
($d_{z}=3$) and DGP III has five covariates ($d_{z}=5$). Let 
\begin{eqnarray*}
m_{y,-}\left(x\right) & \coloneqq & \alpha_{-}+0.96x+5.47x^{2}+15.28x^{3}+15.87x^{4}+5.14x^{5},\\
m_{y,+}\left(x\right) & \coloneqq & \alpha_{+}+0.62x-2.84x^{2}+8.42x^{3}-10.24x^{4}+4.31x^{5},\\
m_{z,-}\left(x\right) & \coloneqq & 0.49+\theta_{-}x+5.74x^{2}+17.14x^{3}+19.75x^{4}+7.47x^{5},\\
m_{z,+}\left(x\right) & \coloneqq & 0.49+\theta_{+}x-0.23x^{2}-3.46x^{3}+6.43x^{4}-3.48x^{5}.
\end{eqnarray*}
The observed data are a sample of size $n$ of $\left(Y_{i},X_{i},Z_{1,i},\dots,Z_{d_{z},i}\right)$,
where the score $X_{i}\sim2\times\mathrm{Beta}\left(2,4\right)-1$,
the first covariate 
\begin{eqnarray*}
Z_{1,i}= & \begin{cases}
m_{z,-}\left(X_{i}\right)+u_{z,i} & \text{if }X_{i}<0,\\
m_{z,+}\left(X_{i}\right)+u_{z,i} & \text{if }X_{i}>0,
\end{cases}
\end{eqnarray*}
additional covariates $\widetilde{Z}_{i}\sim\mathrm{N}\left(0,\Sigma\right)$
where each element of $\Sigma$ is $\mathrm{Cov}\left[Z_{j,i},Z_{k,i}\right]=0.5^{\left|j-k\right|}$
for any $j,k=2,\dots,d_{z}$, and the outcome 
\begin{eqnarray*}
Y_{i} & = & \begin{cases}
m_{y,-}\left(X_{i}\right)+\beta_{-}Z_{1,i}+\widetilde{Z}^{\top}_{i}\pi_{-}+u_{y,i} & \text{if }X_{i}<0,\\
m_{y,+}\left(X_{i}\right)+\beta_{+}Z_{1,i}+\widetilde{Z}^{\top}_{i}\pi_{+}+u_{y,i} & \text{if }X_{i}>0.
\end{cases}
\end{eqnarray*}
Error terms $\left(u_{y,i},u_{z,i}\right)$ are bivariate normal with
mean $0$, standard deviation $1$ and correlation coefficient $\rho=0.3$.
The additional covariates $\widetilde{Z}_{i}$ are drawn independently
of $X_{i}$ and $\left(u_{y,i},u_{z,i}\right)$. Throughout this section,
the sample sizes are $n=2{,}000$ and $5{,}000$. The covariate adjustment
uses the unconstrained balancing weights $\widetilde{w}_{p,i}\left(h\right)$
in (\ref{eq:w_til balancing weights}) and, in the bootstrap, their
analogues $\widetilde{w}^{*}_{p,i}\left(h\right)$. The number of
Monte Carlo replications is $1{,}000$.

\subsection{Quantile RDD}

\label{sec:simu_qrd}
For the DGPs in Section~\ref{sec:simu_design},
let $m_{-}\left(x\right)\coloneqq m_{y,-}\left(x\right)+\beta_{-}m_{z,-}\left(x\right)$,
$m_{+}\left(x\right)\coloneqq m_{y,+}\left(x\right)+\beta_{+}m_{z,+}\left(x\right)$,
$v_{-}\coloneqq1+2\rho\beta_{-}+\beta^{2}_{-}$ and $v_{+}\coloneqq1+2\rho\beta_{+}+\beta^{2}_{+}$.
The quantile function of $Y$ conditional on
$X$ is given by 
\[
Q_{Y\mid X}\left(\tau\mid x\right)=\begin{cases}
m_{+}\left(x\right)+\left(v_{+}+\pi^{\top}_{+}\Sigma\pi_{+}\right)^{1/2}\varPhi^{-1}\left(\tau\right) & \text{if }x>0\\
m_{-}\left(x\right)+\left(v_{-}+\pi^{\top}_{-}\Sigma\pi_{-}\right)^{1/2}\varPhi^{-1}\left(\tau\right) & \text{if }x<0
\end{cases}
\]
and thus the $\tau$-th QRD estimand is 
\begin{eqnarray*}
\vartheta\left(\tau\right) & = & \left\{ m_{+}\left(0\right)-m_{-}\left(0\right)\right\} +\left\{ \left(v_{+}+\pi^{\top}_{+}\Sigma\pi_{+}\right)^{1/2}-\left(v_{-}+\pi^{\top}_{-}\Sigma\pi_{-}\right)^{1/2}\right\} \varPhi^{-1}\left(\tau\right).
\end{eqnarray*}

We set the parameters $\alpha_{-}=0.36$, $\alpha_{+}=0.38$, $\beta_{-}=0$,
$\beta_{+}=3$, $\theta_{-}=1.06$, $\theta_{+}=0.61$, and $\pi_{-}=\pi_{+}=[\text{0.8}^{1},\dots,\text{0.8}^{d_{z}-1}]^{\top}$.
Tables~\ref{tab: qte_pw_dz_1}--\ref{tab: qte_pw_dz_5} report the
finite-sample performance of the quantile RDD estimators for selected
quantile levels $\tau$. We consider two approaches, labeled ``Standard''
and ``EBW'' in the tables. The standard estimator is the quantile
RDD estimator calculated as the regression coefficient of $I_{i}$ in \eqref{eq:quantile_RDD_QY}
without using covariate information. EBW refers to our EB reweighting
approach proposed in Section~\ref{subsec:quantile_RDD_reweighting}
that incorporates the covariate balance condition. The EBW CI corresponds to
$\mathit{CI}_{2,\alpha}\left(\tau\mid \widehat{h}(\tau)\right)$ in Section~\ref{subsec:quantile_RDD_inference} and is formed
following Algorithm~\ref{alg:qrd_ci} 
with $N_{B}=1{,}000$ bootstrap replications. The standard CI
is constructed analogously without using the covariate information. For the bandwidth choice, we first compute the AMSE-optimal
bandwidth at the median $\widehat{h}\left(0.5\right)$ using (\ref{eq:MSEbw_qte})
and then rescale it across quantiles by the Gaussian reference rule of
\citet{YuJones1998} to obtain bandwidths for
other values of $\tau$: 
\[
\widehat{h}^{\mathit{YJ}}\left(\tau\right)=\left(\frac{2\tau\left(1-\tau\right)}{\pi\phi^{2}\left(\varPhi^{-1}\left(\tau\right)\right)}\right)^{1/5}\cdot\widehat{h}\left(0.5\right).
\]
The reported bias and root mean squared error (RMSE) are 
those of the local linear ($p=1$) estimator evaluated at $\widehat{h}\left(\tau\right)$,
while the coverage probabilities and interval lengths are those of
the bootstrap intervals based on the local quadratic ($p=2$) fit
at the same bandwidth. 
The last column reports the resulting efficiency gain,
measured as the proportional reduction in the length of the $95\%$
CI relative to the standard estimator.

Both the standard and EBW quantile RDD estimators exhibit small bias,
and their bootstrap CIs attain approximately the nominal coverage
in every configuration. Relative to the standard estimator, the covariate-adjusted EBW
estimator has smaller RMSE and shorter CIs for every DGP, quantile
level, and sample size. The asymptotic efficiency improvement
of Theorem~\ref{thm:sharp QRD} therefore materializes in finite samples.
Averaged over the thirty configurations, covariate adjustment reduces
the length of the $90\%$ CI by $9.9\%$ (range: $5.3\%$--$14.2\%$)
and that of the $95\%$ CI by $9.8\%$ (range: $5.0\%$--$14.0\%$).
The reduction is largest at the
median, where interval length falls by $13.0\%$ on average, and is
smallest in the tails at $\tau=0.10$ and $\tau=0.90$, where it is
close to $7\%$.

Table~\ref{tab:qte_uni_dz} reports the performance of the UCB for
the QTE $\vartheta(\tau)$ with and without covariate adjustment.
The EBW UCB corresponds to 
$\left\{ \mathit{CB}_{2,\alpha}\left(\tau\mid\widehat{h}\left(\cdot\right)\right):\tau\in[0.1,0.9]\right\}$ in
Section~\ref{subsec:quantile_RDD_inference} and is formed following Algorithm~\ref{alg:qrd_ucb}. The standard UCB is constructed
analogously without using the covariate information. The last column reports the proportional reduction in the width of the $95\%$ EBW UCB relative
to the standard UCB. In every configuration, the average width of the EBW UCB 
taken across the quantile levels $\tau\in[0.1,0.9]$ is smaller than
that of the standard UCB. Specifically, covariate
adjustment narrows the $90\%$ UCB by $9.7\%$ on average and the
$95\%$ UCB by $10.1\%$. The reductions range from $8.2\%$ to $11.4\%$
across configurations.

\begin{table}[!t]
\centering \caption{Finite-sample performance of quantile RDD estimators under DGP I:
$\tau$ = quantile level, CP = coverage probability for the confidence
interval, CIL = average length of the confidence interval. Eff.\ gain
is the proportional reduction in CIL95 achieved by the EBW estimator
relative to the standard estimator without covariates.}
\label{tab: qte_pw_dz_1} %
\begin{tabular}{ccllcccccccccc}
\toprule 
 \multicolumn{1}{c}{$\tau$} & \multicolumn{1}{c}{$n$} & \multicolumn{1}{l}{Method} & \multicolumn{1}{l}{Covariate} & \multicolumn{1}{c}{Bias} & \multicolumn{1}{c}{RMSE} & \multicolumn{1}{c}{CP90} & \multicolumn{1}{c}{CP95} & \multicolumn{1}{c}{CIL90} & \multicolumn{1}{c}{CIL95} & \multicolumn{1}{c}{Eff.\ gain} \tabularnewline
\midrule 
 0.10  & 2,000  & Standard  & No  & 0.080  & 0.595  & 0.926  & 0.959  & 2.408  & 2.910  &  \tabularnewline
  &  & EBW  & Yes  & 0.030  & 0.518  & 0.947  & 0.976  & 2.235  & 2.718  & 6.6\%  \tabularnewline
  & 5,000  & Standard  & No  & 0.057  & 0.381  & 0.937  & 0.974  & 1.603  & 1.928  &  \tabularnewline
  &  & EBW  & Yes  & 0.017  & 0.328  & 0.955  & 0.982  & 1.495  & 1.792  & 7.1\%  \tabularnewline
\midrule 
 0.25  & 2,000  & Standard  & No  & 0.029  & 0.497  & 0.932  & 0.971  & 1.969  & 2.364   & \tabularnewline
  &  & EBW  & Yes  & -0.023  & 0.393  & 0.954  & 0.982  & 1.737  & 2.085  & 11.8\%  \tabularnewline
  & 5,000  & Standard  & No  & 0.032  & 0.316  & 0.917  & 0.964  & 1.320  & 1.581    & \tabularnewline
  &  & EBW  & Yes  & -0.011  & 0.251  & 0.940  & 0.971  & 1.168  & 1.398  & 11.6\% \tabularnewline
\midrule 
 0.50  & 2,000  & Standard  & No  & 0.059  & 0.493  & 0.928  & 0.972  & 1.843  & 2.211  & \tabularnewline
  &  & EBW  & Yes  & 0.001  & 0.389  & 0.951  & 0.979  & 1.582  & 1.902  & 14.0\% \tabularnewline
  & 5,000  & Standard  & No  & 0.042  & 0.309  & 0.921  & 0.966  & 1.229  & 1.473   & \tabularnewline
  &  & EBW  & Yes  & -0.002  & 0.233  & 0.945  & 0.976  & 1.064  & 1.276  & 13.4\% \tabularnewline
\midrule 
 0.75  & 2,000  & Standard  & No  & 0.071  & 0.479  & 0.934  & 0.971  & 1.953  & 2.350   & \tabularnewline
  &  & EBW  & Yes  & 0.008  & 0.395  & 0.942  & 0.977  & 1.703  & 2.053  & 12.6\%  \tabularnewline
  & 5,000  & Standard  & No  & 0.073  & 0.307  & 0.933  & 0.972  & 1.291  & 1.549  &  \tabularnewline
  &  & EBW  & Yes & 0.024  & 0.240  & 0.960  & 0.981  & 1.134  & 1.358  & 12.3\% \tabularnewline
\midrule 
 0.90  & 2,000  & Standard  & No  & 0.062  & 0.581  & 0.966  & 0.988  & 2.404  & 2.899   & \tabularnewline
  &  & EBW  & Yes  & -0.006  & 0.507  & 0.972  & 0.990  & 2.210  & 2.668  & 8.0\%  \tabularnewline
  & 5,000  & Standard  & No  & 0.043  & 0.376  & 0.938  & 0.981  & 1.556  & 1.871  &  \tabularnewline
  &  & EBW  & Yes  & -0.007  & 0.327  & 0.949  & 0.979  & 1.438  & 1.730  & 7.5\%  \tabularnewline
\bottomrule
\end{tabular}
\end{table}

\begin{table}[!t]
\centering \caption{Finite-sample performance of quantile RDD estimators under DGP II:
$\tau$ = quantile level, CP = coverage probability for the confidence
interval, CIL = average length of the confidence interval. Eff.\ gain
is the proportional reduction in CIL95 achieved by the EBW estimator
relative to the standard estimator without covariates.}
\label{tab: qte_pw_dz_3} %
\begin{tabular}{ccllcccccccccc}
\toprule 
\multicolumn{1}{c}{$\tau$} & \multicolumn{1}{c}{$n$} & \multicolumn{1}{l}{Method} & \multicolumn{1}{l}{Covariate} & \multicolumn{1}{c}{Bias} & \multicolumn{1}{c}{RMSE} & \multicolumn{1}{c}{CP90} & \multicolumn{1}{c}{CP95} & \multicolumn{1}{c}{CIL90} & \multicolumn{1}{c}{CIL95} & \multicolumn{1}{c}{Eff.\ gain} \tabularnewline
\midrule 
 0.10  & 2,000  & Standard  & No  & 0.067  & 0.659  & 0.925  & 0.968  & 2.581  & 3.114    & \tabularnewline
  &  & EBW  & Yes  & 0.034  & 0.578  & 0.940  & 0.970  & 2.421  & 2.929  & 5.9\%  \tabularnewline
  & 5,000  & Standard  & No  & 0.064  & 0.397  & 0.929  & 0.973  & 1.715  & 2.061    & \tabularnewline
  &  & EBW  & Yes  & 0.005  & 0.359  & 0.950  & 0.978  & 1.605  & 1.933  & 6.2\%  \tabularnewline
\midrule 
 0.25  & 2,000  & Standard  & No  & 0.025  & 0.543  & 0.928  & 0.963  & 2.100  & 2.515    & \tabularnewline
  &  & EBW  & Yes  & -0.010  & 0.427  & 0.947  & 0.974  & 1.867  & 2.245  & 10.7\%  \tabularnewline
  & 5,000  & Standard  & No  & 0.031  & 0.323  & 0.928  & 0.970  & 1.400  & 1.679    & \tabularnewline
  &  & EBW  & Yes  & -0.027  & 0.272  & 0.934  & 0.978  & 1.244  & 1.492  & 11.1\% \tabularnewline
\midrule 
 0.50  & 2,000  & Standard  & No  & 0.043  & 0.546  & 0.926  & 0.959  & 1.976  & 2.368    & \tabularnewline
  &  & EBW  & Yes  & -0.004  & 0.412  & 0.947  & 0.982  & 1.714  & 2.061  & 13.0\% \tabularnewline
  & 5,000  & Standard  & No  & 0.041  & 0.320  & 0.943  & 0.983  & 1.305  & 1.566  &   \tabularnewline
  &  & EBW  & Yes  & -0.019  & 0.257  & 0.945  & 0.982  & 1.135  & 1.361  & 13.1\%  \tabularnewline
\midrule 
 0.75  & 2,000  & Standard  & No  & 0.057  & 0.539  & 0.918  & 0.962  & 2.078  & 2.497  & \tabularnewline
  &  & EBW  & Yes  & 0.013  & 0.439  & 0.952  & 0.979  & 1.833  & 2.211  & 11.5\% \tabularnewline
  & 5,000  & Standard  & No  & 0.086  & 0.336  & 0.903  & 0.965  & 1.375  & 1.651  &  \tabularnewline
  &  & EBW  & Yes  & 0.018  & 0.263  & 0.926  & 0.975  & 1.214  & 1.457  & 11.8\%  \tabularnewline
\midrule 
 0.90  & 2,000  & Standard  & No  & 0.050  & 0.609  & 0.956  & 0.986  & 2.571  & 3.098    & \tabularnewline
  &  & EBW  & Yes  & -0.003  & 0.551  & 0.971  & 0.993  & 2.385  & 2.883  & 6.9\%  \tabularnewline
  & 5,000  & Standard  & No  & 0.044  & 0.396  & 0.944  & 0.978  & 1.680  & 2.013  &   \tabularnewline
  &  & EBW  & Yes  & -0.015  & 0.357  & 0.947  & 0.972  & 1.556  & 1.864  & 7.4\% \tabularnewline
\bottomrule
\end{tabular}
\end{table}

\begin{table}[!t]
\centering \caption{Finite-sample performance of quantile RDD estimators under DGP III:
$\tau$ = quantile level, CP = coverage probability for the confidence
interval, CIL = average length of the confidence interval. Eff.\ gain
is the proportional reduction in CIL95 achieved by the EBW estimator
relative to the standard estimator without covariates.}
\label{tab: qte_pw_dz_5} %
\begin{tabular}{ccllcccccccccc}
\toprule 
 \multicolumn{1}{c}{$\tau$} & \multicolumn{1}{c}{$n$} & \multicolumn{1}{l}{Method} & \multicolumn{1}{l}{Covariate} & \multicolumn{1}{c}{Bias} & \multicolumn{1}{c}{RMSE} & \multicolumn{1}{c}{CP90} & \multicolumn{1}{c}{CP95} & \multicolumn{1}{c}{CIL90} & \multicolumn{1}{c}{CIL95} & \multicolumn{1}{c}{Eff.\ gain} \tabularnewline
\midrule 
  0.10  & 2,000  & Standard  & No  & 0.082  & 0.667  & 0.943  & 0.976  & 2.742  & 3.301   & \tabularnewline
   &  & EBW  & Yes  & 0.035  & 0.596  & 0.956  & 0.981  & 2.597  & 3.135  & 5.0\%  \tabularnewline
   & 5,000  & Standard  & No  & 0.053  & 0.410  & 0.945  & 0.970  & 1.811  & 2.174    & \tabularnewline
   &  & EBW  & Yes  & 0.010  & 0.358  & 0.956  & 0.982  & 1.681  & 2.025  & 6.9\%  \tabularnewline
\midrule 
  0.25  & 2,000  & Standard  & No  & 0.026  & 0.545  & 0.931  & 0.963  & 2.196  & 2.641    & \tabularnewline
   &  & EBW  & Yes  & -0.020  & 0.461  & 0.938  & 0.986  & 1.995  & 2.399  & 9.2\%  \tabularnewline
   & 5,000  & Standard  & No  & 0.030  & 0.361  & 0.929  & 0.963  & 1.479  & 1.770    & \tabularnewline
   &  & EBW  & Yes  & -0.021  & 0.282  & 0.941  & 0.972  & 1.321  & 1.578  & 10.8\%\tabularnewline
\midrule 
  0.50  & 2,000  & Standard  & No  & 0.042  & 0.539  & 0.927  & 0.970  & 2.091  & 2.506  & \tabularnewline
   &  & EBW  & Yes  & -0.013  & 0.438  & 0.943  & 0.981  & 1.841  & 2.219  & 11.5\% \tabularnewline
   & 5,000  & Standard  & No  & 0.044  & 0.363  & 0.906  & 0.955  & 1.380  & 1.655  &   \tabularnewline
   &  & EBW  & Yes  & -0.005  & 0.288  & 0.921  & 0.970  & 1.201  & 1.437  & 13.2\%  \tabularnewline
\midrule 
  0.75  & 2,000  & Standard  & No  & 0.084  & 0.523  & 0.936  & 0.970  & 2.185  & 2.624    & \tabularnewline
   &  & EBW  & Yes  & 0.029  & 0.452  & 0.937  & 0.978  & 1.957  & 2.359  & 10.1\%  \tabularnewline
   & 5,000  & Standard  & No  & 0.085  & 0.379  & 0.914  & 0.948  & 1.451  & 1.741  &   \tabularnewline
   &  & EBW  & Yes  & 0.034  & 0.300  & 0.934  & 0.967  & 1.288  & 1.546  & 11.2\%  \tabularnewline
\midrule 
  0.90  & 2,000  & Standard  & No  & 0.044  & 0.641  & 0.966  & 0.984  & 2.707  & 3.263    & \tabularnewline
   &  & EBW  & Yes  & -0.036  & 0.584  & 0.967  & 0.987  & 2.530  & 3.059  & 6.3\%  \tabularnewline
   & 5,000  & Standard  & No  & 0.045  & 0.439  & 0.924  & 0.963  & 1.772  & 2.126  &   \tabularnewline
   &  & EBW  & Yes  & -0.002  & 0.384  & 0.941  & 0.976  & 1.653  & 1.986  & 6.6\% \tabularnewline
\bottomrule
\end{tabular}
\end{table}

\begin{table}[!t]
\centering \caption{Uniform confidence bands for quantile RDD: $\tau\in[0.1,0.9]$ with
step size $0.01$, no covariate adjustment versus balancing covariate
adjustment. SCP = simultaneous coverage probability for the uniform
confidence band, CBW = average width of the uniform confidence band.
Bandwidth: AMSE-optimal. Eff.\ gain is the proportional reduction
in CBW95 achieved by the EBW estimator relative to the standard estimator
without covariates.}
\label{tab:qte_uni_dz} %
\begin{tabular}{ccclcccccc}
\toprule 
\multicolumn{1}{c}{DGP} & \multicolumn{1}{c}{} & \multicolumn{1}{c}{$n$} & \multicolumn{1}{l}{Method} & \multicolumn{1}{l}{Covariate} & \multicolumn{1}{c}{SCP90} & \multicolumn{1}{c}{SCP95} & \multicolumn{1}{c}{CBW90} & \multicolumn{1}{c}{CBW95} & \multicolumn{1}{c}{Eff.\ gain}\tabularnewline
\midrule 
I  &  & 2,000  & Standard  & No  & 0.948  & 0.982  & 3.899  & 4.492  & \tabularnewline
 &  &  & EBW  & Yes  & 0.968  & 0.990  & 3.478  & 3.982  & 11.4\%\tabularnewline
\midrule 
 &  & 5,000  & Standard  & No  & 0.957  & 0.984  & 2.445  & 2.780  & \tabularnewline
 &  &  & EBW  & Yes  & 0.967  & 0.989  & 2.194  & 2.482  & 10.7\%\tabularnewline
\midrule 
II  &  & 2,000  & Standard  & No  & 0.935  & 0.976  & 4.116  & 4.729  & \tabularnewline
 &  &  & EBW  & Yes  & 0.966  & 0.994  & 3.711  & 4.247  & 10.2\%\tabularnewline
\midrule 
 &  & 5,000  & Standard  & No  & 0.950  & 0.984  & 2.569  & 2.917  & \tabularnewline
 &  &  & EBW  & Yes  & 0.965  & 0.987  & 2.323  & 2.624  & 10.0\%\tabularnewline
\midrule 
III  &  & 2,000  & Standard  & No  & 0.952  & 0.983  & 4.335  & 4.984  & \tabularnewline
 &  &  & EBW  & Yes  & 0.958  & 0.993  & 3.981  & 4.568  & 8.3\%\tabularnewline
\midrule 
 &  & 5,000  & Standard  & No  & 0.944  & 0.977  & 2.713  & 3.084  & \tabularnewline
 &  &  & EBW  & Yes  & 0.948  & 0.981  & 2.453  & 2.775  & 10.0\%\tabularnewline
\bottomrule
\end{tabular}
\end{table}

\subsection{Regression kink}

\label{sec:simu_kink}

To mimic a sharp RKD, we choose the parameters of the DGPs in Section~\ref{sec:simu_design} to satisfy
two restrictions: $\alpha_{-}+0.49\beta_{-}=\alpha_{+}+0.49\beta_{+}$
and $\theta_{-}=\theta_{+}$, which guarantee, respectively, that
the conditional mean of the outcome is continuous at the
cutoff: $\lim_{x\downarrow0}\mathrm{E}\left[Y_{i}\mid X_{i}=x\right]=\lim_{x\uparrow0}\mathrm{E}\left[Y_{i}\mid X_{i}=x\right]$,
and that the first covariate satisfies the derivative balance condition:
$\lim_{x\downarrow0}\partial\mathrm{E}\left[Z_{1,i}\mid X_{i}=x\right]/\partial x=\lim_{x\uparrow0}\partial\mathrm{E}\left[Z_{1,i}\mid X_{i}=x\right]/\partial x$.
We set $\beta_{-}=0.22$, $\beta_{+}=0.28$, $\alpha_{-}=0.36$, $\alpha_{+}=\alpha_{-}+0.49(\beta_{-}-\beta_{+})=0.3306$,
$\theta_{-}=\theta_{+}=1.06$, $\pi_{-}=\left[(-0.8)^{1},\dots,(-0.8)^{d_{z}-1}\right]^{\top}$
and $\pi_{+}=\left[\text{0.8}^{1},\dots,\text{0.8}^{d_{z}-1}\right]^{\top}$.\footnote{If $\pi_{-}=\pi_{+}$, imposing the level balance condition $\lim_{x\downarrow0}\mathrm{E}\left[Z_{j,i}\mid X_{i}=x\right]=\lim_{x\uparrow0}\mathrm{E}\left[Z_{j,i}\mid X_{i}=x\right]$
for $j=2,\dots,d_{z}$ does not reduce the asymptotic variance of
the sharp RKD estimator.} Recall that the estimand is $\vartheta_{\dagger}=\mu^{\left(1\right)}_{Y,+}-\mu^{\left(1\right)}_{Y,-}$,
which is proportional to the sharp RKD parameter.

We examine the finite-sample performance of three approaches, labeled
``Standard'', ``Reg'' and ``EBW'' in Table~\ref{tab: kink}:
the first does not use the covariates $Z$ and its point estimate
corresponds to the regression coefficient of $I_{i}X_{i}$ in \eqref{eq:LP without covariates};
the second incorporates the covariate balance in the first derivative,
i.e., $\mu^{\left(1\right)}_{Z,+}=\mu^{\left(1\right)}_{Z,-}$, using
regression adjustment as discussed in Section~\ref{subsec:RKD_regression};
the third is our EB reweighting proposal, which incorporates the covariate
balance conditions both in levels and in first derivatives, as discussed
in Section~\ref{subsec:RKD_reweighting}. We take the point estimates
from the local quadratic regression ($p=2$) and select the bandwidth $\widehat{h}$ by (\ref{eq:rkd-h}). The CI of EBW corresponds to $\mathit{CI}^{\dagger}_{3,\alpha}(\widehat{h})$ in Section~\ref{subsec:RKD_inference}. The CIs of the standard and Reg estimators are formed analogously,
by setting $\widetilde{\gamma}=0$ in \eqref{eq:zeta_til}
and $\widetilde{\gamma}_{\dagger,p}=0$ in \eqref{eq:scr_V_til} for the former,
and by setting $\widetilde{\gamma}_{\dagger,p}=0$ for the latter.

According to Table~\ref{tab: kink}, both the Reg and the EBW
estimators lead to efficiency improvements relative to the standard
estimator without covariate adjustment. The last column reports the
proportional reduction in CIL95 achieved by each of the two covariate-adjusted
estimators relative to the standard estimator. The efficiency gain
ranges from $11.8\%$ to $24.0\%$ when only the derivative balance
condition is imposed and from $11.6\%$ to $38.4\%$ when the level
balance condition is added. In DGPs II and III, the EBW estimator,
which imposes both balance conditions, improves further on the Reg
estimator. This ordering is consistent with Theorem~\ref{thm:kink}.
%which delivers $\mathscr{V}_{\star,p}\leq\mathscr{V}_{\dagger,p}\leq\omega^{0,2}_{p,1}\mathrm{Var}_{\pm}\left[Y\right]/\varphi$. The second inequality measures the asymptotic variance reduction due to the incorporation of the derivative balancing condition $\mu^{\left(1\right)}_{Z,+}=\mu^{\left(1\right)}_{Z,-}$:the reduction equals $\omega^{0,2}_{p,1}\gamma^{\top}\mathrm{Var}_{\pm}\left[Z\right]\gamma/\varphi$ and is thus driven by the part of the variation of the outcome atthe cutoff that $Z$ explains. The first inequality measures the efficiency improvement brought by the level balance condition $\mu_{Z,+}=\mu_{Z,-}$. The corresponding variance reduction equals $\omega^{0,2}_{p,0}\gamma^{\top}_{\dagger,p}\mathrm{Var}_{\pm}\left[Z\right]\gamma_{\dagger,p}/\varphi$,where $\gamma_{\dagger,p}$ is proportional to $\mathrm{Cov}_{+}\left[Z,\epsilon\right]-\mathrm{Cov}_{-}\left[Z,\epsilon\right]$,the difference across the two sides of the cutoff in the covariancebetween the covariates and the covariate-adjusted outcome $\epsilon=Y-Z^{\top}\gamma$.
In DGP I, the covariate $Z$ is related to $Y$ in much the same way
on either side of the cutoff, so that $\mathrm{Cov}_{+}\left[Z,\epsilon\right]-\mathrm{Cov}_{-}\left[Z,\epsilon\right]$
and thus $\gamma_{\dagger,p}$ in \eqref{eq:gamma_RKD} are close
to zero. As a result, the additional inclusion of the level balance
condition hardly reduces the asymptotic variance. %Whenever $Z$ is related to $Y$ in much the same way on either side of the cutoff, the level balance condition therefore adds little to the derivative balance condition. This is precisely what happens in the DGP I, in which the balance condition in level hardly reduces the asymptotic variance because its efficiency improvement is proportional to $(\beta_{+}-\beta_{-})^{2}$, which takes a tiny value in this DGP.

\begin{table}[!t]
\centering \caption{Finite-sample performance of estimators for the sharp RKD: covariate
balance conditions (Cov. Cond): $\mu_{Z,+}=\mu_{Z,-}$ (Level) and
$\mu^{\left(1\right)}_{Z,+}=\mu^{\left(1\right)}_{Z,-}$ (Deri); CP
= coverage probability for the confidence interval; CIL = average
length of the confidence interval. Eff.\ gain is the proportional
reduction in CIL95 relative to the standard estimator without using
any covariate information.}
\label{tab: kink} %
\begin{tabular}{cllccccccccc}
\toprule 
\multicolumn{1}{c}{DGP} & \multicolumn{1}{c}{$n$} & \multicolumn{1}{l}{Method} & \multicolumn{2}{l}{Cov. Cond} & \multicolumn{1}{c}{Bias} & \multicolumn{1}{c}{RMSE} & \multicolumn{1}{c}{CP90} & \multicolumn{1}{c}{CP95} & \multicolumn{1}{c}{CIL90} & \multicolumn{1}{c}{CIL95} & \multicolumn{1}{c}{Eff.\ gain}\tabularnewline
\multicolumn{1}{c}{} & \multicolumn{1}{c}{} & \multicolumn{1}{c}{} & \multicolumn{1}{l}{Level} & \multicolumn{1}{l}{Deri} &  &  &  &  &  &  & \tabularnewline
\midrule 
I  & 2,000  & Standard  & No  & No  & 0.460  & 6.12  & 0.925  & 0.963  & 17.0  & 20.2  & \tabularnewline
 &  & Reg  & No  & Yes  & 0.255  & 5.33  & 0.927  & 0.967  & 15.0  & 17.8  & 11.8\%\tabularnewline
 &  & EBW  & Yes  & Yes  & 0.255  & 5.35  & 0.928  & 0.967  & 15.0  & 17.9  & 11.6\%\tabularnewline
\cmidrule{3-12}
 & 5,000  & Standard  & No  & No  & 0.686  & 3.27  & 0.867  & 0.940  & 7.8  & 9.3  & \tabularnewline
 &  & Reg  & No  & Yes  & 0.250  & 2.85  & 0.894  & 0.954  & 6.9  & 8.2  & 11.9\%\tabularnewline
 &  & EBW  & Yes  & Yes  & 0.253  & 2.85  & 0.895  & 0.951  & 6.9  & 8.2  & 11.8\%\tabularnewline
\midrule 
II  & 2,000  & Standard  & No  & No  & 0.551  & 7.26  & 0.911  & 0.960  & 18.6  & 22.2  & \tabularnewline
 &  & Reg  & No  & Yes  & 0.195  & 6.23  & 0.876  & 0.941  & 15.2  & 18.1  & 18.4\%\tabularnewline
 &  & EBW  & Yes  & Yes  & 0.247  & 5.19  & 0.882  & 0.944  & 13.2  & 15.8  & 28.9\%\tabularnewline
\cmidrule{3-12}
 & 5,000  & Standard  & No  & No  & 0.601  & 4.50  & 0.903  & 0.954  & 11.3  & 13.5  & \tabularnewline
 &  & Reg  & No  & Yes  & 0.254  & 3.80  & 0.871  & 0.940  & 9.0  & 10.7  & 20.7\%\tabularnewline
 &  & EBW  & Yes  & Yes  & 0.303  & 3.10  & 0.872  & 0.939  & 7.7  & 9.2  & 32.2\%\tabularnewline
\midrule 
III  & 2,000  & Standard  & No  & No  & 0.694  & 8.65  & 0.911  & 0.958  & 22.5  & 26.8  & \tabularnewline
 &  & Reg  & No  & Yes  & 0.266  & 7.28  & 0.873  & 0.946  & 17.9  & 21.4  & 20.2\%\tabularnewline
 &  & EBW  & Yes  & Yes  & 0.296  & 5.69  & 0.896  & 0.950  & 15.3  & 18.3  & 31.8\%\tabularnewline
\cmidrule{3-12}
 & 5,000  & Standard  & No  & No  & 0.709  & 5.26  & 0.903  & 0.958  & 13.6  & 16.2  & \tabularnewline
 &  & Reg  & No  & Yes  & 0.316  & 4.35  & 0.860  & 0.932  & 10.4  & 12.4  & 24.0\%\tabularnewline
 &  & EBW  & Yes  & Yes  & 0.382  & 3.38  & 0.860  & 0.928  & 8.4  & 10.0  & 38.4\%\tabularnewline
\bottomrule
\end{tabular}
\end{table}

%%%%%%%%%%%%%%%%%%%%%%%%%%%%%%%%%%%%%%%%%%%%%%%%%%%%%%%%%%%%%

\section{Empirical illustration}

%%%%%%%%%%%%%%%%%%%%%%%%%%%%%%%%%%%%%%%%%%%%%%%%%%%%%%%%%%%%%
%We present two empirical applications, one based on the quantile RDD (Section \ref{sec:islamic}) and the other on the RKD (Section \ref{subsec:landais}).
%%%%%%%%%%%%%%%%%%%%%%%%%%%%%%%%%%%%%%%%%%%%%%%%%%%%%%%%%%%%%
%\subsection{Islamic political representation and women's educational attainment}\label{sec:islamic}
%%%%%%%%%%%%%%%%%%%%%%%%%%%%%%%%%%%%%%%%%%%%%%%%%%%%%%%%%%%%%
We apply our EB reweighting covariate adjustment method to the Turkish
election data originally analyzed by \citet{Meyersson2014} and subsequently
revisited by \citet{Cattaneo2019} as a leading empirical application
of RDDs. This application examines the causal effects of Islamic political
representation on women's educational attainment. The dataset covers
$2{,}629$ Turkish municipalities, and the design is a sharp RDD. The
score $X$ is the vote-share margin of the Islamic party in the 1994
mayoral elections, while the outcome $Y$ is the percentage
of women aged 15--20 who had completed high school by the year 2000.
We include four covariates: the Islamic party's vote share in 1994,
the number of parties receiving votes in 1994, the logarithm of the
1994 population, and an indicator for district-center status. Table~\ref{empi: meyersson}
reports the estimated average treatment effect and QTEs, both with
and without covariate adjustment. The average treatment effect is
estimated using the \texttt{R} package \texttt{rdrobust} \citep{calonico2015rdrobust,Cattaneo2019}.
The QTEs and their $90\%$ CIs are computed using our method developed
in Section~\ref{subsec:Sharp-quantile-RD}, with the EB weights given
in \eqref{eq:w_til balancing weights}. In all specifications, the
bandwidth is the covariate-adjusted AMSE-optimal bandwidth in \eqref{eq:MSEbw_qte}.

Previous analyses of these data have focused on the average treatment
effect at the cutoff, documenting a positive effect of Islamic political
representation on the educational attainment of young women; see the
last two rows of Table~\ref{empi: meyersson}.\footnote{Table II of \citet{Meyersson2014} reports a mean estimate (without
covariates) of $3.2$ with a standard error of $1.0$ at a bandwidth
of $24$, similar to the mean estimates reported in Table~\ref{empi: meyersson}.} For the average treatment effect, the estimated impact is already
statistically significant even without incorporating covariate information.
Consequently, although covariate adjustment yields a shorter CI, it
does not affect the substantive conclusion regarding statistical significance.

\begin{table}[!t]
\centering \caption{Estimates of the $\tau$-th QTEs and the average treatment effect
of Islamic rule on women's education. \textquotedblleft Covariates\textquotedblright{}
indicates whether covariate adjustment is employed. \textquotedblleft Bandwidth\textquotedblright{}
reports the AMSE-optimal bandwidth with covariate adjustment. The
sample size is $n=2{,}629$. Efficiency gains are computed as the
percentage reduction in the length of the confidence interval when
covariate adjustment is used relative to when it is not.}
\label{empi: meyersson} %
\begin{tabular}{lccccc}
\toprule 
Estimand  & Covariates  & Bandwidth  & Estimate  & 90\% CI  & Efficiency gain\tabularnewline
\midrule 
$\tau=0.25$  & No  & 29.8  & 3.070  & $[-0.051,\,5.751]$  & \tabularnewline
 & Yes  & 29.8  & 3.070  & $[\phantom{-}0.333,\,5.635]$  & 8.6\%\tabularnewline
\cmidrule{1-1}\cmidrule(lr){2-6}
$\tau=0.50$  & No  & 28.8  & 3.728  & $[-0.223,\,6.900]$  & \tabularnewline
 & Yes  & 28.8  & 3.734  & $[\phantom{-}0.006,\,6.482]$  & 9.1\%\tabularnewline
\cmidrule{1-1}\cmidrule(lr){2-6}
$\tau=0.75$  & No  & 29.8  & 2.683  & $[-1.357,\,6.823]$  & \tabularnewline
 & Yes  & 29.8  & 2.558  & $[-1.274,\,6.380]$  & 6.4\%\tabularnewline
\midrule 
\multicolumn{5}{r}{Average efficiency gain over $\tau\in[0.2,0.8]$} & 8.3\%\tabularnewline
\midrule 
Mean  & No  & 27.4  & 2.949  & $[\phantom{-}0.069,\,5.621]$  & \tabularnewline
 & Yes  & 27.4  & 3.003  & $[\phantom{-}0.918,\,5.617]$  & 15.4\%\tabularnewline
\midrule 
\addlinespace
%\multicolumn{6}{l}{\footnotesize Average efficiency gain over $\tau\in[0.2,0.8]$: 8.32\%.}
 &  &  &  &  & \tabularnewline
\end{tabular}
\end{table}

We now turn to the corresponding QTEs. The point estimates reported
in the first three row groups of Table~\ref{empi: meyersson} reveal
heterogeneity across the outcome distribution. Using covariate adjustment
together with the AMSE-optimal bandwidth, the estimated treatment
effect is approximately $3.07$ percentage points at the $0.25$ quantile,
increases to $3.73$ percentage points at the median, and then declines
to $2.56$ percentage points at the $0.75$ quantile.

Table~\ref{empi: meyersson} further shows that incorporating covariates
has little impact on the point estimates but improves their precision:
covariate adjustment shortens the CIs by $6.4\%$ to $9.1\%$ at the
reported quantiles and by $8.3\%$ on average over $\tau\in[0.2,0.8]$.
In particular, at $\tau=0.25$ and $\tau=0.50$, the covariate-adjusted
estimator yields statistically significant treatment effects, whereas
the corresponding estimators without covariate adjustment do not.
Thus, unlike in the mean RDD analysis, covariate adjustment changes
the empirical conclusion for the QTEs. This highlights the practical
importance of our proposed method, which provides valid and more efficient
covariate adjustment in nonlinear settings such as the quantile RDD.

\section{Conclusion}

This paper proposes a generic approach to covariate adjustment in
RDDs and RKDs. Rather than entering the covariates additively in the
local polynomial regression, we reweight the local objective function
by the EB weights that equate the weighted covariate moments on the
two sides of the cutoff. Because the adjustment operates through the
weights rather than through the specification, our EB reweighting-based
covariate adjustment applies to a broad class of RDD-related estimands
defined by a local optimization problem of the form \eqref{eq:general_objective_uniform_weights},
whether the estimand is linear or nonlinear, and whether it is a level
or a derivative at the cutoff. 

For linear estimands in levels, as in the sharp RDD, our estimator
is first-order equivalent to the regression-based covariate adjustment
of CCFT, so nothing is lost where the existing method is already efficient.
For nonlinear estimands, as in the quantile RDD, the regression-based
adjustment is generally inconsistent, whereas our estimator remains
consistent and attains a smaller asymptotic variance than its counterpart
without covariates. We also provide bootstrap confidence intervals
and uniform confidence bands. For estimands in derivatives, as in
the RKD and the TED of \citet{Dong2015}, the regression-based adjustment
exploits only the derivative balance conditions, whereas our reweighting
approach exploits the level balance conditions as well and lowers
the asymptotic variance further. The Monte Carlo experiments and the
empirical illustration confirm that these gains materialize in finite
samples without distorting coverage.

The application of EB reweighting-based covariate adjustment is not
limited to the designs studied here. Extensions to other nonlinear
estimands such as those studied by \citet*{Chiang2019}, \citet*{Huang2022},
\citet{Qu2024a}, and \citet*{xu2017regression,Xu2018} are left for
future research.

%\subsubsection*{Declaration of generative AI and AI-assisted technologies in the manuscript preparation process}
%During the preparation of this work, the authors used AI-assisted technologies (Claude and Codex) for language refinement and readability improvements. After using these tools, the authors reviewed and edited the content as needed and take full responsibility for the content of the published article.

\bibliographystyle{chicago}
\bibliography{generic}

\appendix
%% Single appendix: print the heading as "Appendix  Bandwidth selection" without the
%% letter; subsections keep the numbers A.1 and A.2.
\makeatletter
\renewcommand{\section@cntformat}{\appendixname\quad}
\makeatother

\section{Bandwidth selection}

\label{app:bdw}

In this appendix, for a vector $x$, $x^{\left[j\right]}$ denotes the $j$-th
coordinate of $x$.

\subsection{Quantile RDD}

\label{app:bdw_qrd}

\subsubsection*{Step 1: Determining the preliminary bandwidth and estimating $\left(\varphi,\varphi_{+}\left(\tau\right),\varphi_{-}\left(\tau\right)\right)$}

For a generic pilot bandwidth $\ell$, let 
\[
\widetilde{\varphi}\left(\ell\right)\coloneqq\frac{1}{n\ell}\sum_{i}K\left(\frac{X_{i}}{\ell}\right)
\]
denote the ordinary kernel density estimator of $\varphi$. Let $s_{X}$
denote the sample standard deviation and $\mathit{IQR}_{X}$ denote
the sample IQR of $X$. Let $\varpi_{1}\coloneqq\int u^{2}K\left(u\right)\mathrm{d}u$,
$\varpi_{2}\coloneqq\int K^{2}\left(u\right)\mathrm{d}u$, and 
\[
h^{X}_{rot}\coloneqq\min\left\{ s_{X},\frac{\mathit{IQR}_{X}}{1.349}\right\} \left(\frac{8\sqrt{\pi}}{3}\cdot\frac{\varpi_{2}}{\varpi^{2}_{1}}\right)^{1/5}n^{-1/5}
\]
be the modified Silverman rule-of-thumb (ROT) bandwidth. Let $h^{Y}_{rot}$
be defined by the same formula using the sample standard deviation
and IQR of $Y$. In the implementation, we take the estimator of $\varphi$
to be $\widetilde{\varphi}\coloneqq\widetilde{\varphi}\left(h^{X}_{rot}\right)$.

Let 
\[
\widetilde{\kappa}_{Y,+}\left(\tau\mid\ell\right)\coloneqq\mathrm{e}^{\top}_{2,1}\underset{\beta\in\mathbb{R}^{2}}{\arg\min}\sum_{i}K\left(\frac{X_{i}}{\ell}\right)I_{i}\rho_{\tau}\left(Y_{i}-r^{\top}_{1}\left(X_{i}\right)\beta\right).
\]
We consider the following simulation-based implementation of the conditional
density estimator of \citet[Section S.1]{Qu2019}. Let $U_{1},\dots,U_{M}$
be i.i.d. draws from $\mathrm{Uniform}\left(0,1\right)$ for some
large $M$. Let 
\[
\widetilde{\varphi}_{+}\left(\tau\mid\ell\right)\coloneqq\frac{1}{M\ell}\sum^{M}_{i=1}K\left(\frac{\widetilde{\kappa}_{Y,+}\left(U_{i}\mid h^{X}_{rot}\right)-\widetilde{\kappa}_{Y,+}\left(\tau\mid h^{X}_{rot}\right)}{\ell}\right)
\]
be a consistent estimator of $\varphi_{+}\left(\tau\right)$. Let
$\widetilde{\varphi}_{-}\left(\tau\mid\ell\right)$ be defined analogously.
In our implementation, we take the estimators of $\left(\varphi_{+}\left(\tau\right),\varphi_{-}\left(\tau\right)\right)$
to be $\left(\widetilde{\varphi}_{+}\left(\tau\right),\widetilde{\varphi}_{-}\left(\tau\right)\right)\coloneqq\left(\widetilde{\varphi}_{+}\left(\tau\mid2h^{Y}_{rot}\right),\widetilde{\varphi}_{-}\left(\tau\mid2h^{Y}_{rot}\right)\right)$.

\subsubsection*{Step 2: Estimating $\gamma\left(\tau\right)$ and constructing $\widehat{\mathscr{V}}_{\star}\left(\tau\right)$}

Denote 
\[
\widetilde{\beta}_{j,+}\left(\ell\right)\coloneqq\underset{\beta\in\mathbb{R}^{2}}{\arg\min}\sum_{i}K\left(\frac{X_{i}}{\ell}\right)I_{i}\left(Z^{\left[j\right]}_{i}-r^{\top}_{1}\left(X_{i}\right)\beta\right)^{2}.
\]
Let $\widetilde{\xi}_{+,i}\left(\ell\right)$ be the vector whose
$j$-th coordinate $\widetilde{\xi}^{\left[j\right]}_{+,i}\left(\ell\right)$
is defined by $\widetilde{\xi}^{\left[j\right]}_{+,i}\left(\ell\right)\coloneqq Z^{\left[j\right]}_{i}-r^{\top}_{1}\left(X_{i}\right)\widetilde{\beta}_{j,+}\left(\ell\right)$.
Let $\widetilde{\xi}_{-,i}\left(\ell\right)$ be defined analogously.
Let 
\[
\widetilde{\mu}_{\xi\xi^{\top},+}\left(\ell\right)\coloneqq\frac{1}{n\ell}\sum_{i}W_{+,1,i}\left(\ell\right)\widetilde{\xi}_{+,i}\left(\ell\right)\widetilde{\xi}_{+,i}^{\top}\left(\ell\right)
\]
and let $\widetilde{\mu}_{\xi\xi^{\top},-}\left(\ell\right)$ be defined
similarly. Let 
\[
\widetilde{\beta}_{\tau,+}\left(\ell\right)\coloneqq\underset{\beta\in\mathbb{R}^{2}}{\arg\min}\sum_{i}K\left(\frac{X_{i}}{\ell}\right)I_{i}\rho_{\tau}\left(Y_{i}-r^{\top}_{1}\left(X_{i}\right)\beta\right)
\]
and let 
\[
\widetilde{U}_{+,i}\left(\tau\mid\ell\right)\coloneqq\tau-\mathbbm{1}\left(Y_{i}\leq r^{\top}_{1}\left(X_{i}\right)\widetilde{\beta}_{\tau,+}\left(\ell\right)\right).
\]
Let $\widetilde{U}_{-,i}\left(\tau\mid\ell\right)$ be defined similarly.
Let 
\[
\widetilde{\mu}_{\xi U\left(\tau\right),+}\left(\ell\right)\coloneqq\frac{1}{n\ell}\sum_{i}W_{+,1,i}\left(\ell\right)\widetilde{\xi}_{+,i}\left(\ell\right)\widetilde{U}_{+,i}\left(\tau\mid\ell\right).
\]
Let $\widetilde{\mu}_{\xi\xi^{\top},+}\coloneqq\widetilde{\mu}_{\xi\xi^{\top},+}\left(h^{X}_{rot}\right)$
and $\widetilde{\mu}_{\xi U\left(\tau\right),+}\coloneqq\widetilde{\mu}_{\xi U\left(\tau\right),+}\left(h^{X}_{rot}\right)$.
Let $\left(\widetilde{\mu}_{\xi\xi^{\top},-},\widetilde{\mu}_{\xi U\left(\tau\right),-}\right)$
be defined similarly and let 
\[
\widetilde{\gamma}\left(\tau\right)\coloneqq\left(\widetilde{\mu}_{\xi\xi^{\top},+}+\widetilde{\mu}_{\xi\xi^{\top},-}\right)^{-1}\left(\frac{\widetilde{\mu}_{\xi U\left(\tau\right),+}}{\widetilde{\varphi}_{+}\left(\tau\right)}+\frac{\widetilde{\mu}_{\xi U\left(\tau\right),-}}{\widetilde{\varphi}_{-}\left(\tau\right)}\right)
\]
be the estimate for $\gamma\left(\tau\right)$ in the implementation,
where $\widetilde{\varphi}_{+}\left(\tau\right)$ and $\widetilde{\varphi}_{-}\left(\tau\right)$
are from Step 1. Let 
\[
\widehat{\mathscr{V}}_{\star}\left(\tau\right)\coloneqq\frac{\omega^{0,2}_{1,0}}{\widetilde{\varphi}}\left\{ \left(\frac{\tau\left(1-\tau\right)}{\widetilde{\varphi}^{2}_{+}\left(\tau\right)}+\frac{\tau\left(1-\tau\right)}{\widetilde{\varphi}^{2}_{-}\left(\tau\right)}\right)-\widetilde{\gamma}^{\top}\left(\tau\right)\left(\widetilde{\mu}_{\xi\xi^{\top},+}+\widetilde{\mu}_{\xi\xi^{\top},-}\right)\widetilde{\gamma}\left(\tau\right)\right\} .
\]

\subsubsection*{Step 3: Constructing $\widehat{\mathscr{B}}_{\star}\left(\tau\right)$}

Let $b$ be the pilot bandwidth. Run a local quadratic quantile regression
to estimate $\left(\kappa^{\left(2\right)}_{Y,+}\left(\tau\right),\kappa^{\left(2\right)}_{Y,-}\left(\tau\right)\right)$.
Let 
\[
\widetilde{\kappa}^{\left(2\right)}_{Y,+}\left(\tau\mid b\right)\coloneqq2!\cdot\mathrm{e}^{\top}_{3,3}\underset{\beta\in\mathbb{R}^{3}}{\arg\min}\sum_{i}K\left(\frac{X_{i}}{b}\right)I_{i}\rho_{\tau}\left(Y_{i}-r^{\top}_{2}\left(X_{i}\right)\beta\right)
\]
and let $\widetilde{\kappa}^{\left(2\right)}_{Y,-}\left(\tau\mid b\right)$
be defined similarly. Run a local quadratic regression to estimate
$\left(\mu^{\left(2\right)}_{Z^{\left[j\right]},+},\mu^{\left(2\right)}_{Z^{\left[j\right]},-}\right)$.
Let 
\[
\widetilde{\mu}^{\left(2\right)}_{Z^{\left[j\right]},+}\left(b\right)\coloneqq2!\cdot\mathrm{e}^{\top}_{3,3}\underset{\beta\in\mathbb{R}^{3}}{\arg\min}\sum_{i}K\left(\frac{X_{i}}{b}\right)I_{i}\left(Z^{\left[j\right]}_{i}-r^{\top}_{2}\left(X_{i}\right)\beta\right)^{2}
\]
and let $\widetilde{\mu}^{\left(2\right)}_{Z^{\left[j\right]},-}\left(b\right)$
be defined similarly. Let $\widetilde{\mu}^{\left(2\right)}_{Z,+}\left(b\right)\coloneqq\bigl(\widetilde{\mu}^{\left(2\right)}_{Z^{\left[1\right]},+}\left(b\right),\ldots,\widetilde{\mu}^{\left(2\right)}_{Z^{\left[d_{z}\right]},+}\left(b\right)\bigr)^{\top}$,
and let $\widetilde{\mu}^{\left(2\right)}_{Z,-}\left(b\right)$ be
defined similarly. Let 
\[
\widehat{\mathscr{B}}_{\star}\left(\tau\mid b\right)\coloneqq\omega^{2,1}_{1,0}\left\{ \left(\frac{\widetilde{\kappa}^{\left(2\right)}_{Y,+}\left(\tau\mid b\right)}{2}-\frac{\widetilde{\kappa}^{\left(2\right)}_{Y,-}\left(\tau\mid b\right)}{2}\right)-\left(\frac{\widetilde{\mu}^{\left(2\right)}_{Z,+}\left(b\right)}{2}-\frac{\widetilde{\mu}^{\left(2\right)}_{Z,-}\left(b\right)}{2}\right)^{\top}\widetilde{\gamma}\left(\tau\right)\right\} 
\]
be the estimate for $\mathscr{B}_{\star,1}\left(\tau\right)$, where
$\widetilde{\gamma}\left(\tau\right)$ is obtained from Step 2.

Note that the estimation error of $\widetilde{\gamma}\left(\tau\right)$
is negligible under smoothness conditions. Let 
\begin{eqnarray*}
\mathscr{B}_{\star\star}\left(\tau\right) & \coloneqq & \frac{\kappa^{\left(3\right)}_{Y,+}\left(\tau\right)-\gamma^{\top}\left(\tau\right)\mu^{\left(3\right)}_{Z,+}}{3!}\cdot\omega^{3,1}_{+,2,2}-\frac{\kappa^{\left(3\right)}_{Y,-}\left(\tau\right)-\gamma^{\top}\left(\tau\right)\mu^{\left(3\right)}_{Z,-}}{3!}\cdot\omega^{3,1}_{-,2,2}\\
\mathscr{V}_{\star\star}\left(\tau\right) & \coloneqq & \frac{\omega^{0,2}_{2,2}}{\varphi}\left\{ \frac{\tau\left(1-\tau\right)}{\varphi^{2}_{+}\left(\tau\right)}+\frac{\tau\left(1-\tau\right)}{\varphi^{2}_{-}\left(\tau\right)}\right.\\
 &  & \left.-\left(\frac{\mu_{ZU\left(\tau\right),+}}{\varphi_{+}\left(\tau\right)}+\frac{\mu_{ZU\left(\tau\right),-}}{\varphi_{-}\left(\tau\right)}\right)^{\top}\left(\mathrm{Var}_{\pm}\left[Z\right]\right)^{-1}\left(\frac{\mu_{ZU\left(\tau\right),+}}{\varphi_{+}\left(\tau\right)}+\frac{\mu_{ZU\left(\tau\right),-}}{\varphi_{-}\left(\tau\right)}\right)\right\} .
\end{eqnarray*}
Then, the AMSE of $\widehat{\mathscr{B}}_{\star}\left(\tau\mid b\right)$
is proportional to 
\[
\mathscr{B}^{2}_{\star\star}\left(\tau\right)b^{2}+\frac{1}{nb^{5}}\mathscr{V}_{\star\star}\left(\tau\right).
\]
The AMSE-optimal pilot bandwidth is given by 
\[
b_{\mathit{opt}}\left(\tau\right)\coloneqq\left(\frac{5}{2}\cdot\frac{\mathscr{V}_{\star\star}\left(\tau\right)}{\mathscr{B}^{2}_{\star\star}\left(\tau\right)}\right)^{1/7}n^{-1/7}.
\]

Estimation of $\mathscr{B}_{\star\star}\left(\tau\right)$ requires
estimating $\kappa^{\left(3\right)}_{Y,+}\left(\tau\right)$, $\mu^{\left(3\right)}_{Z,+}$,
$\kappa^{\left(3\right)}_{Y,-}\left(\tau\right)$ and $\mu^{\left(3\right)}_{Z,-}$.
We use a parametric reference model to estimate $\kappa^{\left(3\right)}_{Y,+}\left(\tau\right)$
and $\kappa^{\left(3\right)}_{Y,-}\left(\tau\right)$. Consider the
following piecewise cubic global quantile regression model: 
\begin{eqnarray*}
Y & = & r^{\top}_{3}\left(X\right)\alpha_{0}+\left(I\cdot r_{3}\left(X\right)\right)^{\top}\alpha_{1}+u\\
Q_{u\mid X}\left(\tau\mid X\right) & = & 0.
\end{eqnarray*}
Let $\left(\widetilde{\alpha}^{\top}_{0},\widetilde{\alpha}^{\top}_{1}\right)^{\top}$
be the regression coefficients from a global quantile regression at
quantile level $\tau$. The estimate of $\kappa^{\left(3\right)}_{Y,+}\left(\tau\right)$
is given by $\widetilde{\kappa}^{\left(3\right)}_{Y,+}\left(\tau\right)\coloneqq3!\left\{ \left(\mathrm{e}^{\top}_{4,4}\widetilde{\alpha}_{0}\right)+\left(\mathrm{e}^{\top}_{4,4}\widetilde{\alpha}_{1}\right)\right\} $.
The estimate of $\kappa^{\left(3\right)}_{Y,-}\left(\tau\right)$
is given by $\widetilde{\kappa}^{\left(3\right)}_{Y,-}\left(\tau\right)\coloneqq3!\left(\mathrm{e}^{\top}_{4,4}\widetilde{\alpha}_{0}\right)$.
The estimates of $\mu^{\left(3\right)}_{Z^{\left[j\right]},+}$ and
$\mu^{\left(3\right)}_{Z^{\left[j\right]},-}$ are obtained by estimating
the following parametric reference (piecewise cubic) model: 
\begin{eqnarray*}
Z^{\left[j\right]} & = & r^{\top}_{3}\left(X\right)\alpha_{0}+\left(I\cdot r_{3}\left(X\right)\right)^{\top}\alpha_{1}+u\\
\mathrm{E}\left[u\mid X\right] & = & 0.
\end{eqnarray*}
Let $\widetilde{\alpha}_{0}$ be the regression coefficients of $r_{3}\left(X\right)$
and let $\widetilde{\alpha}_{1}$ be the regression coefficients of
$I\cdot r_{3}\left(X\right)$. The estimate of $\mu^{\left(3\right)}_{Z^{\left[j\right]},+}$
is given by $\widetilde{\mu}^{\left(3\right)}_{Z^{\left[j\right]},+}\coloneqq3!\left\{ \left(\mathrm{e}^{\top}_{4,4}\widetilde{\alpha}_{0}\right)+\left(\mathrm{e}^{\top}_{4,4}\widetilde{\alpha}_{1}\right)\right\} $.
The estimate of $\mu^{\left(3\right)}_{Z^{\left[j\right]},-}$ is
given by $\widetilde{\mu}^{\left(3\right)}_{Z^{\left[j\right]},-}\coloneqq3!\left(\mathrm{e}^{\top}_{4,4}\widetilde{\alpha}_{0}\right)$.
Let $\widetilde{\mu}^{\left(3\right)}_{Z,+}\coloneqq\bigl(\widetilde{\mu}^{\left(3\right)}_{Z^{\left[1\right]},+},\ldots,\widetilde{\mu}^{\left(3\right)}_{Z^{\left[d_{z}\right]},+}\bigr)^{\top}$,
and let $\widetilde{\mu}^{\left(3\right)}_{Z,-}$ be defined similarly.
The estimate $\widetilde{\gamma}\left(\tau\right)$ of $\gamma\left(\tau\right)$
is from Step 2. Now we have the estimate 
\[
\widetilde{\mathscr{B}}_{\star\star}\left(\tau\right)\coloneqq\frac{\widetilde{\kappa}^{\left(3\right)}_{Y,+}\left(\tau\right)-\widetilde{\gamma}^{\top}\left(\tau\right)\widetilde{\mu}^{\left(3\right)}_{Z,+}}{3!}\cdot\omega^{3,1}_{+,2,2}-\frac{\widetilde{\kappa}^{\left(3\right)}_{Y,-}\left(\tau\right)-\widetilde{\gamma}^{\top}\left(\tau\right)\widetilde{\mu}^{\left(3\right)}_{Z,-}}{3!}\cdot\omega^{3,1}_{-,2,2}
\]
for $\mathscr{B}_{\star\star}\left(\tau\right)$. We also have the
estimate 
\[
\widetilde{\mathscr{V}}_{\star\star}\left(\tau\right)\coloneqq\frac{\omega^{0,2}_{2,2}}{\widetilde{\varphi}}\left\{ \frac{\tau\left(1-\tau\right)}{\widetilde{\varphi}^{2}_{+}\left(\tau\right)}+\frac{\tau\left(1-\tau\right)}{\widetilde{\varphi}^{2}_{-}\left(\tau\right)}-\widetilde{\gamma}^{\top}\left(\tau\right)\left(\widetilde{\mu}_{\xi\xi^{\top},+}+\widetilde{\mu}_{\xi\xi^{\top},-}\right)\widetilde{\gamma}\left(\tau\right)\right\} ,
\]
where $\left(\widetilde{\varphi},\widetilde{\varphi}_{+}\left(\tau\right),\widetilde{\varphi}_{-}\left(\tau\right)\right)$
are from Step 1 and $\left(\widetilde{\mu}_{\xi U\left(\tau\right),+},\widetilde{\mu}_{\xi U\left(\tau\right),-},\widetilde{\mu}_{\xi\xi^{\top},+},\widetilde{\mu}_{\xi\xi^{\top},-}\right)$
are from Step 2. Now we determine the pilot bandwidth $\widehat{b}\left(\tau\right)$
to be 
\[
\widehat{b}\left(\tau\right)\coloneqq\left(\frac{5}{2}\cdot\frac{\widetilde{\mathscr{V}}_{\star\star}\left(\tau\right)}{\widetilde{\mathscr{B}}^{2}_{\star\star}\left(\tau\right)}\right)^{1/7}n^{-1/7}.
\]
Finally, the estimator of $\mathscr{B}_{\star,1}\left(\tau\right)$
is given by $\widehat{\mathscr{B}}_{\star}\left(\tau\right)\coloneqq\widehat{\mathscr{B}}_{\star}\left(\tau\mid\widehat{b}\left(\tau\right)\right)$.

\subsection{Regression kink}

\label{app:imp_rkd}

\subsubsection*{Step 1: Preliminary bandwidth and $\widetilde{\varphi}$}

This step is the same as Step 1 of Appendix \ref{app:bdw_qrd}: using
the modified Silverman ROT bandwidth $h^{X}_{rot}$, we compute $\widetilde{\varphi}\coloneqq\widetilde{\varphi}\left(h^{X}_{rot}\right)$.

\subsubsection*{Step 2: Estimating $\gamma$, $\gamma_{\dagger,p}$ and constructing
$\widehat{\mathscr{V}}_{\star\dagger}$}

Residualize $Y$ and $Z$ by one-sided local linear regressions with
ROT bandwidth $h^{X}_{rot}$. Specifically, let 
\[
\widetilde{\beta}_{+}\left(\ell\right)\coloneqq\underset{\beta\in\mathbb{R}^{2}}{\arg\min}\sum_{i}K\left(\frac{X_{i}}{\ell}\right)I_{i}\left(Y_{i}-r^{\top}_{1}\left(X_{i}\right)\beta\right)^{2}
\]
and let $\widetilde{\eta}_{+,i}\left(\ell\right)\coloneqq Y_{i}-r^{\top}_{1}\left(X_{i}\right)\widetilde{\beta}_{+}\left(\ell\right)$.
Obtain $\widetilde{\eta}_{-,i}\left(\ell\right)$ similarly. Construct
\begin{eqnarray*}
\widetilde{\mu}_{\eta^{2},+}\left(\ell\right) & \coloneqq & \frac{1}{n\ell}\sum_{i}W_{+,1,i}\left(\ell\right)\widetilde{\eta}^{2}_{+,i}\left(\ell\right)\\
\widetilde{\mu}_{\xi\eta,+}\left(\ell\right) & \coloneqq & \frac{1}{n\ell}\sum_{i}W_{+,1,i}\left(\ell\right)\widetilde{\xi}_{+,i}\left(\ell\right)\widetilde{\eta}_{+,i}\left(\ell\right).
\end{eqnarray*}
Let $\widetilde{\mu}_{\eta^{2},+}\coloneqq\widetilde{\mu}_{\eta^{2},+}\left(h^{X}_{rot}\right)$
and $\widetilde{\mu}_{\xi\eta,+}\coloneqq\widetilde{\mu}_{\xi\eta,+}\left(h^{X}_{rot}\right)$.
Let $\left(\widetilde{\mu}_{\eta^{2},-},\widetilde{\mu}_{\xi\eta,-}\right)$
be defined similarly. We then construct the estimators for the coefficients
$\gamma$ and $\gamma_{\dagger,p}$ as follows. Let 
\[
\widetilde{\gamma}\coloneqq\left(\widetilde{\mu}_{\xi\xi^{\top},+}+\widetilde{\mu}_{\xi\xi^{\top},-}\right)^{-1}\left(\widetilde{\mu}_{\xi\eta,+}+\widetilde{\mu}_{\xi\eta,-}\right).
\]
Let 
\begin{eqnarray*}
\widetilde{\zeta}_{+,i}\left(\ell\right) & \coloneqq & \widetilde{\eta}_{+,i}\left(\ell\right)-\widetilde{\xi}^{\top}_{+,i}\left(\ell\right)\widetilde{\gamma}\\
\widetilde{\mu}_{\xi\zeta,+}\left(\ell\right) & \coloneqq & \frac{1}{n\ell}\sum_{i}W_{+,1,i}\left(\ell\right)\widetilde{\xi}_{+,i}\left(\ell\right)\widetilde{\zeta}_{+,i}\left(\ell\right)
\end{eqnarray*}
and let $\widetilde{\mu}_{\xi\zeta,+}\coloneqq\widetilde{\mu}_{\xi\zeta,+}\left(h^{X}_{rot}\right)$.
Let $\widetilde{\mu}_{\xi\zeta,-}$ be defined similarly. Now define
\begin{eqnarray*}
\widetilde{\gamma}_{\dagger,p} & \coloneqq & \left(\frac{\varsigma_{p}}{\omega^{0,2}_{p,0}}\right)\left(\widetilde{\mu}_{\xi\xi^{\top},+}+\widetilde{\mu}_{\xi\xi^{\top},-}\right)^{-1}\left(\widetilde{\mu}_{\xi\zeta,+}-\widetilde{\mu}_{\xi\zeta,-}\right).
\end{eqnarray*}
We obtain the following estimator of the variance $\mathscr{V}_{\star\dagger,2}$
\begin{equation}
\widehat{\mathscr{V}}_{\star\dagger}\coloneqq\frac{\omega^{0,2}_{2,1}}{\widetilde{\varphi}}\left(\widetilde{\mu}_{\eta^{2},+}+\widetilde{\mu}_{\eta^{2},-}\right)-\frac{\omega^{0,2}_{2,1}}{\widetilde{\varphi}}\widetilde{\gamma}^{\top}\left(\widetilde{\mu}_{\xi\xi^{\top},+}+\widetilde{\mu}_{\xi\xi^{\top},-}\right)\widetilde{\gamma}-\frac{\omega^{0,2}_{2,0}}{\widetilde{\varphi}}\widetilde{\gamma}^{\top}_{\dagger,2}\left(\widetilde{\mu}_{\xi\xi^{\top},+}+\widetilde{\mu}_{\xi\xi^{\top},-}\right)\widetilde{\gamma}_{\dagger,2}.\label{eq:rkd-Vstar}
\end{equation}

\subsubsection*{Step 3: Constructing $\widehat{\mathscr{B}}_{\star\dagger}$}

We construct an estimator $\widehat{\mathscr{B}}_{\star\dagger}\left(b\right)$
for $\mathscr{B}_{\star\dagger,2}$ and determine the bandwidth $b$.
We run a local cubic regression to estimate $\bigl(\mu^{\left(3\right)}_{Y,+},\mu^{\left(3\right)}_{Y,-}\bigr)$.
Let 
\begin{eqnarray*}
\widetilde{\mu}^{\left(3\right)}_{Y,+}\left(b\right) & \coloneqq & 3!\cdot\mathrm{e}^{\top}_{4,4}\underset{\beta\in\mathbb{R}^{4}}{\arg\min}\sum_{i}K\left(\frac{X_{i}}{b}\right)I_{i}\left(Y_{i}-r^{\top}_{3}\left(X_{i}\right)\beta\right)^{2}.
\end{eqnarray*}
$\widetilde{\mu}^{\left(3\right)}_{Y,-}\left(b\right)$ and $\bigl(\widetilde{\mu}^{\left(3\right)}_{Z^{\left[j\right]},+}\left(b\right),\widetilde{\mu}^{\left(3\right)}_{Z^{\left[j\right]},-}\left(b\right)\bigr)$
are defined similarly. Let $\widetilde{\mu}^{\left(3\right)}_{Z,+}\left(b\right)\coloneqq\bigl(\widetilde{\mu}^{\left(3\right)}_{Z^{\left[1\right]},+}\left(b\right),\ldots,\widetilde{\mu}^{\left(3\right)}_{Z^{\left[d_{z}\right]},+}\left(b\right)\bigr)^{\top}$
and let $\widetilde{\mu}^{\left(3\right)}_{Z,-}\left(b\right)$ be
defined analogously. Let 
\begin{eqnarray*}
\widehat{\mathscr{B}}_{\star\dagger}\left(b\right) & \coloneqq & \frac{1}{3!}\left\{ \left(\widetilde{\mu}^{\left(3\right)}_{Y,+}\left(b\right)\omega^{3,1}_{+,2,1}-\widetilde{\mu}^{\left(3\right)}_{Y,-}\left(b\right)\omega^{3,1}_{-,2,1}\right)-\left(\widetilde{\mu}^{\left(3\right)}_{Z,+}\left(b\right)\omega^{3,1}_{+,2,1}-\widetilde{\mu}^{\left(3\right)}_{Z,-}\left(b\right)\omega^{3,1}_{-,2,1}\right)^{\top}\widetilde{\gamma}\right.\\
 &  & \left.-\left(\widetilde{\mu}^{\left(3\right)}_{Z,+}\left(b\right)\omega^{3,1}_{+,2,0}-\widetilde{\mu}^{\left(3\right)}_{Z,-}\left(b\right)\omega^{3,1}_{-,2,0}\right)^{\top}\widetilde{\gamma}_{\dagger,2}\right\} .
\end{eqnarray*}
Using the symmetries $\omega^{3,1}_{+,2,1}=\omega^{3,1}_{-,2,1}=\omega^{3,1}_{2,1}$
and $\omega^{3,1}_{+,2,0}=-\omega^{3,1}_{-,2,0}=\omega^{3,1}_{2,0}$,
$\widehat{\mathscr{B}}_{\star\dagger}\left(b\right)$ can be written
as 
\[
\widehat{\mathscr{B}}_{\star\dagger}\left(b\right)=\frac{1}{3!}\omega^{3,1}_{2,1}\left\{ \left(\widetilde{\mu}^{\left(3\right)}_{Y,+}\left(b\right)-\widetilde{\mu}^{\left(3\right)}_{Y,-}\left(b\right)\right)-\left(\widetilde{\mu}^{\left(3\right)}_{Z,+}\left(b\right)-\widetilde{\mu}^{\left(3\right)}_{Z,-}\left(b\right)\right)^{\top}\widetilde{\gamma}-\left(\widetilde{\mu}^{\left(3\right)}_{Z,+}\left(b\right)+\widetilde{\mu}^{\left(3\right)}_{Z,-}\left(b\right)\right)^{\top}\widetilde{\gamma}_{\ddagger}\right\} ,
\]
where $\widetilde{\gamma}_{\ddagger}\coloneqq\left(\omega^{3,1}_{2,0}/\omega^{3,1}_{2,1}\right)\widetilde{\gamma}_{\dagger,2}$
and $\gamma_{\ddagger}\coloneqq\left(\omega^{3,1}_{2,0}/\omega^{3,1}_{2,1}\right)\gamma_{\dagger,2}$.

Note that the estimation errors of $\widetilde{\gamma}$ and $\widetilde{\gamma}_{\ddagger}$
are negligible under smoothness conditions. The AMSE of $\widehat{\mathscr{B}}_{\star\dagger}\left(b\right)$
is proportional to $\mathscr{B}^{2}_{\star\star}b^{2}+\mathscr{V}_{\star\star}/\left(nb^{7}\right)$,
where 
\begin{eqnarray*}
\mathscr{B}_{\star\star} & \coloneqq & \frac{\mu^{\left(4\right)}_{Y,+}-\left(\mu^{\left(4\right)}_{Z,+}\right)^{\top}\left(\gamma+\gamma_{\ddagger}\right)}{4!}\cdot\omega^{4,1}_{+,3,3}-\frac{\mu^{\left(4\right)}_{Y,-}-\left(\mu^{\left(4\right)}_{Z,-}\right)^{\top}\left(\gamma-\gamma_{\ddagger}\right)}{4!}\cdot\omega^{4,1}_{-,3,3}\\
\mathscr{V}_{\star\star} & \coloneqq & \frac{\omega^{0,2}_{3,3}}{\varphi}\left\{ \mu_{\eta^{2},+}-2\cdot\mu^{\top}_{\xi\eta,+}\left(\gamma+\gamma_{\ddagger}\right)+\left(\gamma+\gamma_{\ddagger}\right)^{\top}\mu_{\xi\xi^{\top},+}\left(\gamma+\gamma_{\ddagger}\right)\right.\\
 &  & \left.+\mu_{\eta^{2},-}-2\cdot\mu^{\top}_{\xi\eta,-}\left(\gamma-\gamma_{\ddagger}\right)+\left(\gamma-\gamma_{\ddagger}\right)^{\top}\mu_{\xi\xi^{\top},-}\left(\gamma-\gamma_{\ddagger}\right)\right\} .
\end{eqnarray*}
The AMSE-optimal bandwidth is given by 
\begin{equation}
b_{\mathit{opt}}\coloneqq\left(\frac{7}{2}\cdot\frac{\mathscr{V}_{\star\star}}{\mathscr{B}^{2}_{\star\star}}\right)^{1/9}n^{-1/9}.\label{eq:pilot btw RKD}
\end{equation}
To estimate $b_{\mathit{opt}}$, we first obtain the estimates of
$\mu^{\left(4\right)}_{Y,+}$ and $\mu^{\left(4\right)}_{Y,-}$ from
a parametric reference model. We consider the following global piecewise
quartic regression model 
\begin{eqnarray*}
Y & = & r^{\top}_{4}\left(X\right)\alpha_{0}+\left(I\cdot r_{4}\left(X\right)\right)^{\top}\alpha_{1}+u\\
\mathrm{E}\left[u\mid X\right] & = & 0.
\end{eqnarray*}
Let $\widetilde{\alpha}_{0}$ be the regression coefficients of $r_{4}\left(X\right)$
and let $\widetilde{\alpha}_{1}$ be the regression coefficients of
$I\cdot r_{4}\left(X\right)$. The estimate of $\mu^{\left(4\right)}_{Y,+}$
is $\widetilde{\mu}^{\left(4\right)}_{Y,+}\coloneqq4!\left\{ \left(\mathrm{e}^{\top}_{5,5}\widetilde{\alpha}_{0}\right)+\left(\mathrm{e}^{\top}_{5,5}\widetilde{\alpha}_{1}\right)\right\} $,
and the estimate of $\mu^{\left(4\right)}_{Y,-}$ is $\widetilde{\mu}^{\left(4\right)}_{Y,-}\coloneqq4!\left(\mathrm{e}^{\top}_{5,5}\widetilde{\alpha}_{0}\right)$.
Let $\widetilde{\mu}^{\left(4\right)}_{Z,+}$ and $\widetilde{\mu}^{\left(4\right)}_{Z,-}$
be defined similarly. Now we have the estimate 
\[
\widetilde{\mathscr{B}}_{\star\star}\coloneqq\frac{\widetilde{\mu}^{\left(4\right)}_{Y,+}-\left(\widetilde{\mu}^{\left(4\right)}_{Z,+}\right)^{\top}\left(\widetilde{\gamma}+\widetilde{\gamma}_{\ddagger}\right)}{4!}\cdot\omega^{4,1}_{+,3,3}-\frac{\widetilde{\mu}^{\left(4\right)}_{Y,-}-\left(\widetilde{\mu}^{\left(4\right)}_{Z,-}\right)^{\top}\left(\widetilde{\gamma}-\widetilde{\gamma}_{\ddagger}\right)}{4!}\cdot\omega^{4,1}_{-,3,3}
\]
for $\mathscr{B}_{\star\star}$. We also have the estimate 
\begin{eqnarray*}
\widetilde{\mathscr{V}}_{\star\star} & \coloneqq & \frac{\omega^{0,2}_{3,3}}{\widetilde{\varphi}}\left\{ \widetilde{\mu}_{\eta^{2},+}-2\cdot\widetilde{\mu}^{\top}_{\xi\eta,+}\left(\widetilde{\gamma}+\widetilde{\gamma}_{\ddagger}\right)+\left(\widetilde{\gamma}+\widetilde{\gamma}_{\ddagger}\right)^{\top}\widetilde{\mu}_{\xi\xi^{\top},+}\left(\widetilde{\gamma}+\widetilde{\gamma}_{\ddagger}\right)\right.\\
 &  & \left.+\widetilde{\mu}_{\eta^{2},-}-2\cdot\widetilde{\mu}^{\top}_{\xi\eta,-}\left(\widetilde{\gamma}-\widetilde{\gamma}_{\ddagger}\right)+\left(\widetilde{\gamma}-\widetilde{\gamma}_{\ddagger}\right)^{\top}\widetilde{\mu}_{\xi\xi^{\top},-}\left(\widetilde{\gamma}-\widetilde{\gamma}_{\ddagger}\right)\right\} 
\end{eqnarray*}
for $\mathscr{V}_{\star\star}$. Now we determine the pilot bandwidth
$\widehat{b}$ by using the right-hand side of \eqref{eq:pilot btw RKD}
with $\left(\mathscr{V}_{\star\star},\mathscr{B}_{\star\star}\right)$
replaced by $\left(\widetilde{\mathscr{V}}_{\star\star},\widetilde{\mathscr{B}}_{\star\star}\right)$.
Our estimate for $\mathscr{B}_{\star\dagger,2}$ is given by $\widehat{\mathscr{B}}_{\star\dagger}\coloneqq\widehat{\mathscr{B}}_{\star\dagger}\left(\widehat{b}\right)$. 

%%%%%%%%%%%%%%%%%%%%%%%%%%%%%%%%%%%%%%%%%%%%%%%%%%%%%%%%%%%%%%%%%%%%%%%%%%%%%%%
%%                                SUPPLEMENT                                  %%
%%                                                                            %%
%% Everything below reproduces draft_V13_supp.tex verbatim. The block that    %%
%% follows restores the numbering the supplement had as a standalone document:%%
%%   pages     1, 2, 3, ...   (restarted; see \setcounter{page}{1} below)      %%
%%   sections  S.1, S.2, ...                                                  %%
%%   equations (S.1), (S.2), ...                                              %%
%%   lemmas    S.1, S.2, ...                                                  %%
%% To number the supplement's pages continuously with the main text instead,  %%
%% delete the \setcounter{page}{1} line (and, optionally, the \theHpage line).%%
%%%%%%%%%%%%%%%%%%%%%%%%%%%%%%%%%%%%%%%%%%%%%%%%%%%%%%%%%%%%%%%%%%%%%%%%%%%%%%%

\clearpage

%% Restore the title-page machinery consumed by the main paper's \maketitle.
\makeatletter
\let\maketitle\GRDmaketitle
\let\@maketitle\GRDAtmaketitle
\let\thanks\GRDthanks
\let\title\GRDtitle
\let\author\GRDauthor
\let\date\GRDdate
\let\and\GRDand
%% The main paper's \appendix left \section@cntformat defined (it prints
%% "Appendix ..."); make it undefined again so supplement headings print "S.1 ...".
\let\section@cntformat\relax
\makeatother

%% Restart the counters so the supplement numbers exactly as it did standalone.
\setcounter{section}{0}
\setcounter{subsection}{0}
\setcounter{equation}{0}
\setcounter{lem}{0}
\setcounter{footnote}{0}
\setcounter{page}{1}

%% Numbering in the supplement carries the prefix S: Section S.1, S.2, ...;
%% equations (S.1), (S.2), ...; Lemma S.1, S.2, ...
\renewcommand{\theequation}{S.\arabic{equation}}
\renewcommand{\thelem}{S.\arabic{lem}}

%% (PDF anchors are kept distinct from the main paper's by the hypertexnames=false
%% option passed to hyperref in the preamble.)

\title{\textbf{Supplement to ``Generic Covariate Adjustment for }\\
\textbf{Regression Discontinuity Designs''}\thanks{This version: \today.}}
\author{ Jun Ma\thanks{Jun Ma. School of Economics, Renmin University of China.}
\and Yuya Sasaki\thanks{Yuya Sasaki. Department of Economics, Vanderbilt University.}
\and Zhengfei Yu\thanks{Zhengfei Yu. Graduate School of Economics, The University of Osaka.}}
\date{}
\maketitle

\noindent This supplement contains the proofs of Theorems \ref{thm:normality}--\ref{thm:kink}
of the main text. Section, equation, and lemma numbers
in this supplement carry the prefix ``S''. Numbered references without this prefix,
such as Theorem \ref{thm:normality}, Assumption \ref{assu:kernel}, and (\ref{eq:balancing weights}),
refer to the main text.

\textbf{Notation.} $\max_{i}$ ($\min_{i}$) is understood as $\max_{1\leq i\leq n}$
($\min_{1\leq i\leq n}$). For real numbers $a$ and $b$, $a\land b\coloneqq\min\left\{ a,b\right\} $.
For a vector $x$ and a matrix $\mathrm{A}$,
$x^{\left[j\right]}$ denotes the $j$-th coordinate of $x$ and $\mathrm{A}^{\left[jk\right]}$
denotes the $jk$-th element of $\mathrm{A}$. For a square matrix
$\mathrm{A}$, $\left\Vert \mathrm{A}\right\Vert $ is understood
as the spectral norm of $\mathrm{A}$ and $\mathrm{mineig}\left(\mathrm{A}\right)$
denotes its smallest eigenvalue. ``$a\apprle b$'' is understood
as $a\leq C\cdot b$, for some universal constant $C>0$ that does
not depend on the distribution of the variables or the sample size
but may depend on the kernel function. Denote $K_{+,p}\left(u\right)\coloneqq K\left(u\right)r_{p}\left(u\right)\mathbbm{1}\left(u>0\right)$
and let $K_{-,p}$ be defined analogously. Let $\mathrm{H}$ be the $\left(p+1\right)\times\left(p+1\right)$
diagonal matrix with $\left(1,h,\ldots,h^{p}\right)$ on the diagonal. We adopt the following
abbreviations: ``wpa1'' for ``with probability approaching one'',
``LIE'' for ``law of iterated expectations'', and ``DCT'' for
``dominated convergence theorem''. In the proofs, for notational
brevity, we suppress the dependence on the bandwidth $h$: we write
$\Pi_{+,p}$, $W_{+,p,i}$, $V_{p,i}$, and $\widehat{\lambda}_{p}$
for $\Pi_{+,p}\left(h\right)$, $W_{+,p,i}\left(h\right)$, $V_{p,i}\left(h\right)$,
and $\widehat{\lambda}_{p}\left(h\right)$, and we drop $h$ from
the arguments of the estimators, writing $\widehat{\vartheta}^{\mathit{eb}}_{p}$
for $\widehat{\vartheta}^{\mathit{eb}}_{p}\left(h\right)$, and similarly
for related quantities.

\renewcommand{\thesection}{S.\arabic{section}}

\section{Proof of Theorem \ref{thm:normality}}

\label{sec:Proof-of-Theorem 1}

Denote 
\[
\bar{W}_{+,p,i}\coloneqq\mathrm{e}^{\top}_{p+1,1}\left(\varphi\Lambda_{+,p}\right)^{-1}K_{+,p}\left(\frac{X_{i}}{h}\right),
\]
let $\bar{W}_{-,p,i}$ be defined analogously, and let $\bar{W}_{p,i}\coloneqq\bar{W}_{+,p,i}-\bar{W}_{-,p,i}$.
It follows from Chebyshev's inequality, a change of variables, a Taylor
expansion, and the continuity of $f_{X}$ that $\Pi_{+,p}-\mathrm{E}\left[\Pi_{+,p}\right]=O_{p}\left(\left(nh\right)^{-1/2}\right)$.
By a change of variables and the continuity of $f_{X}$, $\mathrm{E}\left[\Pi_{+,p}\right]=\varphi\Lambda_{+,p}+o\left(1\right)$.
It now follows that 
\begin{equation}
\left|\text{\ensuremath{\mathrm{mineig}}}\left(\Pi_{+,p}\right)-\text{\ensuremath{\mathrm{mineig}}}\left(\varphi\Lambda_{+,p}\right)\right|\leq\left\Vert \Pi_{+,p}-\varphi\Lambda_{+,p}\right\Vert =o_{p}\left(1\right).\label{eq:min eigen converge 2}
\end{equation}
Then, it follows from this result, the equality 
\begin{equation}
\mathrm{A}^{-1}-\mathrm{B}^{-1}=-\mathrm{B}^{-1}\left(\mathrm{A}-\mathrm{B}\right)\mathrm{B}^{-1}+\mathrm{B}^{-1}\left(\mathrm{A}-\mathrm{B}\right)\mathrm{A}^{-1}\left(\mathrm{A}-\mathrm{B}\right)\mathrm{B}^{-1}\label{eq:A_inv - B_inv}
\end{equation}
for positive definite matrices $\mathrm{A}$ and $\mathrm{B}$, and
$\text{\ensuremath{\mathrm{mineig}}}\left(\Lambda_{+,p}\right)>0$
that 
\begin{eqnarray}
\left\Vert \Pi^{-1}_{+,p}-\left(\varphi\Lambda_{+,p}\right)^{-1}\right\Vert  & = & o_{p}\left(1\right).\label{eq:PI_hat_inv - PI_inv rate}
\end{eqnarray}
Similar results hold for $\Pi_{-,p}$. 
\begin{lem}
\label{lem: reg weights basic}Let $A$ denote a random variable,
and let $\left\{ \left(A_{i},X_{i}\right)\right\} ^{n}_{i=1}$ be
i.i.d. copies of $\left(A,X\right)$. Suppose that Assumptions \ref{assu:smoothness Z}(ii)
and \ref{assu:kernel} hold. Assume that $nh\uparrow\infty$ and $h\downarrow0$.
Let $\mathbb{B}\subseteq\left[\underline{x},\overline{x}\right]$
denote an open neighborhood of 0. The following results hold for all
$k\in\mathbb{N}$. (i) If $\mu_{A}$ is uniformly continuous on $\mathbb{B}\setminus\left\{ 0\right\} $,
then 
\[
\mathrm{E}\left[\frac{1}{h}\bar{W}^{k}_{+,p}A\right]=\frac{\mu_{A,+}\omega^{0,k}_{+,p,0}}{\varphi^{k-1}}+o\left(1\right).
\]
(ii) If $\mu_{A}$ is $\left(p+1\right)$-times continuously differentiable
with uniformly continuous $\mu^{\left(p+1\right)}_{A}$ on $\mathbb{B}\setminus\left\{ 0\right\} $,
then 
\[
\frac{1}{nh}\sum_{i}W_{+,p,i}\mu_{A}\left(X_{i}\right)=\mu_{A,+}+\frac{\mu^{\left(p+1\right)}_{A,+}}{\left(p+1\right)!}\omega^{p+1,1}_{+,p,0}h^{p+1}+o_{p}\left(h^{p+1}\right).
\]
(iii) If for some $\delta>0$, $\mu_{\left|A\right|^{1+\delta}}$
is bounded on $\mathbb{B}\setminus\left\{ 0\right\} $, then 
\[
\frac{1}{nh}\sum_{i}W^{k}_{+,p,i}A_{i}-\mathrm{E}\left[\frac{1}{h}\bar{W}^{k}_{+,p}A\right]=o_{p}\left(1\right).
\]
(iv) If the conditional density $f_{A\mid X}\left(\cdot\mid x\right)$
of $A$ given $X=x$ with respect to some dominating measure $m$
is dominated by $g\left(\cdot\right)$, uniformly in $x\in\mathbb{B}$,
with $\int\left|y\right|^{r}g\left(y\right)m\left(\mathrm{d}y\right)<\infty$
for some $r>0$, then we have $\max_{i}\left|W_{+,p,i}A_{i}\right|=o_{p}\left(\left(nh\right)^{1/r}\right)$.
Similar results with ``$+$'' replaced by ``$-$'' also hold. 
\end{lem}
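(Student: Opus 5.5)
Parts (i)--(iv) all reduce to standard kernel calculations, after replacing the feasible weights $W_{+,p,i}$ by their population analogues $\bar W_{+,p,i}$ using (\ref{eq:PI_hat_inv - PI_inv rate}). Note that
\[
W_{+,p,i}-\bar W_{+,p,i}=\mathrm{e}^{\top}_{p+1,1}\bigl(\Pi^{-1}_{+,p}-(\varphi\Lambda_{+,p})^{-1}\bigr)K_{+,p}(X_i/h),
\]
so every sample average involving $W$ differs from its $\bar W$ counterpart by a term of the form $o_p(1)$ times a kernel average of $\|K_{+,p}(X_i/h)\|^k |A_i|$. That kernel average is $O_p(1)$ by Markov's inequality whenever the relevant conditional moment is bounded.

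\textbf{Part (i).} Write the expectation by the LIE and substitute $x=ht$:
\[
\mathrm{E}[h^{-1}\bar W^k_{+,p}A]=\int_0^1\bigl(\mathrm{e}^{\top}_{p+1,1}(\varphi\Lambda_{+,p})^{-1}r_p(t)K(t)\bigr)^k\mu_A(ht)f_X(ht)\,\mathrm{d}t.
\]
Uniform continuity of $\mu_A$ and continuity of $f_X$ give $\mu_A(ht)f_X(ht)\to\mu_{A,+}\varphi$ uniformly in $t\in(0,1]$. Bounded convergence then yields $\varphi^{1-k}\mu_{A,+}\int_0^1\mathcal{K}^k_{+,p,0}=\varphi^{1-k}\mu_{A,+}\omega^{0,k}_{+,p,0}$.

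\textbf{Part (ii).} Taylor-expand $\mu_A(X_i)$ to order $p+1$ around $0^+$, with remainder $o(X_i^{p+1})$ uniformly on $(0,h]$. This step needs no approximation of $\Pi_{+,p}$, because the exact reproducing property $(nh)^{-1}\sum_iW_{+,p,i}X_i^j=\mathbbm{1}(j=0)$ holds for $j\le p$. Only the $(p+1)$-th term survives: $\frac{\mu^{(p+1)}_{A,+}}{(p+1)!}h^{p+1}\,\mathrm{e}^{\top}_{p+1,1}\Pi^{-1}_{+,p}(nh)^{-1}\sum_ir_p(X_i/h)K(X_i/h)I_i(X_i/h)^{p+1}$. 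The sample average converges in probability to $\varphi\int_0^1r_pKt^{p+1}$, and combining this with (\ref{eq:PI_hat_inv - PI_inv rate}) gives $\omega^{p+1,1}_{+,p,0}+o_p(1)$. The remainder is bounded by $o(h^{p+1})(nh)^{-1}\sum_i|W_{+,p,i}|=o_p(h^{p+1})$.

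\textbf{Part (iii).} After replacing $W$ by $\bar W$ (the difference is $o_p(1)$, since $\mu_{|A|}$ is bounded), we need a weak law for the triangular array $(nh)^{-1}\sum_i\bar W^k_{+,p,i}A_i$. I expect this to be the main obstacle, because only $1+\delta$ moments are available, so Chebyshev's inequality cannot be applied directly. The plan is to truncate at $\tau_n=(nh)^{1/(1+\delta)}$-type levels, say $A_i\mathbbm{1}(|A_i|\le M_n)$ with $M_n=nh$.
\begin{itemize}
\item The truncated part has variance bounded by $(nh)^{-1}\mathrm{E}[h^{-1}\bar W^{2k}A^2\mathbbm{1}(|A|\le M_n)]\lesssim(nh)^{-1}M_n^{1-\delta}\sup\mu_{|A|^{1+\delta}}$. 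This is $o(1)$ if instead we choose $M_n=(nh)^{1/(1-\delta)}\epsilon$; alternatively, use a von Bahr--Esseen-type bound with exponent $1+\delta$, which directly gives $\mathrm{E}|\cdot-\mathrm{E}\cdot|^{1+\delta}\lesssim(nh)^{-\delta}\to0$.
\item The tail part has mean bounded by $\mathrm{E}[h^{-1}|\bar W|^k|A|\mathbbm{1}(|A|>M_n)]\le M_n^{-\delta}\sup\mu_{|A|^{1+\delta}}\to0$.
\end{itemize}
I would go with the von Bahr--Esseen route, since it is cleanest.

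\textbf{Part (iv).} Since $|W_{+,p,i}|\le C$ wpa1 by (\ref{eq:PI_hat_inv - PI_inv rate}) and the boundedness of $K$, and $W_{+,p,i}=0$ unless $0<X_i\le h$, it suffices to bound $\max_i|A_i|\mathbbm{1}(0<X_i\le h)$. A union bound with Markov's inequality gives
\[
\Pr\bigl[\max_i|A_i|\mathbbm{1}(0<X_i\le h)>\epsilon(nh)^{1/r}\bigr]\le n\Pr[0<X\le h]\cdot\frac{\mathrm{E}\bigl[|A|^r\mathbbm{1}(|A|>\epsilon(nh)^{1/r})\mid X\in(0,h]\bigr]}{\epsilon^r nh}.
\]
Here $n\Pr[0<X\le h]\lesssim nh$, and the conditional tail expectation is at most $\int_{|y|>\epsilon(nh)^{1/r}}|y|^rg\,\mathrm{d}m\to0$ by the envelope condition and the DCT. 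This gives the $o_p((nh)^{1/r})$ rate.

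The ``$-$'' cases follow by symmetry.
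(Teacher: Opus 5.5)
Your proposal is correct and follows essentially the same route as the paper: the LIE, a change of variables and the DCT for (i); the exact reproducing property of $W_{+,p,i}$ plus a Taylor remainder controlled by $\sup_{0<x<h}|\mu^{(p+1)}_A(x)-\mu^{(p+1)}_{A,+}|$ for (ii); replacing $W$ by $\bar W$ and then applying the von Bahr--Esseen inequality with exponent $1+\delta$ (which bounds the centered $(1+\delta)$-th moment by $O((nh)^{-\delta})$) for (iii); and a union bound with a truncated Markov inequality and the envelope/DCT for (iv). Your truncation detour in (iii) is muddled, since your first choice $M_n=nh$ already makes both pieces $O((nh)^{-\delta})$ while the rescaled choice $M_n=(nh)^{1/(1-\delta)}\epsilon$ only gives $O(\epsilon^{1-\delta})$ for fixed $\epsilon$. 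This does not matter, because you settle on von Bahr--Esseen, which is exactly what the paper does.
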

\begin{proof}[Proof of Lemma \ref{lem: reg weights basic}]
For Part (i), note that 
\begin{eqnarray*}
\mathrm{E}\left[\frac{1}{h}\bar{W}^{k}_{+,p}A\right] & = & \int^{\overline{x}}_{0}\frac{1}{h}\left(\mathrm{e}^{\top}_{p+1,1}\left(\varphi\Lambda_{+,p}\right)^{-1}r_{p}\left(\frac{x}{h}\right)\right)^{k}K^{k}\left(\frac{x}{h}\right)\mu_{A}\left(x\right)f_{X}\left(x\right)\mathrm{d}x\\
 & = & \int^{\frac{\overline{x}}{h}}_{0}\left(\mathrm{e}^{\top}_{p+1,1}\left(\varphi\Lambda_{+,p}\right)^{-1}r_{p}\left(y\right)\right)^{k}K^{k}\left(y\right)\mu_{A}\left(hy\right)f_{X}\left(hy\right)\mathrm{d}y\\
 & = & \frac{\mu_{A,+}\omega^{0,k}_{+,p,0}}{\varphi^{k-1}}+o\left(1\right)
\end{eqnarray*}
where the first equality follows from the LIE, the second equality
follows from a change of variables, and the third equality follows
from the continuity of $\mu_{A}$ and $f_{X}$, and the DCT.

By Taylor's theorem, for $X_{i}>0$, 
\begin{equation}
\mu_{A}\left(X_{i}\right)=\mu^{\top}_{+}r_{p}\left(X_{i}\right)+\frac{\mu^{\left(p+1\right)}_{A}\left(\dot{t}X_{i}\right)}{\left(p+1\right)!}X^{p+1}_{i},\label{eq:mu Taylor}
\end{equation}
for some $\dot{t}\in\left(0,1\right)$, where $\mu_{+}\coloneqq\left(\mu_{A,+},\mu^{\left(1\right)}_{A,+}/1!,\ldots,\mu^{\left(p\right)}_{A,+}/p!\right)^{\top}$.
Then, we write 
\begin{equation}
\frac{1}{nh}\sum_{i}W_{+,p,i}\mu_{A}\left(X_{i}\right)=\frac{1}{nh}\sum_{i}W_{+,p,i}\left(r^{\top}_{p}\left(X_{i}\right)\mu_{+}\right)+\frac{1}{nh}\sum_{i}W_{+,p,i}\frac{\mu^{\left(p+1\right)}_{A}\left(\dot{t}X_{i}\right)}{\left(p+1\right)!}X^{p+1}_{i}.\label{eq:smoothing bias decompose 1}
\end{equation}
By the definition of $W_{+,p,i}$, the first term on the right-hand
side of (\ref{eq:smoothing bias decompose 1}) equals $\mu_{A,+}$.
Write 
\begin{eqnarray}
\frac{1}{nh}\sum_{i}W_{+,p,i}\frac{\mu^{\left(p+1\right)}_{A}\left(\dot{t}X_{i}\right)}{\left(p+1\right)!}X^{p+1}_{i} & = & \frac{1}{nh}\sum_{i}W_{+,p,i}\frac{\left(\mu^{\left(p+1\right)}_{A}\left(\dot{t}X_{i}\right)-\mu^{\left(p+1\right)}_{A,+}\right)}{\left(p+1\right)!}X^{p+1}_{i}\nonumber \\
 &  & +\frac{\mu^{\left(p+1\right)}_{A,+}}{\left(p+1\right)!}\left(\frac{1}{nh}\sum_{i}W_{+,p,i}X^{p+1}_{i}\right).\label{eq:smoothing bias decompose 2}
\end{eqnarray}
For the first term on the right-hand side, we have 
\begin{eqnarray*}
 &  & \left|\frac{1}{nh}\sum_{i}W_{+,p,i}\frac{\left(\mu^{\left(p+1\right)}_{A}\left(\dot{t}X_{i}\right)-\mu^{\left(p+1\right)}_{A,+}\right)}{\left(p+1\right)!}X^{p+1}_{i}\right|\\
 & \leq & \left(\frac{1}{nh}\sum_{i}\left|W_{+,p,i}\frac{\left(X_{i}/h\right)^{p+1}}{\left(p+1\right)!}\right|\right)\left(\underset{0<x<h}{\mathrm{sup}}\left|\mu^{\left(p+1\right)}_{A}\left(x\right)-\mu^{\left(p+1\right)}_{A,+}\right|\right)h^{p+1}.
\end{eqnarray*}
By Markov's inequality and (\ref{eq:PI_hat_inv - PI_inv rate}), $\left(nh\right)^{-1}\sum_{i}\left|W_{+,p,i}\left(X_{i}/h\right)^{p+1}\right|=O_{p}\left(1\right)$
and therefore, the first term on the right-hand side of (\ref{eq:smoothing bias decompose 2})
is $o_{p}\left(h^{p+1}\right)$. It now follows from this result,
(\ref{eq:smoothing bias decompose 1}), and (\ref{eq:smoothing bias decompose 2})
that 
\begin{equation}
\frac{1}{nh}\sum_{i}W_{+,p,i}\mu_{A}\left(X_{i}\right)=\mu_{A,+}+\frac{\mu^{\left(p+1\right)}_{A,+}}{\left(p+1\right)!}\left(\frac{1}{nh}\sum_{i}W_{+,p,i}X^{p+1}_{i}\right)+o_{p}\left(h^{p+1}\right).\label{eq:smoothing bias regression}
\end{equation}
By (\ref{eq:PI_hat_inv - PI_inv rate}), Chebyshev's inequality and
a change of variables,
\begin{equation}
\frac{1}{nh}\sum_{i}W_{+,p,i}\left(\frac{X_{i}}{h}\right)^{p+1}=\omega^{p+1,1}_{+,p,0}+o_{p}\left(1\right).\label{eq:(X/h)^p+1 term}
\end{equation}
Therefore, the second term on the right-hand side of (\ref{eq:smoothing bias regression})
equals $\left(\mu^{\left(p+1\right)}_{A,+}\omega^{p+1,1}_{+,p,0}h^{p+1}\right)/\left(p+1\right)!+o_{p}\left(h^{p+1}\right)$.
Part (ii) follows from this result and (\ref{eq:smoothing bias regression}).

For Part (iii), write 
\begin{equation}
\frac{1}{nh}\sum_{i}W^{k}_{+,p,i}A_{i}-\mathrm{E}\left[\frac{1}{h}\bar{W}^{k}_{+,p}A\right]=\frac{1}{nh}\sum_{i}\left(W^{k}_{+,p,i}-\bar{W}^{k}_{+,p,i}\right)A_{i}+\left\{ \frac{1}{nh}\sum_{i}\bar{W}^{k}_{+,p,i}A_{i}-\mathrm{E}\left[\frac{1}{h}\bar{W}^{k}_{+,p}A\right]\right\} .\label{eq:decompose 3}
\end{equation}
Then, by the binomial theorem, the boundedness of $\mu_{\left|A\right|}\left(\cdot\right)$,
Markov's inequality, and (\ref{eq:PI_hat_inv - PI_inv rate}), the
first term on the right-hand side of (\ref{eq:decompose 3}) is $o_{p}\left(1\right)$.
By the von Bahr--Esseen inequality (see, e.g., \citealp[inequality 9.3.a]{lin_bai_2010_probability_inequalities})
and the elementary inequality $\mathrm{E}\left[\left|A-\mathrm{E}\left[A\right]\right|^{1+\delta}\right]\leq2^{1+\delta}\mathrm{E}\left[\left|A\right|^{1+\delta}\right]$,
\begin{eqnarray}
\mathrm{E}\left[\left|\frac{1}{nh}\sum_{i}\bar{W}^{k}_{+,p,i}A_{i}-\mathrm{E}\left[\frac{1}{h}\bar{W}^{k}_{+,p}A\right]\right|^{1+\delta}\right] & \apprle & \frac{1}{\left(nh\right)^{1+\delta}}\sum_{i}\mathrm{E}\left[\left|\bar{W}^{k}_{+,p,i}A_{i}\right|^{1+\delta}\right]\nonumber \\
 & = & o\left(1\right),\label{eq:1+delta absolute moment bound}
\end{eqnarray}
where the equality follows from (\ref{eq:PI_hat_inv - PI_inv rate})
and the boundedness of $\mu_{\left|A\right|^{1+\delta}}\left(\cdot\right)$.
Now Markov's inequality yields that the second term on the right-hand
side of (\ref{eq:decompose 3}) is $o_{p}\left(1\right)$. Part (iii)
follows from these results.

For Part (iv), first we note that $\max_{i}\left|W_{+,p,i}A_{i}\right|\leq\left\Vert \Pi^{-1}_{+,p}\right\Vert \max_{i}\left\Vert K_{+,p}\left(X_{i}/h\right)\right\Vert \left|A_{i}\right|$.
It follows from (\ref{eq:PI_hat_inv - PI_inv rate}) that $\left\Vert \Pi^{-1}_{+,p}\right\Vert =O_{p}\left(1\right)$.
Fix an arbitrary $\varepsilon>0$. By the union bound, 
\begin{eqnarray}
 &  & \Pr\left[\left(nh\right)^{-1/r}\underset{i}{\max}\left\Vert K_{+,p}\left(\frac{X_{i}}{h}\right)\right\Vert \left|A_{i}\right|>\varepsilon\right]\nonumber \\
 & \leq & \sum_{i}\Pr\left[\left\Vert K_{+,p}\left(\frac{X_{i}}{h}\right)\right\Vert ^{r}\left|A_{i}\right|^{r}>\left(nh\right)\varepsilon^{r}\right]\nonumber \\
 & \leq & \mathrm{E}\left[\frac{1}{h}\left\Vert K_{+,p}\left(\frac{X_{i}}{h}\right)\right\Vert ^{r}\left|A\right|^{r}\mathbbm{1}\left(\left|A\right|>\frac{\left(nh\right)^{1/r}\varepsilon}{\max_{u\in\mathbb{R}}\left\Vert K_{+,p}\left(u\right)\right\Vert }\right)\right]/\varepsilon^{r}.\label{eq:max upper bound}
\end{eqnarray}
Since $nh\uparrow\infty$, by the DCT, the right-hand side of the
second inequality above converges to zero as $n\uparrow\infty$. It
now follows that $\max_{i}\left\Vert K_{+,p}\left(X_{i}/h\right)\right\Vert \left|A_{i}\right|=o_{p}\left(\left(nh\right)^{1/r}\right)$. 
\end{proof}

By the monotonicity of $x\mapsto\left(x^{1+\varrho}-1\right)/\left(\varrho(1+\varrho)\right)$
on $\left(0,\infty\right)$, $\widehat{\lambda}_{p}$ maximizes 
\[
P_{\varrho}\left(\lambda\right)\coloneqq\frac{1}{\varrho}\sum_{i}\left(1+V^{\top}_{p,i}\lambda\right)^{\varrho/(1+\varrho)}
\]
subject to $1+V^{\top}_{p,i}\lambda>0$ for all $i=1,\ldots,n$. It
follows from simple calculations that $P_{\varrho}\left(\cdot\right)$
is concave on the constraint set. 
\begin{lem}
\label{lem:reg weights lambda}Suppose that the assumptions in the
statement of Theorem \ref{thm:normality} hold. Then, $\widehat{\lambda}_{p}=O_{p}\left(\left(nh\right)^{-1/2}\right)$
and 
\begin{equation}
\widehat{\lambda}_{p}=\left(1+\varrho\right)\left(\sum_{i}V_{p,i}V^{\top}_{p,i}\right)^{-1}\sum_{i}V_{p,i}+o_{p}\left(\left(nh\right)^{-1/2}\right).\label{eq:lambda_hat linearization}
\end{equation}
\end{lem}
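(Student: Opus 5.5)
The plan is the standard Owen / Newey--Smith argument for generalized empirical likelihood multipliers, with $nh$ playing the role of the effective sample size. Write $a\coloneqq1/(1+\varrho)$ and assume $\varrho\neq-1$; the exponential case $\varrho=-1$ is handled by the same steps with $(1+u)^{-a}$ replaced by $e^{-u}$. The first-order condition of $P_{\varrho}$ is
\[
\sum_{i}V_{p,i}\left(1+V^{\top}_{p,i}\lambda\right)^{-a}=0 ,
\]
and the Hessian of $P_\varrho$ is $-a\sum_{i}V_{p,i}V^{\top}_{p,i}(1+V^{\top}_{p,i}\lambda)^{-a-1}$. Linearizing the first-order condition at $\lambda=0$ gives $\sum_{i}V_{p,i}-a\sum_{i}V_{p,i}V^{\top}_{p,i}\lambda\approx0$. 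This is exactly (\ref{eq:lambda_hat linearization}), since $1/a=1+\varrho$. Everything reduces to three stochastic orders plus a localization step.

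\emph{Step 1 (three rates).}
\begin{enumerate}
\item[(a)] $(nh)^{-1}\sum_{i}V_{p,i}=O_{p}((nh)^{-1/2})$. The first coordinate is exactly zero, because $\sum_{i}W_{\pm,p,i}=nh$ by construction of $\Pi_{\pm,p}$. For the $Z$ block, write $Z_{i}=\mu_{Z}(X_{i})+\xi_{i}$. Lemma \ref{lem: reg weights basic}(ii) gives $\mu_{Z,+}-\mu_{Z,-}+O_{p}(h^{p+1})$, which equals $O_{p}(h^{p+1})=O_p((nh)^{-1/2})$ by Assumption \ref{assu:smoothness Z}(i) and $nh^{2p+3}=O(1)$. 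The noise term $(nh)^{-1}\sum_{i}W_{p,i}\xi_{i}$ is $O_{p}((nh)^{-1/2})$ by a conditional-variance (Chebyshev) bound, using (\ref{eq:PI_hat_inv - PI_inv rate}) and boundedness of $\mu_{ZZ^\top}$ near $0$.
\item[(b)] $(nh)^{-1}\sum_{i}V_{p,i}V^{\top}_{p,i}\to_{p}\Sigma\coloneqq(\omega^{0,2}_{p,0}/\varphi)\,(\mu_{\bar Z\bar Z^{\top},+}+\mu_{\bar Z\bar Z^{\top},-})$. This follows from Lemma \ref{lem: reg weights basic}(i),(iii) with $k=2$, applied entrywise to $A=\bar Z\bar Z^\top$; the $1+\delta$ moment condition comes from Assumption \ref{assu:data generating process}(iv). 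The cross terms vanish because $W_{+,p,i}W_{-,p,i}=0$. The limit $\Sigma$ is positive definite because $\mathrm{Var}_{\pm}[Z]$ is (Assumption \ref{assu:data generating process}(iii)), so the second-moment matrix of $(1,Z^\top)^\top$ is positive definite.
\item[(c)] $\max_{i}\Vert V_{p,i}\Vert=o_{p}((nh)^{1/(2+\delta)})$, by Lemma \ref{lem: reg weights basic}(iv) with $r=2+\delta$ and the envelope $g$.
\end{enumerate}

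\emph{Step 2 (localization and $\widehat\lambda_p=O_p((nh)^{-1/2})$).} Fix $\zeta\in(1/(2+\delta),1/2)$ and let $\Lambda_{n}\coloneqq\{\lambda:\Vert\lambda\Vert\leq(nh)^{-\zeta}\}$. By (c), $\max_{i,\lambda\in\Lambda_n}|V^{\top}_{p,i}\lambda|=o_{p}(1)$. Hence wpa1 $\Lambda_{n}$ lies in the constraint set, and all factors $(1+V^{\top}_{p,i}\lambda)^{-a-1}$ lie in $[1/2,2]^{|a+1|}$. A second-order Taylor expansion of $P_{\varrho}$ around $0$, together with (b), gives wpa1 for $\lambda\in\Lambda_{n}$
\[
P_{\varrho}(\lambda)-P_{\varrho}(0)\leq\lambda^{\top}\textstyle\sum_{i}V_{p,i}-c\,nh\Vert\lambda\Vert^{2}
\]
for some $c>0$. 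Let $\bar\lambda$ maximize $P_\varrho$ over the compact set $\Lambda_{n}$. Since $P_\varrho(\bar\lambda)\ge P_\varrho(0)$, we get $\Vert\bar\lambda\Vert\leq c^{-1}\Vert(nh)^{-1}\sum_{i}V_{p,i}\Vert=O_{p}((nh)^{-1/2})$ by (a). This is $o_p((nh)^{-\zeta})$, so $\bar\lambda$ is interior to $\Lambda_n$ wpa1. By concavity of $P_{\varrho}$ on the constraint set, an interior local maximizer is the global one, so $\widehat\lambda_{p}=\bar\lambda$ wpa1 and $\widehat\lambda_p=O_{p}((nh)^{-1/2})$.

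\emph{Step 3 (linearization).} Expand the first-order condition at $\widehat\lambda_{p}$:
\[
0=\sum_{i}V_{p,i}-a\sum_{i}V_{p,i}V^{\top}_{p,i}\widehat\lambda_{p}+R_{n},\qquad
\Vert R_{n}\Vert\lesssim\max_{i}\Vert V_{p,i}\Vert\cdot\sum_{i}\Vert V_{p,i}\Vert^{2}\cdot\Vert\widehat\lambda_{p}\Vert^{2},
\]
where the implicit constant uses the uniform bound on $1+V^\top_{p,i}\widehat\lambda_p$ from Step 2. By (b), (c) and Step 2, $\Vert R_n\Vert=o_{p}((nh)^{1/2})\cdot O_{p}(nh)\cdot O_{p}((nh)^{-1})=o_{p}((nh)^{1/2})$. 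Solving and using (b) to bound $\Vert(\sum_{i}V_{p,i}V^{\top}_{p,i})^{-1}\Vert=O_{p}((nh)^{-1})$ gives (\ref{eq:lambda_hat linearization}) with remainder $o_{p}((nh)^{-1/2})$.

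The main obstacle is Step 2: it requires the maximal inequality (c) to beat the localization radius, which is precisely why Assumption \ref{assu:data generating process}(iv) asks for a $2+\delta$ moment. It also requires the uniform positive-definiteness of the local Hessian over $\Lambda_{n}$. The bias part of (a) needs the bandwidth condition $nh^{2p+3}=O(1)$ together with covariate continuity; without it, $\sum_{i}V_{p,i}$ would not be $O_p((nh)^{1/2})$.
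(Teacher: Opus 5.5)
Your proposal is correct and follows the paper's proof essentially step for step. It uses the same localization ball (the paper takes radius $(nh)^{-1/(2+\delta)}$ directly, your slightly smaller $(nh)^{-\zeta}$ changes nothing), the same appeal to Lemma~\ref{lem: reg weights basic}(iv) to make $\max_i|V^{\top}_{p,i}\lambda|$ negligible, and the same Taylor bound on $P_{\varrho}$ giving $\|\bar\lambda_p\|=O_p((nh)^{-1/2})$. It then uses concavity to identify $\bar\lambda_p$ with $\widehat\lambda_p$ and a second-order expansion of the first-order condition, exactly as the paper does. One small slip: the Hessian of $P_{\varrho}$ is $-a^{2}\sum_{i}V_{p,i}V^{\top}_{p,i}(1+V^{\top}_{p,i}\lambda)^{-a-1}$, and its gradient at zero is $a\sum_{i}V_{p,i}$; what you wrote is the Jacobian of the rescaled first-order condition. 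This flips the sign for $\varrho<-1$ if taken literally, but in your Step~2 bound it only changes constants.
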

\begin{proof}
Denote $\mathbb{L}\coloneqq\left\{ \lambda:\left\Vert \lambda\right\Vert \leq\left(nh\right)^{-1/\left(2+\delta\right)}\right\} $.
By Lemma \ref{lem: reg weights basic}(iv), we have $\max_{i}\sup_{\lambda\in\mathbb{L}}\left|V^{\top}_{p,i}\lambda\right|=o_{p}\left(1\right)$.
The set $\mathbb{L}$ is contained in the constraint set $\left\{ \lambda:1+V^{\top}_{p,i}\lambda>0,\,\forall i\right\} $
wpa1, and $\bar{\lambda}_{p}\coloneqq\arg\max_{\lambda\in\mathbb{L}}P_{\varrho}\left(\lambda\right)$
exists wpa1. By a Taylor expansion, 
\begin{eqnarray}
P_{\varrho}\left(0\right) & \leq & P_{\varrho}\left(\bar{\lambda}_{p}\right)\nonumber \\
 & = & P_{\varrho}\left(0\right)+\frac{1}{1+\varrho}\left(\sum_{i}V_{p,i}\right)^{\top}\bar{\lambda}_{p}-\frac{1}{2\left(1+\varrho\right)^{2}}\sum_{i}\frac{\left(V^{\top}_{p,i}\bar{\lambda}_{p}\right)^{2}}{\left(1+\dot{t}\cdot V^{\top}_{p,i}\bar{\lambda}_{p}\right)^{1/\left(1+\varrho\right)+1}},\label{eq:P expansion}
\end{eqnarray}
for some $\dot{t}\in\left(0,1\right)$. By the continuity of $x\mapsto x^{-1/\left(1+\varrho\right)-1}$
at $x=1$ and $\max_{i}\left|V^{\top}_{p,i}\bar{\lambda}_{p}\right|=o_{p}\left(1\right)$,
we have $\min_{i}\left(1+\dot{t}\cdot V^{\top}_{p,i}\bar{\lambda}_{p}\right)^{-1/\left(1+\varrho\right)-1}>1/2$
wpa1. It now follows that $\mathrm{mineig}\left(\sum_{i}V_{p,i}V^{\top}_{p,i}\right)\left\Vert \bar{\lambda}_{p}\right\Vert \leq4\left|1+\varrho\right|\left\Vert \sum_{i}V_{p,i}\right\Vert $
wpa1. By Lemma \ref{lem: reg weights basic}(i) and (iii), 
\[
\mathrm{mineig}\left(\frac{1}{nh}\sum_{i}V_{p,i}V^{\top}_{p,i}\right)=\left(\frac{\omega^{0,2}_{p,0}}{\varphi}\right)\mathrm{mineig}\left(\mu_{\bar{Z}\bar{Z}^{\top},\pm}\right)+o_{p}\left(1\right)
\]
and under our assumptions, $\mu_{\bar{Z}\bar{Z}^{\top},\pm}$ is positive
definite and $\mathrm{mineig}\left(\mu_{\bar{Z}\bar{Z}^{\top},\pm}\right)>0$.

Let $\xi_{i}\coloneqq Z_{i}-\mu_{Z}\left(X_{i}\right)$. By Lemma
\ref{lem: reg weights basic}(ii), 
\begin{eqnarray}
\frac{1}{nh}\sum_{i}W_{p,i}Z_{i} & = & \frac{1}{nh}\sum_{i}W_{p,i}\mu_{Z}\left(X_{i}\right)+\frac{1}{nh}\sum_{i}W_{p,i}\xi_{i}\nonumber \\
 & = & \frac{1}{nh}\sum_{i}W_{p,i}\xi_{i}+\frac{\left(\mu^{\left(p+1\right)}_{Z,+}\omega^{p+1,1}_{+,p,0}-\mu^{\left(p+1\right)}_{Z,-}\omega^{p+1,1}_{-,p,0}\right)}{\left(p+1\right)!}h^{p+1}+o_{p}\left(h^{p+1}\right).\label{eq:PSI_Z_hat decompose}
\end{eqnarray}
By Chebyshev's inequality and a change of variables, we have $\left(nh\right)^{-1/2}\sum_{i}K_{s,p}\left(X_{i}/h\right)\xi_{i}=O_{p}\left(1\right)$
for $s\in\left\{ -,+\right\} $, and therefore, by this result and
\eqref{eq:PI_hat_inv - PI_inv rate},
\begin{eqnarray}
\frac{1}{\sqrt{nh}}\sum_{i}W_{p,i}\xi^{\top}_{i} & = & \mathrm{e}^{\top}_{p+1,1}\Pi^{-1}_{+,p}\left(\frac{1}{\sqrt{nh}}\sum_{i}K_{+,p}\left(\frac{X_{i}}{h}\right)\xi^{\top}_{i}\right)-\mathrm{e}^{\top}_{p+1,1}\Pi^{-1}_{-,p}\left(\frac{1}{\sqrt{nh}}\sum_{i}K_{-,p}\left(\frac{X_{i}}{h}\right)\xi^{\top}_{i}\right)\nonumber \\
 & = & O_{p}\left(1\right).\label{eq:W*ksi expansion}
\end{eqnarray}
By this result and \eqref{eq:PSI_Z_hat decompose}, we have $\left(nh\right)^{-1/2}\sum_{i}W_{p,i}Z_{i}=O_{p}\left(1\right)$.
Then, it follows from this result and 
\begin{equation}
\frac{1}{\sqrt{nh}}\sum_{i}V_{p,i}=\left(\begin{array}{c}
0\\
\frac{1}{\sqrt{nh}}\sum_{i}W_{p,i}Z_{i}
\end{array}\right),\label{eq:V decompose}
\end{equation}
that 
\begin{equation}
\frac{1}{\sqrt{nh}}\sum_{i}V_{p,i}=O_{p}\left(1\right).\label{eq:V_average rate}
\end{equation}

Combining the inequality $\mathrm{mineig}\left(\sum_{i}V_{p,i}V^{\top}_{p,i}\right)\left\Vert \bar{\lambda}_{p}\right\Vert \leq4\left|1+\varrho\right|\left\Vert \sum_{i}V_{p,i}\right\Vert $
with the established fact that $\mathrm{mineig}\left(\left(nh\right)^{-1}\sum_{i}V_{p,i}V^{\top}_{p,i}\right)$
is bounded away from $0$ wpa1 and (\ref{eq:V_average rate}) yields
$\left\Vert \bar{\lambda}_{p}\right\Vert =O_{p}\left(\left(nh\right)^{-1/2}\right)$.
Since $\left(nh\right)^{-1/2}/\left(nh\right)^{-1/\left(2+\delta\right)}\rightarrow0$,
$\bar{\lambda}_{p}$ lies in the interior of $\mathbb{L}$ wpa1. The
objective $P_{\varrho}$ is concave on the open convex constraint
set $\left\{ \lambda:1+V^{\top}_{p,i}\lambda>0,\,\forall i\right\} $,
which contains $\mathbb{L}$ wpa1. Hence any local maximizer of $P_{\varrho}$
in the interior of $\mathbb{L}$ is also its global maximizer over
the full constraint set, and so $\widehat{\lambda}_{p}=\bar{\lambda}_{p}$
wpa1. In particular, $\widehat{\lambda}_{p}=O_{p}\left(\left(nh\right)^{-1/2}\right)$,
which establishes the first claim of the lemma. Since $\widehat{\lambda}_{p}\in\mathbb{L}$
wpa1, the bound $\max_{i}\sup_{\lambda\in\mathbb{L}}\left|V^{\top}_{p,i}\lambda\right|=o_{p}\left(1\right)$
noted above implies $\max_{i}\left|V^{\top}_{p,i}\widehat{\lambda}_{p}\right|=o_{p}\left(1\right)$.

Denote $\iota_{\varrho}\left(x\right)\coloneqq x^{-1/\left(1+\varrho\right)}$.
The first-order condition is given by $\sum_{i}V_{p,i}\iota_{\varrho}\left(1+V^{\top}_{p,i}\widehat{\lambda}_{p}\right)=0.$
By a Taylor expansion, 
\begin{equation}
\sum_{i}V_{p,i}\left\{ \iota_{\varrho}\left(1\right)+\iota_{\varrho}'\left(1\right)\left(V^{\top}_{p,i}\widehat{\lambda}_{p}\right)+\frac{1}{2}\iota_{\varrho}''\left(1+\dot{t}\cdot V^{\top}_{p,i}\widehat{\lambda}_{p}\right)\left(V^{\top}_{p,i}\widehat{\lambda}_{p}\right)^{2}\right\} =0,\label{eq:FOC Taylor}
\end{equation}
for some $\dot{t}\in\left(0,1\right)$. Using $\iota_{\varrho}\left(1\right)=1$
and $\iota'_{\varrho}\left(1\right)=-1/\left(1+\varrho\right)$, (\ref{eq:FOC Taylor})
rearranges to 
\begin{equation}
\widehat{\lambda}_{p}=\left(1+\varrho\right)\left(\sum_{i}V_{p,i}V^{\top}_{p,i}\right)^{-1}\sum_{i}V_{p,i}+\frac{1+\varrho}{2}\left(\sum_{i}V_{p,i}V^{\top}_{p,i}\right)^{-1}\sum_{i}\iota_{\varrho}''\left(1+\dot{t}\cdot V^{\top}_{p,i}\widehat{\lambda}_{p}\right)\left(V^{\top}_{p,i}\widehat{\lambda}_{p}\right)^{2}V_{p,i}.\label{eq:lambda hat expansion}
\end{equation}
Since $\widehat{\lambda}_{p}\in\mathbb{L}$ wpa1 gives $\max_{i}\left|V^{\top}_{p,i}\widehat{\lambda}_{p}\right|=o_{p}\left(1\right)$,
by the continuity of $\iota_{\varrho}''$ at $1$, 
\begin{equation}
\max_{i}\left|\iota_{\varrho}''\left(1+\dot{t}\cdot V^{\top}_{p,i}\widehat{\lambda}_{p}\right)\right|\leq\sup_{\left|t\right|\le\max_{i}\left|V^{\top}_{p,i}\widehat{\lambda}_{p}\right|}\left|\iota_{\varrho}''\left(1+t\right)\right|=O_{p}\left(1\right).\label{eq:max iota'' bound}
\end{equation}
Hence, by this result, the triangle inequality, and $\left\Vert \left(nh\right)^{-1}\sum_{i}V_{p,i}V^{\top}_{p,i}\right\Vert =O_{p}\left(1\right)$,
\begin{eqnarray*}
\left\Vert \frac{1}{nh}\sum_{i}\iota_{\varrho}''\left(1+\dot{t}\cdot V^{\top}_{p,i}\widehat{\lambda}_{p}\right)\left(V^{\top}_{p,i}\widehat{\lambda}_{p}\right)^{2}V_{p,i}\right\Vert  & \leq & O_{p}\left(1\right)\cdot\max_{i}\left\Vert V_{p,i}\right\Vert \cdot\left\Vert \frac{1}{nh}\sum_{i}V_{p,i}V^{\top}_{p,i}\right\Vert \left\Vert \widehat{\lambda}_{p}\right\Vert ^{2}\\
 & = & o_{p}\left(\left(nh\right)^{-1/2}\right).
\end{eqnarray*}
It follows that the remainder term on the right-hand side of (\ref{eq:lambda hat expansion})
is $o_{p}\left(\left(nh\right)^{-1/2}\right)$. Substituting back
into (\ref{eq:lambda hat expansion}) yields the second claim of the
lemma. 
\end{proof}

By a Taylor expansion similar to that in (\ref{eq:FOC Taylor}), applied
to the scalar $\iota_{\varrho}\left(1+V^{\top}_{p,i}\widehat{\lambda}_{p}\right)$,
we have 
\begin{equation}
\iota_{\varrho}\left(1+V^{\top}_{p,i}\widehat{\lambda}_{p}\right)=1-\frac{V^{\top}_{p,i}\widehat{\lambda}_{p}}{1+\varrho}+\frac{1}{2}\iota{}_{\varrho}''\left(1+\dot{t}\cdot V^{\top}_{p,i}\widehat{\lambda}_{p}\right)\left(V^{\top}_{p,i}\widehat{\lambda}_{p}\right)^{2},\label{eq:iota Taylor}
\end{equation}
for some $\dot{t}\in\left(0,1\right)$. Averaging (\ref{eq:iota Taylor})
over $i=1,\ldots,n$ and bounding the two remainder terms, we have
$n^{-1}\sum_{i}V^{\top}_{p,i}\widehat{\lambda}_{p}=O_{p}\left(n^{-1}\right)$
by (\ref{eq:V_average rate}) and Lemma \ref{lem:reg weights lambda},
and also 
\[
\left|\frac{1}{n}\sum_{i}\iota{}_{\varrho}''\left(1+\dot{t}\cdot V^{\top}_{p,i}\widehat{\lambda}_{p}\right)\left(V^{\top}_{p,i}\widehat{\lambda}_{p}\right)^{2}\right|\leq O_{p}\left(1\right)\left\Vert \widehat{\lambda}_{p}\right\Vert ^{2}\left\Vert \frac{1}{n}\sum_{i}V_{p,i}V^{\top}_{p,i}\right\Vert =O_{p}\left(n^{-1}\right)
\]
by Lemma \ref{lem: reg weights basic}(i, iii) and Lemma \ref{lem:reg weights lambda}.
Hence $n^{-1}\sum_{i}\iota_{\varrho}(1+V^{\top}_{p,i}\widehat{\lambda}_{p})=1+O_{p}\left(n^{-1}\right)$,
and substituting (\ref{eq:iota Taylor}) into (\ref{eq:balancing weights})
yields

\begin{equation}
\widehat{w}_{p,i}=\frac{1}{n}\left\{ \iota_{\varrho}\left(1\right)+\iota_{\varrho}'\left(1\right)\left(V^{\top}_{p,i}\widehat{\lambda}_{p}\right)+\frac{1}{2}\iota_{\varrho}''\left(1+\dot{t}\cdot V^{\top}_{p,i}\widehat{\lambda}_{p}\right)\left(V^{\top}_{p,i}\widehat{\lambda}_{p}\right)^{2}\right\} \left\{ 1+O_{p}\left(n^{-1}\right)\right\} ,\label{eq:w_hat Taylor expansion}
\end{equation}
where $\dot{t}\in\left(0,1\right)$ denotes the mean value.

Let 
\[
\Pi^{\mathit{eb}}_{+,p}\coloneqq\frac{1}{h}\sum_{i}\widehat{w}_{p,i}K\left(\frac{X_{i}}{h}\right)I_{i}r_{p}\left(\frac{X_{i}}{h}\right)r^{\top}_{p}\left(\frac{X_{i}}{h}\right)\textrm{ and }W^{\mathit{eb}}_{+,p,i}\coloneqq\mathrm{e}^{\top}_{p+1,1}\left(\Pi^{\mathit{eb}}_{+,p}\right)^{-1}K_{+,p}\left(\frac{X_{i}}{h}\right),
\]
and let $\left(\Pi^{\mathit{eb}}_{-,p},W^{\mathit{eb}}_{-,p,i}\right)$
be defined similarly. Let $W^{\mathit{eb}}_{p,i}\coloneqq W^{\mathit{eb}}_{+,p,i}-W^{\mathit{eb}}_{-,p,i}$.

It follows from (\ref{eq:w_hat Taylor expansion}) and Lemma \ref{lem:reg weights lambda}
that $\left\Vert \Pi^{\mathit{eb}}_{+,p}-\Pi_{+,p}\right\Vert =O_{p}\left(\left(nh\right)^{-1/2}\right)$.
By the same arguments used to show (\ref{eq:PI_hat_inv - PI_inv rate}),
\begin{equation}
\left\Vert \left(\Pi^{\mathit{eb}}_{+,p}\right)^{-1}-\Pi^{-1}_{+,p}\right\Vert =O_{p}\left(\left(nh\right)^{-1/2}\right).\label{eq:PI_eb_inv}
\end{equation}
A similar result holds for $\Pi^{\mathit{eb}}_{-,p}$. 
\begin{proof}[Proof of Theorem \ref{thm:normality}]
It is clear that $\widehat{\vartheta}^{\mathit{eb}}_{p}$ has a closed
form given by 
\begin{equation}
\widehat{\vartheta}^{\mathit{eb}}_{p}=\frac{1}{h}\sum_{i}\widehat{w}_{p,i}W^{\mathit{eb}}_{p,i}Y_{i}.\label{eq:theta_hat_eb close form}
\end{equation}
Denote $G_{p,i}\coloneqq W^{\mathit{eb}}_{p,i}Y_{i}$. Since by \eqref{eq:PI_hat_inv - PI_inv rate}
and \eqref{eq:PI_eb_inv}, $\left\Vert \left(\Pi^{\mathit{eb}}_{s,p}\right)^{-1}\right\Vert =O_{p}\left(1\right)$
and $\left\Vert \Pi^{-1}_{s,p}\right\Vert =O_{p}\left(1\right)$,
$\left(nh\right)^{-1}\sum_{i}\left|G_{p,i}\right|\left\Vert V_{p,i}\right\Vert =O_{p}\left(1\right)$
follows easily from these results and Markov's inequality. Then, by
this result, \eqref{eq:w_hat Taylor expansion}, $\widehat{\lambda}_{p}=O_{p}\left(\left(nh\right)^{-1/2}\right)$
from Lemma \ref{lem:reg weights lambda} and $\max_{i}\left|V^{\top}_{p,i}\widehat{\lambda}_{p}\right|=o_{p}\left(1\right)$,
we have 
\[
\widehat{\vartheta}^{\mathit{eb}}_{p}=\frac{1}{nh}\sum_{i}G_{p,i}\left\{ \iota_{\varrho}\left(1\right)+\iota_{\varrho}'\left(1\right)\left(V^{\top}_{p,i}\widehat{\lambda}_{p}\right)\right\} +o_{p}\left(\left(nh\right)^{-1/2}\right).
\]
Then, by \eqref{eq:lambda_hat linearization} and \eqref{eq:V_average rate},
\begin{equation}
\widehat{\vartheta}^{\mathit{eb}}_{p}=\frac{1}{nh}\sum_{i}G_{p,i}-\left(\frac{1}{nh}\sum_{i}G_{p,i}V^{\top}_{p,i}\right)\left(\frac{1}{nh}\sum_{i}V_{p,i}V^{\top}_{p,i}\right)^{-1}\frac{1}{nh}\sum_{i}V_{p,i}+o_{p}\left(\left(nh\right)^{-1/2}\right).\label{eq:thm1 reg eq 1}
\end{equation}

By \eqref{eq:A_inv - B_inv}, $\left\Vert \Pi^{\mathit{eb}}_{+,p}-\Pi_{+,p}\right\Vert =O_{p}\left(\left(nh\right)^{-1/2}\right)$,
and \eqref{eq:PI_eb_inv}, 
\begin{equation}
\frac{1}{nh}\sum_{i}\left(W^{\mathit{eb}}_{+,p,i}-W_{+,p,i}\right)Y_{i}=-\mathrm{e}^{\top}_{p+1,1}\Pi^{-1}_{+,p}\left(\Pi^{\mathit{eb}}_{+,p}-\Pi_{+,p}\right)\widehat{\beta}_{+}+o_{p}\left(\left(nh\right)^{-1/2}\right),\label{eq:W_eb - W expansion}
\end{equation}
where $\widehat{\beta}_{+}\coloneqq\left(nh\right)^{-1}\sum_{i}\Pi^{-1}_{+,p}K_{+,p}\left(X_{i}/h\right)Y_{i}$.
By \eqref{eq:w_hat Taylor expansion}, \eqref{eq:W_eb - W expansion},
and the definition in \eqref{eq:regression weight}, we have 
\[
\frac{1}{nh}\sum_{i}\left(W^{\mathit{eb}}_{+,p,i}-W_{+,p,i}\right)Y_{i}=\frac{1}{1+\varrho}\left(\frac{1}{nh}\sum_{i}W^{2}_{+,p,i}\left(\widehat{\lambda}^{\top}_{p}\bar{Z}_{i}\right)\left(r^{\top}_{p}\left(\frac{X_{i}}{h}\right)\widehat{\beta}_{+}\right)\right)+o_{p}\left(\left(nh\right)^{-1/2}\right).
\]
By standard arguments, $\widehat{\beta}_{+}=\mu_{Y,+}\mathrm{e}_{p+1,1}+o_{p}\left(1\right)$.
By this result and Lemma \ref{lem: reg weights basic}, 
\begin{eqnarray}
\frac{1}{nh}\sum_{i}W^{2}_{+,p,i}\bar{Z}_{i}\left(r^{\top}_{p}\left(\frac{X_{i}}{h}\right)\widehat{\beta}_{+}\right) & = & \left(\frac{\omega^{0,2}_{p,0}}{\varphi}\right)\mu_{\bar{Z}}\mu_{Y,+}+o_{p}\left(1\right)\nonumber \\
\frac{1}{nh}\sum_{i}V_{p,i}V^{\top}_{p,i} & = & \left(\frac{\omega^{0,2}_{p,0}}{\varphi}\right)\mu_{\bar{Z}\bar{Z}^{\top},\pm}+o_{p}\left(1\right),\label{eq:VV limit}
\end{eqnarray}
where $\mu_{\bar{Z}}$ denotes the common value of $\mu_{\bar{Z},+}$
and $\mu_{\bar{Z},-}$, which exists under Assumption \ref{assu:smoothness Z}(i).
It now follows from these results and Lemma \ref{lem:reg weights lambda}
that 
\[
\frac{1}{nh}\sum_{i}\left(W^{\mathit{eb}}_{+,p,i}-W_{+,p,i}\right)Y_{i}=\left(\frac{1}{nh}\sum_{i}V_{p,i}\right)^{\top}\mu^{-1}_{\bar{Z}\bar{Z}^{\top},\pm}\mu_{\bar{Z}}\mu_{Y,+}+o_{p}\left(\left(nh\right)^{-1/2}\right).
\]
Similarly, 
\[
\frac{1}{nh}\sum_{i}\left(W^{\mathit{eb}}_{-,p,i}-W_{-,p,i}\right)Y_{i}=-\left(\frac{1}{nh}\sum_{i}V_{p,i}\right)^{\top}\mu^{-1}_{\bar{Z}\bar{Z}^{\top},\pm}\mu_{\bar{Z}}\mu_{Y,-}+o_{p}\left(\left(nh\right)^{-1/2}\right).
\]
Recall that $\bar{Z}\coloneqq\left(\begin{array}{cc}
1 & Z^{\top}\end{array}\right)^{\top}$. Clearly, $\mu_{\bar{Z}\bar{Z}^{\top},\pm}\mathrm{e}_{d_{z}+1,1}=2\mu_{\bar{Z}}$.
Then, by these results and \eqref{eq:V decompose}, we have 
\begin{eqnarray}
\frac{1}{nh}\sum_{i}G_{p,i}-\frac{1}{nh}\sum_{i}W_{p,i}Y_{i} & = & \left(\frac{1}{nh}\sum_{i}V_{p,i}\right)^{\top}\mathrm{e}_{d_{z}+1,1}\left(\frac{\mu_{Y,\pm}}{2}\right)+o_{p}\left(\left(nh\right)^{-1/2}\right)\nonumber \\
 & = & o_{p}\left(\left(nh\right)^{-1/2}\right).\label{eq:W_eb - W negligible}
\end{eqnarray}

Next, by Lemma \ref{lem: reg weights basic} and \eqref{eq:PI_eb_inv},
\begin{equation}
\frac{1}{nh}\sum_{i}G_{p,i}V^{\top}_{p,i}=\left(\frac{\omega^{0,2}_{p,0}}{\varphi}\right)\mu_{Y\bar{Z}^{\top},\pm}+o_{p}\left(1\right).\label{eq:GV VV limits}
\end{equation}
Therefore, by this result and \eqref{eq:VV limit}, 
\begin{equation}
\left(\frac{1}{nh}\sum_{i}G_{p,i}V^{\top}_{p,i}\right)\left(\frac{1}{nh}\sum_{i}V_{p,i}V^{\top}_{p,i}\right)^{-1}=\mu_{Y\bar{Z}^{\top},\pm}\mu^{-1}_{\bar{Z}\bar{Z}^{\top},\pm}+o_{p}\left(1\right).\label{eq:gamma_hat}
\end{equation}
We write $\mu_{Y\bar{Z}^{\top},\pm}=\left[\begin{array}{cc}
\mu_{Y,\pm} & \mu_{YZ^{\top},\pm}\end{array}\right]$, write $\mu_{\bar{Z}\bar{Z}^{\top},\pm}$ as a block matrix and apply
the block inversion formula to get 
\begin{equation}
\left[\begin{array}{cc}
2 & \mu_{Z^{\top},\pm}\\
\mu_{Z,\pm} & \mu_{ZZ^{\top},\pm}
\end{array}\right]^{-1}=\left[\begin{array}{cc}
\frac{1}{2}+\frac{1}{4}\mu_{Z^{\top},\pm}\mathrm{S}^{-1}\mu_{Z,\pm} & -\frac{1}{2}\mu_{Z^{\top},\pm}\mathrm{S}^{-1}\\
-\frac{1}{2}\mathrm{S}^{-1}\mu_{Z,\pm} & \mathrm{S}^{-1}
\end{array}\right],\label{eq:block inversion}
\end{equation}
where $\mathrm{S}\coloneqq\mu_{ZZ^{\top},\pm}-\mu_{Z,\pm}\mu_{Z^{\top},\pm}/2=\mathrm{Var}_{\pm}\left[Z\right]$.
By this result and \eqref{eq:V decompose}, we get 
\[
\mu_{Y\bar{Z}^{\top},\pm}\mu^{-1}_{\bar{Z}\bar{Z}^{\top},\pm}\left(\frac{1}{nh}\sum_{i}V_{p,i}\right)=\frac{1}{nh}\sum_{i}W_{p,i}Z^{\top}_{i}\gamma.
\]
Then by this result and \eqref{eq:gamma_hat}, we have 
\[
\left(\frac{1}{nh}\sum_{i}G_{p,i}V^{\top}_{p,i}\right)\left(\frac{1}{nh}\sum_{i}V_{p,i}V^{\top}_{p,i}\right)^{-1}\frac{1}{nh}\sum_{i}V_{p,i}=\frac{1}{nh}\sum_{i}W_{p,i}Z^{\top}_{i}\gamma+o_{p}\left(\left(nh\right)^{-1/2}\right).
\]
Note that this result, \eqref{eq:thm1 reg eq 1} and \eqref{eq:W_eb - W negligible}
yield that $\widehat{\vartheta}^{\mathit{eb}}_{p}=\left(nh\right)^{-1}\sum_{i}W_{p,i}\epsilon_{i}+o_{p}\left(\left(nh\right)^{-1/2}\right)$,
where $\epsilon_{i}\coloneqq Y_{i}-Z^{\top}_{i}\gamma$. Note that
$\widehat{\vartheta}^{\mathit{CCFT}}_{p}$ has the same asymptotic
representation under Assumption \ref{assu:smoothness Z}(i). The conclusion
follows from these observations. 
\end{proof}

\section{Proof of Theorem \ref{thm:sharp QRD}}

Let 
\[
\widehat{\kappa}_{+,p}\left(\tau\right)\coloneqq\underset{b}{\arg\min}\sum_{i}\widehat{w}_{p,i}K\left(\frac{X_{i}}{h}\right)I_{i}\rho_{\tau}\left(Y_{i}-r^{\top}_{p}\left(X_{i}\right)b\right)
\]
and let $\widehat{\kappa}_{-,p}\left(\tau\right)$ be defined by the
same formula with $I_{i}$ replaced by $\mathbbm{1}\left(X_{i}<0\right)$.
Clearly, $\widehat{\vartheta}^{\mathit{eb}}_{p}\left(\tau\right)=\mathrm{e}^{\top}_{p+1,1}\left(\widehat{\kappa}_{+,p}\left(\tau\right)-\widehat{\kappa}_{-,p}\left(\tau\right)\right)$.
Let $\psi_{\tau}\left(u\right)\coloneqq\tau-\mathbbm{1}\left(u\leq0\right)$,
$\kappa_{+}\left(\tau\right)\coloneqq\left(\kappa_{Y,+}\left(\tau\right),\kappa^{\left(1\right)}_{Y,+}\left(\tau\right),\ldots,\kappa^{\left(p\right)}_{Y,+}\left(\tau\right)/p!\right)^{\top}$
and 
\begin{eqnarray*}
R_{+}\left(\varDelta,\tau\right) & \coloneqq & \sqrt{\frac{n}{h}}\sum_{i}\widehat{w}_{p,i}K_{+,p}\left(\frac{X_{i}}{h}\right)\psi_{\tau}\left(\tilde{Y}_{+,i}-\frac{r^{\top}_{p}\left(X_{i}/h\right)\varDelta}{\sqrt{nh}}\right),\textrm{ where}\\
\tilde{Y}_{+,i} & \coloneqq & Y_{i}-r^{\top}_{p}\left(X_{i}\right)\kappa_{+}\left(\tau\right).
\end{eqnarray*}

Let $\mathbb{P}g$ denote $\mathrm{E}\left[g\left(Y,X,Z\right)\right]$
and let $\mathbb{P}_{n}g$ denote the sample average. Let $\mathbb{G}_{n}g\coloneqq\sqrt{n}\left(\mathbb{P}_{n}-\mathbb{P}\right)g$
and let $\left\{ \mathbb{G}_{n}g:g\in\mathscr{G}\right\} $ denote
the empirical process indexed by $g\in\mathscr{G}$, where $\mathscr{G}$
is a function class. Let $\left\Vert \mathbb{G}_{n}\right\Vert _{\mathscr{G}}\coloneqq\sup_{g\in\mathscr{G}}\left|\mathbb{G}_{n}g\right|$.
The proof of Theorem \ref{thm:sharp QRD} follows an adaptation of
the approach in \citet{Koenker1994}. 
\begin{lem}
\label{lem:R_delta linearization}Suppose that the assumptions in
the statement of Theorem \ref{thm:sharp QRD} hold. For all $M>0$,
\begin{equation}
\sup_{\|\varDelta\|\leq M,\,\tau\in\left[\underline{\tau},\overline{\tau}\right]}\|R_{+}(\varDelta,\tau)+D_{+}(\tau)\varDelta-R_{+}(0,\tau)\|=o_{p}(1),\label{eq:asymptotic linearization}
\end{equation}
where $D_{+}\left(\tau\right)\coloneqq\varphi_{+}\left(\tau\right)\varphi\Lambda_{+,p}$.
\end{lem}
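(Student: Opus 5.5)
The plan is to reduce the weighted statement to an unweighted stochastic equicontinuity argument plus a deterministic Taylor expansion. The weights enter only through a small multiplicative perturbation, so I would handle them first.

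\emph{Step 1: remove the weights.} By (\ref{eq:w_hat Taylor expansion}) and Lemma \ref{lem:reg weights lambda}, $n\widehat{w}_{p,i}=1+a_i$ with $a_i=-V_{p,i}^{\top}\widehat{\lambda}_p/(1+\varrho)+r_i$. Here $\max_i|a_i|=o_p(1)$ and $\sum_i|r_i|\lesssim\|\widehat\lambda_p\|^2\sum_i\|V_{p,i}\|^2=O_p(1)$. Then I write
\[
R_+(\varDelta,\tau)+D_+(\tau)\varDelta-R_+(0,\tau)=T_1+T_2,
\]
where $T_1$ is the same expression computed with uniform weights $1/n$, and $T_2$ collects the $a_i$ parts. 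In $T_2$ the summand is $(nh)^{-1/2}a_iK_{+,p}(X_i/h)\{\psi_\tau(\tilde Y_{+,i}-\cdot)-\psi_\tau(\tilde Y_{+,i})\}$ (plus the $D_+\varDelta$ piece, which has no $a_i$).

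\emph{Step 2: bound the weight perturbation $T_2$.} The leading part of $a_i$ is $-(1+\varrho)^{-1}W_{p,i}\bar Z_i^{\top}\widehat\lambda_p$, with $\widehat\lambda_p=O_p((nh)^{-1/2})$. So $T_2$ is bounded by
\[
\|\widehat\lambda_p\|\,(nh)^{-1/2}\sum_i|W_{p,i}|\,\|\bar Z_i\|\,\|K_{+,p}(X_i/h)\|\,\bigl|\psi_\tau(\tilde Y_{+,i}-u_i)-\psi_\tau(\tilde Y_{+,i})\bigr|,
\]
where $u_i=r_p^{\top}(X_i/h)\varDelta/\sqrt{nh}$. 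The difference of indicators is nonzero only when $Y_i$ lies within $O((nh)^{-1/2})+O(h^{p+1})$ of $Q_{Y\mid X}(\tau\mid X_i)$. Under Assumption \ref{assu:QRD smoothness}(iii)--(iv) and the moment bound in Assumption \ref{assu:data generating process}(iv), its conditional expectation is $o(1)$ uniformly, after using Hölder to separate $\|Z\|$. Hence the sum is $o_p(nh)$, and $T_2=O_p((nh)^{-1/2})\cdot(nh)^{-1/2}\cdot o_p(nh)=o_p(1)$, uniformly in $\|\varDelta\|\le M$ and $\tau$, since the indicator difference is monotone in $\varDelta$ and I can bound it by its envelope over the ball. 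The $r_i$ part is handled the same way, with $\sum|r_i|=O_p(1)$ and the factor $(nh)^{-1/2}$.

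\emph{Step 3: handle the unweighted term $T_1$.} I split $T_1$ into a centered empirical process part and a mean part. The mean part is
\[
\sqrt{n/h}\,\mathbb{E}\bigl[K_{+,p}(X/h)\{F(r_p^{\top}(X)\kappa_+ +\cdot\mid X)-F(r_p^{\top}(X)\kappa_+\mid X)\}\bigr].
\]
By a Taylor expansion in $y$, a change of variables $x=ht$, the uniform continuity of $f_{Y\mid X}$ in Assumption \ref{assu:QRD smoothness}(iii), and the fact that $r_p^{\top}(X)\kappa_+(\tau)-Q_{Y\mid X}(\tau\mid X)=O(h^{p+1})$ on the support (Assumption \ref{assu:QRD smoothness}(ii)), this mean equals $-\varphi_+(\tau)\varphi\Lambda_{+,p}\varDelta+o(1)$. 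The convergence is uniform over the compact set of $(\varDelta,\tau)$, using the continuity of $\varphi_+$.

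\emph{Main obstacle: the centered empirical process.} I need $\sup|h^{-1/2}\mathbb{G}_n(g_{\varDelta,\tau}-g_{0,\tau})|=o_p(1)$ over the class
\[
g_{\varDelta,\tau}=K_{+,p}(X/h)\,\mathbbm{1}\{Y\le r_p^{\top}(X)\kappa_+(\tau)+r_p^{\top}(X/h)\varDelta/\sqrt{nh}\}.
\]
This class is VC-type, being indicators of half-spaces in a finite-dimensional parametrization times a fixed function. Its $L^2$ envelope of differences has variance $h\cdot O((nh)^{-1/2})$. A maximal inequality (e.g., Chernozhukov--Chetverikov--Kato style) then gives a bound of order
\[
h^{-1/2}\Bigl\{\sqrt{h(nh)^{-1/2}\log n}+\log n/\sqrt n\Bigr\},
\]
which is $o(1)$ because $\log(n)/\sqrt{nh}\to0$. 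Here uniformity in $\tau$ is absorbed by treating $\kappa_+(\tau)$ as an additional finite-dimensional parameter.

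Combining Steps 1--3 yields (\ref{eq:asymptotic linearization}).
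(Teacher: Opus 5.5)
Your overall structure matches the paper's proof. You use (\ref{eq:w_hat Taylor expansion}) to split $R_{+}(\varDelta,\tau)-R_{+}(0,\tau)$ into an unweighted term and a weight-perturbation term. The mean of the unweighted term is obtained from (\ref{eq:F difference bound 1}). The centered part is handled by a VC-type class, treating $\kappa_{+}(\tau)+\mathrm{H}^{-1}\varDelta/\sqrt{nh}$ as a finite-dimensional parameter, together with the same maximal inequality, giving the rate $\sqrt{\log n}/(nh)^{1/4}+\log n/\sqrt{nh}$. One sign slip: since $\psi_\tau(\tilde Y-u)-\psi_\tau(\tilde Y)=\mathbbm{1}(\tilde Y\le 0)-\mathbbm{1}(\tilde Y\le u)$, the bracket in your displayed mean should be $F(r_p^{\top}(X)\kappa_{+}\mid X)-F(r_p^{\top}(X)\kappa_{+}+u\mid X)$. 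Your stated limit $-D_{+}(\tau)\varDelta$ is the correct one.

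The incomplete step is the uniformity in $\tau$ in Step 2. Taking the envelope over $\|\varDelta\|\le M$ removes $\varDelta$. The resulting sum $S(\tau)=\sum_i|W_{p,i}|\,\|\bar Z_i\|\,\|K_{+,p}(X_i/h)\|\,\mathbbm{1}\{|\tilde Y_{+,i}|\le\|r_p(X_i/h)\|M/\sqrt{nh}\}$ still depends on $\tau$ through $\kappa_{+}(\tau)$. A uniform-in-$\tau$ bound on its conditional expectation gives $S(\tau)=o_p(nh)$ only pointwise, not $\sup_\tau S(\tau)=o_p(nh)$. For that you need a uniform LLN or maximal inequality over the indicator class. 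With $\|\bar Z_i\|$ inside the sum, this also requires an unbounded envelope or a preliminary Cauchy--Schwarz step.

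The paper avoids this detour, and so can you, since you already have $\max_i|a_i|=o_p(1)$ from Step 1. Factor out $\max_i|\iota_\varrho(1+V_{p,i}^{\top}\widehat\lambda_p)-1|=o_p(1)$. It then remains to show $\sup_\tau(nh)^{-1/2}\sum_i\|K_{+,p}(X_i/h)\|\,\mathbbm{1}\{\cdot\}=O_p(1)$. The mean of this sum is $O(1)$ by the analogue of (\ref{eq:F difference bound 1}). The fluctuation is $o_p(1)$ by the same VC/maximal-inequality argument as in your Step 3, applied to the paper's class $\mathscr{G}_2$, whose second moment is $O((nh)^{-1/2})$. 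This route also makes the H\"older step, the use of the rate of $\widehat\lambda_p$, and the separate treatment of $r_i$ unnecessary.
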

\begin{proof}[Proof of Lemma \ref{lem:R_delta linearization}]
By (\ref{eq:w_hat Taylor expansion}), we have 
\begin{eqnarray*}
 &  & R_{+}\left(\varDelta,\tau\right)-R_{+}\left(0,\tau\right)\\
 & = & \sqrt{\frac{n}{h}}\sum_{i}\widehat{w}_{p,i}K_{+,p}\left(\frac{X_{i}}{h}\right)\left\{ \mathbbm{1}\left(\tilde{Y}_{+,i}\leq0\right)-\mathbbm{1}\left(\tilde{Y}_{+,i}\leq\frac{r^{\top}_{p}\left(X_{i}/h\right)\varDelta}{\sqrt{nh}}\right)\right\} \\
 & = & T_{1}\left(\varDelta,\tau\right)\left\{ 1+O_{p}\left(n^{-1}\right)\right\} +T_{2}\left(\varDelta,\tau\right)\left\{ 1+O_{p}\left(n^{-1}\right)\right\} ,
\end{eqnarray*}
where 
\begin{eqnarray*}
T_{1}\left(\varDelta,\tau\right) & \coloneqq & \frac{1}{\sqrt{nh}}\sum_{i}K_{+,p}\left(\frac{X_{i}}{h}\right)\left\{ \mathbbm{1}\left(\tilde{Y}_{+,i}\leq0\right)-\mathbbm{1}\left(\tilde{Y}_{+,i}\leq\frac{r^{\top}_{p}\left(X_{i}/h\right)\varDelta}{\sqrt{nh}}\right)\right\} \\
T_{2}\left(\varDelta,\tau\right) & \coloneqq & \frac{1}{\sqrt{nh}}\sum_{i}K_{+,p}\left(\frac{X_{i}}{h}\right)\left(\iota_{\varrho}\left(1+V^{\top}_{p,i}\widehat{\lambda}_{p}\right)-1\right)\left\{ \mathbbm{1}\left(\tilde{Y}_{+,i}\leq0\right)-\mathbbm{1}\left(\tilde{Y}_{+,i}\leq\frac{r^{\top}_{p}\left(X_{i}/h\right)\varDelta}{\sqrt{nh}}\right)\right\} .
\end{eqnarray*}
We decompose 
\begin{eqnarray}
T_{1}\left(\varDelta,\tau\right) & = & \left\{ T_{1}\left(\varDelta,\tau\right)-\mathrm{E}\left[T_{1}\left(\varDelta,\tau\right)\right]\right\} +\mathrm{E}\left[T_{1}\left(\varDelta,\tau\right)\right]\nonumber \\
 & = & \frac{1}{\sqrt{n}}\sum_{i}\left\{ \frac{1}{\sqrt{h}}K_{+,p}\left(\frac{X_{i}}{h}\right)\left(\mathbbm{1}\left(\tilde{Y}_{+,i}\leq0\right)-\mathbbm{1}\left(\tilde{Y}_{+,i}\leq\frac{r^{\top}_{p}\left(X_{i}/h\right)\varDelta}{\sqrt{nh}}\right)\right)\right.\nonumber \\
 &  & \left.-\mathrm{E}\left[\frac{1}{\sqrt{h}}K_{+,p}\left(\frac{X}{h}\right)\left(\mathbbm{1}\left(\tilde{Y}_{+}\leq0\right)-\mathbbm{1}\left(\tilde{Y}_{+}\leq\frac{r^{\top}_{p}\left(X/h\right)\varDelta}{\sqrt{nh}}\right)\right)\right]\right\} \nonumber \\
 &  & +\sqrt{nh}\cdot\mathrm{E}\left[\frac{1}{h}K_{+,p}\left(\frac{X}{h}\right)\left(\mathbbm{1}\left(\tilde{Y}_{+}\leq0\right)-\mathbbm{1}\left(\tilde{Y}_{+}\leq\frac{r^{\top}_{p}\left(X/h\right)\varDelta}{\sqrt{nh}}\right)\right)\right].\label{eq:T_1 decomposition}
\end{eqnarray}
Under Assumption \ref{assu:QRD smoothness}(i)--(iii), for $x=hu$
with $u\in\left(0,1\right]$, we have 
\begin{eqnarray}
 &  & F_{Y\mid X}\left(r^{\top}_{p}\left(x\right)\kappa_{+}\left(\tau\right)\mid x\right)-F_{Y\mid X}\left(r^{\top}_{p}\left(x\right)\kappa_{+}\left(\tau\right)+\frac{r^{\top}_{p}\left(x/h\right)\varDelta}{\sqrt{nh}}\mid x\right)\nonumber \\
 & = & \left\{ -f_{Y\mid X}\left(Q_{Y\mid X}\left(\tau\mid x\right)\mid x\right)+o\left(1\right)\right\} \frac{r^{\top}_{p}\left(x/h\right)\varDelta}{\sqrt{nh}},\label{eq:F difference bound 1}
\end{eqnarray}
uniformly for all $\left\Vert \varDelta\right\Vert \leq M,\tau\in\left[\underline{\tau},\overline{\tau}\right]$.
By this result and the LIE, we have 
\begin{eqnarray*}
 &  & \frac{1}{\sqrt{nh}}\mathrm{E}\left[T_{1}\left(\varDelta,\tau\right)\right]\\
 & = & \mathrm{E}\left[\frac{1}{h}K_{+,p}\left(\frac{X}{h}\right)\left(F_{Y\mid X}\left(r^{\top}_{p}\left(X\right)\kappa_{+}\left(\tau\right)\mid X\right)-F_{Y\mid X}\left(r^{\top}_{p}\left(X\right)\kappa_{+}\left(\tau\right)+\frac{r^{\top}_{p}\left(X/h\right)\varDelta}{\sqrt{nh}}\mid X\right)\right)\right]\\
 & = & \left\{ -\mathrm{E}\left[\frac{1}{h}K_{+,p}\left(\frac{X}{h}\right)r^{\top}_{p}\left(\frac{X}{h}\right)f_{Y\mid X}\left(Q_{Y\mid X}\left(\tau\mid X\right)\mid X\right)\right]+o\left(1\right)\right\} \frac{\varDelta}{\sqrt{nh}}\\
 & = & -D_{+}\left(\tau\right)\frac{\varDelta}{\sqrt{nh}}+o\left(\left(nh\right)^{-1/2}\right),
\end{eqnarray*}
uniformly for all $\left\Vert \varDelta\right\Vert \leq M,\tau\in\left[\underline{\tau},\overline{\tau}\right]$.
For any $j=1,\ldots,p+1$, write $T^{\left[j\right]}_{1}\left(\varDelta,\tau\right)-\mathrm{E}\left[T^{\left[j\right]}_{1}\left(\varDelta,\tau\right)\right]=\mathbb{G}_{n}g_{1}\left(\cdot\mid\varDelta,\tau\right)$,
where 
\begin{equation}
g_{1}\left(y,x\mid\varDelta,\tau\right)\coloneqq\frac{1}{\sqrt{h}}K^{\left[j\right]}_{+,p}\left(\frac{x}{h}\right)\left\{ \mathbbm{1}\left(y\leq r^{\top}_{p}\left(x\right)\kappa_{+}\left(\tau\right)\right)-\mathbbm{1}\left(y\leq r^{\top}_{p}\left(x\right)\left(\kappa_{+}\left(\tau\right)+\frac{\mathrm{H}^{-1}\varDelta}{\sqrt{nh}}\right)\right)\right\} .\label{eq:g_1 definition}
\end{equation}
Let $\mathscr{G}_{1}\coloneqq\left\{ g_{1}\left(\cdot\mid\varDelta,\tau\right):\|\varDelta\|\leq M,\,\tau\in\left[\underline{\tau},\overline{\tau}\right]\right\} $.
By Markov's inequality, it suffices to show that $\mathrm{E}\left[\left\Vert \mathbb{G}_{n}\right\Vert _{\mathscr{G}_{1}}\right]=o\left(1\right)$.
The function class 
\begin{equation}
\left\{ \left(y,x\right)\mapsto\mathbbm{1}\left(y\leq r^{\top}_{p}\left(x\right)b\right):b\in\mathbb{R}^{p+1}\right\} \label{eq:indicator class 1}
\end{equation}
is Vapnik--Chervonenkis (VC)-subgraph with index no more than $p+4$,
because its elements are indicators of nonnegativity sets of functions
in a $\left(p+2\right)$-dimensional vector space (\citealp[Lemma 9.6]{Kosorok2007}).
Since multiplying by a fixed function preserves the VC property (\citealp[Lemma 9.9(vi)]{Kosorok2007}),
by \citet[Theorem 9.3]{Kosorok2007} and \citet[Lemma A.6]{chernozhukov2014gaussian},
$\mathscr{G}_{1}$ is VC-type (see, e.g., \citealp[Definition 2.1]{Chen2020jackknife})
with a constant envelope $G_{1}\apprle h^{-1/2}$ and characteristics
independent of $n$.

Note that for all $a,b$, 
\begin{equation}
\left(\mathbbm{1}\left(x\leq a\right)-\mathbbm{1}\left(x\leq b\right)\right)^{2}=\left|\mathbbm{1}\left(x\leq a\right)-\mathbbm{1}\left(x\leq b\right)\right|=\mathbbm{1}\left(\min\left\{ a,b\right\} <x\leq\max\left\{ a,b\right\} \right).\label{eq:a.b indicator inequalities}
\end{equation}
By this result, the LIE, (\ref{eq:F difference bound 1}), and a change
of variables, 
\begin{eqnarray}
 &  & \mathbb{P}g^{2}_{1}\left(\cdot\mid\varDelta,\tau\right)\nonumber \\
 & \leq & \mathrm{E}\left[\frac{1}{h}\left\Vert K_{+,p}\left(\frac{X}{h}\right)\right\Vert ^{2}\left|F_{Y\mid X}\left(r^{\top}_{p}\left(X\right)\kappa_{+}\left(\tau\right)\mid X\right)-F_{Y\mid X}\left(r^{\top}_{p}\left(X\right)\kappa_{+}\left(\tau\right)+\frac{r^{\top}_{p}\left(X/h\right)\varDelta}{\sqrt{nh}}\mid X\right)\right|\right]\nonumber \\
 & = & O\left(\left(nh\right)^{-1/2}\right),\label{eq:g_1^2 expectation bound}
\end{eqnarray}
uniformly in $\left\Vert \varDelta\right\Vert \leq M,\tau\in\left[\underline{\tau},\overline{\tau}\right]$.
By \citet[Corollary 5.5]{Chen2020jackknife}, 
\begin{eqnarray}
\mathrm{E}\left[\left\Vert \mathbb{G}_{n}\right\Vert _{\mathscr{G}_{1}}\right] & \apprle & \sqrt{\left(\sup_{\|\varDelta\|\leq M,\,\tau\in\left[\underline{\tau},\overline{\tau}\right]}\mathbb{P}g^{2}_{1}\left(\cdot\mid\varDelta,\tau\right)\right)\log\left(n\right)}+\frac{\log\left(n\right)}{\sqrt{nh}}\nonumber \\
 & = & O\left(\frac{\sqrt{\log\left(n\right)}}{\left(nh\right)^{1/4}}+\frac{\log\left(n\right)}{\sqrt{nh}}\right).\label{eq:maximal inequality}
\end{eqnarray}
Under $nh\to\infty$ and $\left(\log n\right)/\sqrt{nh}\to0$, both
terms are $o\left(1\right)$, so $\mathrm{E}\left[\left\Vert \mathbb{G}_{n}\right\Vert _{\mathscr{G}_{1}}\right]=o\left(1\right)$
holds and Markov's inequality yields $T_{1}\left(\varDelta,\tau\right)-\mathrm{E}\left[T_{1}\left(\varDelta,\tau\right)\right]=o_{p}\left(1\right)$
uniformly for all $\left\Vert \varDelta\right\Vert \leq M,\tau\in\left[\underline{\tau},\overline{\tau}\right]$.
It now follows that 
\begin{equation}
T_{1}\left(\varDelta,\tau\right)=-D_{+}\left(\tau\right)\varDelta+o_{p}\left(1\right),\label{eq:T_1 uniform approximation}
\end{equation}
and $T_{1}\left(\varDelta,\tau\right)=O_{p}\left(1\right)$, uniformly
for all $\left\Vert \varDelta\right\Vert \leq M,\tau\in\left[\underline{\tau},\overline{\tau}\right]$.

By the triangle inequality and (\ref{eq:a.b indicator inequalities}),
we have 
\begin{multline*}
\sup_{\|\varDelta\|\leq M}\left\Vert T_{2}\left(\varDelta,\tau\right)\right\Vert \leq\\
\frac{1}{\sqrt{nh}}\sum_{i}\left\Vert K_{+,p}\left(\frac{X_{i}}{h}\right)\right\Vert \left|\iota_{\varrho}\left(1+V^{\top}_{p,i}\widehat{\lambda}_{p}\right)-1\right|\mathbbm{1}\left(-\frac{\left\Vert r_{p}\left(X_{i}/h\right)\right\Vert M}{\sqrt{nh}}<\tilde{Y}_{+,i}\leq\frac{\left\Vert r_{p}\left(X_{i}/h\right)\right\Vert M}{\sqrt{nh}}\right).
\end{multline*}
Since $\max_{i}\left|\iota_{\varrho}\left(1+V^{\top}_{p,i}\widehat{\lambda}_{p}\right)-1\right|$
does not depend on $\tau$, the right-hand side is bounded by 
\[
\max_{i}\left|\iota_{\varrho}\left(1+V^{\top}_{p,i}\widehat{\lambda}_{p}\right)-1\right|\left(\sqrt{n}\cdot\mathbb{P}_{n}g_{2}\left(\cdot\mid\tau\right)\right),
\]
where 
\[
g_{2}\left(y,x\mid\tau\right)\coloneqq\frac{1}{\sqrt{h}}\left\Vert K_{+,p}\left(\frac{x}{h}\right)\right\Vert \mathbbm{1}\left(-\frac{\left\Vert r_{p}\left(x/h\right)\right\Vert M}{\sqrt{nh}}<y-r^{\top}_{p}\left(x\right)\kappa_{+}\left(\tau\right)\leq\frac{\left\Vert r_{p}\left(x/h\right)\right\Vert M}{\sqrt{nh}}\right).
\]

For the first factor, by Lemma \ref{lem:reg weights lambda}, $\widehat{\lambda}_{p}=O_{p}\left(\left(nh\right)^{-1/2}\right)$,
and $\max_{i}\left\Vert V_{p,i}\right\Vert =O_{p}\left(\left(nh\right)^{1/\left(2+\delta\right)}\right)$,
so $\max_{i}\left|V^{\top}_{p,i}\widehat{\lambda}_{p}\right|=o_{p}\left(1\right)$.
Since $\iota_{\varrho}$ is continuously differentiable at $1$ with
$\iota_{\varrho}\left(1\right)=1$, we have 
\[
\max_{i}\left|\iota_{\varrho}\left(1+V^{\top}_{p,i}\widehat{\lambda}_{p}\right)-1\right|\leq\sup_{\left|t\right|\leq\max_{i}\left|V^{\top}_{p,i}\widehat{\lambda}_{p}\right|}\left|\iota_{\varrho}'\left(1+t\right)\right|\cdot\max_{i}\left|V^{\top}_{p,i}\widehat{\lambda}_{p}\right|=o_{p}\left(1\right).
\]
Since $\sqrt{n}\cdot\mathbb{P}_{n}g_{2}\left(\cdot\mid\tau\right)=\sqrt{n}\cdot\mathbb{P}g_{2}\left(\cdot\mid\tau\right)+\mathbb{G}_{n}g_{2}\left(\cdot\mid\tau\right)$,
it suffices to bound $\sup_{\tau\in\left[\underline{\tau},\overline{\tau}\right]}\sqrt{n}\cdot\mathbb{P}g_{2}\left(\cdot\mid\tau\right)$.
By a result similar to (\ref{eq:F difference bound 1}), we have $\sqrt{n}\cdot\mathbb{P}g_{2}\left(\cdot\mid\tau\right)=O\left(1\right)$,
uniformly in $\tau$. For the fluctuation term, note that by similar
arguments, the class $\mathscr{G}_{2}\coloneqq\left\{ g_{2}\left(\cdot\mid\tau\right):\tau\in\left[\underline{\tau},\overline{\tau}\right]\right\} $
is VC-type with characteristics independent of $n$ and a constant
envelope $G_{2}\apprle h^{-1/2}$. By similar calculations, we can
show that $\sup_{\tau\in\left[\underline{\tau},\overline{\tau}\right]}\mathbb{P}g^{2}_{2}\left(\cdot\mid\tau\right)$
is of order $\left(nh\right)^{-1/2}$. The same maximal inequality
used in the proof of $\mathrm{E}\left[\left\Vert \mathbb{G}_{n}\right\Vert _{\mathscr{G}_{1}}\right]=o\left(1\right)$
yields $\mathrm{E}\left[\left\Vert \mathbb{G}_{n}\right\Vert _{\mathscr{G}_{2}}\right]=o\left(1\right)$.
Therefore $T_{2}\left(\varDelta,\tau\right)=o_{p}\left(1\right)$,
uniformly for all $\left\Vert \varDelta\right\Vert \leq M,\tau\in\left[\underline{\tau},\overline{\tau}\right]$. 
\end{proof}

Let $\tilde{K}_{+,p}\left(u\right)\coloneqq K^{2}\left(u\right)\mathbbm{1}\left(u>0\right)r_{p}\left(u\right)\left(\mathrm{e}^{\top}_{p+1,1}\Lambda^{-1}_{+,p}r_{p}\left(u\right)\right)$
and let $\pi_{+}\coloneqq\int^{1}_{0}\tilde{K}_{+,p}\left(u\right)\mathrm{d}u$.
Recall that $\xi_{i}\coloneqq Z_{i}-\mu_{Z}\left(X_{i}\right)$ and
$U_{i}\left(\tau\right)\coloneqq\tau-\mathbbm{1}\left(Y_{i}\leq Q_{Y\mid X}\left(\tau\mid X_{i}\right)\right)$.
Let $\gamma_{+}\left(\tau\right)\coloneqq\left(\mathrm{Var}_{\pm}\left[Z\right]\right)^{-1}\mu_{ZU\left(\tau\right),+}$
and let $\gamma_{-}\left(\tau\right)$ be defined similarly, so that
$\gamma\left(\tau\right)=\gamma_{+}\left(\tau\right)/\varphi_{+}\left(\tau\right)+\gamma_{-}\left(\tau\right)/\varphi_{-}\left(\tau\right)$
by the definition of $\gamma\left(\tau\right)$ in Section~\ref{subsec:quantile_RDD_reweighting}. 
\begin{lem}
\label{lem:R_0 linearization}Suppose that the assumptions in the
statement of Theorem \ref{thm:sharp QRD} hold. We have the following
linearization for $R_{+}(0,\tau)$: 
\begin{eqnarray*}
R_{+}(0,\tau) & = & \frac{1}{\sqrt{nh}}\sum_{i}K_{+,p}\left(\frac{X_{i}}{h}\right)U_{i}\left(\tau\right)-\left(\frac{\pi_{+}\varphi}{\omega^{0,2}_{p,0}}\right)\gamma^{\top}_{+}\left(\tau\right)\left\{ \frac{1}{\sqrt{nh}}\sum_{i}W_{p,i}\xi_{i}\right\} \\
 &  & +\bar{\mathscr{B}}_{+}\left(\tau\right)\sqrt{nh}h^{p+1}+o_{p}\left(1\right),
\end{eqnarray*}
uniformly in $\tau\in\left[\underline{\tau},\overline{\tau}\right]$,
where 
\[
\bar{\mathscr{B}}_{+}\left(\tau\right)\coloneqq\left(\frac{\int^{1}_{0}K_{+,p}\left(u\right)u^{p+1}\mathrm{d}u}{\left(p+1\right)!}\right)\varphi_{+}\left(\tau\right)\varphi\kappa^{\left(p+1\right)}_{Y,+}\left(\tau\right)-\left(\frac{\pi_{+}\varphi}{\omega^{0,2}_{p,0}}\right)\left\{ \frac{\gamma^{\top}_{+}\left(\tau\right)\left(\mu^{\left(p+1\right)}_{Z,+}\omega^{p+1,1}_{+,p,0}-\mu^{\left(p+1\right)}_{Z,-}\omega^{p+1,1}_{-,p,0}\right)}{\left(p+1\right)!}\right\} .
\]
\end{lem}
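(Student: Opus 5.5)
We start from the weight expansion \eqref{eq:w_hat Taylor expansion}, which splits $R_{+}(0,\tau)$ into three pieces. The first is an unweighted term
\[
T_{0}(\tau)\coloneqq\frac{1}{\sqrt{nh}}\sum_{i}K_{+,p}\left(\frac{X_{i}}{h}\right)\psi_{\tau}\left(\tilde{Y}_{+,i}\right).
\]
The second is a linear correction
\[
-\frac{1}{1+\varrho}\,\frac{1}{\sqrt{nh}}\sum_{i}K_{+,p}\left(\frac{X_{i}}{h}\right)\psi_{\tau}\left(\tilde{Y}_{+,i}\right)V^{\top}_{p,i}\widehat{\lambda}_{p}.
\]
The third is a quadratic remainder. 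The factor $1+O_{p}(n^{-1})$ is harmless. The remainder is bounded, uniformly in $\tau$ because $|\psi_{\tau}|\le1$, by $\max_{i}|\iota''_{\varrho}|\cdot(nh)^{-1/2}\sum_{i}\|K_{+,p}\|\,\|V_{p,i}\|^{2}\|\widehat{\lambda}_{p}\|^{2}$. By Lemma \ref{lem: reg weights basic}(i),(iii) and Lemma \ref{lem:reg weights lambda} this is $O_{p}((nh)^{-1/2})$.

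For $T_{0}(\tau)$, write $\psi_{\tau}(\tilde{Y}_{+,i})=U_{i}(\tau)+\{\mathbbm{1}(Y_{i}\le Q_{Y\mid X}(\tau\mid X_{i}))-\mathbbm{1}(Y_{i}\le r^{\top}_{p}(X_{i})\kappa_{+}(\tau))\}$. The first part produces the leading stochastic term in the statement. For the second part we separate its mean from its fluctuation.

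\begin{itemize}
\item \emph{Mean.} By the LIE, a Taylor expansion of $Q_{Y\mid X}(\tau\mid x)-r^{\top}_{p}(x)\kappa_{+}(\tau)=\kappa^{(p+1)}_{Y,+}(\tau)x^{p+1}/(p+1)!+o(h^{p+1})$ (Assumption \ref{assu:QRD smoothness}(ii)), and the density bound in Assumption \ref{assu:QRD smoothness}(iii)--(iv) as in \eqref{eq:F difference bound 1}, the mean equals $\sqrt{nh}h^{p+1}$ times the first part of $\bar{\mathscr{B}}_{+}(\tau)$, uniformly in $\tau$, after a change of variables.
\item \emph{Fluctuation.} This is an empirical process over a VC-type class like $\mathscr{G}_{1}$. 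Its variance is $O(h^{p+1})$, so the maximal inequality \eqref{eq:maximal inequality} makes it $o_{p}(1)$ uniformly, given $\log(n)/\sqrt{nh}\to0$.
\end{itemize}

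The linear correction is the main term to handle. Substitute \eqref{eq:lambda_hat linearization}, which gives $\widehat{\lambda}_{p}/(1+\varrho)=(\sum_{i}V_{p,i}V^{\top}_{p,i})^{-1}\sum_{i}V_{p,i}+o_{p}((nh)^{-1/2})$. The correction then becomes
\[
-\Big[\tfrac{1}{nh}\sum_{i}K_{+,p}\psi_{\tau}(\tilde{Y}_{+,i})W_{p,i}\bar{Z}^{\top}_{i}\Big]\Big[\tfrac{1}{nh}\sum_{i}V_{p,i}V^{\top}_{p,i}\Big]^{-1}\tfrac{1}{\sqrt{nh}}\sum_{i}V_{p,i}.
\]
The last factor is $O_{p}(1)$ by \eqref{eq:V_average rate}, so it suffices to find the uniform-in-$\tau$ probability limit of the first bracket. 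Using $W_{p,i}=W_{+,p,i}$ for $X_{i}>0$ and $\Pi^{-1}_{+,p}\to(\varphi\Lambda_{+,p})^{-1}$, that bracket converges to $\pi_{+}\mu_{\bar{Z}U(\tau),+}$. Here the $\bar{Z}$ in place of $Z$ enters only through the first coordinate, and $\mu_{U(\tau),+}=0$.

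Uniformity is the \textbf{main obstacle}. Lemma \ref{lem: reg weights basic}(iii) gives only pointwise convergence, so we need a Glivenko--Cantelli argument for the class $\{(y,x,z)\mapsto h^{-1}\tilde{K}_{+,p}(x/h)\,z\,\mathbbm{1}(y\le r^{\top}_{p}(x)b)\}$. This class has an envelope with $2+\delta$ moments (Assumption \ref{assu:data generating process}(iv)), and we would apply \citet[Corollary 5.5]{Chen2020jackknife} with the unbounded envelope truncated at $(nh)^{1/(2+\delta)}$. In addition, $\tau\mapsto\mu_{ZU(\tau),+}$ must be continuous and must be approximated uniformly with the $\kappa_{+}$ perturbation at the $h^{p+1}$ scale; Assumption \ref{assu:QRD smoothness}(v)--(vi) supply this.

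With the limit in hand, use \eqref{eq:VV limit} together with the block inversion \eqref{eq:block inversion}. Since the first entry of $\mu_{\bar{Z}U,+}$ vanishes, $\mu^{\top}_{\bar{Z}U,+}\mu^{-1}_{\bar{Z}\bar{Z}^{\top},\pm}(0,a^{\top})^{\top}=\gamma_{+}^{\top}(\tau)a$. The correction therefore equals $-(\pi_{+}\varphi/\omega^{0,2}_{p,0})\gamma^{\top}_{+}(\tau)(nh)^{-1/2}\sum_{i}W_{p,i}Z_{i}+o_{p}(1)$. Finally, \eqref{eq:PSI_Z_hat decompose} splits $(nh)^{-1/2}\sum_{i}W_{p,i}Z_{i}$ into $(nh)^{-1/2}\sum_{i}W_{p,i}\xi_{i}$ plus $\sqrt{nh}h^{p+1}$ times the $\mu^{(p+1)}_{Z}$ bias, which yields the second part of $\bar{\mathscr{B}}_{+}(\tau)$. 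Since $nh^{2p+3}=O(1)$, all $o(h^{p+1})$ errors are $o_{p}(1)$ after scaling.
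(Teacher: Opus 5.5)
Your proposal is correct and follows essentially the same route as the paper's proof. The shared steps are the split of $R_{+}(0,\tau)$ via \eqref{eq:w_hat Taylor expansion}, the mean/fluctuation treatment of the indicator-difference bias term, the replacement of $\Pi^{-1}_{+,p}$ by $(\varphi\Lambda_{+,p})^{-1}$ followed by a uniform-in-$\tau$ law of large numbers for the cross term, and the block-inversion and \eqref{eq:PSI_Z_hat decompose} steps. The differences are only technical: your quadratic-remainder bound gives $O_{p}((nh)^{-1/2})$ rather than the paper's $o_{p}(1)$, and for the uniform law of large numbers the paper separates the $U_{i}(\tau)$ part from the indicator-difference part and applies the van der Vaart--Wellner maximal inequality with the square-integrable envelope directly, so no truncation is needed.
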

\begin{proof}[Proof of Lemma \ref{lem:R_0 linearization}]
We have 
\begin{eqnarray*}
R_{+}\left(0,\tau\right) & = & \sqrt{\frac{n}{h}}\sum_{i}\widehat{w}_{p,i}K_{+,p}\left(\frac{X_{i}}{h}\right)\psi_{\tau}\left(\tilde{Y}_{+,i}\right)\\
 & = & T_{1}\left(\tau\right)\left(1+O_{p}\left(n^{-1}\right)\right)+T_{2}\left(\tau\right)\left(1+O_{p}\left(n^{-1}\right)\right),
\end{eqnarray*}
where 
\begin{eqnarray*}
T_{1}\left(\tau\right) & \coloneqq & \frac{1}{\sqrt{nh}}\sum_{i}K_{+,p}\left(\frac{X_{i}}{h}\right)\left\{ 1-\frac{V^{\top}_{p,i}\widehat{\lambda}_{p}}{1+\varrho}\right\} \psi_{\tau}\left(\tilde{Y}_{+,i}\right)\\
T_{2}\left(\tau\right) & \coloneqq & \frac{1}{\sqrt{nh}}\sum_{i}K_{+,p}\left(\frac{X_{i}}{h}\right)\left\{ \iota_{\varrho}\left(1+V^{\top}_{p,i}\widehat{\lambda}_{p}\right)-\left(1-\frac{V^{\top}_{p,i}\widehat{\lambda}_{p}}{1+\varrho}\right)\right\} \psi_{\tau}\left(\tilde{Y}_{+,i}\right).
\end{eqnarray*}
By the Taylor expansion of $\iota_{\varrho}$ in (\ref{eq:iota Taylor}),
\[
\iota_{\varrho}\left(1+V^{\top}_{p,i}\widehat{\lambda}_{p}\right)-\left(1-\frac{V^{\top}_{p,i}\widehat{\lambda}_{p}}{1+\varrho}\right)=\frac{1}{2}\iota_{\varrho}''\left(1+\dot{t}\cdot V^{\top}_{p,i}\widehat{\lambda}_{p}\right)\left(V^{\top}_{p,i}\widehat{\lambda}_{p}\right)^{2}.
\]
Using $\left|\psi_{\tau}\left(\cdot\right)\right|\leq1$, (\ref{eq:max iota'' bound}),
$\max_{i}\left|V^{\top}_{p,i}\widehat{\lambda}_{p}\right|=o_{p}\left(1\right)$,
$\left(V^{\top}_{p,i}\widehat{\lambda}_{p}\right)^{2}\leq\left|V^{\top}_{p,i}\widehat{\lambda}_{p}\right|\cdot\left\Vert V_{p,i}\right\Vert \cdot\left\Vert \widehat{\lambda}_{p}\right\Vert $,
and 
\[
\frac{1}{nh}\sum_{i}\left\Vert K_{+,p}\left(\frac{X_{i}}{h}\right)\right\Vert \left\Vert V_{p,i}\right\Vert =O_{p}\left(1\right)
\]
by Markov's inequality and $\left\Vert \left(nh\right)^{-1}\sum_{i}V_{p,i}V^{\top}_{p,i}\right\Vert =O_{p}\left(1\right)$,
we have 
\begin{eqnarray*}
\left\Vert T_{2}\left(\tau\right)\right\Vert  & \apprle & \max_{i}\left|\iota_{\varrho}''\left(1+\dot{t}\cdot V^{\top}_{p,i}\widehat{\lambda}_{p}\right)\right|\cdot\max_{i}\left|V^{\top}_{p,i}\widehat{\lambda}_{p}\right|\cdot\left\Vert \widehat{\lambda}_{p}\right\Vert \cdot\frac{1}{\sqrt{nh}}\sum_{i}\left\Vert K_{+,p}\left(\frac{X_{i}}{h}\right)\right\Vert \left\Vert V_{p,i}\right\Vert \\
 & = & o_{p}\left(1\right).
\end{eqnarray*}
The bound is uniform in $\tau$ because no term on the right-hand
side depends on $\tau$.

Then, 
\begin{equation}
T_{1}\left(\tau\right)=\frac{1}{\sqrt{nh}}\sum_{i}K_{+,p}\left(\frac{X_{i}}{h}\right)\psi_{\tau}\left(\tilde{Y}_{+,i}\right)-\frac{1}{1+\varrho}\cdot\frac{1}{\sqrt{nh}}\sum_{i}K_{+,p}\left(\frac{X_{i}}{h}\right)\psi_{\tau}\left(\tilde{Y}_{+,i}\right)V^{\top}_{p,i}\widehat{\lambda}_{p}.\label{eq:R_1 decompose}
\end{equation}
We have 
\[
\frac{1}{\sqrt{nh}}\sum_{i}K_{+,p}\left(\frac{X_{i}}{h}\right)\psi_{\tau}\left(\tilde{Y}_{+,i}\right)=\frac{1}{\sqrt{nh}}\sum_{i}K_{+,p}\left(\frac{X_{i}}{h}\right)U_{i}\left(\tau\right)+T_{3}\left(\tau\right)
\]
where 
\begin{equation}
T_{3}\left(\tau\right)\coloneqq\frac{1}{\sqrt{nh}}\sum_{i}K_{+,p}\left(\frac{X_{i}}{h}\right)\left\{ \mathbbm{1}\left(Y_{i}\leq Q_{Y\mid X}\left(\tau\mid X_{i}\right)\right)-\mathbbm{1}\left(Y_{i}\leq r^{\top}_{p}\left(X_{i}\right)\kappa_{+}\left(\tau\right)\right)\right\} .\label{eq:T_3 definition}
\end{equation}
Since $\mathrm{E}\left[U\left(\tau\right)\mid X\right]=0$, the sum
is centered: for any $j=1,\ldots,p+1$, 
\[
\frac{1}{\sqrt{nh}}\sum_{i}K^{\left[j\right]}_{+,p}\left(\frac{X_{i}}{h}\right)U_{i}\left(\tau\right)=\mathbb{G}_{n}g_{3}\left(\cdot\mid\tau\right),
\]
where 
\[
g_{3}\left(y,x\mid\tau\right)\coloneqq\frac{1}{\sqrt{h}}K^{\left[j\right]}_{+,p}\left(\frac{x}{h}\right)\left\{ \tau-\mathbbm{1}\left(y\leq Q_{Y\mid X}\left(\tau\mid x\right)\right)\right\} .
\]
By the monotonicity of $\tau\mapsto Q_{Y\mid X}\left(\tau\mid x\right)$,
the function class 
\begin{equation}
\left\{ \left(y,x\right)\mapsto\mathbbm{1}\left(y\leq Q_{Y\mid X}\left(\tau\mid x\right)\right):\tau\in\left[\underline{\tau},\overline{\tau}\right]\right\} \label{eq:indicator class 2}
\end{equation}
is VC-subgraph with index 2 by \citet[Lemma 9.10]{Kosorok2007}. Then
it follows from \citet[Theorem 9.3 and Lemma 9.9(vi)]{Kosorok2007}
and \citet[Lemma A.6]{chernozhukov2014gaussian} that the class $\mathscr{G}_{3}\coloneqq\left\{ g_{3}\left(\cdot\mid\tau\right):\tau\in\left[\underline{\tau},\overline{\tau}\right]\right\} $
is VC-type with characteristics independent of $n$ and an envelope
$G_{3}\left(x\right)\coloneqq h^{-1/2}\left\Vert K_{+,p}\left(x/h\right)\right\Vert $.
Therefore, the uniform entropy integral is finite and independent
of $n$ (see, e.g., calculations in the proof of \citealp[Corollary 5.1]{chernozhukov2014gaussian}).
By a change of variables, $\mathbb{P}G^{2}_{3}=O\left(1\right)$.

By the maximal inequality in \citet[Theorem 2.14.1]{VanDerVaartWellner1996},
\begin{equation}
\mathrm{E}\left[\left\Vert \mathbb{G}_{n}\right\Vert _{\mathscr{G}_{3}}\right]\apprle\sqrt{\mathbb{P}G^{2}_{3}}=O\left(1\right).\label{eq:g_3 bound}
\end{equation}
Markov's inequality yields that $\mathbb{G}_{n}g_{3}\left(\cdot\mid\tau\right)=O_{p}\left(1\right)$,
uniformly in $\tau\in\left[\underline{\tau},\overline{\tau}\right]$.
For the bias term $T_{3}\left(\tau\right)$, we write $T_{3}\left(\tau\right)=T_{4}\left(\tau\right)+T_{5}\left(\tau\right)$,
where 
\begin{eqnarray*}
T_{4}\left(\tau\right) & \coloneqq & \frac{1}{\sqrt{nh}}\sum_{i}K_{+,p}\left(\frac{X_{i}}{h}\right)\left\{ \left(\mathbbm{1}\left(Y_{i}\leq Q_{Y\mid X}\left(\tau\mid X_{i}\right)\right)-\mathbbm{1}\left(Y_{i}\leq r^{\top}_{p}\left(X_{i}\right)\kappa_{+}\left(\tau\right)\right)\right)\right.\\
 &  & \left.-\left(\tau-F_{Y\mid X}\left(r^{\top}_{p}\left(X_{i}\right)\kappa_{+}\left(\tau\right)\mid X_{i}\right)\right)\right\} \\
T_{5}\left(\tau\right) & \coloneqq & \frac{1}{\sqrt{nh}}\sum_{i}K_{+,p}\left(\frac{X_{i}}{h}\right)\left\{ \tau-F_{Y\mid X}\left(r^{\top}_{p}\left(X_{i}\right)\kappa_{+}\left(\tau\right)\mid X_{i}\right)\right\} .
\end{eqnarray*}

Under Assumption \ref{assu:QRD smoothness}(ii) and (iii), as $x\downarrow0$,
\begin{equation}
F_{Y\mid X}\left(Q_{Y\mid X}\left(\tau\mid x\right)\mid x\right)-F_{Y\mid X}\left(r^{\top}_{p}\left(x\right)\kappa_{+}\left(\tau\right)\mid x\right)=\frac{x^{p+1}}{\left(p+1\right)!}\left(\varphi_{+}\left(\tau\right)\kappa^{\left(p+1\right)}_{Y,+}\left(\tau\right)+o\left(1\right)\right),\label{eq:F(Q) taylor expansion}
\end{equation}
uniformly in $\tau\in\left[\underline{\tau},\overline{\tau}\right]$.

For any $j=1,\ldots,p+1$, write the summand of $T^{\left[j\right]}_{4}\left(\tau\right)$
as $g_{4}\left(Y_{i},X_{i}\mid\tau\right)-g_{5}\left(X_{i}\mid\tau\right)$,
where 
\begin{eqnarray*}
g_{4}\left(y,x\mid\tau\right) & \coloneqq & \frac{1}{\sqrt{h}}K^{\left[j\right]}_{+,p}\left(\frac{x}{h}\right)\left\{ \mathbbm{1}\left(y\leq Q_{Y\mid X}\left(\tau\mid x\right)\right)-\mathbbm{1}\left(y\leq r^{\top}_{p}\left(x\right)\kappa_{+}\left(\tau\right)\right)\right\} ,\\
g_{5}\left(x\mid\tau\right) & \coloneqq & \frac{1}{\sqrt{h}}K^{\left[j\right]}_{+,p}\left(\frac{x}{h}\right)\left\{ \tau-F_{Y\mid X}\left(r^{\top}_{p}\left(x\right)\kappa_{+}\left(\tau\right)\mid x\right)\right\} .
\end{eqnarray*}
By the LIE, $\mathrm{E}\left[g_{4}\left(Y,X\mid\tau\right)\mid X\right]=g_{5}\left(X\mid\tau\right)$,
so $\mathbb{P}g_{4}\left(\cdot\mid\tau\right)=\mathbb{P}g_{5}\left(\cdot\mid\tau\right)$
and $T_{4}\left(\tau\right)=\mathbb{G}_{n}\left(g_{4}-g_{5}\right)\left(\cdot\mid\tau\right)$.
Define $\mathscr{G}_{4}\coloneqq\left\{ \left(g_{4}-g_{5}\right)\left(\cdot\mid\tau\right):\tau\in\left[\underline{\tau},\overline{\tau}\right]\right\} $.
We have shown above that both (\ref{eq:indicator class 1}) and (\ref{eq:indicator class 2})
are VC-subgraph. Since the subgraph inequality $t\leq F_{Y\mid X}\left(r^{\top}_{p}\left(x\right)b\mid x\right)$
for $t\in\left(0,1\right]$ is equivalent to $r^{\top}_{p}\left(x\right)b\geq Q_{Y\mid X}\left(t\mid x\right)$
(\citealp[Lemma 21.1(i)]{VanDerVaart1998}), the subgraph of $x\mapsto F_{Y\mid X}\left(r^{\top}_{p}\left(x\right)b\mid x\right)$
equals 
\[
\left\{ \left(x,t\right)\in\mathbb{R}\times\left(0,1\right]:r^{\top}_{p}\left(x\right)b-Q_{Y\mid X}\left(t\mid x\right)\geq0\right\} \cup\left(\mathbb{R}\times\left(-\infty,0\right]\right).
\]
The former set can be represented as a non-negativity set of the $\left(p+2\right)$-dimensional
vector space spanned by $r_{p}\left(\cdot\right)$ and $Q_{Y\mid X}\left(\cdot\mid\cdot\right)$.
Therefore, by \citet[Lemma 9.6]{Kosorok2007} and preservation results
in \citet[Lemma 9.9]{Kosorok2007}, $\left\{ \left(y,x\right)\mapsto F_{Y\mid X}\left(r^{\top}_{p}\left(x\right)b\mid x\right):b\in\mathbb{R}^{p+1}\right\} $
is VC-subgraph with index at most $p+4$. It now follows from \citet[Theorem 9.3 and Lemma 9.9(vi)]{Kosorok2007}
and \citet[Lemma A.6]{chernozhukov2014gaussian} that $\mathscr{G}_{4}$
is VC-type with characteristics independent of $n$ and a constant
envelope $G_{4}\apprle h^{-1/2}$. Since $\mathrm{E}\left[\left(g_{4}\left(Y,X\mid\tau\right)-g_{5}\left(X\mid\tau\right)\right)^{2}\mid X\right]\leq\mathrm{E}\left[g^{2}_{4}\left(Y,X\mid\tau\right)\mid X\right]$,
by the LIE, (\ref{eq:F(Q) taylor expansion}) and (\ref{eq:a.b indicator inequalities}),
we have 
\begin{eqnarray}
\mathbb{P}\left(g_{4}-g_{5}\right)^{2}\left(\cdot\mid\tau\right) & \leq & \mathbb{P}g^{2}_{4}\left(\cdot\mid\tau\right)\nonumber \\
 & = & \mathrm{E}\left[\frac{1}{h}\left(K^{\left[j\right]}_{+,p}\right)^{2}\left(\frac{X}{h}\right)\left|F_{Y\mid X}\left(Q_{Y\mid X}\left(\tau\mid X\right)\mid X\right)-F_{Y\mid X}\left(r^{\top}_{p}\left(X\right)\kappa_{+}\left(\tau\right)\mid X\right)\right|\right]\nonumber \\
 & = & O\left(h^{p+1}\right),\label{eq:(g_4-g_5)^2 expectation bound}
\end{eqnarray}
uniformly in $\tau\in\left[\underline{\tau},\overline{\tau}\right]$.
By \citet[Corollary 5.5]{Chen2020jackknife}, 
\[
\mathrm{E}\left[\sup_{\tau\in\left[\underline{\tau},\overline{\tau}\right]}\left|\mathbb{G}_{n}\left(g_{4}-g_{5}\right)\left(\cdot\mid\tau\right)\right|\right]=O\left(\sqrt{h^{p+1}\log\left(n\right)}+\frac{\log\left(n\right)}{\sqrt{nh}}\right).
\]
Markov's inequality yields that $T_{4}\left(\tau\right)=o_{p}\left(1\right)$,
uniformly in $\tau\in\left[\underline{\tau},\overline{\tau}\right]$.

We can write $T^{\left[j\right]}_{5}\left(\tau\right)-\mathrm{E}\left[T^{\left[j\right]}_{5}\left(\tau\right)\right]=\mathbb{G}_{n}g_{5}\left(\cdot\mid\tau\right)$.
By similar arguments, the function class $\mathscr{G}_{5}\coloneqq\left\{ g_{5}\left(\cdot\mid\tau\right):\tau\in\left[\underline{\tau},\overline{\tau}\right]\right\} $
is VC-type with characteristics independent of $n$ and admits an
envelope given by 
\[
G_{5}\left(x\right)\coloneqq\frac{1}{\sqrt{h}}\left|K^{\left[j\right]}_{+,p}\left(\frac{x}{h}\right)\right|\sup_{\tau\in\left[\underline{\tau},\overline{\tau}\right]}\left|\tau-F_{Y\mid X}\left(r^{\top}_{p}\left(x\right)\kappa_{+}\left(\tau\right)\mid x\right)\right|.
\]
By a change of variables and (\ref{eq:F(Q) taylor expansion}), $\mathbb{P}G^{2}_{5}=O\left(h^{2\left(p+1\right)}\right)$.
It follows from the same arguments used to prove (\ref{eq:g_3 bound})
that $\mathrm{E}\left[\left\Vert \mathbb{G}_{n}\right\Vert _{\mathscr{G}_{5}}\right]=O\left(h^{p+1}\right)$.
Markov's inequality yields that $\mathbb{G}_{n}g_{5}\left(\cdot\mid\tau\right)=o_{p}\left(1\right)$,
uniformly in $\tau\in\left[\underline{\tau},\overline{\tau}\right]$.
Therefore, $T_{5}\left(\tau\right)-\mathrm{E}\left[T_{5}\left(\tau\right)\right]=o_{p}\left(1\right)$,
uniformly in $\tau\in\left[\underline{\tau},\overline{\tau}\right]$.

By (\ref{eq:F(Q) taylor expansion}), we have 
\begin{eqnarray*}
\frac{1}{\sqrt{nh}}\mathrm{E}\left[T_{5}\left(\tau\right)\right] & = & \mathrm{E}\left[\frac{1}{h}K_{+,p}\left(\frac{X}{h}\right)\left\{ F_{Y\mid X}\left(Q_{Y\mid X}\left(\tau\mid X\right)\mid X\right)-F_{Y\mid X}\left(r^{\top}_{p}\left(X\right)\kappa_{+}\left(\tau\right)\mid X\right)\right\} \right]\\
 & = & h^{p+1}\left\{ \left(\frac{\int^{1}_{0}K_{+,p}\left(u\right)u^{p+1}\mathrm{d}u}{\left(p+1\right)!}\right)\varphi_{+}\left(\tau\right)\kappa^{\left(p+1\right)}_{Y,+}\left(\tau\right)\varphi+o\left(1\right)\right\} ,
\end{eqnarray*}
uniformly in $\tau\in\left[\underline{\tau},\overline{\tau}\right]$.
Combining these results, we have 
\begin{eqnarray}
\frac{1}{\sqrt{nh}}\sum_{i}K_{+,p}\left(\frac{X_{i}}{h}\right)\psi_{\tau}\left(\tilde{Y}_{+,i}\right) & = & \sqrt{nh}h^{p+1}\left\{ \left(\frac{\int^{1}_{0}K_{+,p}\left(u\right)u^{p+1}\mathrm{d}u}{\left(p+1\right)!}\right)\varphi_{+}\left(\tau\right)\kappa^{\left(p+1\right)}_{Y,+}\left(\tau\right)\varphi+o\left(1\right)\right\} \nonumber \\
 &  & +\frac{1}{\sqrt{nh}}\sum_{i}K_{+,p}\left(\frac{X_{i}}{h}\right)U_{i}\left(\tau\right)+o_{p}\left(1\right),\label{eq:T_1 first term linearization}
\end{eqnarray}
uniformly in $\tau\in\left[\underline{\tau},\overline{\tau}\right]$.

Next, we derive the linearization for the second term on the right-hand
side of (\ref{eq:R_1 decompose}). Note that for all $\tau\in\left[\underline{\tau},\overline{\tau}\right]$,
\begin{eqnarray}
 &  & \left\Vert \frac{1}{nh}\sum_{i}K_{+,p}\left(\frac{X_{i}}{h}\right)\psi_{\tau}\left(\tilde{Y}_{+,i}\right)V^{\top}_{p,i}-\left(\frac{1}{nh}\sum_{i}\tilde{K}_{+,p}\left(\frac{X_{i}}{h}\right)\psi_{\tau}\left(\tilde{Y}_{+,i}\right)\bar{Z}^{\top}_{i}\right)/\varphi\right\Vert \nonumber \\
 & \leq & \left(\frac{2}{nh}\sum_{i}K^{2}\left(\frac{X_{i}}{h}\right)\left\Vert r_{p}\left(\frac{X_{i}}{h}\right)\right\Vert ^{2}\left\Vert \bar{Z}_{i}\right\Vert \right)\left(\left\Vert \Pi^{-1}_{+,p}-\varphi^{-1}\Lambda^{-1}_{+,p}\right\Vert \right)\nonumber \\
 & = & o_{p}\left(1\right).\label{eq:K to K_til approximation}
\end{eqnarray}
Write 
\[
\frac{1}{nh}\sum_{i}\tilde{K}_{+,p}\left(\frac{X_{i}}{h}\right)\psi_{\tau}\left(\tilde{Y}_{+,i}\right)\bar{Z}^{\top}_{i}=T_{6}\left(\tau\right)+T_{7}\left(\tau\right),
\]
where 
\begin{eqnarray}
T_{6}\left(\tau\right) & \coloneqq & \frac{1}{nh}\sum_{i}\tilde{K}_{+,p}\left(\frac{X_{i}}{h}\right)U_{i}\left(\tau\right)\bar{Z}^{\top}_{i}\nonumber \\
T_{7}\left(\tau\right) & \coloneqq & \frac{1}{nh}\sum_{i}\tilde{K}_{+,p}\left(\frac{X_{i}}{h}\right)\left\{ \mathbbm{1}\left(Y_{i}\leq Q_{Y\mid X}\left(\tau\mid X_{i}\right)\right)-\mathbbm{1}\left(Y_{i}\leq r^{\top}_{p}\left(X_{i}\right)\kappa_{+}\left(\tau\right)\right)\right\} \bar{Z}^{\top}_{i}.\label{eq:T_6 and T_7 definitions}
\end{eqnarray}

By Assumption \ref{assu:data generating process}(iv), Assumption
\ref{assu:QRD smoothness}(i), (v), and (vi), and the DCT, $\left(\tau,x\right)\mapsto\mu_{U\left(\tau\right)\bar{Z}^{\top}}\left(x\right)$
is jointly uniformly continuous on $\left[\underline{\tau},\overline{\tau}\right]\times\mathbb{B}_{0}$.
By this result and the LIE, we have 
\[
\mathrm{E}\left[T_{6}\left(\tau\right)\right]=\pi_{+}\varphi\mu_{U\left(\tau\right)\bar{Z}^{\top},+}+o\left(1\right),
\]
uniformly in $\tau\in\left[\underline{\tau},\overline{\tau}\right]$.
By Assumption \ref{assu:QRD smoothness}(i) and (v), for all $z\in\mathrm{supp}\left(Z\right)$,
as $x\downarrow0$, 
\[
\sup_{\tau\in\left[\underline{\tau},\overline{\tau}\right]}\left|F_{Y\mid XZ}\left(Q_{Y\mid X}\left(\tau\mid x\right)\mid x,z\right)-F_{Y\mid XZ}\left(r^{\top}_{p}\left(x\right)\kappa_{+}\left(\tau\right)\mid x,z\right)\right|=o\left(1\right).
\]
By this result, Assumption \ref{assu:QRD smoothness}(vi) and the
DCT, as $x\downarrow0$, 
\[
\sup_{\tau\in\left[\underline{\tau},\overline{\tau}\right]}\left|\mathrm{E}\left[\left\{ F_{Y\mid XZ}\left(Q_{Y\mid X}\left(\tau\mid X\right)\mid X,Z\right)-F_{Y\mid XZ}\left(r^{\top}_{p}\left(X\right)\kappa_{+}\left(\tau\right)\mid X,Z\right)\right\} \bar{Z}\mid X=x\right]\right|=o\left(1\right).
\]
By this result and the LIE, we have 
\begin{eqnarray*}
\mathrm{E}\left[T_{7}\left(\tau\right)\right] & = & \mathrm{E}\left[\frac{1}{h}\tilde{K}_{+,p}\left(\frac{X}{h}\right)\left\{ F_{Y\mid XZ}\left(Q_{Y\mid X}\left(\tau\mid X\right)\mid X,Z\right)-F_{Y\mid XZ}\left(r^{\top}_{p}\left(X\right)\kappa_{+}\left(\tau\right)\mid X,Z\right)\right\} \bar{Z}\right]\\
 & = & o\left(1\right),
\end{eqnarray*}
uniformly in $\tau\in\left[\underline{\tau},\overline{\tau}\right]$.

Working coordinate-wise, fix $j\in\left\{ 1,\ldots,p+1\right\} $
and $k\in\left\{ 1,\ldots,d_{z}+1\right\} $. We write 
\[
T^{\left[jk\right]}_{6}\left(\tau\right)-\mathrm{E}\left[T^{\left[jk\right]}_{6}\left(\tau\right)\right]=\frac{1}{\sqrt{nh}}\mathbb{G}_{n}g_{6}\left(\cdot\mid\tau\right),
\]
where $\bar{z}\coloneqq\left(1,z^{\top}\right)^{\top}$ and 
\[
g_{6}\left(y,x,z\mid\tau\right)\coloneqq\frac{1}{\sqrt{h}}\tilde{K}^{\left[j\right]}_{+,p}\left(\frac{x}{h}\right)\left\{ \tau-\mathbbm{1}\left(y\leq Q_{Y\mid X}\left(\tau\mid x\right)\right)\right\} \bar{z}^{\left[k\right]}.
\]
By similar arguments, we can show that $\mathscr{G}_{6}\coloneqq\left\{ g_{6}\left(\cdot\mid\tau\right):\tau\in\left[\underline{\tau},\overline{\tau}\right]\right\} $
is VC-type with characteristics independent of $n$ and admits an
envelope $G_{6}\left(y,x,z\right)\coloneqq h^{-1/2}\left\Vert \tilde{K}_{+,p}\left(x/h\right)\right\Vert \left|\bar{z}^{\left[k\right]}\right|$.
By the LIE, Assumption \ref{assu:data generating process}(iv) (which
gives $\mathrm{E}\left[\left(\bar{Z}^{\left[k\right]}\right)^{2}\mid X=x\right]$
bounded uniformly in $x$ near $0$), and a change of variables, $\mathbb{P}G^{2}_{6}=O\left(1\right)$.
By the maximal inequality in \citet[Theorem 2.14.1]{VanDerVaartWellner1996},
$\mathrm{E}\left[\left\Vert \mathbb{G}_{n}\right\Vert _{\mathscr{G}_{6}}\right]=O\left(1\right)$.
It follows that $T_{6}\left(\tau\right)-\mathrm{E}\left[T_{6}\left(\tau\right)\right]=o_{p}\left(1\right)$,
uniformly in $\tau\in\left[\underline{\tau},\overline{\tau}\right]$.
It follows from the same arguments that $T_{7}\left(\tau\right)-\mathrm{E}\left[T_{7}\left(\tau\right)\right]=o_{p}\left(1\right)$,
uniformly in $\tau\in\left[\underline{\tau},\overline{\tau}\right]$.

Therefore, 
\[
\frac{1}{nh}\sum_{i}K_{+,p}\left(\frac{X_{i}}{h}\right)\psi_{\tau}\left(\tilde{Y}_{+,i}\right)V^{\top}_{p,i}=\pi_{+}\mu_{U\left(\tau\right)\bar{Z}^{\top},+}+o_{p}\left(1\right),
\]
uniformly in $\tau\in\left[\underline{\tau},\overline{\tau}\right]$.
By this result, Lemma \ref{lem:reg weights lambda} and $\left(nh\right)^{-1}\sum_{i}V_{p,i}V^{\top}_{p,i}=\left(\omega^{0,2}_{p,0}/\varphi\right)\mu_{\bar{Z}\bar{Z}^{\top},\pm}+o_{p}\left(1\right)$,
\begin{eqnarray*}
 &  & \frac{1}{1+\varrho}\cdot\frac{1}{\sqrt{nh}}\sum_{i}K_{+,p}\left(\frac{X_{i}}{h}\right)\psi_{\tau}\left(\tilde{Y}_{+,i}\right)V^{\top}_{p,i}\widehat{\lambda}_{p}\\
 & = & \pi_{+}\mu_{U\left(\tau\right)\bar{Z}^{\top},+}\left(\left(\frac{\omega^{0,2}_{p,0}}{\varphi}\right)\mu_{\bar{Z}\bar{Z}^{\top},\pm}\right)^{-1}\frac{1}{\sqrt{nh}}\sum_{i}V_{p,i}+o_{p}\left(1\right)\\
 & = & \left(\frac{\pi_{+}\varphi}{\omega^{0,2}_{p,0}}\right)\gamma^{\top}_{+}\left(\tau\right)\left\{ \frac{1}{\sqrt{nh}}\sum_{i}W_{p,i}\xi_{i}+\frac{\left(\mu^{\left(p+1\right)}_{Z,+}\omega^{p+1,1}_{+,p,0}-\mu^{\left(p+1\right)}_{Z,-}\omega^{p+1,1}_{-,p,0}\right)}{\left(p+1\right)!}\sqrt{nh}h^{p+1}\right\} +o_{p}\left(1\right),
\end{eqnarray*}
uniformly in $\tau\in\left[\underline{\tau},\overline{\tau}\right]$.
The conclusion follows from this result, (\ref{eq:R_1 decompose}),
(\ref{eq:T_1 first term linearization}) and $R_{+}\left(0,\tau\right)=T_{1}\left(\tau\right)+o_{p}\left(1\right)$. 
\end{proof}

Let $\varDelta_{+}\left(\tau\right)\coloneqq\sqrt{nh}\mathrm{H}\left(\widehat{\kappa}_{+,p}\left(\tau\right)-\kappa_{+}\left(\tau\right)\right)$.
Recall that $\mathrm{H}$ is the diagonal matrix with $\left(1,h,\ldots,h^{p}\right)$
on its diagonal. Let $\varDelta_{-}\left(\tau\right)$ be defined
analogously. Then we have 
\begin{equation}
\sqrt{nh}\left(\widehat{\vartheta}^{\mathit{eb}}_{p}\left(\tau\right)-\vartheta\left(\tau\right)\right)=\mathrm{e}^{\top}_{p+1,1}\left(\varDelta_{+}\left(\tau\right)-\varDelta_{-}\left(\tau\right)\right).\label{eq:theta_hat to Delta}
\end{equation}
The following lemma provides an asymptotic linearization for $\varDelta_{+}\left(\tau\right)$.
A similar result holds for $\varDelta_{-}\left(\tau\right)$. 
\begin{lem}
\label{lem:delta linearization}Suppose that the assumptions in the
statement of Theorem \ref{thm:sharp QRD} hold. We have the following
linearization: 
\[
\varDelta_{+}\left(\tau\right)=D^{-1}_{+}(\tau)R_{+}(0,\tau)+o_{p}\left(1\right),
\]
uniformly in $\tau\in\left[\underline{\tau},\overline{\tau}\right]$. 
\end{lem}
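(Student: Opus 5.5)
This follows the convexity argument of \citet{Koenker1994}, made uniform in $\tau$. Reparametrize the one-sided weighted check-loss problem in $\varDelta=\sqrt{nh}\mathrm{H}(b-\kappa_{+}(\tau))$. Then $\varDelta_{+}(\tau)$ minimizes the convex function
\[
L_{n}(\varDelta,\tau)\coloneqq\sum_{i}\widehat{w}_{p,i}K\left(\frac{X_{i}}{h}\right)I_{i}\left\{ \rho_{\tau}\left(\tilde{Y}_{+,i}-\frac{r_{p}^{\top}(X_{i}/h)\varDelta}{\sqrt{nh}}\right)-\rho_{\tau}\left(\tilde{Y}_{+,i}\right)\right\} ,
\]
suitably rescaled by $n$. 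Its (sub)gradient is, up to the scaling constant, $-R_{+}(\varDelta,\tau)$. The weights $\widehat{w}_{p,i}$ are positive wpa1 by (\ref{eq:w_hat Taylor expansion}), so the problem is a genuine convex quantile regression. Apply Knight's identity $\rho_{\tau}(u-v)-\rho_{\tau}(u)=-v\psi_{\tau}(u)+\int_{0}^{v}\{\mathbbm{1}(u\leq s)-\mathbbm{1}(u\leq0)\}\mathrm{d}s$. Integrating the linearization of Lemma \ref{lem:R_delta linearization} along the ray $t\varDelta$, $t\in[0,1]$, gives
\[
\sup_{\|\varDelta\|\leq M,\,\tau}\left|L_{n}(\varDelta,\tau)-\left\{ -\varDelta^{\top}R_{+}(0,\tau)+\tfrac{1}{2}\varDelta^{\top}D_{+}(\tau)\varDelta\right\} \right|=o_{p}(1)
\]
for every $M$. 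Equivalently, one can work directly with the subgradient using the monotonicity device of \citet{Koenker1994}: for fixed direction $e$, the map $t\mapsto e^{\top}R_{+}(\varDelta_{0}+te,\tau)$ is nonincreasing.

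The next step is tightness: $\sup_{\tau}\|\varDelta_{+}(\tau)\|=O_{p}(1)$. Lemma \ref{lem:R_0 linearization} gives $\sup_{\tau}\|R_{+}(0,\tau)\|=O_{p}(1)$. The empirical process terms there are $O_{p}(1)$ uniformly by (\ref{eq:g_3 bound}), the bias term is $O(1)$ because $nh^{2p+3}=O(1)$, and $(nh)^{-1/2}\sum_{i}W_{p,i}\xi_{i}=O_{p}(1)$ by (\ref{eq:W*ksi expansion}). Also $D_{+}(\tau)=\varphi_{+}(\tau)\varphi\Lambda_{+,p}$ has eigenvalues bounded away from zero uniformly in $\tau$, by Assumption \ref{assu:QRD smoothness}(iv) and $\mathrm{mineig}(\Lambda_{+,p})>0$. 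Fix $\varepsilon>0$ and pick $C$ so that $\sup_{\tau}\|D_{+}^{-1}(\tau)R_{+}(0,\tau)\|\leq C$ with probability at least $1-\varepsilon$. On the sphere $\|\varDelta-D_{+}^{-1}R_{+}(0,\tau)\|=\eta$, the quadratic approximation exceeds its value at the center $D_{+}^{-1}(\tau)R_{+}(0,\tau)$ by at least $c\eta^{2}$, with $c$ the uniform lower eigenvalue bound of $D_{+}/2$. Take $M=C+\eta$, so that the uniform $o_{p}(1)$ approximation is smaller than $c\eta^{2}/3$ wpa1. Convexity of $L_{n}(\cdot,\tau)$ then forces the minimizer $\varDelta_{+}(\tau)$ inside the ball of radius $\eta$ around $D_{+}^{-1}(\tau)R_{+}(0,\tau)$, simultaneously for all $\tau$, because the sphere statement holds uniformly. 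Since $\eta$ is arbitrary, $\sup_{\tau}\|\varDelta_{+}(\tau)-D_{+}^{-1}(\tau)R_{+}(0,\tau)\|=o_{p}(1)$. Tightness and the linearization are obtained together in this step.

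The main obstacle is converting the gradient-level statement of Lemma \ref{lem:R_delta linearization} into the objective-level quadratic approximation uniformly in $\tau$. The correct factor comes from integrating $\mathrm{d}/\mathrm{d}t\,L_{n}(t\varDelta,\tau)=-\varDelta^{\top}R_{+}(t\varDelta,\tau)$ almost everywhere. This derivative exists except on a null set of $t$ where some residual $\tilde{Y}_{+,i}-t\,r_{p}^{\top}(X_{i}/h)\varDelta/\sqrt{nh}$ hits zero; this set has probability zero under the continuity of $Y$ given $X$. The resulting error is $\|\varDelta\|\sup_{t}\|R_{+}(t\varDelta,\tau)+tD_{+}\varDelta-R_{+}(0,\tau)\|=o_{p}(1)$ uniformly. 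An alternative is a subgradient argument: the weighted check-loss first-order condition gives $\|R_{+}(\varDelta_{+}(\tau),\tau)\|\lesssim(nh)^{-1/2}\max_{i}\|K_{+,p}(X_{i}/h)\|\cdot\#\{\text{zero residuals}\}\cdot\max_{i}n\widehat{w}_{p,i}$. This is $o_{p}(1)$ because at most $p+1$ residuals vanish and $\max_{i}n\widehat{w}_{p,i}=O_{p}(1)$. Combined with tightness, this also yields the result. The convexity route is preferable because it delivers tightness for free.
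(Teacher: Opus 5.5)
Your argument is correct, but it follows a different route from the paper's.

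\textbf{The paper's route.} The paper stays at the level of the estimating function $R_{+}$ and follows \citet{Koenker1994} literally, in three steps.
\begin{enumerate}
\item It uses the computational property of quantile regression, \citet[Lemma A.2]{Cai2008}, to get $\sup_{\tau}\|R_{+}(\varDelta_{+}(\tau),\tau)\|=O_{p}((nh)^{-1/2})$.
\item It uses Lemmas \ref{lem:R_delta linearization} and \ref{lem:R_0 linearization} to show that $\inf_{\|\varDelta\|=M,\tau}-\varDelta^{\top}R_{+}(\varDelta,\tau)\geq M\eta$ with high probability.
\item It transfers this bound from the sphere to the whole exterior $\{\|\varDelta\|\geq M\}$, using monotonicity of $\lambda\mapsto-\varDelta^{\top}R_{+}(\lambda\varDelta,\tau)$ for $\lambda\geq1$. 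This gives tightness of $\varDelta_{+}(\tau)$. Plugging into \eqref{eq:asymptotic linearization} and using $R_{+}(\varDelta_{+}(\tau),\tau)=o_{p}(1)$ then finishes the proof.
\end{enumerate}

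\textbf{Your route.} You integrate the gradient approximation of Lemma \ref{lem:R_delta linearization} along rays. This gives a uniform quadratic approximation of the rescaled convex objective. You then run the Hjort--Pollard convexity-lemma argument uniformly in $\tau$.

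\textbf{Comparison.} Your approach delivers tightness and the linearization in a single step, and it covers every minimizer. It never needs the approximate first-order condition at the solution, so no counting of zero residuals and no appeal to Cai--Xu is required. This would be especially convenient in the bootstrap analogue (Lemma \ref{lem:delta linearization bootstrap analogue}). There the paper must control the maximal multiplicity $E_{n}$ of resampled observations precisely to bound the subgradient at the solution. The paper's route, in turn, avoids the integration step and works only with the objects $R_{+}(\varDelta,\tau)$ for which the preceding lemmas are stated. Both routes need nonnegative weights, for convexity in your case and for monotonicity in the paper's.

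\textbf{Two small corrections.} First, $t\mapsto L_{n}(t\varDelta,\tau)$ is convex and piecewise linear. Its non-differentiability points in $t$ therefore form a finite set for every realization, so no probabilistic ``probability zero'' argument is needed there. Second, $\widehat{w}_{p,i}>0$ holds by construction in \eqref{eq:balancing weights}. The ``with probability approaching one'' qualifier is needed only for the closed-form weights $\widetilde{w}_{p,i}$.
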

\begin{proof}[Proof of Lemma \ref{lem:delta linearization}]
By \citet[Lemma A.2]{Cai2008}, we have 
\begin{equation}
\left\Vert \sum_{i}\widehat{w}_{p,i}K_{+,p}\left(\frac{X_{i}}{h}\right)\psi_{\tau}\left(Y_{i}-r^{\top}_{p}\left(X_{i}\right)\widehat{\kappa}_{+,p}\left(\tau\right)\right)\right\Vert \leq\left(p+1\right)\underset{i}{\max}\left\Vert \widehat{w}_{p,i}K_{+,p}\left(\frac{X_{i}}{h}\right)\right\Vert ,\label{eq:quantile FOC}
\end{equation}
where the left-hand side equals $\sqrt{h/n}\left\Vert R_{+}\left(\varDelta_{+}\left(\tau\right),\tau\right)\right\Vert $
by the definitions of $R_{+}\left(\varDelta,\tau\right)$ and $\varDelta_{+}\left(\tau\right)$.
We can show that the right-hand side is $O_{p}\left(n^{-1}\right)$.
Indeed, by (\ref{eq:w_hat Taylor expansion}), $\max_{i}\left|V^{\top}_{p,i}\widehat{\lambda}_{p}\right|=o_{p}\left(1\right)$,
(\ref{eq:max iota'' bound}) and $\iota_{\varrho}\left(1\right)=1$,
we have $\max_{i}n\widehat{w}_{p,i}=1+o_{p}\left(1\right)$. Hence,
\[
\left(p+1\right)\max_{i}\left\Vert \widehat{w}_{p,i}K_{+,p}\left(\frac{X_{i}}{h}\right)\right\Vert \apprle\max_{i}\widehat{w}_{p,i}=O_{p}\left(n^{-1}\right).
\]
These results imply that $R_{+}(\varDelta_{+}\left(\tau\right),\tau)=O_{p}(\left(nh\right)^{-1/2})$,
uniformly in $\tau\in\left[\underline{\tau},\overline{\tau}\right]$.

Since 
\[
-\varDelta^{\top}R_{+}(\lambda\varDelta,\tau)=\sqrt{\frac{n}{h}}\sum_{i}\widehat{w}_{p,i}K\left(\frac{X_{i}}{h}\right)\mathbbm{1}\left(X_{i}>0\right)\left(-\varDelta^{\top}r_{p}\left(\frac{X_{i}}{h}\right)\right)\psi_{\tau}\left(\tilde{Y}_{+,i}-\frac{\lambda\left(r^{\top}_{p}\left(X_{i}/h\right)\varDelta\right)}{\sqrt{nh}}\right)
\]
and $\psi_{\tau}\left(\cdot\right)$ is increasing, we have 
\begin{equation}
-\varDelta^{\top}R_{+}(\lambda\varDelta,\tau)\geq-\varDelta^{\top}R_{+}(\varDelta,\tau),\label{eq:convexity}
\end{equation}
for all $\lambda\geq1$.

For all $M>0$, $\eta>0$, by the union bound, we have 
\begin{eqnarray}
 &  & \Pr\left[\inf_{\|\varDelta\|=M,\,\tau\in\left[\underline{\tau},\overline{\tau}\right]}-\varDelta^{\top}R_{+}(\varDelta,\tau)<M\eta\right]\nonumber \\
 & \leq & \Pr\left[\inf_{\|\varDelta\|=M,\,\tau\in\left[\underline{\tau},\overline{\tau}\right]}-\varDelta^{\top}R_{+}(\varDelta,\tau)<M\eta,\,\inf_{\|\varDelta\|=M,\,\tau\in\left[\underline{\tau},\overline{\tau}\right]}-\varDelta^{\top}\bigl(-D_{+}(\tau)\varDelta+R_{+}(0,\tau)\bigr)\geq2M\eta\right]\nonumber \\
 &  & +\Pr\left[\inf_{\|\varDelta\|=M,\,\tau\in\left[\underline{\tau},\overline{\tau}\right]}-\varDelta^{\top}\bigl(-D_{+}(\tau)\varDelta+R_{+}(0,\tau)\bigr)<2M\eta\right].\label{eq:step1-decomp}
\end{eqnarray}
By the Cauchy--Schwarz inequality, on the first event on the right-hand
side of (\ref{eq:step1-decomp}), there exist $\varDelta_{0}$ with
$\|\varDelta_{0}\|=M$ and $\tau_{0}\in\left[\underline{\tau},\overline{\tau}\right]$
such that 
\begin{equation}
-M\|R_{+}(\varDelta_{0},\tau_{0})+D_{+}(\tau_{0})\varDelta_{0}-R_{+}(0,\tau_{0})\|-\varDelta^{\top}_{0}\bigl(-D_{+}(\tau_{0})\varDelta_{0}+R_{+}(0,\tau_{0})\bigr)<M\eta,\label{eq:step1-cs}
\end{equation}
and hence 
\begin{equation}
-M\|R_{+}(\varDelta_{0},\tau_{0})+D_{+}(\tau_{0})\varDelta_{0}-R_{+}(0,\tau_{0})\|+\inf_{\|\varDelta\|=M,\,\tau\in\left[\underline{\tau},\overline{\tau}\right]}\bigl[-\varDelta^{\top}\bigl(-D_{+}(\tau)\varDelta+R_{+}(0,\tau)\bigr)\bigr]<M\eta.\label{eq:step1-inf}
\end{equation}
Since the infimum on the left-hand side is no less than $2\eta M$
on this event, there exist $\varDelta_{0}$ with $\|\varDelta_{0}\|=M$
and $\tau_{0}\in\left[\underline{\tau},\overline{\tau}\right]$ such
that 
\begin{equation}
\eta\leq\|R_{+}(\varDelta_{0},\tau_{0})+D_{+}(\tau_{0})\varDelta_{0}-R_{+}(0,\tau_{0})\|.\label{eq:step1-final}
\end{equation}
Therefore, 
\begin{eqnarray}
 &  & \Pr\left[\inf_{\|\varDelta\|=M,\,\tau\in\left[\underline{\tau},\overline{\tau}\right]}-\varDelta^{\top}R_{+}(\varDelta,\tau)<M\eta,\,\inf_{\|\varDelta\|=M,\,\tau\in\left[\underline{\tau},\overline{\tau}\right]}-\varDelta^{\top}\bigl(-D_{+}(\tau)\varDelta+R_{+}(0,\tau)\bigr)\geq2M\eta\right]\nonumber \\
 & \leq & \Pr\left[\sup_{\|\varDelta\|=M,\,\tau\in\left[\underline{\tau},\overline{\tau}\right]}\|R_{+}(\varDelta,\tau)+D_{+}(\tau)\varDelta-R_{+}(0,\tau)\|\geq\eta\right].\label{eq:step1-bound1}
\end{eqnarray}
For all $\|\varDelta\|=M$, $\tau\in\left[\underline{\tau},\overline{\tau}\right]$,
we have 
\begin{eqnarray}
\varDelta^{\top}D_{+}(\tau)\varDelta & \geq & M^{2}\left\{ \inf_{\tau\in\left[\underline{\tau},\overline{\tau}\right]}\mathrm{mineig}\left(D_{+}(\tau)\right)\right\} ,\label{eq:quad-lb}\\
-\varDelta^{\top}R_{+}(0,\tau) & \geq & -M\left\{ \sup_{\tau\in\left[\underline{\tau},\overline{\tau}\right]}\|R_{+}(0,\tau)\|\right\} ,\label{eq:lin-lb}
\end{eqnarray}
where 
\[
\inf_{\tau\in\left[\underline{\tau},\overline{\tau}\right]}\mathrm{mineig}\left(D_{+}(\tau)\right)=\varphi\left\{ \inf_{\tau\in\left[\underline{\tau},\overline{\tau}\right]}\varphi_{+}\left(\tau\right)\right\} \mathrm{mineig}\left(\Lambda_{+,p}\right)>0.
\]
Hence 
\begin{eqnarray}
 &  & \Pr\left[\inf_{\|\varDelta\|=M,\,\tau\in\left[\underline{\tau},\overline{\tau}\right]}-\varDelta^{\top}\bigl(-D_{+}(\tau)\varDelta+R_{+}(0,\tau)\bigr)<2M\eta\right]\nonumber \\
 & \leq & \Pr\left[M^{2}\left\{ \inf_{\tau\in\left[\underline{\tau},\overline{\tau}\right]}\mathrm{mineig}\left(D_{+}(\tau)\right)\right\} -M\left\{ \sup_{\tau\in\left[\underline{\tau},\overline{\tau}\right]}\|R_{+}(0,\tau)\|\right\} <2M\eta\right].\label{eq:step1-bound2}
\end{eqnarray}
We have shown that for all $M>0$, $\eta>0$, 
\begin{eqnarray}
 &  & \Pr\left[\inf_{\|\varDelta\|=M,\,\tau\in\left[\underline{\tau},\overline{\tau}\right]}-\varDelta^{\top}R_{+}(\varDelta,\tau)<M\eta\right]\nonumber \\
 & \leq & \Pr\left[\sup_{\|\varDelta\|=M,\,\tau\in\left[\underline{\tau},\overline{\tau}\right]}\|R_{+}(\varDelta,\tau)+D_{+}(\tau)\varDelta-R_{+}(0,\tau)\|\geq\eta\right]\nonumber \\
 &  & +\Pr\left[\sup_{\tau\in\left[\underline{\tau},\overline{\tau}\right]}\|R_{+}(0,\tau)\|>M\left\{ \inf_{\tau\in\left[\underline{\tau},\overline{\tau}\right]}\mathrm{mineig}\left(D_{+}(\tau)\right)\right\} -2\eta\right].\label{eq:step1-conclusion}
\end{eqnarray}
By Lemma \ref{lem:R_0 linearization}, we have $\sup_{\tau\in\left[\underline{\tau},\overline{\tau}\right]}\|R_{+}(0,\tau)\|=O_{p}(1)$.
It follows from this result, (\ref{eq:step1-conclusion}), and (\ref{eq:asymptotic linearization})
that for all $\eta,\varepsilon>0$, there exists an $M>0$ such that,
for all sufficiently large $n$, 
\begin{equation}
\Pr\left[\inf_{\|\varDelta\|=M,\,\tau\in\left[\underline{\tau},\overline{\tau}\right]}-\varDelta^{\top}R_{+}(\varDelta,\tau)<M\eta\right]<\varepsilon.\label{eq:step1-final-bound}
\end{equation}

By (\ref{eq:convexity}), for all $\|\varDelta\|=M$, $\tau\in\left[\underline{\tau},\overline{\tau}\right]$
and $\lambda\geq1$, 
\begin{equation}
-\left(\frac{\varDelta}{M}\right)^{\top}R_{+}(\varDelta,\tau)\leq-\left(\frac{\varDelta}{M}\right)^{\top}R_{+}(\lambda\varDelta,\tau)\leq\|R_{+}(\lambda\varDelta,\tau)\|,\label{eq:step2-chain}
\end{equation}
and therefore, for all $M>0$ and $\eta>0$, 
\begin{equation}
\Pr\left[\inf_{\|\varDelta\|\geq M,\,\tau\in\left[\underline{\tau},\overline{\tau}\right]}\|R_{+}(\varDelta,\tau)\|<\eta\right]\leq\Pr\left[\inf_{\|\varDelta\|=M,\,\tau\in\left[\underline{\tau},\overline{\tau}\right]}-\varDelta^{\top}R_{+}(\varDelta,\tau)<M\eta\right].\label{eq:R_+ inf bound}
\end{equation}
It follows from the above inequality and (\ref{eq:step1-final-bound})
that for all $\varepsilon,\eta>0$, there exists an $M>0$ such that,
for all sufficiently large $n$, 
\begin{equation}
\Pr\left[\inf_{\|\varDelta\|\geq M,\,\tau\in\left[\underline{\tau},\overline{\tau}\right]}\|R_{+}(\varDelta,\tau)\|<\eta\right]<\varepsilon.\label{eq:step2-final}
\end{equation}

Clearly, we have 
\begin{eqnarray}
 &  & \Pr\left[\sup_{\tau\in\left[\underline{\tau},\overline{\tau}\right]}\|\varDelta_{+}\left(\tau\right)\|>M\right]\nonumber \\
 & \leq & \Pr\left[\sup_{\tau\in\left[\underline{\tau},\overline{\tau}\right]}\|\varDelta_{+}\left(\tau\right)\|>M,\,\sup_{\tau\in\left[\underline{\tau},\overline{\tau}\right]}\|R_{+}(\varDelta_{+}\left(\tau\right),\tau)\|<\eta\right]+\Pr\left[\sup_{\tau\in\left[\underline{\tau},\overline{\tau}\right]}\|R_{+}(\varDelta_{+}\left(\tau\right),\tau)\|\geq\eta\right]\nonumber \\
 & \leq & \Pr\left[\inf_{\|\varDelta\|\geq M,\,\tau\in\left[\underline{\tau},\overline{\tau}\right]}\|R_{+}(\varDelta,\tau)\|<\eta\right]+\Pr\left[\sup_{\tau\in\left[\underline{\tau},\overline{\tau}\right]}\|R_{+}(\varDelta_{+}\left(\tau\right),\tau)\|\geq\eta\right].\label{eq:step3}
\end{eqnarray}
It follows from the above inequality, (\ref{eq:step2-final}) and
the uniform-in-$\tau$ bound $\|R_{+}(\varDelta_{+}\left(\tau\right),\tau)\|=o_{p}(1)$
that $\varDelta_{+}\left(\tau\right)=O_{p}(1)$, uniformly in $\tau\in\left[\underline{\tau},\overline{\tau}\right]$.

For all $\varepsilon>0$ and $M>0$, by the union bound, 
\begin{eqnarray}
 &  & \Pr\left[\sup_{\tau\in\left[\underline{\tau},\overline{\tau}\right]}\left\Vert R_{+}(\varDelta_{+}\left(\tau\right),\tau)+D_{+}(\tau)\varDelta_{+}\left(\tau\right)-R_{+}(0,\tau)\right\Vert >\varepsilon\right]\nonumber \\
 & \leq & \Pr\left[\sup_{\tau\in\left[\underline{\tau},\overline{\tau}\right]}\left\Vert \varDelta_{+}\left(\tau\right)\right\Vert >M\right]+\Pr\left[\sup_{\|\varDelta\|\leq M,\,\tau\in\left[\underline{\tau},\overline{\tau}\right]}\left\Vert R_{+}(\varDelta,\tau)+D_{+}(\tau)\varDelta-R_{+}(0,\tau)\right\Vert >\varepsilon\right].\nonumber \\
 &  & \label{eq:step3-final}
\end{eqnarray}
Now it follows from this inequality, $\sup_{\tau\in\left[\underline{\tau},\overline{\tau}\right]}\left\Vert \varDelta_{+}\left(\tau\right)\right\Vert =O_{p}(1)$
and (\ref{eq:asymptotic linearization}) that 
\[
R_{+}(\varDelta_{+}\left(\tau\right),\tau)+D_{+}(\tau)\varDelta_{+}\left(\tau\right)-R_{+}(0,\tau)=o_{p}\left(1\right),
\]
uniformly in $\tau\in\left[\underline{\tau},\overline{\tau}\right]$.
The conclusion follows from this result and the uniform-in-$\tau$
bound $\left\Vert R_{+}(\varDelta_{+}\left(\tau\right),\tau)\right\Vert =o_{p}\left(1\right)$. 
\end{proof}

\begin{proof}[Proof of Theorem \ref{thm:sharp QRD}]
We have $\mathrm{e}^{\top}_{p+1,1}D^{-1}_{+}\left(\tau\right)K_{+,p}\left(X_{i}/h\right)=\bar{W}_{+,p,i}/\varphi_{+}\left(\tau\right)$
and $\mathrm{e}^{\top}_{p+1,1}\Lambda^{-1}_{+,p}\pi_{+}=\omega^{0,2}_{p,0}$.
The second equality implies that $\mathrm{e}^{\top}_{p+1,1}D^{-1}_{+}\left(\tau\right)\pi_{+}\varphi/\omega^{0,2}_{p,0}=1/\varphi_{+}\left(\tau\right)$.
Moreover, we have $\mathrm{e}^{\top}_{p+1,1}\Lambda^{-1}_{+,p}\int^{1}_{0}K_{+,p}\left(u\right)u^{p+1}\mathrm{d}u=\omega^{p+1,1}_{+,p,0}$,
so that $\mathrm{e}^{\top}_{p+1,1}D^{-1}_{+}\left(\tau\right)\bar{\mathscr{B}}_{+}\left(\tau\right)$
equals the quantity $\mathscr{B}_{+,p}\left(\tau\right)$ defined
by \eqref{eq:Delta_+ linearization} below. Then, by Lemmas \ref{lem:R_delta linearization}--\ref{lem:delta linearization},
we have 
\begin{eqnarray}
\mathrm{e}^{\top}_{p+1,1}\varDelta_{+}\left(\tau\right) & = & \frac{1}{\sqrt{nh}}\sum_{i}\bar{W}_{+,p,i}\left(\frac{U_{i}\left(\tau\right)}{\varphi_{+}\left(\tau\right)}\right)-\left(\frac{\gamma_{+}\left(\tau\right)}{\varphi_{+}\left(\tau\right)}\right)^{\top}\left\{ \frac{1}{\sqrt{nh}}\sum_{i}\bar{W}_{p,i}\xi_{i}\right\} \nonumber \\
 &  & +\sqrt{nh}h^{p+1}\mathscr{B}_{+,p}\left(\tau\right)+o_{p}\left(1\right),\mbox{ where}\nonumber \\
\mathscr{B}_{+,p}\left(\tau\right) & \coloneqq & \frac{\omega^{p+1,1}_{+,p,0}\kappa^{\left(p+1\right)}_{Y,+}\left(\tau\right)}{\left(p+1\right)!}-\left(\frac{\gamma_{+}\left(\tau\right)}{\varphi_{+}\left(\tau\right)}\right)^{\top}\left(\frac{\mu^{\left(p+1\right)}_{Z,+}\omega^{p+1,1}_{+,p,0}-\mu^{\left(p+1\right)}_{Z,-}\omega^{p+1,1}_{-,p,0}}{\left(p+1\right)!}\right),\label{eq:Delta_+ linearization}
\end{eqnarray}
uniformly in $\tau\in\left[\underline{\tau},\overline{\tau}\right]$.
By similar arguments, we obtain a linearization for $\mathrm{e}^{\top}_{p+1,1}\varDelta_{-}\left(\tau\right)$.
Combining these results and using (\ref{eq:theta_hat to Delta}),
we get 
\begin{equation}
\sqrt{nh}\left(\widehat{\vartheta}^{\mathit{eb}}_{p}\left(\tau\right)-\vartheta\left(\tau\right)-h^{p+1}\mathscr{B}_{\star,p}\left(\tau\right)\right)=\mathbb{Z}_{p,n}\left(\tau\right)+o_{p}\left(1\right),\label{eq:uniform linearization QRD}
\end{equation}
uniformly in $\tau\in\left[\underline{\tau},\overline{\tau}\right]$,
where 
\begin{equation}
\phi_{i}\left(\tau\right)\coloneqq\left(\frac{\bar{W}_{+,p,i}}{\varphi_{+}\left(\tau\right)}-\frac{\bar{W}_{-,p,i}}{\varphi_{-}\left(\tau\right)}\right)U_{i}\left(\tau\right)-\bar{W}_{p,i}\xi^{\top}_{i}\gamma\left(\tau\right)\textrm{ and }\mathbb{Z}_{p,n}\left(\tau\right)\coloneqq\frac{1}{\sqrt{nh}}\sum_{i}\phi_{i}\left(\tau\right).\label{eq:phi_i definition}
\end{equation}

We now upgrade the uniform linearization above to convergence in distribution
in $\ell^{\infty}\left[\underline{\tau},\overline{\tau}\right]$ via
the changing-class Donsker theorem of \citet[Theorem 19.28]{VanDerVaart1998}.
Define 
\begin{multline*}
g_{7}\left(y,x,z\mid\tau\right)\coloneqq\\
\frac{1}{\varphi\sqrt{h}}\left\{ \left(\frac{\mathcal{K}_{+,p,0}\left(x/h\right)\mathbbm{1}\left(x>0\right)}{\varphi_{+}\left(\tau\right)}-\frac{\mathcal{K}_{-,p,0}\left(x/h\right)\mathbbm{1}\left(x<0\right)}{\varphi_{-}\left(\tau\right)}\right)\left\{ \tau-\mathbbm{1}\left(y\leq Q_{Y\mid X}\left(\tau\mid x\right)\right)\right\} \right.\\
\left.-\left(\mathcal{K}_{+,p,0}\left(\frac{x}{h}\right)\mathbbm{1}\left(x>0\right)-\mathcal{K}_{-,p,0}\left(\frac{x}{h}\right)\mathbbm{1}\left(x<0\right)\right)\left(z-\mu_{Z}\left(x\right)\right)^{\top}\gamma\left(\tau\right)\right\} ,
\end{multline*}
so that $g_{7}\left(Y_{i},X_{i},Z_{i}\mid\tau\right)=\phi_{i}\left(\tau\right)/\sqrt{h}$.
Let $\mathscr{G}_{7}\coloneqq\left\{ g_{7}\left(\cdot\mid\tau\right):\tau\in\left[\underline{\tau},\overline{\tau}\right]\right\} $.
Since $\mathrm{E}\left[U\left(\tau\right)\mid X\right]=0$ and $\mathrm{E}\left[\xi\mid X\right]=0$,
$\mathrm{E}\left[\phi\left(\tau\right)\right]=0$, and hence $\mathbb{G}_{n}g_{7}\left(\cdot\mid\tau\right)=\mathbb{Z}_{p,n}\left(\tau\right)$.
We take the index-set semimetric to be the Euclidean metric, under
which $\left[\underline{\tau},\overline{\tau}\right]$ is totally
bounded.

Define the envelope 
\begin{eqnarray}
G_{7}\left(y,x,z\right) & \coloneqq & \frac{1}{\varphi\sqrt{h}}\left(\left|\mathcal{K}_{+,p,0}\left(\frac{x}{h}\right)\right|+\left|\mathcal{K}_{-,p,0}\left(\frac{x}{h}\right)\right|\right)\nonumber \\
 &  & \times\left\{ \left(\inf_{\tau\in\left[\underline{\tau},\overline{\tau}\right]}\min\left\{ \varphi_{+}\left(\tau\right),\varphi_{-}\left(\tau\right)\right\} \right)^{-1}+\left\Vert z-\mu_{Z}\left(x\right)\right\Vert \sup_{\tau\in\left[\underline{\tau},\overline{\tau}\right]}\left\Vert \gamma\left(\tau\right)\right\Vert \right\} .\label{eq:G_7 definition}
\end{eqnarray}
Then $\sup_{\tau\in\left[\underline{\tau},\overline{\tau}\right]}\left|g_{7}\left(\cdot\mid\tau\right)\right|\leq G_{7}$.
A change of variables yields $\mathbb{P}G^{2}_{7}=O\left(1\right)$
and $\mathbb{P}G^{2+\delta}_{7}=O\left(h^{-\delta/2}\right)$. By
Markov's inequality, 
\[
\mathbb{P}G^{2}_{7}\mathbbm{1}\left(G_{7}>\eta\sqrt{n}\right)\leq\mathbb{P}G^{2+\delta}_{7}\left(\eta\sqrt{n}\right)^{-\delta}=O\left(\left(nh\right)^{-\delta/2}\right),
\]
for every $\eta>0$. This verifies the Lindeberg envelope condition.

The class (\ref{eq:indicator class 2}) is VC-subgraph with index
$2$. By \citet[Lemma 9.6]{Kosorok2007}, the function class $\left\{ \left(x,z\right)\mapsto\left(z-\mu_{Z}\left(x\right)\right)^{\top}b:b\in\mathbb{R}^{d_{z}}\right\} $
is VC-subgraph with index no more than $d_{z}+2$. By \citet[Theorem 9.3 and Lemma 9.9(vi)]{Kosorok2007}
and \citet[Lemma A.6]{chernozhukov2014gaussian}, $\mathscr{G}_{7}$
is VC-type with the envelope $G_{7}$ and characteristics independent
of $n$. It follows that $\mathscr{G}_{7}$ has a bounded uniform
entropy integral. Hence the convergence condition for the uniform
entropy integral in the statement of \citet[Theorem 19.28]{VanDerVaart1998}
is satisfied.

Next, we show that the covariance convergence condition is satisfied.
For any $\left(\tau,\tau'\right)\in\left[\underline{\tau},\overline{\tau}\right]^{2}$,
we write 
\begin{eqnarray*}
 &  & \mathbb{P}\left(g_{7}\left(\cdot\mid\tau\right)g_{7}\left(\cdot\mid\tau'\right)\right)\\
 & = & \sum_{s\in\left\{ +,-\right\} }\left\{ \mathrm{E}\left[\frac{1}{h}\left(\frac{\bar{W}^{2}_{s,p}}{\varphi_{s}\left(\tau\right)\varphi_{s}\left(\tau'\right)}\right)U\left(\tau\right)U\left(\tau'\right)\right]+\mathrm{E}\left[\frac{1}{h}\bar{W}^{2}_{s,p}\left(\xi^{\top}\gamma\left(\tau\right)\right)\left(\xi^{\top}\gamma\left(\tau'\right)\right)\right]\right.\\
 &  & \left.-\mathrm{E}\left[\frac{1}{h}\left(\frac{\bar{W}^{2}_{s,p}}{\varphi_{s}\left(\tau\right)}\right)U\left(\tau\right)\xi^{\top}\gamma\left(\tau'\right)\right]-\mathrm{E}\left[\frac{1}{h}\left(\frac{\bar{W}^{2}_{s,p}}{\varphi_{s}\left(\tau'\right)}\right)U\left(\tau'\right)\xi^{\top}\gamma\left(\tau\right)\right]\right\} .
\end{eqnarray*}
Note that $\mathrm{E}\left[U\left(\tau\right)U\left(\tau'\right)\mid X\right]=\min\left\{ \tau,\tau'\right\} -\tau\tau'$,
$\mathrm{E}\left[\xi U\left(\tau\right)\mid X\right]=\mu_{ZU\left(\tau\right)}\left(X\right)$,
and $\mathrm{E}\left[\xi\xi^{\top}\mid X\right]=\mathrm{Var}\left[Z\mid X\right]$.
By a change of variables, $\mathbb{P}\left(g_{7}\left(\cdot\mid\tau\right)g_{7}\left(\cdot\mid\tau'\right)\right)$
converges to $\mathscr{C}_{\star,p}\left(\tau,\tau'\right)$.

Write $\phi\left(\tau\right)=\phi_{u}\left(\tau\right)-\phi_{z}\left(\tau\right)$
with $\phi_{u}\left(\tau\right)\coloneqq\left(\bar{W}_{+,p}/\varphi_{+}\left(\tau\right)-\bar{W}_{-,p}/\varphi_{-}\left(\tau\right)\right)U\left(\tau\right)$
and $\phi_{z}\left(\tau\right)\coloneqq\bar{W}_{p}\xi^{\top}\gamma\left(\tau\right)$.
By adding and subtracting $\left(\bar{W}_{+,p}/\varphi_{+}\left(\tau\right)-\bar{W}_{-,p}/\varphi_{-}\left(\tau\right)\right)U\left(\tau'\right)$
in $\phi_{u}\left(\tau\right)-\phi_{u}\left(\tau'\right)$, we have
\begin{multline}
\mathrm{E}\left[\frac{1}{h}\left(\phi_{u}\left(\tau\right)-\phi_{u}\left(\tau'\right)\right)^{2}\right]\apprle\\
\sum_{s\in\left\{ +,-\right\} }\left\{ \varphi^{-2}_{s}\left(\tau\right)\mathrm{E}\left[\frac{1}{h}\bar{W}^{2}_{s,p}\left(U\left(\tau\right)-U\left(\tau'\right)\right)^{2}\right]+\left(\varphi^{-1}_{s}\left(\tau\right)-\varphi^{-1}_{s}\left(\tau'\right)\right)^{2}\mathrm{E}\left[\frac{1}{h}\bar{W}^{2}_{s,p}U^{2}\left(\tau'\right)\right]\right\} .\label{eq:phi_u inequality}
\end{multline}
By (\ref{eq:a.b indicator inequalities}), we have $\mathrm{E}\left[\left(U\left(\tau\right)-U\left(\tau'\right)\right)^{2}\mid X\right]\leq\left|\tau-\tau'\right|$.
Combining this with $\inf_{\tau\in\left[\underline{\tau},\overline{\tau}\right]}\varphi_{s}\left(\tau\right)>0$,
the continuity of $\varphi^{-1}_{s}$, $\left|U\left(\tau'\right)\right|\leq1$,
$\mathrm{E}\left[h^{-1}\bar{W}^{2}_{s,p}\right]=O\left(1\right)$
and (\ref{eq:phi_u inequality}), we obtain 
\begin{equation}
\sup_{\tau,\tau'\in\left[\underline{\tau},\overline{\tau}\right]:\left|\tau-\tau'\right|<\delta_{n}}\mathrm{E}\left[\frac{1}{h}\left(\phi_{u}\left(\tau\right)-\phi_{u}\left(\tau'\right)\right)^{2}\right]=o\left(1\right),\label{eq:phi_u continuity}
\end{equation}
for every $\delta_{n}\downarrow0$. Similarly, since 
\begin{equation}
\mathrm{E}\left[\frac{1}{h}\left(\phi_{z}\left(\tau\right)-\phi_{z}\left(\tau'\right)\right)^{2}\right]\leq\left(\mathrm{E}\left[\frac{1}{h}\bar{W}^{2}_{p}\left\Vert \xi\right\Vert ^{2}\right]\right)\left\Vert \gamma\left(\tau\right)-\gamma\left(\tau'\right)\right\Vert ^{2},\label{eq:phi_z inequality}
\end{equation}
the continuity of $\tau\mapsto\gamma\left(\tau\right)$ and $\mathrm{E}\left[h^{-1}\bar{W}^{2}_{p}\left\Vert \xi\right\Vert ^{2}\right]=O\left(1\right)$
yield that (\ref{eq:phi_u continuity}) also holds for $\phi_{z}$.
We have verified condition (19.27) of \citet{VanDerVaart1998} directly
under the Euclidean metric.

Now the hypotheses of \citet[Theorem 19.28]{VanDerVaart1998} are
all verified for the changing class $\mathscr{G}_{7}$. Therefore,
$\mathbb{Z}_{p,n}\rightsquigarrow\mathbb{Z}_{p}$ in $\ell^{\infty}\left[\underline{\tau},\overline{\tau}\right]$,
where $\mathbb{Z}_{p}$ is the centered tight Gaussian process with
covariance kernel $\mathscr{C}_{\star,p}$. The first conclusion follows
from this result, the uniform stochastic linearization (\ref{eq:uniform linearization QRD})
and Slutsky's lemma. The second conclusion follows from the first
one and the continuous mapping theorem applied to the evaluation functional. 
\end{proof}

\section{Proof of Theorem \ref{thm:QRD bootstrap}}

Let $\Pr_{*}\left[\cdot\right]$ denote probability under the resampling
distribution (the conditional distribution given the original sample).
The following notation follows \citet{Marmer:2012uq}. Let $O^{*}_{p}$
and $o^{*}_{p}$ denote the bootstrap analogues of $O_{p}$ and $o_{p}$.
For a sequence of random variables $\lambda_{n}$ and a deterministic
positive sequence $\alpha_{n}$, we write $\lambda_{n}=O^{*}_{p}\left(\alpha_{n}\right)$
if for all $\varepsilon>0$, there exists some $M>0$ such that $\limsup_{n\uparrow\infty}\Pr\left[\Pr_{*}\left[\left|\lambda_{n}/\alpha_{n}\right|\geq M\right]>\varepsilon\right]<\varepsilon$.
We write $\lambda_{n}=o^{*}_{p}\left(\alpha_{n}\right)$ if for all
$\varepsilon>0$, $\Pr_{*}\left[\left|\lambda_{n}/\alpha_{n}\right|>\varepsilon\right]\rightarrow_{p}0$
as $n\uparrow\infty$ (equivalently, $\Pr_{*}\left[\left|\lambda_{n}/\alpha_{n}\right|>\varepsilon\right]<\varepsilon$
wpa1). The usual sum and product rules for $O_{p}$ and $o_{p}$ (e.g.,
\citealp[Section 2.2]{VanDerVaart1998}) hold for $O^{*}_{p}$ and
$o^{*}_{p}$. $\lambda_{n}=O^{*}_{p}\left(\alpha_{n}\right)$ for
$\alpha_{n}\downarrow0$ implies $\lambda_{n}=o^{*}_{p}\left(1\right)$.
Further useful properties are summarized in the following lemma, which
extends \citet[Lemmas S.1 and S.2]{Marmer:2012uq} slightly.
\begin{lem}
\label{lem:O_p_star properties}(i) $\lambda_{n}=o_{p}\left(\alpha_{n}\right)$
implies $\lambda_{n}=o^{*}_{p}\left(\alpha_{n}\right)$. (ii) $\lambda_{n}=O_{p}\left(\alpha_{n}\right)$
implies $\lambda_{n}=O^{*}_{p}\left(\alpha_{n}\right)$. (iii) $\mathrm{E}_{*}\left[\left|\lambda_{n}\right|^{q}\right]=O_{p}\left(\alpha^{q}_{n}\right)$
for some $q\in\left[1,2\right]$ implies $\lambda_{n}=O^{*}_{p}\left(\alpha_{n}\right)$.
(iv) $\lambda_{n}=o^{*}_{p}\left(1\right)$ implies  $\mathrm{E}_{*}\left[\left|\lambda_{n}\right|\land M\right]=o_{p}\left(1\right)$,
for any $M>0$.
\end{lem}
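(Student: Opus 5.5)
The plan is to handle the four parts in turn. All of them rest on the definitions of $O^{*}_{p}$ and $o^{*}_{p}$ together with elementary probability inequalities, applied first conditionally on the data and then unconditionally.

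For (i), fix $\varepsilon>0$ and set $p_{n}\coloneqq\Pr_{*}\left[\left|\lambda_{n}/\alpha_{n}\right|>\varepsilon\right]$. If $\lambda_{n}$ is a function of the original sample only, then $p_{n}$ equals the indicator $\mathbbm{1}\left(\left|\lambda_{n}/\alpha_{n}\right|>\varepsilon\right)$. More generally, when $\lambda_{n}$ may involve the bootstrap draws and $o_{p}$ is understood with respect to the joint law, Markov's inequality and the law of iterated expectations give, for every $\eta>0$,
\[
\Pr\left[p_{n}>\eta\right]\leq\eta^{-1}\mathrm{E}\left[p_{n}\right]=\eta^{-1}\Pr\left[\left|\lambda_{n}/\alpha_{n}\right|>\varepsilon\right]\rightarrow0.
\]
This is exactly $p_{n}\rightarrow_{p}0$. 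Part (ii) follows the same route. Given $\varepsilon$, choose $M$ so that $\limsup_{n}\Pr\left[\left|\lambda_{n}/\alpha_{n}\right|\geq M\right]<\varepsilon^{2}$. Markov's inequality then gives $\Pr\left[\Pr_{*}\left[\left|\lambda_{n}/\alpha_{n}\right|\geq M\right]>\varepsilon\right]\leq\varepsilon^{-1}\varepsilon^{2}=\varepsilon$ in the limit, and shrinking $\varepsilon$ slightly makes the inequality strict as the definition requires.

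For (iii), apply the conditional Markov inequality: $\Pr_{*}\left[\left|\lambda_{n}/\alpha_{n}\right|\geq M\right]\leq M^{-q}\mathrm{E}_{*}\left[\left|\lambda_{n}\right|^{q}\right]/\alpha^{q}_{n}$. Since $\mathrm{E}_{*}\left[\left|\lambda_{n}\right|^{q}\right]/\alpha^{q}_{n}=O_{p}(1)$, there is a constant $C$ such that this ratio is at most $C$ with probability at least $1-\varepsilon$ for large $n$. Choosing $M$ with $CM^{-q}<\varepsilon$ then yields the $O^{*}_{p}$ statement. The restriction $q\in[1,2]$ is not actually needed here; any $q>0$ works.

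For (iv), fix $M>0$ and $\varepsilon\in(0,M)$, and split the conditional expectation on the event $\left\{ \left|\lambda_{n}\right|>\varepsilon\right\}$:
\[
\mathrm{E}_{*}\left[\left|\lambda_{n}\right|\land M\right]\leq\varepsilon+M\cdot\Pr_{*}\left[\left|\lambda_{n}\right|>\varepsilon\right].
\]
By the definition of $o^{*}_{p}(1)$, the second term is $o_{p}(1)$. Hence $\Pr\left[\mathrm{E}_{*}\left[\left|\lambda_{n}\right|\land M\right]>2\varepsilon\right]\rightarrow0$, and since $\varepsilon$ is arbitrary the claim follows.

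None of the steps is hard. The only point that needs care is in (i) and (ii): one has to state clearly that $o_{p}$ and $O_{p}$ refer to the joint law of the data and the bootstrap draws, so that the law of iterated expectations applies. Once that convention is fixed, Markov's inequality carries the rest of the argument.
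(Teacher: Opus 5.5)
Your proposal is correct and follows the paper's proof essentially line by line. For (i) and (ii) you use Markov's inequality together with $\mathrm{E}\bigl[\Pr_{*}[\cdot]\bigr]=\Pr[\cdot]$, choosing $M$ so that the unconditional tail is below $\varepsilon^{2}$, and for (iv) you use the split $\mathrm{E}_{*}\left[\left|\lambda_{n}\right|\land M\right]\leq\varepsilon+M\,\Pr_{*}\left[\left|\lambda_{n}\right|>\varepsilon\right]$. For (iii) the paper simply defers to Lemma S.2 of \citet{Marmer:2012uq}; your conditional Markov argument is the standard proof of that step, and you are right that it works for any $q>0$.
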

\begin{proof}[Proof of Lemma \ref{lem:O_p_star properties}]
By Markov's inequality, for all $\varepsilon,\delta>0$,
\[
\Pr\left[\mathrm{Pr}_{*}\left[\left|\lambda_{n}/\alpha_{n}\right|>\varepsilon\right]>\delta\right]\leq\frac{\Pr\left[\left|\lambda_{n}/\alpha_{n}\right|>\varepsilon\right]}{\delta}.
\]
Part (i) follows from this result and $\Pr\left[\left|\lambda_{n}/\alpha_{n}\right|>\varepsilon\right]\rightarrow0$
if $\lambda_{n}=o_{p}\left(\alpha_{n}\right)$. Similarly, by Markov's
inequality, for all $M,\varepsilon>0$,
\[
\Pr\left[\mathrm{Pr}_{*}\left[\left|\lambda_{n}/\alpha_{n}\right|\geq M\right]>\varepsilon\right]\leq\frac{\Pr\left[\left|\lambda_{n}/\alpha_{n}\right|\geq M\right]}{\varepsilon}.
\]
Part (ii) follows from this result by choosing $M>0$ such that $\limsup_{n\uparrow\infty}\Pr\left[\left|\lambda_{n}/\alpha_{n}\right|\geq M\right]<\varepsilon^{2}$.
Part (iii) follows from essentially the same arguments as those in
the proof of \citet[Lemma S.2]{Marmer:2012uq}. For Part (iv), note
that for all $\varepsilon>0$, $\mathrm{E}_{*}\left[\left|\lambda_{n}\right|\land M\right]\leq\varepsilon/2+M\cdot\mathrm{Pr}_{*}\left[\left|\lambda_{n}\right|>\varepsilon/2\right]$.
Now, $\lambda_{n}=o^{*}_{p}\left(1\right)$ implies that $\mathrm{Pr}_{*}\left[\left|\lambda_{n}\right|>\varepsilon/2\right]\leq\varepsilon/\left(2M\right)$
wpa1 and it follows that $\mathrm{E}_{*}\left[\left|\lambda_{n}\right|\land M\right]\leq\varepsilon$
wpa1.
\end{proof}

Let $\bar{W}^{*}_{+,p,i}\coloneqq\mathrm{e}^{\top}_{p+1,1}\left(\varphi\Lambda_{+,p}\right)^{-1}K_{+,p}\left(X^{*}_{i}/h\right)$
denote the bootstrap analogue of $\bar{W}_{+,p,i}$. Let $\Pi^{*}_{+,p}$
be defined by the right-hand side of \eqref{eq:PI -} with $X_{i}$
replaced by $X^{*}_{i}$. Similarly, let $\left(\bar{W}^{*}_{-,p,i},\Pi^{*}_{-,p}\right)$
be the bootstrap analogues of $\left(\bar{W}_{-,p,i},\Pi_{-,p}\right)$
and $\bar{W}^{*}_{p,i}\coloneqq\bar{W}^{*}_{+,p,i}-\bar{W}^{*}_{-,p,i}$.
By Lemma \ref{lem:O_p_star properties}(iii) and Markov's inequality,
$\left\Vert \Pi^{*}_{+,p}-\Pi_{+,p}\right\Vert =o^{*}_{p}\left(1\right)$.
By this result, $\left\Vert \Pi_{+,p}-\varphi\Lambda_{+,p}\right\Vert =o_{p}\left(1\right)$,
and Lemma \ref{lem:O_p_star properties}(i), we have $\left\Vert \Pi^{*}_{+,p}-\varphi\Lambda_{+,p}\right\Vert =o^{*}_{p}\left(1\right)$.
Now this result and \eqref{eq:A_inv - B_inv} yield
\begin{equation}
\left\Vert \left(\Pi^{*}_{+,p}\right)^{-1}-\left(\varphi\Lambda_{+,p}\right)^{-1}\right\Vert =o^{*}_{p}\left(1\right).\label{eq:PI_star_inv - PI_inv rate}
\end{equation}
Similar results hold for $\Pi^{*}_{-,p}$. The following result is
a bootstrap analogue of Lemma \ref{lem: reg weights basic}.
\begin{lem}
\label{lem:bootstrap reg weights basic}Let $\left\{ \left(A^{*}_{i},X^{*}_{i}\right)\right\} ^{n}_{i=1}$
denote a bootstrap sample from $\left\{ \left(A_{i},X_{i}\right)\right\} ^{n}_{i=1}$.
Suppose that all assumptions and conditions in the statement of Lemma
\ref{lem: reg weights basic} hold. Then, for every $k\in\mathbb{N}$,
the following results hold, with $\delta$ and $r$ as defined there.
(i) 
\[
\mathrm{E}_{*}\left[\frac{1}{h}\left(\bar{W}^{*}_{+,p}\right)^{k}A^{*}\right]=\frac{\mu_{A,+}\omega^{0,k}_{+,p,0}}{\varphi^{k-1}}+o_{p}\left(1\right).
\]
(ii) 
\[
\frac{1}{nh}\sum_{i}W^{*}_{+,p,i}\mu_{A}\left(X^{*}_{i}\right)=\mu_{A,+}+\frac{\mu^{\left(p+1\right)}_{A,+}}{\left(p+1\right)!}\omega^{p+1,1}_{+,p,0}h^{p+1}+o^{*}_{p}\left(h^{p+1}\right).
\]
(iii) 
\[
\frac{1}{nh}\sum_{i}\left(W^{*}_{+,p,i}\right)^{k}A^{*}_{i}-\mathrm{E}_{*}\left[\frac{1}{h}\left(\bar{W}^{*}_{+,p}\right)^{k}A^{*}\right]=o^{*}_{p}\left(1\right).
\]
(iv) $\max_{i}\left|W^{*}_{+,p,i}A^{*}_{i}\right|=o^{*}_{p}\left(\left(nh\right)^{1/r}\right)$.
Similar results with ``$+$'' replaced by ``$-$'' also hold.
\end{lem}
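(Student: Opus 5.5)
The plan is to redo the proof of Lemma \ref{lem: reg weights basic} part by part. Each population expectation there becomes a resampling expectation $\mathrm{E}_{*}$, that is, a sample average. Each $O_{p}$ or $o_{p}$ statement about bootstrap sums is then obtained from a conditional moment bound together with Lemma \ref{lem:O_p_star properties}(iii)--(iv). One preliminary fact is used throughout: by \eqref{eq:PI_star_inv - PI_inv rate}, $\|(\Pi^{*}_{\pm,p})^{-1}\|=O^{*}_{p}(1)$. This lets us replace $W^{*}_{+,p,i}$ by $\bar W^{*}_{+,p,i}$ at the cost of an $o^{*}_{p}(1)$ factor multiplying bootstrap averages of the type $(nh)^{-1}\sum_i\|K_{+,p}(X^{*}_i/h)\|^{k}|A^{*}_i|$. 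These averages are $O^{*}_{p}(1)$ by Lemma \ref{lem:O_p_star properties}(iii) with $q=1$, since their $\mathrm{E}_{*}$ is the corresponding original-sample average, which is $O_{p}(1)$ by Markov's inequality.

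Part (i) is immediate. We have $\mathrm{E}_{*}[h^{-1}(\bar W^{*}_{+,p})^{k}A^{*}]=(nh)^{-1}\sum_i\bar W^{k}_{+,p,i}A_i$. By \eqref{eq:1+delta absolute moment bound} and Markov's inequality this equals $\mathrm{E}[h^{-1}\bar W^{k}_{+,p}A]+o_p(1)$, and Lemma \ref{lem: reg weights basic}(i) supplies the limit.

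For part (ii), we Taylor-expand $\mu_A(X^{*}_i)$ exactly as in \eqref{eq:mu Taylor}. The polynomial part is reproduced exactly by $W^{*}_{+,p,i}$, because the bootstrap weights satisfy the same reproducing identity in the resampled $X^{*}_i$. The remainder is handled as in \eqref{eq:smoothing bias decompose 2}. The uniform-continuity piece is bounded by $o(1)h^{p+1}$ times $(nh)^{-1}\sum_i|W^{*}_{+,p,i}(X^{*}_i/h)^{p+1}|=O^{*}_{p}(1)$. For the leading piece we need the bootstrap analogue of \eqref{eq:(X/h)^p+1 term}, namely $(nh)^{-1}\sum_i W^{*}_{+,p,i}(X^{*}_i/h)^{p+1}=\omega^{p+1,1}_{+,p,0}+o^{*}_{p}(1)$. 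This follows by centering at $\mathrm{E}_{*}$: the conditional variance is $O_p((nh)^{-1})$, so Chebyshev's inequality applies, and the $\mathrm{E}_{*}$ term converges by the original-sample result together with Lemma \ref{lem:O_p_star properties}(i).

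Part (iii) is where the main work lies, because $A$ has only $1+\delta$ conditional moments. We use the decomposition \eqref{eq:decompose 3}. The first term, $(nh)^{-1}\sum_i(W^{*k}_{+,p,i}-\bar W^{*k}_{+,p,i})A^{*}_i$, is $o^{*}_{p}(1)$ by the binomial expansion and the preliminary bound. For the centered term, the von Bahr--Esseen inequality is applied conditionally, since the bootstrap draws are i.i.d. under $\Pr_*$. This gives
\[
\mathrm{E}_{*}\Bigl|\tfrac{1}{nh}\textstyle\sum_i\bar W^{*k}_{+,p,i}A^{*}_i-\mathrm{E}_{*}[\cdot]\Bigr|^{1+\delta}\apprle (nh)^{-\delta}\cdot\tfrac{1}{nh}\textstyle\sum_i|\bar W_{+,p,i}^{k}A_i|^{1+\delta}.
\]
The right-hand average is $O_p(1)$ by Markov's inequality and boundedness of $\mu_{|A|^{1+\delta}}$, so the whole bound is $o_p(1)$. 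Lemma \ref{lem:O_p_star properties}(iii), applied with $q=1+\delta$ after rescaling, or conditional Markov directly, then yields $o^{*}_{p}(1)$.

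For part (iv), every bootstrap draw is one of the original observations, so $\max_i|W^{*}_{+,p,i}A^{*}_i|\le\|(\Pi^{*}_{+,p})^{-1}\|\max_{i}\|K_{+,p}(X_i/h)\||A_i|$. The first factor is $O^{*}_{p}(1)$ and the second is $o_p((nh)^{1/r})$ by \eqref{eq:max upper bound}, so Lemma \ref{lem:O_p_star properties}(i) gives the claim. The ``$-$'' versions follow symmetrically.
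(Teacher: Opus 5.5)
Your proposal is correct and follows the paper's proof essentially step for step. Part (i) uses $\mathrm{E}_{*}[h^{-1}(\bar W^{*}_{+,p})^{k}A^{*}]=(nh)^{-1}\sum_i\bar W^{k}_{+,p,i}A_i$ together with \eqref{eq:1+delta absolute moment bound} and Lemma \ref{lem: reg weights basic}(i). Part (ii) uses the Taylor expansion and a bootstrap analogue of \eqref{eq:(X/h)^p+1 term}. Part (iii) uses the decomposition \eqref{eq:decompose 3}, a conditional von Bahr--Esseen bound of order $O_p((nh)^{-\delta})$, and Lemma \ref{lem:O_p_star properties}(iii). Part (iv) uses the fact that every bootstrap draw is an original observation.

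One minor point to tidy is in (ii). The conditional Chebyshev step should be applied to $(nh)^{-1}\sum_i K_{+,p}(X^{*}_i/h)(X^{*}_i/h)^{p+1}$ only after you factor out the random matrix $(\Pi^{*}_{+,p})^{-1}$, which is then handled by \eqref{eq:PI_star_inv - PI_inv rate}; this is exactly how the paper arranges it.
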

\begin{proof}[Proof of Lemma \ref{lem:bootstrap reg weights basic}]
Note that the resampling distribution of $\left(A^{*},X^{*}\right)$
is the empirical distribution of $\left\{ \left(A_{i},X_{i}\right)\right\} ^{n}_{i=1}$.
It follows from \eqref{eq:1+delta absolute moment bound} and Lemma
\ref{lem: reg weights basic}(iii) that 
\[
\mathrm{E}_{*}\left[\frac{1}{h}\left(\bar{W}^{*}_{+,p}\right)^{k}A^{*}\right]-\mathrm{E}\left[\frac{1}{h}\bar{W}^{k}_{+,p}A\right]=\frac{1}{nh}\sum_{i}\bar{W}^{k}_{+,p,i}A_{i}-\mathrm{E}\left[\frac{1}{h}\bar{W}^{k}_{+,p}A\right]=o_{p}\left(1\right).
\]
Part (i) follows from this result and Lemma \ref{lem: reg weights basic}(i).

By Lemma \ref{lem:O_p_star properties}(iii) and Markov's inequality,
$\left(nh\right)^{-1}\sum_{i}\left\Vert K_{+,p}\left(X^{*}_{i}/h\right)\left(X^{*}_{i}/h\right)^{p+1}\right\Vert =O^{*}_{p}\left(1\right)$.
It then follows from this result and \eqref{eq:PI_star_inv - PI_inv rate}
that
\[
\frac{1}{nh}\sum_{i}\left|W^{*}_{+,p,i}\left(\frac{X^{*}_{i}}{h}\right)^{p+1}\right|\leq\left\Vert \left(\Pi^{*}_{+,p}\right)^{-1}\right\Vert \left(\frac{1}{nh}\sum_{i}\left\Vert K_{+,p}\left(\frac{X^{*}_{i}}{h}\right)\left(\frac{X^{*}_{i}}{h}\right)^{p+1}\right\Vert \right)=O^{*}_{p}\left(1\right).
\]
By \eqref{eq:PI_star_inv - PI_inv rate}, Lemma \ref{lem:O_p_star properties}(i),
and Lemma \ref{lem:O_p_star properties}(iii),
\begin{eqnarray*}
\frac{1}{nh}\sum_{i}W^{*}_{+,p,i}\left(\frac{X^{*}_{i}}{h}\right)^{p+1} & = & \mathrm{e}^{\top}_{p+1,1}\left\{ \left(\varphi\Lambda_{+,p}\right)^{-1}+o^{*}_{p}\left(1\right)\right\} \left(\varphi\int K_{+,p}\left(u\right)u^{p+1}\mathrm{d}u+o^{*}_{p}\left(1\right)\right)\\
 & = & \omega^{p+1,1}_{+,p,0}+o^{*}_{p}\left(1\right).
\end{eqnarray*}
Then, the conclusion in Part (ii) follows from these results and a
Taylor expansion similar to that in \eqref{eq:smoothing bias decompose 1}. 

To prove Part (iii), we use a decomposition similar to \eqref{eq:decompose 3}:
\begin{eqnarray}
\frac{1}{nh}\sum_{i}\left(W^{*}_{+,p,i}\right)^{k}A^{*}_{i}-\mathrm{E}_{*}\left[\frac{1}{h}\left(\bar{W}^{*}_{+,p}\right)^{k}A^{*}\right] & = & \frac{1}{nh}\sum_{i}\left(\left(W^{*}_{+,p,i}\right)^{k}-\left(\bar{W}^{*}_{+,p,i}\right)^{k}\right)A^{*}_{i}\nonumber \\
 &  & +\left\{ \frac{1}{nh}\sum_{i}\left(\bar{W}^{*}_{+,p,i}\right)^{k}A^{*}_{i}-\mathrm{E}_{*}\left[\frac{1}{h}\left(\bar{W}^{*}_{+,p}\right)^{k}A^{*}\right]\right\} .\nonumber \\
 &  & \label{eq:decompose bootstrap}
\end{eqnarray}
From arguments similar to those used for the first term on the right-hand
side of \eqref{eq:decompose 3}, together with \eqref{eq:PI_star_inv - PI_inv rate},
Lemma \ref{lem:O_p_star properties}(iii), and Markov's inequality,
the first term on the right-hand side is $o^{*}_{p}\left(1\right)$.
Then, we can obtain a bootstrap analogue of the inequality in \eqref{eq:1+delta absolute moment bound}
by using the same arguments. By computing the population moment and
applying Markov's inequality, $\left(nh\right)^{-1}\sum_{i}\left|\bar{W}^{k}_{+,p,i}A_{i}\right|^{1+\delta}=O_{p}\left(1\right)$.
Now we have 
\[
\mathrm{E}_{*}\left[\left|\frac{1}{nh}\sum_{i}\left(\bar{W}^{*}_{+,p,i}\right)^{k}A^{*}_{i}-\mathrm{E}_{*}\left[\frac{1}{h}\left(\bar{W}^{*}_{+,p}\right)^{k}A^{*}\right]\right|^{1+\delta}\right]=O_{p}\left(\left(nh\right)^{-\delta}\right),
\]
and by Lemma \ref{lem:O_p_star properties}(iii), the second term
on the right-hand side of \eqref{eq:decompose bootstrap} is also
$o^{*}_{p}\left(1\right)$. 

For Part (iv), note that since every bootstrap observation is one
of the original observations, $\underset{i}{\max}\left\Vert K_{+,p}\left(X^{*}_{i}/h\right)\right\Vert \left|A^{*}_{i}\right|$
is bounded from above by $\max_{i}\left\Vert K_{+,p}\left(X_{i}/h\right)\right\Vert \left|A_{i}\right|=o_{p}\left(\left(nh\right)^{1/r}\right)$.
The conclusion follows from this observation, Lemma \ref{lem:O_p_star properties}(i),
and \eqref{eq:PI_star_inv - PI_inv rate}.
\end{proof}

Let $\widehat{\kappa}^{*}_{+,p}\left(\tau\right)$ and $R^{*}_{+}\left(\varDelta,\tau\right)$
denote the bootstrap analogues of $\widehat{\kappa}_{+,p}\left(\tau\right)$
and $R_{+}\left(\varDelta,\tau\right)$. The balancing weights are
taken to be $\widetilde{w}^{*}_{p,i}$. Let $\tilde{Y}^{*}_{+,i}$
be the bootstrap analogue of $\tilde{Y}_{+,i}$. Let $\mathbb{P}^{*}_{n}g\coloneqq n^{-1}\sum_{i}g\left(Y^{*}_{i},X^{*}_{i},Z^{*}_{i}\right)$
denote the average under the empirical distribution of the bootstrap
sample. Let $\mathbb{G}^{*}_{n}g\coloneqq\sqrt{n}\left(\mathbb{P}^{*}_{n}-\mathbb{P}_{n}\right)g$.
The following result is the bootstrap analogue of Lemma \ref{lem:R_delta linearization}.
\begin{lem}
\label{lem:bootstrap analogue 1}Suppose that the assumptions in the
statement of Theorem \ref{thm:QRD bootstrap} hold. For all $M>0$,
\begin{equation}
\sup_{\|\varDelta\|\leq M,\,\tau\in\left[\underline{\tau},\overline{\tau}\right]}\left\Vert R^{*}_{+}(\varDelta,\tau)+D_{+}(\tau)\varDelta-R^{*}_{+}(0,\tau)\right\Vert =o^{*}_{p}(1).\label{eq:asymptotic linearization-1}
\end{equation}
\end{lem}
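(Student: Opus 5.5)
We mirror the proof of Lemma \ref{lem:R_delta linearization}, replacing population-level arguments with their resampling counterparts and $o_p$ with $o^{*}_{p}$. Since the bootstrap weights are the closed-form weights $\widetilde{w}^{*}_{p,i}$, we write $n\widetilde{w}^{*}_{p,i}=(1+V^{*\top}_{p,i}\widetilde{\lambda}^{*}_{p})/\bigl(n^{-1}\sum_{j}(1+V^{*\top}_{p,j}\widetilde{\lambda}^{*}_{p})\bigr)$. This is an exact linear representation, so no Taylor remainder in $\iota_{\varrho}$ is needed.

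The first step is to show $\widetilde{\lambda}^{*}_{p}=O^{*}_{p}((nh)^{-1/2})$ and $\max_{i}|V^{*\top}_{p,i}\widetilde{\lambda}^{*}_{p}|=o^{*}_{p}(1)$. The second claim follows from Lemma \ref{lem:bootstrap reg weights basic}(iv) with $r=2+\delta$. For the first, Lemma \ref{lem:bootstrap reg weights basic}(i),(iii) give $(nh)^{-1}\sum_{i}V^{*}_{p,i}V^{*\top}_{p,i}=(\omega^{0,2}_{p,0}/\varphi)\mu_{\bar{Z}\bar{Z}^{\top},\pm}+o^{*}_{p}(1)$, which is nonsingular. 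We also need $(nh)^{-1/2}\sum_{i}V^{*}_{p,i}=O^{*}_{p}(1)$. To get this, decompose $Z^{*}_{i}=\mu_{Z}(X^{*}_{i})+\xi^{*}_{i}$ as in \eqref{eq:PSI_Z_hat decompose}, and use Lemma \ref{lem:bootstrap reg weights basic}(ii) for the bias part. For the centred part, bound the resampling second moment and apply Lemma \ref{lem:O_p_star properties}(iii). It follows that the denominator equals $1+O^{*}_{p}(n^{-1})$.

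With these bounds, $R^{*}_{+}(\varDelta,\tau)-R^{*}_{+}(0,\tau)=T^{*}_{1}(\varDelta,\tau)\{1+O^{*}_{p}(n^{-1})\}+T^{*}_{2}(\varDelta,\tau)\{1+O^{*}_{p}(n^{-1})\}$. Here $T^{*}_{1}$ and $T^{*}_{2}$ are the bootstrap analogues of $T_{1}$ and $T_{2}$, with $\iota_{\varrho}(\cdot)-1$ replaced by $-V^{*\top}_{p,i}\widetilde{\lambda}^{*}_{p}$. We handle $T^{*}_{2}$ exactly as before. It is bounded by $\max_{i}|V^{*\top}_{p,i}\widetilde{\lambda}^{*}_{p}|$ times $\sqrt{n}\mathbb{P}^{*}_{n}g_{2}(\cdot\mid\tau)$. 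In turn,
\[
\sqrt{n}\mathbb{P}^{*}_{n}g_{2}=\sqrt{n}\mathbb{P}g_{2}+\mathbb{G}_{n}g_{2}+\mathbb{G}^{*}_{n}g_{2}.
\]
The first two terms are $O_{p}(1)$ uniformly in $\tau$ by the proof of Lemma \ref{lem:R_delta linearization}, hence $O^{*}_{p}(1)$ by Lemma \ref{lem:O_p_star properties}(ii).

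For $T^{*}_{1}$, split it into three pieces. The first is the population drift $\sqrt{nh}\,\mathbb{P}g_{1}$, which is $-D_{+}(\tau)\varDelta+o(1)$ uniformly by \eqref{eq:F difference bound 1}. The second is the original-sample fluctuation $\mathbb{G}_{n}g_{1}$, which is $o_{p}(1)$ by \eqref{eq:maximal inequality}. The third is the bootstrap fluctuation $\mathbb{G}^{*}_{n}g_{1}(\cdot\mid\varDelta,\tau)$ over $\mathscr{G}_{1}$.

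The main obstacle is controlling $\mathrm{E}_{*}\|\mathbb{G}^{*}_{n}\|_{\mathscr{G}_{1}}$ (and likewise over $\mathscr{G}_{2}$), because the local maximal inequality must be applied under the empirical measure $\mathbb{P}_{n}$. We would apply \citet[Corollary 5.5]{Chen2020jackknife} conditionally, taking $\mathbb{P}_{n}$ as the sampling law. The class $\mathscr{G}_{1}$ is VC-type with characteristics independent of $n$ and the same constant envelope $\apprle h^{-1/2}$, and these properties hold under any probability measure. The conditional bound is therefore $\sqrt{\sigma^{2}_{n}\log n}+\log n/\sqrt{nh}$, where $\sigma^{2}_{n}\coloneqq\sup_{\mathscr{G}_{1}}\mathbb{P}_{n}g^{2}_{1}$. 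To control $\sigma^{2}_{n}$, write $\mathbb{P}_{n}g^{2}_{1}\le\sup\mathbb{P}g^{2}_{1}+n^{-1/2}\|\mathbb{G}_{n}\|_{\mathscr{G}^{2}_{1}}$. The squared class $\mathscr{G}^{2}_{1}$ is again VC-type, with envelope $\apprle h^{-1}$, so the same corollary gives $n^{-1/2}\mathrm{E}\|\mathbb{G}_{n}\|_{\mathscr{G}^{2}_{1}}=O\bigl((\log n/(nh))^{1/2}(nh)^{-1/4}+\log n/(nh)\bigr)$. Combined with \eqref{eq:g_1^2 expectation bound}, this yields $\sigma^{2}_{n}=O_{p}((nh)^{-1/2}+\log n/(nh))$. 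Hence $\mathrm{E}_{*}\|\mathbb{G}^{*}_{n}\|_{\mathscr{G}_{1}}=o_{p}(1)$ under $\log n/\sqrt{nh}\to0$. Conditional Markov's inequality then gives $o^{*}_{p}(1)$, and combining all pieces with Lemma \ref{lem:O_p_star properties}(i) yields \eqref{eq:asymptotic linearization-1}.
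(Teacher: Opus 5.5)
Your proposal is correct and follows the paper's proof essentially step for step. The shared steps are: the exact linear form of $\widetilde{w}^{*}_{p,i}$ with $\widetilde{\lambda}^{*}_{p}=O^{*}_{p}((nh)^{-1/2})$; the $T^{*}_{1}/T^{*}_{2}$ split; writing $T^{*}_{1}$ as $T_{1}=\mathrm{E}_{*}[T^{*}_{1}]$ plus a bootstrap fluctuation, bounded by a conditional application of \citet[Corollary 5.5]{Chen2020jackknife} with $\sup\mathbb{P}_{n}g^{2}_{1}$ controlled through the squared class $\mathscr{G}^{(2)}_{1}$; and the same treatment of $\mathscr{G}_{2}$.

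Two cosmetic points. First, the drift term is $\sqrt{n}\,\mathbb{P}g_{1}$, not $\sqrt{nh}\,\mathbb{P}g_{1}$, since the $h^{-1/2}$ is already inside $g_{1}$. Second, $\xi^{*}_{i}$ is not mean zero under $\mathrm{Pr}_{*}$, so the resampling second moment of $(nh)^{-1/2}\sum_{i}K_{s,p}(X^{*}_{i}/h)\xi^{*}_{i}$ also contains the square of $(nh)^{-1/2}\sum_{i}K_{s,p}(X_{i}/h)\xi_{i}=O_{p}(1)$. This is harmless; the paper instead centres at that original-sample sum and invokes Lemma \ref{lem:O_p_star properties}(ii).
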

\begin{proof}
We have
\begin{eqnarray*}
\widetilde{w}^{*}_{p,i} & = & \frac{1}{n}\cdot\frac{1+V^{*\top}_{p,i}\widetilde{\lambda}^{*}_{p}}{1+\left(n^{-1}\sum_{j}V^{*}_{p,j}\right)^{\top}\widetilde{\lambda}^{*}_{p}},\textrm{ where}\\
\widetilde{\lambda}^{*}_{p} & \coloneqq & -\left(\sum_{i}V^{*}_{p,i}V^{*\top}_{p,i}\right)^{-1}\left(\sum_{i}V^{*}_{p,i}\right).
\end{eqnarray*}
By Lemma \ref{lem:bootstrap reg weights basic}, we have
\begin{eqnarray}
\frac{1}{nh}\sum_{i}V^{*}_{p,i}V^{*\top}_{p,i} & = & \left(\frac{\omega^{0,2}_{p,0}}{\varphi}\right)\mu_{\bar{Z}\bar{Z}^{\top},\pm}+o^{*}_{p}\left(1\right)\nonumber \\
\frac{1}{nh}\sum_{i}W^{*}_{p,i}Z^{*}_{i} & = & \frac{1}{nh}\sum_{i}W^{*}_{p,i}\xi^{*}_{i}+\frac{\left(\mu^{\left(p+1\right)}_{Z,+}\omega^{p+1,1}_{+,p,0}-\mu^{\left(p+1\right)}_{Z,-}\omega^{p+1,1}_{-,p,0}\right)}{\left(p+1\right)!}h^{p+1}+o^{*}_{p}\left(h^{p+1}\right),\label{eq:VV* expansion}
\end{eqnarray}
where $\xi^{*}_{i}\coloneqq Z^{*}_{i}-\mu_{Z}\left(X^{*}_{i}\right)$.
By a simple variance calculation with respect to the resampling distribution,
Lemma \ref{lem:O_p_star properties}(iii), and Markov's inequality,
\[
\frac{1}{\sqrt{nh}}\sum_{i}K_{s,p}\left(\frac{X^{*}_{i}}{h}\right)\xi^{*\top}_{i}-\frac{1}{\sqrt{nh}}\sum_{i}K_{s,p}\left(\frac{X_{i}}{h}\right)\xi^{\top}_{i}=O^{*}_{p}\left(1\right),
\]
for $s\in\left\{ -,+\right\} $. Now, $\left(nh\right)^{-1/2}\sum_{i}K_{s,p}\left(X^{*}_{i}/h\right)\xi^{*\top}_{i}=O^{*}_{p}\left(1\right)$
follows from this result, the fact that $\left(nh\right)^{-1/2}\sum_{i}K_{s,p}\left(X_{i}/h\right)\xi^{\top}_{i}=O_{p}\left(1\right)$
and Lemma \ref{lem:O_p_star properties}(ii). Then, it follows from
this result, \eqref{eq:PI_star_inv - PI_inv rate}, and the bootstrap
analogue of the equality in \eqref{eq:W*ksi expansion} that $\left(nh\right)^{-1/2}\sum_{i}W^{*}_{p,i}\xi^{*}_{i}=O^{*}_{p}\left(1\right)$.
Since $\sum_{i}W^{*}_{p,i}=0$, the bootstrap analogue of \eqref{eq:V decompose}
holds. By this result and \eqref{eq:VV* expansion}, we have $\left(nh\right)^{-1/2}\sum_{i}V^{*}_{p,i}=O^{*}_{p}\left(1\right)$,
$\widetilde{\lambda}^{*}_{p}=O^{*}_{p}\left(\left(nh\right)^{-1/2}\right)$,
and $\widetilde{w}^{*}_{p,i}=n^{-1}\left(1+V^{*\top}_{p,i}\widetilde{\lambda}^{*}_{p}\right)\left(1+O^{*}_{p}\left(n^{-1}\right)\right)$.

Decompose
\[
R^{*}_{+}\left(\varDelta,\tau\right)-R^{*}_{+}\left(0,\tau\right)=T^{*}_{1}\left(\varDelta,\tau\right)\left\{ 1+O^{*}_{p}\left(n^{-1}\right)\right\} +T^{*}_{2}\left(\varDelta,\tau\right)\left\{ 1+O^{*}_{p}\left(n^{-1}\right)\right\} ,
\]
where
\begin{eqnarray*}
T^{*}_{1}\left(\varDelta,\tau\right) & \coloneqq & \frac{1}{\sqrt{nh}}\sum_{i}K_{+,p}\left(\frac{X^{*}_{i}}{h}\right)\left\{ \mathbbm{1}\left(\tilde{Y}^{*}_{+,i}\leq0\right)-\mathbbm{1}\left(\tilde{Y}^{*}_{+,i}\leq\frac{r^{\top}_{p}\left(X^{*}_{i}/h\right)\varDelta}{\sqrt{nh}}\right)\right\} \\
T^{*}_{2}\left(\varDelta,\tau\right) & \coloneqq & \frac{1}{\sqrt{nh}}\sum_{i}K_{+,p}\left(\frac{X^{*}_{i}}{h}\right)\left(V^{*\top}_{p,i}\widetilde{\lambda}^{*}_{p}\right)\left\{ \mathbbm{1}\left(\tilde{Y}^{*}_{+,i}\leq0\right)-\mathbbm{1}\left(\tilde{Y}^{*}_{+,i}\leq\frac{r^{\top}_{p}\left(X^{*}_{i}/h\right)\varDelta}{\sqrt{nh}}\right)\right\} 
\end{eqnarray*}
denote the bootstrap analogues of $T_{1}\left(\varDelta,\tau\right)$
and $T_{2}\left(\varDelta,\tau\right)$. Decomposing $T^{*}_{1}\left(\varDelta,\tau\right)$
as in \eqref{eq:T_1 decomposition}, we have $\mathrm{E}_{*}\left[T^{*}_{1}\left(\varDelta,\tau\right)\right]=T_{1}\left(\varDelta,\tau\right)$.
We can write the $j$-th coordinate of $T^{*}_{1}\left(\varDelta,\tau\right)-\mathrm{E}_{*}\left[T^{*}_{1}\left(\varDelta,\tau\right)\right]$
as $\mathbb{G}^{*}_{n}g_{1}\left(\cdot\mid\varDelta,\tau\right)$.
Applying \citet[Corollary 5.5]{Chen2020jackknife} to $\mathrm{E}_{*}\left[\left\Vert \mathbb{G}^{*}_{n}\right\Vert _{\mathscr{G}_{1}}\right]$,
we have
\begin{equation}
\mathrm{E}_{*}\left[\left\Vert \mathbb{G}^{*}_{n}\right\Vert _{\mathscr{G}_{1}}\right]\apprle\sqrt{\left(\sup_{\|\varDelta\|\leq M,\,\tau\in\left[\underline{\tau},\overline{\tau}\right]}\mathbb{P}_{n}g^{2}_{1}\left(\cdot\mid\varDelta,\tau\right)\right)\log\left(n\right)}+\frac{\log\left(n\right)}{\sqrt{nh}}.\label{eq:G1 expectation bound bootstrap}
\end{equation}
Let $\mathscr{G}^{\left(2\right)}_{1}\coloneqq\left\{ g^{2}_{1}\left(\cdot\mid\varDelta,\tau\right):\|\varDelta\|\leq M,\,\tau\in\left[\underline{\tau},\overline{\tau}\right]\right\} $.
By \citet[Corollary A.1]{chernozhukov2014gaussian}, $\mathscr{G}^{\left(2\right)}_{1}$
is VC-type with a constant envelope $G^{2}_{1}$ and characteristics
independent of $n$. By \eqref{eq:a.b indicator inequalities} and
calculations as in \eqref{eq:g_1^2 expectation bound}, we have $\mathbb{P}g^{4}_{1}\left(\cdot\mid\varDelta,\tau\right)=O\left(\left(nh^{3}\right)^{-1/2}\right)$,
uniformly in $\left\Vert \varDelta\right\Vert \leq M,\tau\in\left[\underline{\tau},\overline{\tau}\right]$.
Then, it follows from this result and \citet[Corollary 5.5]{Chen2020jackknife}
that $\left\Vert \mathbb{P}_{n}-\mathbb{P}\right\Vert _{\mathscr{G}^{\left(2\right)}_{1}}=O_{p}\left(\sqrt{\log\left(n\right)}/\left(nh\right)^{3/4}\right)$.
It now follows that $\mathbb{P}_{n}g^{2}_{1}\left(\cdot\mid\varDelta,\tau\right)=O_{p}\left(\left(nh\right)^{-1/2}\right)$,
uniformly in $\left\Vert \varDelta\right\Vert \leq M,\tau\in\left[\underline{\tau},\overline{\tau}\right]$.
By this result and \eqref{eq:G1 expectation bound bootstrap}, we
have $\mathrm{E}_{*}\left[\left\Vert \mathbb{G}^{*}_{n}\right\Vert _{\mathscr{G}_{1}}\right]=O_{p}\left(\sqrt{\log\left(n\right)}/\left(nh\right)^{1/4}\right)$
and $\left\Vert \mathbb{G}^{*}_{n}\right\Vert _{\mathscr{G}_{1}}=o^{*}_{p}\left(1\right)$
by Lemma \ref{lem:O_p_star properties}(iii). This gives that $T^{*}_{1}\left(\varDelta,\tau\right)-\mathrm{E}_{*}\left[T^{*}_{1}\left(\varDelta,\tau\right)\right]=o^{*}_{p}\left(1\right)$,
uniformly in $\left\Vert \varDelta\right\Vert \leq M,\tau\in\left[\underline{\tau},\overline{\tau}\right]$.
By this result, \eqref{eq:T_1 uniform approximation}, and Lemma \ref{lem:O_p_star properties}(i),
we have $T^{*}_{1}\left(\varDelta,\tau\right)=-D_{+}\left(\tau\right)\varDelta+o^{*}_{p}\left(1\right)$,
uniformly for all $\left\Vert \varDelta\right\Vert \leq M,\tau\in\left[\underline{\tau},\overline{\tau}\right]$.

For $T^{*}_{2}\left(\varDelta,\tau\right)$, we have
\[
\sup_{\|\varDelta\|\leq M}\left\Vert T^{*}_{2}\left(\varDelta,\tau\right)\right\Vert \leq\left(\max_{i}\left|V^{*\top}_{p,i}\widetilde{\lambda}^{*}_{p}\right|\right)\left(\sqrt{n}\cdot\mathbb{P}^{*}_{n}g_{2}\left(\cdot\mid\tau\right)\right).
\]
By Lemma \ref{lem:bootstrap reg weights basic}(iv), and $\widetilde{\lambda}^{*}_{p}=O^{*}_{p}\left(\left(nh\right)^{-1/2}\right)$,
we have $\max_{i}\left|V^{*\top}_{p,i}\widetilde{\lambda}^{*}_{p}\right|=o^{*}_{p}\left(1\right)$.
Decompose $\sqrt{n}\cdot\mathbb{P}^{*}_{n}g_{2}\left(\cdot\mid\tau\right)=\sqrt{n}\cdot\mathbb{P}_{n}g_{2}\left(\cdot\mid\tau\right)+\mathbb{G}^{*}_{n}g_{2}\left(\cdot\mid\tau\right)$.
We showed in the proof of Lemma \ref{lem:R_delta linearization} that
$\sqrt{n}\cdot\mathbb{P}_{n}g_{2}\left(\cdot\mid\tau\right)=O_{p}\left(1\right)$
uniformly in $\tau\in\left[\underline{\tau},\overline{\tau}\right]$.
Let $\mathscr{G}^{\left(2\right)}_{2}\coloneqq\left\{ g^{2}_{2}\left(\cdot\mid\tau\right):\tau\in\left[\underline{\tau},\overline{\tau}\right]\right\} $.
By \citet[Corollary A.1]{chernozhukov2014gaussian}, $\mathscr{G}^{\left(2\right)}_{2}$
is VC-type with a constant envelope $G^{2}_{2}\apprle h^{-1}$ and
characteristics independent of $n$. By \eqref{eq:a.b indicator inequalities},
the LIE, a bound analogous to \eqref{eq:F difference bound 1} and
a change of variables, as in \eqref{eq:g_1^2 expectation bound},
we have $\mathbb{P}g^{4}_{2}\left(\cdot\mid\tau\right)=O\left(\left(nh^{3}\right)^{-1/2}\right)$,
uniformly in $\tau\in\left[\underline{\tau},\overline{\tau}\right]$.
By \citet[Corollary 5.5]{Chen2020jackknife}, $\left\Vert \mathbb{P}_{n}-\mathbb{P}\right\Vert _{\mathscr{G}^{\left(2\right)}_{2}}=o_{p}\left(\left(nh\right)^{-1/2}\right)$,
and since $\sup_{\tau\in\left[\underline{\tau},\overline{\tau}\right]}\mathbb{P}g^{2}_{2}\left(\cdot\mid\tau\right)=O\left(\left(nh\right)^{-1/2}\right)$
as shown in the proof of Lemma \ref{lem:R_delta linearization}, we
have $\sup_{\tau\in\left[\underline{\tau},\overline{\tau}\right]}\mathbb{P}_{n}g^{2}_{2}\left(\cdot\mid\tau\right)=O_{p}\left(\left(nh\right)^{-1/2}\right)$.
Applying \citet[Corollary 5.5]{Chen2020jackknife} under the resampling
distribution, as in \eqref{eq:G1 expectation bound bootstrap}, we
have $\mathrm{E}_{*}\left[\left\Vert \mathbb{G}^{*}_{n}\right\Vert _{\mathscr{G}_{2}}\right]=o_{p}\left(1\right)$,
and $\left\Vert \mathbb{G}^{*}_{n}\right\Vert _{\mathscr{G}_{2}}=o^{*}_{p}\left(1\right)$
by Lemma \ref{lem:O_p_star properties}(iii). It follows from the
above results and Lemma \ref{lem:O_p_star properties}(ii) that $\sqrt{n}\cdot\mathbb{P}^{*}_{n}g_{2}\left(\cdot\mid\tau\right)=O^{*}_{p}\left(1\right)$,
uniformly in $\tau\in\left[\underline{\tau},\overline{\tau}\right]$.
Now we have $T^{*}_{2}\left(\varDelta,\tau\right)=o^{*}_{p}\left(1\right)$,
uniformly in $\left\Vert \varDelta\right\Vert \leq M,\tau\in\left[\underline{\tau},\overline{\tau}\right]$. 
\end{proof}

The following result is the bootstrap analogue of Lemma \ref{lem:R_0 linearization}.
\begin{lem}
\label{lem:bootstrap analogue 2}Suppose that the assumptions in the
statement of Theorem \ref{thm:QRD bootstrap} hold. We have the following
linearization for $R^{*}_{+}(0,\tau)$:
\begin{eqnarray*}
R^{*}_{+}\left(0,\tau\right) & = & \frac{1}{\sqrt{nh}}\sum_{i}K_{+,p}\left(\frac{X^{*}_{i}}{h}\right)U^{*}_{i}\left(\tau\right)-\left(\frac{\pi_{+}\varphi}{\omega^{0,2}_{p,0}}\right)\gamma^{\top}_{+}\left(\tau\right)\left\{ \frac{1}{\sqrt{nh}}\sum_{i}W^{*}_{p,i}\xi^{*}_{i}\right\} \\
 &  & +\bar{\mathscr{B}}_{+}\left(\tau\right)\sqrt{nh}h^{p+1}+o^{*}_{p}\left(1\right),
\end{eqnarray*}
uniformly in $\tau\in\left[\underline{\tau},\overline{\tau}\right]$,
where $U^{*}_{i}\left(\tau\right)\coloneqq\tau-\mathbbm{1}\left(Y^{*}_{i}\leq Q_{Y\mid X}\left(\tau\mid X^{*}_{i}\right)\right)$.
\end{lem}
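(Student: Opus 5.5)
The plan is to transcribe the proof of Lemma \ref{lem:R_0 linearization} into the resampling world. The closed-form weights make this cleaner, and each step is checked with Lemma \ref{lem:O_p_star properties} and Lemma \ref{lem:bootstrap reg weights basic} in place of their non-bootstrap counterparts.

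\textbf{Step 1: weight expansion.} By the proof of Lemma \ref{lem:bootstrap analogue 1},
\[
\widetilde{w}^{*}_{p,i}=n^{-1}\left(1+V^{*\top}_{p,i}\widetilde{\lambda}^{*}_{p}\right)\left(1+O^{*}_{p}\left(n^{-1}\right)\right),
\]
so $R^{*}_{+}(0,\tau)=T^{*}_{1}(\tau)\left(1+O^{*}_{p}(n^{-1})\right)$. Here
\[
T^{*}_{1}(\tau)\coloneqq\frac{1}{\sqrt{nh}}\sum_{i}K_{+,p}\left(\frac{X^{*}_{i}}{h}\right)\left(1+V^{*\top}_{p,i}\widetilde{\lambda}^{*}_{p}\right)\psi_{\tau}\left(\tilde{Y}^{*}_{+,i}\right).
\]
No second-order remainder $T_{2}$ appears, because the weights are exactly linear in the multiplier. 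I then split $T^{*}_{1}$ into a main term and a multiplier term.

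\textbf{Step 2: main term.} I write
\[
\psi_{\tau}\left(\tilde{Y}^{*}_{+,i}\right)=U^{*}_{i}(\tau)+\left\{ \mathbbm{1}\left(Y^{*}_{i}\leq Q_{Y\mid X}\left(\tau\mid X^{*}_{i}\right)\right)-\mathbbm{1}\left(Y^{*}_{i}\leq r^{\top}_{p}\left(X^{*}_{i}\right)\kappa_{+}(\tau)\right)\right\} .
\]
The bracketed piece summed against $K_{+,p}/\sqrt{nh}$ is $T^{*}_{3}(\tau)$. Its resampling mean is exactly the original-sample $T_{3}(\tau)$, since $\mathrm{E}_{*}$ is the empirical average. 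The proof of Lemma \ref{lem:R_0 linearization} already gives $T_{3}(\tau)=\bar{\mathscr{B}}^{(1)}_{+}(\tau)\sqrt{nh}h^{p+1}+o_{p}(1)$ uniformly in $\tau$, where $\bar{\mathscr{B}}^{(1)}_{+}$ is the first, $\kappa^{(p+1)}$-part of $\bar{\mathscr{B}}_{+}$. By Lemma \ref{lem:O_p_star properties}(i) this transfers to the bootstrap.

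It remains to show $\left\Vert \mathbb{G}^{*}_{n}\right\Vert _{\mathscr{G}_{4}}=o^{*}_{p}(1)$, where $T^{*}_{3}-\mathrm{E}_{*}[T^{*}_{3}]=\mathbb{G}^{*}_{n}g_{4}$. I would apply \citet[Corollary 5.5]{Chen2020jackknife} under the resampling law, as in (\ref{eq:G1 expectation bound bootstrap}). This needs $\sup_{\tau}\mathbb{P}_{n}g^{2}_{4}=O_{p}(h^{p+1})$, which I would get from $\mathbb{P}g^{2}_{4}=O(h^{p+1})$ in (\ref{eq:(g_4-g_5)^2 expectation bound}). The gap $\left\Vert \mathbb{P}_{n}-\mathbb{P}\right\Vert _{\mathscr{G}^{(2)}_{4}}$ is controlled via \citet[Corollary A.1]{chernozhukov2014gaussian} and the same maximal inequality, using the bound $\mathbb{P}g^{4}_{4}\lesssim h^{-1}\mathbb{P}g^{2}_{4}$. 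This gives $\mathrm{E}_{*}\left[\left\Vert \mathbb{G}^{*}_{n}\right\Vert _{\mathscr{G}_{4}}\right]=O_{p}\left(\sqrt{h^{p+1}\log n}+\log n/\sqrt{nh}\right)=o_{p}(1)$, and Lemma \ref{lem:O_p_star properties}(iii) finishes the step.

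\textbf{Step 3: multiplier term.} This is $\left\{ (nh)^{-1}\sum_{i}K_{+,p}(X^{*}_{i}/h)\psi_{\tau}(\tilde{Y}^{*}_{+,i})V^{*\top}_{p,i}\right\} \sqrt{nh}\,\widetilde{\lambda}^{*}_{p}$. For the matrix factor I follow (\ref{eq:K to K_til approximation}) using (\ref{eq:PI_star_inv - PI_inv rate}), which replaces it by $\varphi^{-1}(T^{*}_{6}+T^{*}_{7})$. Each of $T^{*}_{6},T^{*}_{7}$ has resampling mean equal to the sample $T_{6},T_{7}$, so $T^{*}_{6}+T^{*}_{7}=\pi_{+}\varphi\mu_{U(\tau)\bar{Z}^{\top},+}+o^{*}_{p}(1)$. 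The fluctuations are $(nh)^{-1/2}\mathbb{G}^{*}_{n}$ over VC classes whose envelopes have bounded sample second moment. I would check this via Lemma \ref{lem: reg weights basic}(iii) applied to $\left\Vert \bar{Z}\right\Vert ^{2}$, which has a $(1+\delta/2)$-moment by Assumption \ref{assu:data generating process}(iv). A resampling version of \citet[Theorem 2.14.1]{VanDerVaartWellner1996} then gives $O^{*}_{p}(1)$ for the empirical process, hence $o^{*}_{p}(1)$ after the $(nh)^{-1/2}$ scaling.

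For the other factor, $\sqrt{nh}\,\widetilde{\lambda}^{*}_{p}=-\left((nh)^{-1}\sum V^{*}V^{*\top}\right)^{-1}(nh)^{-1/2}\sum V^{*}_{p,i}$. I insert (\ref{eq:VV* expansion}), the bootstrap analogue of (\ref{eq:V decompose}), and the block inversion (\ref{eq:block inversion}). The product then yields $-\left(\pi_{+}\varphi/\omega^{0,2}_{p,0}\right)\gamma^{\top}_{+}(\tau)\left\{ (nh)^{-1/2}\sum W^{*}_{p,i}\xi^{*}_{i}+\text{bias}\cdot\sqrt{nh}h^{p+1}\right\} +o^{*}_{p}(1)$. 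The bias piece supplies the second part of $\bar{\mathscr{B}}_{+}$, and the remaining piece is the $\xi^{*}$ term in the statement.

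The sign of $\widetilde{\lambda}^{*}$ needs care. It is $-(1+\varrho)^{-1}$ times $\widehat{\lambda}$ at $\varrho=-2$, i.e. equal to $\widehat{\lambda}$, which matches the $-V^{\top}\widehat{\lambda}/(1+\varrho)=+V^{\top}\widehat{\lambda}$ term. I expect the main obstacle to be Step 2's uniform-in-$\tau$ bootstrap maximal inequality with the shrinking variance $h^{p+1}$, since it requires controlling the random sample second moments $\mathbb{P}_{n}g^{2}_{4}$ uniformly.
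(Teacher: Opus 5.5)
Your proposal is correct and is essentially the paper's proof. It uses the same weight expansion taken from the proof of Lemma~\ref{lem:bootstrap analogue 1}, the same split into the $U^{*}$ term and $T^{*}_{3}$, and the same handling of the multiplier term via $T^{*}_{6}+T^{*}_{7}$, \eqref{eq:VV* expansion} and the block inversion. For $T^{*}_{3}$ the paper also uses resampling mean $T_{3}$, a fluctuation bound from \citet[Corollary 5.5]{Chen2020jackknife}, and control of $\mathbb{P}_{n}g^{2}_{4}$ through the squared class with $\mathbb{P}g^{4}_{4}=O(h^{p})$.

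The differences are cosmetic. The paper calls the class $\mathscr{G}_{8}$, and it also records that the $U^{*}$ term is $O^{*}_{p}(1)$ uniformly in $\tau$, which is used later in Lemma~\ref{lem:delta linearization bootstrap analogue}. Your intermediate rate $\sup_{\tau}\mathbb{P}_{n}g^{2}_{4}=O_{p}(h^{p+1})$ is slightly more than the squared-class bound delivers. This is harmless, since only $\log(n)\sup_{\tau}\mathbb{P}_{n}g^{2}_{4}=o_{p}(1)$ is needed, and that does hold.
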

\begin{proof}[Proof of Lemma \ref{lem:bootstrap analogue 2}]
Decompose
\begin{eqnarray}
R^{*}_{+}\left(0,\tau\right) & = & \sqrt{\frac{n}{h}}\sum_{i}\widetilde{w}^{*}_{p,i}K_{+,p}\left(\frac{X^{*}_{i}}{h}\right)\psi_{\tau}\left(\tilde{Y}^{*}_{+,i}\right)\nonumber \\
 & = & T^{*}_{1}\left(\tau\right)\left(1+O^{*}_{p}\left(n^{-1}\right)\right)+T^{*}_{2}\left(\tau\right)\left(1+O^{*}_{p}\left(n^{-1}\right)\right)\label{eq:R_+_star decomposition}
\end{eqnarray}
as in the proof of Lemma \ref{lem:R_0 linearization}, where
\begin{eqnarray*}
T^{*}_{1}\left(\tau\right) & \coloneqq & \frac{1}{\sqrt{nh}}\sum_{i}K_{+,p}\left(\frac{X^{*}_{i}}{h}\right)\psi_{\tau}\left(\tilde{Y}^{*}_{+,i}\right)\\
 & = & \frac{1}{\sqrt{nh}}\sum_{i}K_{+,p}\left(\frac{X^{*}_{i}}{h}\right)U^{*}_{i}\left(\tau\right)+T^{*}_{3}\left(\tau\right)\\
T^{*}_{2}\left(\tau\right) & \coloneqq & \frac{1}{\sqrt{nh}}\sum_{i}K_{+,p}\left(\frac{X^{*}_{i}}{h}\right)\psi_{\tau}\left(\tilde{Y}^{*}_{+,i}\right)V^{*\top}_{p,i}\widetilde{\lambda}^{*}_{p}\\
T^{*}_{3}\left(\tau\right) & \coloneqq & \frac{1}{\sqrt{nh}}\sum_{i}K_{+,p}\left(\frac{X^{*}_{i}}{h}\right)\left\{ \mathbbm{1}\left(Y^{*}_{i}\leq Q_{Y\mid X}\left(\tau\mid X^{*}_{i}\right)\right)-\mathbbm{1}\left(Y^{*}_{i}\leq r^{\top}_{p}\left(X^{*}_{i}\right)\kappa_{+}\left(\tau\right)\right)\right\} .
\end{eqnarray*}
Since $\mathbb{P}g_{3}\left(\cdot\mid\tau\right)=0$, we can write
\[
\frac{1}{\sqrt{nh}}\sum_{i}K_{+,p}\left(\frac{X^{*}_{i}}{h}\right)U^{*}_{i}\left(\tau\right)=\mathbb{G}^{*}_{n}g_{3}\left(\cdot\mid\tau\right)+\mathbb{G}_{n}g_{3}\left(\cdot\mid\tau\right).
\]
By \citet[Theorem 2.14.1]{VanDerVaartWellner1996}, $\mathrm{E}_{*}\left[\left\Vert \mathbb{G}^{*}_{n}\right\Vert _{\mathscr{G}_{3}}\right]\apprle\sqrt{\mathbb{P}_{n}G^{2}_{3}}=O_{p}\left(1\right)$.
It follows from this result, Lemma \ref{lem:O_p_star properties}(ii,
iii), and \eqref{eq:g_3 bound} that
\begin{equation}
\frac{1}{\sqrt{nh}}\sum_{i}K_{+,p}\left(\frac{X^{*}_{i}}{h}\right)U^{*}_{i}\left(\tau\right)=O^{*}_{p}\left(1\right),\label{eq:U_star bounded in probability}
\end{equation}
uniformly in $\tau\in\left[\underline{\tau},\overline{\tau}\right]$.

We write $T^{*}_{3}\left(\tau\right)=\left(T^{*}_{3}\left(\tau\right)-\mathrm{E}_{*}\left[T^{*}_{3}\left(\tau\right)\right]\right)+\mathrm{E}_{*}\left[T^{*}_{3}\left(\tau\right)\right]$,
where $\mathrm{E}_{*}\left[T^{*}_{3}\left(\tau\right)\right]=T_{3}\left(\tau\right)$
and $T_{3}\left(\tau\right)$ is defined by \eqref{eq:T_3 definition}.
We can also write the $j$-th coordinate of $T^{*}_{3}\left(\tau\right)-\mathrm{E}_{*}\left[T^{*}_{3}\left(\tau\right)\right]$
as $\mathbb{G}^{*}_{n}g_{4}\left(\cdot\mid\tau\right)$. Let $\mathscr{G}_{8}\coloneqq\left\{ g_{4}\left(\cdot\mid\tau\right):\tau\in\left[\underline{\tau},\overline{\tau}\right]\right\} $.
The class $\mathscr{G}_{8}$ is VC-type with characteristics independent
of $n$ and a constant envelope proportional to $h^{-1/2}$. By \citet[Corollary 5.5]{Chen2020jackknife},
we have
\begin{equation}
\mathrm{E}_{*}\left[\left\Vert \mathbb{G}^{*}_{n}\right\Vert _{\mathscr{G}_{8}}\right]\apprle\sqrt{\left(\sup_{\tau\in\left[\underline{\tau},\overline{\tau}\right]}\mathbb{P}_{n}g^{2}_{4}\left(\cdot\mid\tau\right)\right)\log\left(n\right)}+\frac{\log\left(n\right)}{\sqrt{nh}}.\label{eq:G_8 bootstrap expectation bound}
\end{equation}
Let $\mathscr{G}^{\left(2\right)}_{4}\coloneqq\left\{ g^{2}_{4}\left(\cdot\mid\tau\right):\tau\in\left[\underline{\tau},\overline{\tau}\right]\right\} $.
The class $\mathscr{G}^{\left(2\right)}_{4}$ is VC-type with characteristics
independent of $n$ and a constant envelope proportional to $h^{-1}$.
By \citet[Corollary 5.5]{Chen2020jackknife},
\[
\mathrm{E}\left[\left\Vert \mathbb{G}_{n}\right\Vert _{\mathscr{G}^{\left(2\right)}_{4}}\right]\apprle\sqrt{\left(\sup_{\tau\in\left[\underline{\tau},\overline{\tau}\right]}\mathbb{P}g^{4}_{4}\left(\cdot\mid\tau\right)\right)\log\left(n\right)}+\frac{\log\left(n\right)}{\sqrt{nh^{2}}}.
\]
By calculations using the LIE, \eqref{eq:F(Q) taylor expansion} and
\eqref{eq:a.b indicator inequalities}, we have $\mathbb{P}g^{4}_{4}\left(\cdot\mid\tau\right)=O\left(h^{p}\right)$,
uniformly in $\tau\in\left[\underline{\tau},\overline{\tau}\right]$.
It follows that 
\[
\left\Vert \mathbb{P}_{n}-\mathbb{P}\right\Vert _{\mathscr{G}^{\left(2\right)}_{4}}=O_{p}\left(\sqrt{\frac{h^{p}\log\left(n\right)}{n}}+\frac{\log\left(n\right)}{nh}\right).
\]
By this result, \eqref{eq:(g_4-g_5)^2 expectation bound}, and \eqref{eq:G_8 bootstrap expectation bound},
we have $\mathrm{E}_{*}\left[\left\Vert \mathbb{G}^{*}_{n}\right\Vert _{\mathscr{G}_{8}}\right]=o_{p}\left(1\right)$
and $\left\Vert \mathbb{G}^{*}_{n}\right\Vert _{\mathscr{G}_{8}}=o^{*}_{p}\left(1\right)$.
This shows that $T^{*}_{3}\left(\tau\right)-\mathrm{E}_{*}\left[T^{*}_{3}\left(\tau\right)\right]=o^{*}_{p}\left(1\right)$,
uniformly in $\tau\in\left[\underline{\tau},\overline{\tau}\right]$.
In the proof of Lemma \ref{lem:R_0 linearization}, we showed that
\[
T_{3}\left(\tau\right)=\sqrt{nh}h^{p+1}\left\{ \left(\frac{\int^{1}_{0}K_{+,p}\left(u\right)u^{p+1}\mathrm{d}u}{\left(p+1\right)!}\right)\varphi_{+}\left(\tau\right)\kappa^{\left(p+1\right)}_{Y,+}\left(\tau\right)\varphi+o\left(1\right)\right\} +o_{p}\left(1\right),
\]
uniformly in $\tau\in\left[\underline{\tau},\overline{\tau}\right]$.
It now follows from these results and Lemma \ref{lem:O_p_star properties}(i)
that
\begin{equation}
T^{*}_{3}\left(\tau\right)=\sqrt{nh}h^{p+1}\left\{ \left(\frac{\int^{1}_{0}K_{+,p}\left(u\right)u^{p+1}\mathrm{d}u}{\left(p+1\right)!}\right)\varphi_{+}\left(\tau\right)\kappa^{\left(p+1\right)}_{Y,+}\left(\tau\right)\varphi+o\left(1\right)\right\} +o^{*}_{p}\left(1\right),\label{eq:T_3_star approximation}
\end{equation}
uniformly in $\tau\in\left[\underline{\tau},\overline{\tau}\right]$. 

By \eqref{eq:PI_star_inv - PI_inv rate} and Lemma \ref{lem:O_p_star properties}(iii),
we have a bootstrap analogue of \eqref{eq:K to K_til approximation}.
Write 
\[
\frac{1}{nh}\sum_{i}\tilde{K}_{+,p}\left(\frac{X^{*}_{i}}{h}\right)\psi_{\tau}\left(\tilde{Y}^{*}_{+,i}\right)\bar{Z}^{*\top}_{i}=T^{*}_{6}\left(\tau\right)+T^{*}_{7}\left(\tau\right),
\]
where $T^{*}_{6}\left(\tau\right)$ and $T^{*}_{7}\left(\tau\right)$
are bootstrap analogues of $T_{6}\left(\tau\right)$ and $T_{7}\left(\tau\right)$
defined by \eqref{eq:T_6 and T_7 definitions}. Write $T^{*}_{6}\left(\tau\right)=\left(T^{*}_{6}\left(\tau\right)-\mathrm{E}_{*}\left[T^{*}_{6}\left(\tau\right)\right]\right)+\mathrm{E}_{*}\left[T^{*}_{6}\left(\tau\right)\right]$,
where $\mathrm{E}_{*}\left[T^{*}_{6}\left(\tau\right)\right]=T_{6}\left(\tau\right)$.
The $jk$-th coordinate of $T^{*}_{6}\left(\tau\right)-\mathrm{E}_{*}\left[T^{*}_{6}\left(\tau\right)\right]$
can be written as $\mathbb{G}^{*}_{n}g_{6}\left(\cdot\mid\tau\right)/\sqrt{nh}$.
By \citet[Theorem 2.14.1]{VanDerVaartWellner1996} and Markov's inequality,
$\mathrm{E}_{*}\left[\left\Vert \mathbb{G}^{*}_{n}\right\Vert _{\mathscr{G}_{6}}\right]=O_{p}\left(1\right)$.
It follows that $T^{*}_{6}\left(\tau\right)-\mathrm{E}_{*}\left[T^{*}_{6}\left(\tau\right)\right]=o^{*}_{p}\left(1\right)$,
uniformly in $\tau\in\left[\underline{\tau},\overline{\tau}\right]$.
By similar arguments, $T^{*}_{7}\left(\tau\right)-\mathrm{E}_{*}\left[T^{*}_{7}\left(\tau\right)\right]=o^{*}_{p}\left(1\right)$,
uniformly in $\tau\in\left[\underline{\tau},\overline{\tau}\right]$.
In the proof of Lemma \ref{lem:R_0 linearization}, we showed that
$T_{6}\left(\tau\right)=\pi_{+}\varphi\mu_{U\left(\tau\right)\bar{Z}^{\top},+}+o_{p}\left(1\right)$
and $\mathrm{E}_{*}\left[T^{*}_{7}\left(\tau\right)\right]=T_{7}\left(\tau\right)=o_{p}\left(1\right)$,
uniformly in $\tau\in\left[\underline{\tau},\overline{\tau}\right]$.
Now, by Lemma \ref{lem:O_p_star properties}(i), $T^{*}_{6}\left(\tau\right)=\pi_{+}\varphi\mu_{U\left(\tau\right)\bar{Z}^{\top},+}+o^{*}_{p}\left(1\right)$,
uniformly in $\tau\in\left[\underline{\tau},\overline{\tau}\right]$.
Now we have
\[
\frac{1}{nh}\sum_{i}K_{+,p}\left(\frac{X^{*}_{i}}{h}\right)\psi_{\tau}\left(\tilde{Y}^{*}_{+,i}\right)V^{*\top}_{p,i}=\pi_{+}\mu_{U\left(\tau\right)\bar{Z}^{\top},+}+o^{*}_{p}\left(1\right),
\]
uniformly in $\tau\in\left[\underline{\tau},\overline{\tau}\right]$.
By this result and \eqref{eq:VV* expansion}, we have
\[
T^{*}_{2}\left(\tau\right)=-\left(\frac{\pi_{+}\varphi}{\omega^{0,2}_{p,0}}\right)\gamma^{\top}_{+}\left(\tau\right)\left\{ \frac{1}{\sqrt{nh}}\sum_{i}W^{*}_{p,i}\xi^{*}_{i}+\frac{\left(\mu^{\left(p+1\right)}_{Z,+}\omega^{p+1,1}_{+,p,0}-\mu^{\left(p+1\right)}_{Z,-}\omega^{p+1,1}_{-,p,0}\right)}{\left(p+1\right)!}\sqrt{nh}h^{p+1}\right\} +o^{*}_{p}\left(1\right),
\]
uniformly in $\tau\in\left[\underline{\tau},\overline{\tau}\right]$.
The conclusion follows from this result, \eqref{eq:R_+_star decomposition},
and \eqref{eq:T_3_star approximation}.
\end{proof}

Let $\varDelta^{*}_{+}\left(\tau\right)\coloneqq\sqrt{nh}\mathrm{H}\left(\widehat{\kappa}^{*}_{+,p}\left(\tau\right)-\kappa_{+}\left(\tau\right)\right)$
and let $\varDelta^{*}_{-}\left(\tau\right)$ be defined analogously.
Then, we have
\begin{equation}
\sqrt{nh}\left(\widehat{\vartheta}^{*}_{p}\left(\tau\right)-\widehat{\vartheta}^{\mathit{eb}}_{p}\left(\tau\right)\right)=\mathrm{e}^{\top}_{p+1,1}\left(\varDelta^{*}_{+}\left(\tau\right)-\varDelta^{*}_{-}\left(\tau\right)\right)-\mathrm{e}^{\top}_{p+1,1}\left(\varDelta_{+}\left(\tau\right)-\varDelta_{-}\left(\tau\right)\right).\label{eq:theta_hat_star - theta_hat decomposition}
\end{equation}
The following result is a bootstrap analogue of Lemma \ref{lem:delta linearization}.
A similar result holds for $\varDelta^{*}_{-}\left(\tau\right)$.
\begin{lem}
\label{lem:delta linearization bootstrap analogue}Suppose that the
assumptions in the statement of Theorem \ref{thm:QRD bootstrap} hold.
We have the following linearization:
\[
\varDelta^{*}_{+}\left(\tau\right)=D^{-1}_{+}(\tau)R^{*}_{+}(0,\tau)+o^{*}_{p}\left(1\right),
\]
uniformly in $\tau\in\left[\underline{\tau},\overline{\tau}\right]$. 
\end{lem}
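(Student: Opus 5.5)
We mirror the proof of Lemma \ref{lem:delta linearization}, replacing every $O_{p}/o_{p}$ statement by its bootstrap counterpart $O^{*}_{p}/o^{*}_{p}$. The two inputs are Lemma \ref{lem:bootstrap analogue 1}, which gives the uniform stochastic linearization of $R^{*}_{+}(\varDelta,\tau)$, and Lemma \ref{lem:bootstrap analogue 2}, which gives $\sup_{\tau}\|R^{*}_{+}(0,\tau)\|=O^{*}_{p}(1)$. The latter follows because each term in its expansion is $O^{*}_{p}(1)$: the $U^{*}$ sum by (\ref{eq:U_star bounded in probability}), the $W^{*}\xi^{*}$ sum as shown in the proof of Lemma \ref{lem:bootstrap analogue 1}, and the bias term since $nh^{2p+3}=O(1)$.

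\textbf{Step 1 (approximate first-order condition).} Apply \citet[Lemma A.2]{Cai2008} to the bootstrap weighted quantile regression with weights $\widetilde{w}^{*}_{p,i}$. This gives $\sqrt{h/n}\,\|R^{*}_{+}(\varDelta^{*}_{+}(\tau),\tau)\|\le (p+1)\max_{i}\|\widetilde{w}^{*}_{p,i}K_{+,p}(X^{*}_{i}/h)\|$. From the proof of Lemma \ref{lem:bootstrap analogue 1} we have $\widetilde{w}^{*}_{p,i}=n^{-1}(1+V^{*\top}_{p,i}\widetilde{\lambda}^{*}_{p})(1+O^{*}_{p}(n^{-1}))$ and $\max_{i}|V^{*\top}_{p,i}\widetilde{\lambda}^{*}_{p}|=o^{*}_{p}(1)$, so $\max_{i}n\widetilde{w}^{*}_{p,i}=1+o^{*}_{p}(1)$. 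Hence $\sup_{\tau}\|R^{*}_{+}(\varDelta^{*}_{+}(\tau),\tau)\|=O^{*}_{p}((nh)^{-1/2})=o^{*}_{p}(1)$.

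\textbf{Step 2 (boundedness).} The monotonicity inequality (\ref{eq:convexity}) is purely algebraic. It uses only that $\psi_{\tau}$ is increasing and that the weights are nonnegative. The closed-form weights are positive wpa1 in the bootstrap sense, because $\max_{i}|V^{*\top}_{p,i}\widetilde{\lambda}^{*}_{p}|=o^{*}_{p}(1)$, so the inequality holds on an event of bootstrap probability $1-o_{p}(1)$.

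We then repeat the chain (\ref{eq:step1-decomp})--(\ref{eq:step2-final}) with $\Pr$ replaced by $\Pr_{*}$. For all $\varepsilon,\eta>0$ this produces an $M$ with $\Pr_{*}[\inf_{\|\varDelta\|\ge M,\tau}\|R^{*}_{+}(\varDelta,\tau)\|<\eta]<\varepsilon$ wpa1. The ingredients are:
\begin{itemize}
\item the $o^{*}_{p}(1)$ linearization from Lemma \ref{lem:bootstrap analogue 1};
\item $\sup_{\tau}\|R^{*}_{+}(0,\tau)\|=O^{*}_{p}(1)$;
\item the deterministic bound $\inf_{\tau}\mathrm{mineig}(D_{+}(\tau))>0$.
\end{itemize}
Combining this with Step 1, exactly as in (\ref{eq:step3}), gives $\sup_{\tau}\|\varDelta^{*}_{+}(\tau)\|=O^{*}_{p}(1)$.

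\textbf{Step 3 (conclusion).} Using the union bound (\ref{eq:step3-final}) under $\Pr_{*}$, together with Step 2 and Lemma \ref{lem:bootstrap analogue 1}, we get $R^{*}_{+}(\varDelta^{*}_{+}(\tau),\tau)+D_{+}(\tau)\varDelta^{*}_{+}(\tau)-R^{*}_{+}(0,\tau)=o^{*}_{p}(1)$ uniformly in $\tau$. Step 1 then gives $D_{+}(\tau)\varDelta^{*}_{+}(\tau)=R^{*}_{+}(0,\tau)+o^{*}_{p}(1)$. Since $\sup_{\tau}\|D^{-1}_{+}(\tau)\|<\infty$, multiplying by $D^{-1}_{+}(\tau)$ yields the claim.

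\textbf{Main obstacle.} The delicate part is making the probabilistic bookkeeping in Step 2 rigorous under the double-layer notion. Statements of the form ``for each $\varepsilon$ there is an $M$ such that $\Pr_{*}[\cdot]<\varepsilon$ wpa1'' have to be combined through union bounds. We would first verify that the sum rules for $O^{*}_{p}$ and $o^{*}_{p}$ and the union bound hold inside $\Pr_{*}$ on events of outer probability approaching one. Once that is in place, the deterministic inequalities (\ref{eq:step1-cs})--(\ref{eq:step2-chain}) transfer pointwise on the sample space, and the argument goes through unchanged.
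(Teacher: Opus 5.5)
Your architecture matches the paper's. You start from an approximate first-order condition, run the chain (\ref{eq:step1-decomp})--(\ref{eq:step3}) under $\Pr_{*}$ to get $\sup_{\tau}\|\varDelta^{*}_{+}(\tau)\|=O^{*}_{p}(1)$, and finish with (\ref{eq:step3-final}) and Lemma \ref{lem:bootstrap analogue 1}. The gap is in Step 1.

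The bound of \citet[Lemma A.2]{Cai2008} relies on the fact that, at a minimizer, at most $p+1$ observations have zero residual. This holds almost surely in the original sample because $Y$ given $X$ is continuous. It fails in a bootstrap sample, which contains repeated draws. If the fit passes through $p+1$ distinct original points, every copy of each has zero residual, so up to $(p+1)E_{n}$ residuals can vanish. Here $E_{n}$ is the maximal multiplicity of an original observation in the resample. The subgradient argument then only gives
\[
\Bigl\Vert \sum_{i}\widetilde{w}^{*}_{p,i}K_{+,p}\Bigl(\frac{X^{*}_{i}}{h}\Bigr)\psi_{\tau}\bigl(Y^{*}_{i}-r^{\top}_{p}(X^{*}_{i})\widehat{\kappa}^{*}_{+,p}(\tau)\bigr)\Bigr\Vert \leq(p+1)E_{n}\max_{i}\Bigl\Vert \widetilde{w}^{*}_{p,i}K_{+,p}\Bigl(\frac{X^{*}_{i}}{h}\Bigr)\Bigr\Vert .
\]
Hence $\sup_{\tau}\|R^{*}_{+}(\varDelta^{*}_{+}(\tau),\tau)\|=O^{*}_{p}(E_{n}/\sqrt{nh})$, not $O^{*}_{p}((nh)^{-1/2})$ as you claim.

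Since $E_{n}$ diverges (of order $\log n/\log\log n$ for $n$ draws from $n$ points), you need an extra input. The paper uses $E_{n}=o_{p}(\log n)$ from a balls-into-bins bound (citing Motwani and Raghavan), combined with the bandwidth condition $\log(n)/\sqrt{nh}\rightarrow0$, to conclude this term is $o^{*}_{p}(1)$. This is where that condition enters the first-order-condition step of the bootstrap argument. With only $nh\rightarrow\infty$, your Step 1 conclusion would not follow.

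A smaller point in the same step: the subgradient characterization needs a convex objective, hence nonnegative weights. The closed-form $\widetilde{w}^{*}_{p,i}$ need not be nonnegative. So Step 1, like your Step 2, must be carried out on the event $\{\widetilde{w}^{*}_{p,i}>0\ \forall i\}$, with $\Pr_{*}[\exists i:\widetilde{w}^{*}_{p,i}\leq0]\rightarrow_{p}0$ absorbed through a union bound. With these two corrections, your Steps 2 and 3 go through as written.
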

\begin{proof}[Proof of Lemma \ref{lem:delta linearization bootstrap analogue}]
Since we obtained $\max_{i}\left|V^{*\top}_{p,i}\widetilde{\lambda}^{*}_{p}\right|=o^{*}_{p}\left(1\right)$
and 
\[
\widetilde{w}^{*}_{p,i}=n^{-1}\left(1+V^{*\top}_{p,i}\widetilde{\lambda}^{*}_{p}\right)\left(1+O^{*}_{p}\left(n^{-1}\right)\right)
\]
in the proof of Lemma \ref{lem:bootstrap analogue 1}, we have $\max_{i}\left|\widetilde{w}^{*}_{p,i}\right|=O^{*}_{p}\left(n^{-1}\right)$.
Moreover,
\[
\mathrm{Pr}_{*}\left[\widetilde{w}^{*}_{p,i}>0,\,\forall i\right]\geq\mathrm{Pr}_{*}\left[\max_{i}\left|V^{*\top}_{p,i}\widetilde{\lambda}^{*}_{p}\right|<\frac{1}{2}\right]\rightarrow_{p}1.
\]
When $\widetilde{w}^{*}_{p,i}>0$ for all $i$, by a slight adaptation
of \citet[Lemma A.2]{Cai2008} to the bootstrap sample,
\[
\left\Vert \sum_{i}\widetilde{w}^{*}_{p,i}K_{+,p}\left(\frac{X^{*}_{i}}{h}\right)\psi_{\tau}\left(Y^{*}_{i}-r^{\top}_{p}\left(X^{*}_{i}\right)\widehat{\kappa}^{*}_{+,p}\left(\tau\right)\right)\right\Vert \leq\left(p+1\right)E_{n}\underset{i}{\max}\left\Vert \widetilde{w}^{*}_{p,i}K_{+,p}\left(\frac{X^{*}_{i}}{h}\right)\right\Vert ,
\]
where $E_{n}$ is the maximal multiplicity of observations in the
bootstrap sample. By \citet[Theorem 3.1]{Motwani1995}, $E_{n}=o_{p}\left(\log\left(n\right)\right)$.
By this result and the union bound, for all $\varepsilon,M>0$,
\begin{multline*}
\Pr\left[\mathrm{Pr}_{*}\left[\frac{n}{\log\left(n\right)}\left\Vert \sum_{i}\widetilde{w}^{*}_{p,i}K_{+,p}\left(\frac{X^{*}_{i}}{h}\right)\psi_{\tau}\left(Y^{*}_{i}-r^{\top}_{p}\left(X^{*}_{i}\right)\widehat{\kappa}^{*}_{+,p}\left(\tau\right)\right)\right\Vert \geq M\right]>\varepsilon\right]\\
\leq\Pr\left[\mathrm{Pr}_{*}\left[\frac{nE_{n}\left(p+1\right)}{\log\left(n\right)}\underset{i}{\max}\left\Vert \widetilde{w}^{*}_{p,i}K_{+,p}\left(\frac{X^{*}_{i}}{h}\right)\right\Vert >M\right]>\frac{\varepsilon}{2}\right]+\Pr\left[\mathrm{Pr}_{*}\left[\exists i:\widetilde{w}^{*}_{p,i}\leq0\right]>\frac{\varepsilon}{2}\right].
\end{multline*}
By $\max_{i}\left|\widetilde{w}^{*}_{p,i}\right|=O^{*}_{p}\left(n^{-1}\right)$,
$E_{n}=o_{p}\left(\log\left(n\right)\right)$ and Lemma \ref{lem:O_p_star properties}(i),
one can choose $M>0$ such that the upper limit of the first term
on the right-hand side is bounded by $\varepsilon/2$. By $\Pr_{*}\left[\widetilde{w}^{*}_{p,i}>0,\,\forall i\right]\rightarrow_{p}1$,
the limit of the second term is zero. This gives $R^{*}_{+}(\varDelta^{*}_{+}\left(\tau\right),\tau)=O^{*}_{p}(\log\left(n\right)/\sqrt{nh})$,
uniformly in $\tau\in\left[\underline{\tau},\overline{\tau}\right]$.

By using the same arguments, we obtain a bootstrap analogue of \eqref{eq:step1-conclusion}.
By this result, for all $M,\eta,\varepsilon>0$, 
\begin{eqnarray}
 &  & \Pr\left[\mathrm{Pr}_{*}\left[\inf_{\|\varDelta\|=M,\,\tau\in\left[\underline{\tau},\overline{\tau}\right]}-\varDelta^{\top}R^{*}_{+}(\varDelta,\tau)<M\eta\right]>\varepsilon\right]\nonumber \\
 & \leq & \Pr\left[\mathrm{Pr}_{*}\left[\sup_{\|\varDelta\|=M,\,\tau\in\left[\underline{\tau},\overline{\tau}\right]}\|R^{*}_{+}(\varDelta,\tau)+D_{+}(\tau)\varDelta-R^{*}_{+}(0,\tau)\|\geq\eta\right]>\frac{\varepsilon}{2}\right]\nonumber \\
 &  & +\Pr\left[\mathrm{Pr}_{*}\left[\sup_{\tau\in\left[\underline{\tau},\overline{\tau}\right]}\|R^{*}_{+}(0,\tau)\|>M\left\{ \inf_{\tau\in\left[\underline{\tau},\overline{\tau}\right]}\mathrm{mineig}\left(D_{+}(\tau)\right)\right\} -2\eta\right]>\frac{\varepsilon}{2}\right].\label{eq:step1-conclusion-1}
\end{eqnarray}
In the proof of Lemma \ref{lem:bootstrap analogue 1}, we obtained
$\left(nh\right)^{-1/2}\sum_{i}W^{*}_{p,i}\xi^{*}_{i}=O^{*}_{p}\left(1\right)$.
It follows from this result, \eqref{eq:U_star bounded in probability},
and Lemma \ref{lem:bootstrap analogue 2} that $\sup_{\tau\in\left[\underline{\tau},\overline{\tau}\right]}\|R^{*}_{+}(0,\tau)\|=O^{*}_{p}\left(1\right)$.
Now, by Lemmas \ref{lem:bootstrap analogue 1} and \ref{lem:bootstrap analogue 2},
for all $\eta,\varepsilon>0$, there exists an $M>0$ such that both
terms on the right-hand side of \eqref{eq:step1-conclusion-1} are
bounded by $\varepsilon/2$ for all sufficiently large $n$. Therefore,
for all $\eta,\varepsilon>0$, there exists an $M>0$ such that 
\begin{equation}
\limsup_{n\uparrow\infty}\Pr\left[\mathrm{Pr}_{*}\left[\inf_{\|\varDelta\|=M,\,\tau\in\left[\underline{\tau},\overline{\tau}\right]}-\varDelta^{\top}R^{*}_{+}(\varDelta,\tau)<M\eta\right]>\varepsilon\right]<\varepsilon.\label{eq:step1-final-bound bootstrap}
\end{equation}
Note that when $\widetilde{w}^{*}_{p,i}>0$ for all $i$, the bootstrap
analogue of \eqref{eq:R_+ inf bound} holds. By the union bound, for
all $M,\eta,\varepsilon>0$, 
\begin{multline*}
\Pr\left[\mathrm{Pr}_{*}\left[\inf_{\|\varDelta\|\geq M,\,\tau\in\left[\underline{\tau},\overline{\tau}\right]}\|R^{*}_{+}(\varDelta,\tau)\|<\eta\right]>\varepsilon\right]\\
\leq\Pr\left[\mathrm{Pr}_{*}\left[\inf_{\|\varDelta\|=M,\,\tau\in\left[\underline{\tau},\overline{\tau}\right]}-\varDelta^{\top}R^{*}_{+}(\varDelta,\tau)<M\eta\right]>\frac{\varepsilon}{2}\right]+\Pr\left[\mathrm{Pr}_{*}\left[\exists i:\widetilde{w}^{*}_{p,i}\leq0\right]>\frac{\varepsilon}{2}\right].
\end{multline*}
It follows from this result, $\Pr_{*}\left[\widetilde{w}^{*}_{p,i}>0,\,\forall i\right]\rightarrow_{p}1$,
and \eqref{eq:step1-final-bound bootstrap} that for all $\varepsilon,\eta>0$,
there exists $M>0$ such that 
\begin{equation}
\limsup_{n\uparrow\infty}\Pr\left[\mathrm{Pr}_{*}\left[\inf_{\|\varDelta\|\geq M,\,\tau\in\left[\underline{\tau},\overline{\tau}\right]}\|R^{*}_{+}(\varDelta,\tau)\|<\eta\right]>\varepsilon\right]<\varepsilon.\label{eq:step2-final-1}
\end{equation}
By the bootstrap analogue of \eqref{eq:step3}, \eqref{eq:step2-final-1}
and $\sup_{\tau\in\left[\underline{\tau},\overline{\tau}\right]}\|R^{*}_{+}(\varDelta^{*}_{+}\left(\tau\right),\tau)\|=o^{*}_{p}\left(1\right)$,
we have $\varDelta^{*}_{+}\left(\tau\right)=O^{*}_{p}(1)$, uniformly
in $\tau\in\left[\underline{\tau},\overline{\tau}\right]$.

By the bootstrap analogue of \eqref{eq:step3-final} and Lemma \ref{lem:bootstrap analogue 1},
for all $M,\eta,\varepsilon>0$,
\begin{multline*}
\limsup_{n\uparrow\infty}\Pr\left[\mathrm{Pr}_{*}\left[\sup_{\tau\in\left[\underline{\tau},\overline{\tau}\right]}\left\Vert R^{*}_{+}(\varDelta^{*}_{+}\left(\tau\right),\tau)+D_{+}(\tau)\varDelta^{*}_{+}\left(\tau\right)-R^{*}_{+}(0,\tau)\right\Vert >\eta\right]>\varepsilon\right]\\
\leq\limsup_{n\uparrow\infty}\Pr\left[\mathrm{Pr}_{*}\left[\sup_{\tau\in\left[\underline{\tau},\overline{\tau}\right]}\left\Vert \varDelta^{*}_{+}\left(\tau\right)\right\Vert >M\right]>\frac{\varepsilon}{2}\right].
\end{multline*}
By $\sup_{\tau\in\left[\underline{\tau},\overline{\tau}\right]}\left\Vert \varDelta^{*}_{+}\left(\tau\right)\right\Vert =O^{*}_{p}(1)$,
one can choose $M>0$ such that the right-hand side is bounded by
$\varepsilon/2$. Now it follows that
\[
R^{*}_{+}(\varDelta^{*}_{+}\left(\tau\right),\tau)+D_{+}(\tau)\varDelta^{*}_{+}\left(\tau\right)-R^{*}_{+}(0,\tau)=o^{*}_{p}\left(1\right),
\]
uniformly in $\tau\in\left[\underline{\tau},\overline{\tau}\right]$.
The conclusion follows from this result and $\sup_{\tau\in\left[\underline{\tau},\overline{\tau}\right]}\|R^{*}_{+}(\varDelta^{*}_{+}\left(\tau\right),\tau)\|=o^{*}_{p}\left(1\right)$. 
\end{proof}

\begin{proof}[Proof of Theorem \ref{thm:QRD bootstrap}]
By Lemmas \ref{lem:bootstrap analogue 2} and \ref{lem:delta linearization bootstrap analogue},
$\mathrm{e}^{\top}_{p+1,1}\varDelta^{*}_{+}\left(\tau\right)=\mathrm{e}^{\top}_{p+1,1}D^{-1}_{+}\left(\tau\right)R^{*}_{+}\left(0,\tau\right)+o^{*}_{p}\left(1\right)$,
uniformly in $\tau\in\left[\underline{\tau},\overline{\tau}\right]$,
since $\sup_{\tau\in\left[\underline{\tau},\overline{\tau}\right]}\left\Vert D^{-1}_{+}\left(\tau\right)\right\Vert <\infty$
by Assumption \ref{assu:QRD smoothness}(iv). By \eqref{eq:PI_star_inv - PI_inv rate}
and $\left(nh\right)^{-1/2}\sum_{i}K_{s,p}\left(X^{*}_{i}/h\right)\xi^{*\top}_{i}=O^{*}_{p}\left(1\right)$,
$s\in\left\{ -,+\right\} $, obtained in the proof of Lemma \ref{lem:bootstrap analogue 1},
we have $\left(nh\right)^{-1/2}\sum_{i}\left(W^{*}_{p,i}-\bar{W}^{*}_{p,i}\right)\xi^{*\top}_{i}=o^{*}_{p}\left(1\right)$.
Substituting the linearization of Lemma \ref{lem:bootstrap analogue 2},
we obtain the bootstrap analogue of \eqref{eq:Delta_+ linearization}.
The analogue for $\mathrm{e}^{\top}_{p+1,1}\varDelta^{*}_{-}\left(\tau\right)$
follows in the same way. Combining these with \eqref{eq:theta_hat_star - theta_hat decomposition}
and \eqref{eq:Delta_+ linearization}, the bias terms cancel, and
the $o_{p}\left(1\right)$ remainders of \eqref{eq:Delta_+ linearization}
and of its counterpart for $\varDelta_{-}\left(\tau\right)$ are $o^{*}_{p}\left(1\right)$
by Lemma \ref{lem:O_p_star properties}(i). Therefore, we have 
\begin{eqnarray}
\sqrt{nh}\left(\widehat{\vartheta}^{*}_{p}\left(\tau\right)-\widehat{\vartheta}^{\mathit{eb}}_{p}\left(\tau\right)\right) & = & \mathbb{Z}^{*}_{p,n}\left(\tau\right)+o^{*}_{p}\left(1\right),\mbox{ where}\nonumber \\
\mathbb{Z}^{*}_{p,n}\left(\tau\right) & \coloneqq & \frac{1}{\sqrt{n}}\sum_{i}\left\{ \frac{\phi^{*}_{i}\left(\tau\right)}{\sqrt{h}}-\frac{1}{n}\sum_{j}\frac{\phi_{j}\left(\tau\right)}{\sqrt{h}}\right\} ,\label{eq:theta_hat_star - theta_hat linearization}
\end{eqnarray}
uniformly in $\tau\in\left[\underline{\tau},\overline{\tau}\right]$,
where $\phi^{*}_{i}\left(\tau\right)$ denotes the bootstrap analogue
of $\phi_{i}\left(\tau\right)$ defined by \eqref{eq:phi_i definition}.
Clearly, $\mathbb{Z}^{*}_{p,n}\left(\tau\right)=\mathbb{G}^{*}_{n}g_{7}\left(\cdot\mid\tau\right)$,
and $\mathbb{Z}^{*}_{p,n}\left(\tau\right)$ is the bootstrap analogue
of the leading term $\mathbb{Z}_{p,n}\left(\tau\right)$ in \eqref{eq:uniform linearization QRD}.
By \citet[Theorem 1]{Mammen1992} and the Cramér--Wold device, 
\begin{equation}
\left(\mathbb{Z}^{*}_{p,n}\left(\tau_{1}\right),\ldots,\mathbb{Z}^{*}_{p,n}\left(\tau_{J}\right)\right)\rightsquigarrow_{*}\left(\mathbb{Z}_{p}\left(\tau_{1}\right),\ldots,\mathbb{Z}_{p}\left(\tau_{J}\right)\right)\mbox{ in }\mathbb{R}^{J},\label{eq:bootstrap weak convergence marginals}
\end{equation}
for all $\tau_{1}<\cdots<\tau_{J}$ in $\left[\underline{\tau},\overline{\tau}\right]$.

Let $\delta\in\left(0,1\right)$. Consider a finite $\delta$-net
$\left\{ \tau_{1},\ldots,\tau_{N_{\delta}}\right\} $ of $\left[\underline{\tau},\overline{\tau}\right]$,
and for any $\tau\in\left[\underline{\tau},\overline{\tau}\right]$,
let $\pi_{\delta}\left(\tau\right)\in\left\{ \tau_{1},\ldots,\tau_{N_{\delta}}\right\} $
be chosen such that $\left|\pi_{\delta}\left(\tau\right)-\tau\right|<\delta$.
Let $\mathbb{Z}^{*}_{p,n,\delta}\left(\tau\right)\coloneqq\mathbb{Z}^{*}_{p,n}\left(\pi_{\delta}\left(\tau\right)\right)$
and $\mathbb{Z}_{p,\delta}\left(\tau\right)\coloneqq\mathbb{Z}_{p}\left(\pi_{\delta}\left(\tau\right)\right)$.
Then, by the triangle inequality,
\begin{multline}
\underset{f\in\mathit{BL}_{1}\left(\ell^{\infty}\left[\underline{\tau},\overline{\tau}\right]\right)}{\sup}\left|\mathrm{E}_{*}\left[f\left(\mathbb{Z}^{*}_{p,n}\right)\right]-\mathrm{E}\left[f\left(\mathbb{Z}_{p}\right)\right]\right|\leq\left|\mathrm{E}_{*}\left[\left\Vert \mathbb{Z}^{*}_{p,n}-\mathbb{Z}^{*}_{p,n,\delta}\right\Vert _{\infty}\land2\right]\right|\\
+\underset{f\in\mathit{BL}_{1}\left(\ell^{\infty}\left[\underline{\tau},\overline{\tau}\right]\right)}{\sup}\left|\mathrm{E}_{*}\left[f\left(\mathbb{Z}^{*}_{p,n,\delta}\right)\right]-\mathrm{E}\left[f\left(\mathbb{Z}_{p,\delta}\right)\right]\right|+\left|\mathrm{E}\left[\left\Vert \mathbb{Z}_{p}-\mathbb{Z}_{p,\delta}\right\Vert _{\infty}\land2\right]\right|.\label{eq:bootstrap weak convergence decompose}
\end{multline}
By \eqref{eq:bootstrap weak convergence marginals} and the continuous
mapping theorem, $\mathbb{Z}^{*}_{p,n,\delta}\rightsquigarrow_{*}\mathbb{Z}_{p,\delta}$
in $\ell^{\infty}\left[\underline{\tau},\overline{\tau}\right]$.
Therefore, the second term on the right-hand side is $o_{p}\left(1\right)$.
Let $\mathcal{T}_{\delta}\coloneqq\left\{ \left(\tau,\tau'\right)\in\left[\underline{\tau},\overline{\tau}\right]^{2}:\left|\tau-\tau'\right|<\delta\right\} $.
For the third term, note that
\[
\left\Vert \mathbb{Z}_{p}-\mathbb{Z}_{p,\delta}\right\Vert _{\infty}\leq\underset{\left(\tau,\tau'\right)\in\mathcal{T}_{\delta}}{\sup}\left|\mathbb{Z}_{p}\left(\tau\right)-\mathbb{Z}_{p}\left(\tau'\right)\right|.
\]
It follows from this result, the DCT and the fact that $\mathbb{Z}_{p}$
concentrates on the set of continuous functions on $\left[\underline{\tau},\overline{\tau}\right]$
with probability one that 
\begin{equation}
\underset{\delta\downarrow0}{\lim}\mathrm{E}\left[\left\Vert \mathbb{Z}_{p}-\mathbb{Z}_{p,\delta}\right\Vert _{\infty}\land2\right]=0.\label{eq:Gaussian discretization}
\end{equation}

For the first term on the right-hand side of \eqref{eq:bootstrap weak convergence decompose},
note that
\[
\left\Vert \mathbb{Z}^{*}_{p,n}-\mathbb{Z}^{*}_{p,n,\delta}\right\Vert _{\infty}\leq\underset{\left(\tau,\tau'\right)\in\mathcal{T}_{\delta}}{\sup}\left|\mathbb{Z}^{*}_{p,n}\left(\tau\right)-\mathbb{Z}^{*}_{p,n}\left(\tau'\right)\right|.
\]
Let $\mathscr{G}_{\delta}\coloneqq\left\{ g_{7}\left(\cdot\mid\tau\right)-g_{7}\left(\cdot\mid\tau'\right):\left(\tau,\tau'\right)\in\mathcal{T}_{\delta}\right\} $.
By \citet[Lemma A.6]{chernozhukov2014gaussian}, $\mathscr{G}_{\delta}$
is VC-type with the envelope $2G_{7}$ and characteristics independent
of $n$. Then, we have
\begin{eqnarray*}
\mathrm{E}_{*}\left[\underset{\left(\tau,\tau'\right)\in\mathcal{T}_{\delta}}{\sup}\left|\mathbb{Z}^{*}_{p,n}\left(\tau\right)-\mathbb{Z}^{*}_{p,n}\left(\tau'\right)\right|\right] & = & \mathrm{E}_{*}\left[\left\Vert \mathbb{G}^{*}_{n}\right\Vert _{\mathscr{G}_{\delta}}\right]\mbox{, and}\\
\underset{f\in\mathscr{G}_{\delta}}{\sup}\mathbb{P}_{n}\left(f-\mathbb{P}_{n}f\right)^{2} & \leq & \underset{\left(\tau,\tau'\right)\in\mathcal{T}_{\delta}}{\sup}\frac{1}{nh}\sum_{i}\left(\phi_{i}\left(\tau\right)-\phi_{i}\left(\tau'\right)\right)^{2}.
\end{eqnarray*}
By \eqref{eq:phi_u inequality} and \eqref{eq:phi_z inequality},
the right-hand side of the above inequality can be written as $\varpi^{2}\left(\delta\right)+\chi_{n}\left(\delta\right)$,
where $\varpi\left(\delta\right)$ is deterministic with $\lim_{\delta\downarrow0}\varpi\left(\delta\right)=0$
and $\chi_{n}\left(\delta\right)=o_{p}\left(1\right)$ for all $\delta>0$.
By arguments similar to those in the proofs of Lemma \ref{lem: reg weights basic}(i)
and (iii), $\mathbb{P}_{n}G^{2}_{7}=c^{2}+o_{p}\left(1\right)$ for
some $c>0$, where $G_{7}$ is defined by \eqref{eq:G_7 definition}.
Let $\delta$ be an arbitrarily small number satisfying $2\varpi^{2}\left(\delta\right)\leq c^{2}/2$.
Then, for any such fixed $\delta$, since $\chi_{n}\left(\delta\right)=o_{p}\left(1\right)$
and $\mathbb{P}_{n}G^{2}_{7}=c^{2}+o_{p}\left(1\right)$, we have
$\left|\chi_{n}\left(\delta\right)\right|\leq\varpi^{2}\left(\delta\right)$
and $c^{2}/2\leq\mathbb{P}_{n}G^{2}_{7}\leq2c^{2}$ wpa1. By this
result, Lemma A.6, and Corollary 5.1 of \citet{chernozhukov2014gaussian}
with $\sigma=\sqrt{2}\varpi\left(\delta\right)$, we have
\begin{eqnarray}
\mathrm{E}_{*}\left[\left\Vert \mathbb{G}^{*}_{n}\right\Vert _{\mathscr{G}_{\delta}}\right] & = & \mathrm{E}_{*}\left[\left\Vert \mathbb{G}^{*}_{n}\right\Vert _{\mathscr{G}_{\delta}}\right]\mathbbm{1}\left(\left|\chi_{n}\left(\delta\right)\right|\leq\varpi^{2}\left(\delta\right),\frac{1}{2}c^{2}\leq\mathbb{P}_{n}G^{2}_{7}\leq2c^{2}\right)+o_{p}\left(1\right)\nonumber \\
 & \apprle & \varpi\left(\delta\right)\sqrt{\log\left(\frac{C}{\varpi\left(\delta\right)}\right)}+\frac{\sqrt{\mathrm{E}_{*}\left[\max_{i}G^{2}_{7}\left(Y^{*}_{i},X^{*}_{i},Z^{*}_{i}\right)\right]}}{\sqrt{n}}\log\left(\frac{C}{\varpi\left(\delta\right)}\right)+o_{p}\left(1\right),\nonumber \\
 &  & \label{eq:Z_star increment bound}
\end{eqnarray}
where $C$ is a positive constant that depends on $c$ and on the
VC characteristics of $\mathscr{G}_{7}$ defined in the proof of Theorem
\ref{thm:sharp QRD}. By arguments similar to those in the proof of
Lemma \ref{lem: reg weights basic}(iv), we have
\begin{equation}
\frac{\sqrt{\mathrm{E}_{*}\left[\max_{i}G^{2}_{7}\left(Y^{*}_{i},X^{*}_{i},Z^{*}_{i}\right)\right]}}{\sqrt{n}}\leq\frac{\max_{i}G_{7}\left(Y_{i},X_{i},Z_{i}\right)}{\sqrt{n}}=o_{p}\left(1\right).\label{eq:max_G7 upper bound}
\end{equation}
Since $\lim_{\delta\downarrow0}\varpi\left(\delta\right)\sqrt{\log\left(C/\varpi\left(\delta\right)\right)}=0$
and \eqref{eq:Gaussian discretization}, for any $\varepsilon>0$,
one can choose some sufficiently small $\delta$ such that the sum
of the right-hand side of \eqref{eq:Z_star increment bound} and $\left|\mathrm{E}\left[\left\Vert \mathbb{Z}_{p}-\mathbb{Z}_{p,\delta}\right\Vert _{\infty}\land2\right]\right|$
is bounded by $\varepsilon/2$ wpa1. Since the second term on the
right-hand side of \eqref{eq:bootstrap weak convergence decompose}
is $o_{p}\left(1\right)$ for any fixed $\delta$, it follows that
the distance between $\mathbb{Z}^{*}_{p,n}$ and $\mathbb{Z}_{p}$
on the left-hand side of \eqref{eq:bootstrap weak convergence decompose}
is $o_{p}\left(1\right)$.

We have shown that the leading term $\mathbb{Z}^{*}_{p,n}$ in the
linearization \eqref{eq:theta_hat_star - theta_hat linearization}
converges in distribution to $\mathbb{Z}_{p}$, conditionally on the
original data. Note that by the triangle inequality and $\mathbb{Z}^{*}_{p,n}\rightsquigarrow_{*}\mathbb{Z}_{p}$
in $\ell^{\infty}\left[\underline{\tau},\overline{\tau}\right]$,
\begin{multline*}
\underset{f\in\mathit{BL}_{1}\left(\ell^{\infty}\left[\underline{\tau},\overline{\tau}\right]\right)}{\sup}\left|\mathrm{E}_{*}\left[f\left(\sqrt{nh}\left(\widehat{\vartheta}^{*}_{p}\left(\cdot\right)-\widehat{\vartheta}^{\mathit{eb}}_{p}\left(\cdot\right)\right)\right)\right]-\mathrm{E}\left[f\left(\mathbb{Z}_{p}\right)\right]\right|\\
\leq\left|\mathrm{E}_{*}\left[\underset{\tau\in\left[\underline{\tau},\overline{\tau}\right]}{\sup}\left|\sqrt{nh}\left(\widehat{\vartheta}^{*}_{p}\left(\tau\right)-\widehat{\vartheta}^{\mathit{eb}}_{p}\left(\tau\right)\right)-\mathbb{Z}^{*}_{p,n}\left(\tau\right)\right|\land2\right]\right|+o_{p}\left(1\right).
\end{multline*}
The conclusion follows from this result, \eqref{eq:theta_hat_star - theta_hat linearization}
and Lemma \ref{lem:O_p_star properties}(iv).
\end{proof}

\section{Proof of Theorem \ref{thm:QRD varying bandwidth}}

We now sketch the proof of the first part of Theorem \ref{thm:QRD varying bandwidth},
the weak convergence of $\mathbb{Z}^{\mathit{eb}}_{p}\left(\cdot\mid h\left(\cdot\right)\right)$.
The argument extends the proof of Theorem \ref{thm:sharp QRD} by
indexing every sample object by the pair $\left(\tau,h\right)$ and
establishing each intermediate bound uniformly over the rectangle
$\left(\tau,h\right)\in\left[\underline{\tau},\overline{\tau}\right]\times\mathcal{H}$,
where $\mathcal{H}\coloneqq\left[\underline{c}h_{0},\overline{c}h_{0}\right]$.
Assumption \ref{assu:bounded variation} enters through the following
fact: for $g$ of bounded variation supported on $\left[-1,1\right]$,
the function class $\left\{ x\mapsto g\left(x/h\right):h\in\mathcal{H}\right\} $
is VC-type with a constant envelope and VC characteristics depending
only on $g$. We add the bandwidth index $h\in\mathcal{H}$ to the
kernel weights and the Lagrange multiplier and show that the remainder
terms in the statements of Lemmas \ref{lem: reg weights basic} and
\ref{lem:reg weights lambda} are uniform in $h\in\mathcal{H}$. We
then add the index $h\in\mathcal{H}$ to any function class appearing
in the proofs of Lemmas \ref{lem:R_delta linearization}--\ref{lem:delta linearization}.
The proofs of Lemmas \ref{lem:R_delta linearization}--\ref{lem:delta linearization}
extend to uniform-in-bandwidth statements: the remainder terms in
the statements of Lemmas \ref{lem:R_delta linearization}--\ref{lem:delta linearization}
are all uniform in $h\in\mathcal{H}$. These lead to $\mathbb{Z}^{\mathit{eb}}_{p}\left(\tau\mid h\right)=\mathbb{Z}_{p,n}\left(\tau\mid h\right)+o_{p}\left(1\right)$,
uniformly in $\left(\tau,h\right)\in\left[\underline{\tau},\overline{\tau}\right]\times\mathcal{H}$,
where $\phi_{i}\left(\tau\mid h\right)$ denotes $\phi_{i}\left(\tau\right)$
computed at bandwidth $h$ and $\mathbb{Z}_{p,n}\left(\tau\mid h\right)$
denotes $\left(nh\right)^{-1/2}\sum_{i}\phi_{i}\left(\tau\mid h\right)$.
Now since $\left\{ \left(\tau,h\left(\tau\right)\right):\tau\in\left[\underline{\tau},\overline{\tau}\right]\right\} $
is contained in the rectangle $\left[\underline{\tau},\overline{\tau}\right]\times\mathcal{H}$,
by taking $h=h\left(\tau\right)$, we have 
\begin{equation}
\mathbb{Z}^{\mathit{eb}}_{p}\left(\tau\mid h\left(\tau\right)\right)=\frac{1}{\sqrt{nh\left(\tau\right)}}\sum_{i}\phi_{i}\left(\tau\mid h\left(\tau\right)\right)+o_{p}\left(1\right),\label{eq:uniform linearization QRD vb}
\end{equation}
uniformly in $\tau\in\left[\underline{\tau},\overline{\tau}\right]$.

Let $\tilde{g}_{7}\left(\cdot\mid\tau\right)$ denote $g_{7}\left(\cdot\mid\tau\right)$
with $h$ replaced by $h\left(\tau\right)$, so that $\mathbb{G}_{n}\tilde{g}_{7}\left(\cdot\mid\tau\right)$
equals the leading term of (\ref{eq:uniform linearization QRD vb}).
An envelope function proportional to $h^{-1/2}_{0}\mathbbm{1}\left(\left|x\right|\leq\overline{c}h_{0}\right)\left\{ 1+\left\Vert z-\mu_{Z}\left(x\right)\right\Vert \right\} $
satisfies the same moment bounds as $G_{7}$ with $h$ replaced by
$h_{0}$, so the Lindeberg envelope condition follows as in the proof
of Theorem \ref{thm:sharp QRD}, and $\tilde{\mathscr{G}}_{7}\coloneqq\left\{ \tilde{g}_{7}\left(\cdot\mid\tau\right):\tau\in\left[\underline{\tau},\overline{\tau}\right]\right\} $
is VC-type with characteristics independent of $n$.

By the change of variables $x=h_{0}u$ and the DCT, every kernel second
moment that produced $\omega^{0,2}_{p,0}/\varphi$ in the corresponding
calculation in the proof of Theorem \ref{thm:sharp QRD} now converges
to $\Omega_{p}\left(c\left(\tau\right),c\left(\tau'\right)\right)/\varphi$.
Hence $\mathbb{P}\left(\tilde{g}_{7}\left(\cdot\mid\tau\right)\tilde{g}_{7}\left(\cdot\mid\tau'\right)\right)\rightarrow\mathscr{C}^{c}_{\star,p}\left(\tau,\tau'\right)$.

For the equicontinuity condition (19.27) of \citet{VanDerVaart1998},
we decompose $\phi\left(\tau\mid h\left(\tau\right)\right)=\phi_{u}\left(\tau\mid h\left(\tau\right)\right)-\phi_{z}\left(\tau\mid h\left(\tau\right)\right)$.
Then, we have 
\begin{eqnarray}
 &  & \mathrm{E}\left[\left(\frac{1}{\sqrt{h\left(\tau\right)}}\phi_{u}\left(\tau\mid h\left(\tau\right)\right)-\frac{1}{\sqrt{h\left(\tau'\right)}}\phi_{u}\left(\tau'\mid h\left(\tau'\right)\right)\right)^{2}\right]\nonumber \\
 & \apprle & \sum_{s\in\left\{ +,-\right\} }\mathrm{E}\left[\left(\frac{\bar{W}_{s,p}\left(h\left(\tau\right)\right)}{\sqrt{h\left(\tau\right)}\varphi_{s}\left(\tau\right)}U\left(\tau\right)-\frac{\bar{W}_{s,p}\left(h\left(\tau'\right)\right)}{\sqrt{h\left(\tau'\right)}\varphi_{s}\left(\tau'\right)}U\left(\tau'\right)\right)^{2}\right]\nonumber \\
 & \apprle & \sum_{s\in\left\{ +,-\right\} }\left\{ \mathrm{E}\left[\frac{\bar{W}^{2}_{s,p}\left(h\left(\tau\right)\right)}{h\left(\tau\right)\varphi^{2}_{s}\left(\tau\right)}\left(U\left(\tau\right)-U\left(\tau'\right)\right)^{2}\right]+\left(\frac{1}{\varphi_{s}\left(\tau\right)}-\frac{1}{\varphi_{s}\left(\tau'\right)}\right)^{2}\mathrm{E}\left[\frac{\bar{W}^{2}_{s,p}\left(h\left(\tau\right)\right)}{h\left(\tau\right)}U^{2}\left(\tau'\right)\right]\right.\nonumber \\
 &  & \left.+\frac{1}{\varphi^{2}_{s}\left(\tau'\right)}\mathrm{E}\left[\left(\frac{\bar{W}_{s,p}\left(h\left(\tau\right)\right)}{\sqrt{h\left(\tau\right)}}-\frac{\bar{W}_{s,p}\left(h\left(\tau'\right)\right)}{\sqrt{h\left(\tau'\right)}}\right)^{2}U^{2}\left(\tau'\right)\right]\right\} .\label{eq:vary btw decompose}
\end{eqnarray}
By similar arguments, the suprema of the first and second terms on
the right-hand side of the second inequality over $\left|\tau-\tau'\right|<\delta_{n}$
are both $o\left(1\right)$ for every $\delta_{n}\downarrow0$. For
the third term, note that since $U^{2}\left(\tau'\right)<1$, by a
change of variables 
\begin{multline*}
\mathrm{E}\left[\left(\frac{\bar{W}_{+,p}\left(h\left(\tau\right)\right)}{\sqrt{h\left(\tau\right)}}-\frac{\bar{W}_{+,p}\left(h\left(\tau'\right)\right)}{\sqrt{h\left(\tau'\right)}}\right)^{2}U^{2}\left(\tau'\right)\right]\\
\leq\frac{1}{\varphi^{2}}\int^{\overline{x}/h_{0}}_{0}\left\{ c^{-1/2}\left(\tau\right)\mathcal{K}_{+,p,0}\left(\frac{u}{c\left(\tau\right)}\right)-c^{-1/2}\left(\tau'\right)\mathcal{K}_{+,p,0}\left(\frac{u}{c\left(\tau'\right)}\right)\right\} ^{2}f_{X}\left(h_{0}u\right)\mathrm{d}u.
\end{multline*}
Note that $\mathcal{K}_{+,p,0}\left(\cdot\right)$ is also Lipschitz
on $\mathbb{R}$, under the Lipschitz condition in Assumption \ref{assu:bounded variation}
and the bounded support condition in Assumption \ref{assu:kernel}
for $K\left(\cdot\right)$. The right-hand side of the above inequality
can be bounded by the product of an $O\left(1\right)$ term and $\left|\tau-\tau'\right|$,
by the Lipschitz continuity of $\mathcal{K}_{+,p,0}\left(\cdot\right)$
and $c\left(\cdot\right)$. The same result holds when $+$ is replaced
by $-$.

By adding and subtracting $\bar{W}_{s,p}\left(h\left(\tau\right)\right)\xi^{\top}\gamma\left(\tau'\right)/\sqrt{h\left(\tau\right)}$
for $s\in\left\{ +,-\right\} $, we have 
\begin{eqnarray*}
 &  & \mathrm{E}\left[\left(\frac{1}{\sqrt{h\left(\tau\right)}}\phi_{z}\left(\tau\mid h\left(\tau\right)\right)-\frac{1}{\sqrt{h\left(\tau'\right)}}\phi_{z}\left(\tau'\mid h\left(\tau'\right)\right)\right)^{2}\right]\\
 & \apprle & \sum_{s\in\left\{ +,-\right\} }\left\{ \mathrm{E}\left[\frac{\bar{W}^{2}_{s,p}\left(h\left(\tau\right)\right)}{h\left(\tau\right)}\left(\xi^{\top}\left(\gamma\left(\tau\right)-\gamma\left(\tau'\right)\right)\right)^{2}\right]\right.\\
 &  & \left.+\mathrm{E}\left[\left(\frac{\bar{W}_{s,p}\left(h\left(\tau\right)\right)}{\sqrt{h\left(\tau\right)}}-\frac{\bar{W}_{s,p}\left(h\left(\tau'\right)\right)}{\sqrt{h\left(\tau'\right)}}\right)^{2}\left(\xi^{\top}\gamma\left(\tau'\right)\right)^{2}\right]\right\} .
\end{eqnarray*}
By the Cauchy--Schwarz inequality and the bounded conditional second
moment of $\xi$ (Assumption \ref{assu:data generating process}(iv)),
the first term is bounded by the product of an $O\left(1\right)$
term and $\left\Vert \gamma\left(\tau\right)-\gamma\left(\tau'\right)\right\Vert ^{2}$.
As the third term on the right-hand side of the second inequality
in \eqref{eq:vary btw decompose}, the second term admits an upper
bound of the product of an $O\left(1\right)$ term and $\left|\tau-\tau'\right|$.
All hypotheses of \citet[Theorem 19.28]{VanDerVaart1998} are thus
verified: $\mathbb{G}_{n}\tilde{g}_{7}\rightsquigarrow\mathbb{Z}^{c}_{p}$
in $\ell^{\infty}\left[\underline{\tau},\overline{\tau}\right]$,
a centered tight Gaussian process with covariance kernel $\mathscr{C}^{c}_{\star,p}$.
This completes the proof of the first part.

The second part follows from similar uniform-in-bandwidth modifications
to the proof of Theorem \ref{thm:QRD bootstrap}.

\section{Proof of Theorem \ref{thm:kink}}

Let $\widehat{\gamma}^{\mathit{eb}}_{p}$ denote the regression coefficient
of $Z_{i}$ in (\ref{eq:adjusted CCFT kink}). Let 
\begin{equation}
\widehat{\mu}_{+,p}\coloneqq\underset{\beta}{\arg\min}\sum_{i}\widehat{w}_{p,i}K\left(\frac{X_{i}}{h}\right)I_{i}\left\{ \widehat{\epsilon}_{i}-r^{\top}_{p}\left(X_{i}\right)\beta\right\} ^{2},\label{eq:mu_hat_rho definition}
\end{equation}
where $\widehat{\epsilon}_{i}\coloneqq Y_{i}-Z^{\top}_{i}\widehat{\gamma}^{\mathit{eb}}_{p}$.
Let $\widehat{\mu}_{-,p}$ be defined analogously. Clearly, $\widehat{\vartheta}^{\mathit{eb}}_{\dagger,p}=\mathrm{e}^{\top}_{p+1,2}\left(\widehat{\mu}_{+,p}-\widehat{\mu}_{-,p}\right)$.

Let 
\[
W^{\dagger\mathit{eb}}_{+,p,i}\coloneqq\mathrm{e}^{\top}_{p+1,2}\left(\Pi^{\mathit{eb}}_{+,p}\right)^{-1}K_{+,p}\left(\frac{X_{i}}{h}\right)\textrm{ and }\bar{W}^{\dagger}_{+,p,i}\coloneqq\mathrm{e}^{\top}_{p+1,2}\left(\varphi\Lambda_{+,p}\right)^{-1}K_{+,p}\left(\frac{X_{i}}{h}\right),
\]
and let $\left(W^{\dagger\mathit{eb}}_{-,p,i},\bar{W}^{\dagger}_{-,p,i}\right)$
be defined analogously. Denote $W^{\dagger\mathit{eb}}_{p,i}\coloneqq W^{\dagger\mathit{eb}}_{+,p,i}-W^{\dagger\mathit{eb}}_{-,p,i}$
and $\bar{W}^{\dagger}_{p,i}\coloneqq\bar{W}^{\dagger}_{+,p,i}-\bar{W}^{\dagger}_{-,p,i}$. 
\begin{lem}
\label{lem:kink}Suppose that the assumptions in the statement of
Lemma \ref{lem: reg weights basic} hold. The following results hold
for all $k\in\mathbb{N}$. (i) If $\mu_{A}$ is uniformly continuous
on $\mathbb{B}\setminus\left\{ 0\right\} $, 
\[
\mathrm{E}\left[\frac{1}{h}\left(\bar{W}^{\dagger}_{+,p}\right)^{k}A\right]=\frac{\mu_{A,+}\omega^{0,k}_{p,1}}{\varphi^{k-1}}+o\left(1\right)\textrm{ and }\mathrm{E}\left[\frac{1}{h}\bar{W}_{+,p}\bar{W}^{\dagger}_{+,p}A\right]=\frac{\mu_{A,+}\varsigma_{p}}{\varphi}+o\left(1\right),
\]
and 
\[
\mathrm{E}\left[\frac{1}{h}\left(\bar{W}^{\dagger}_{-,p}\right)^{k}A\right]=\frac{\mu_{A,-}\omega^{0,k}_{p,1}}{\varphi^{k-1}}+o\left(1\right)\textrm{ for even }k\textrm{,}\textrm{ and }\mathrm{E}\left[\frac{1}{h}\bar{W}_{-,p}\bar{W}^{\dagger}_{-,p}A\right]=-\frac{\mu_{A,-}\varsigma_{p}}{\varphi}+o\left(1\right),
\]
where the sign flip follows from $\int^{0}_{-1}\mathcal{K}_{-,p,0}\left(t\right)\mathcal{K}_{-,p,1}\left(t\right)\mathrm{d}t=-\varsigma_{p}$.
(ii) If $\mu_{A}$ is $\left(p+1\right)$-times continuously differentiable
with uniformly continuous $\mu^{\left(p+1\right)}_{A}$ on $\mathbb{B}\setminus\left\{ 0\right\} $,
\[
\frac{1}{h^{2}}\sum_{i}\widehat{w}_{p,i}W^{\dagger\mathit{eb}}_{+,p,i}\mu_{A}\left(X_{i}\right)=\mu^{\left(1\right)}_{A,+}+\frac{\mu^{\left(p+1\right)}_{A,+}}{\left(p+1\right)!}\omega^{p+1,1}_{+,p,1}h^{p}+o_{p}\left(h^{p}\right).
\]
(iii) If for some $\delta>0$, $\mu_{\left|A\right|^{1+\delta}}$
is bounded on $\mathbb{B}\setminus\left\{ 0\right\} $, 
\[
\frac{1}{nh}\sum_{i}W_{+,p,i}W^{\dagger\mathit{eb}}_{+,p,i}A_{i}-\mathrm{E}\left[\frac{1}{h}\bar{W}_{+,p}\bar{W}^{\dagger}_{+,p}A\right]=o_{p}\left(1\right).
\]
The results in (ii) and (iii) with ``$+$'' replaced by ``$-$''
also hold. 
\end{lem}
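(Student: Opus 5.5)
Each part will be obtained by adapting the proof of Lemma \ref{lem: reg weights basic} to the first-derivative selector $\mathrm{e}_{p+1,2}$, together with the weight expansions already established in Section \ref{sec:Proof-of-Theorem 1}.

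\textbf{Part (i).} I will use the law of iterated expectations and the change of variables $x=hy$, exactly as in Lemma \ref{lem: reg weights basic}(i). The first display follows by identifying $\mathrm{e}^{\top}_{p+1,2}\Lambda^{-1}_{+,p}r_{p}(y)K(y)=\mathcal{K}_{+,p,1}(y)$. This gives $\int_{0}^{1}\mathcal{K}^{k}_{+,p,1}\mathrm{d}y=\omega^{0,k}_{+,p,1}$, and continuity of $\mu_{A}f_{X}$ with the DCT supplies the limit. The cross moment $\bar{W}_{+,p}\bar{W}^{\dagger}_{+,p}$ produces $\int_{0}^{1}\mathcal{K}_{+,p,0}\mathcal{K}_{+,p,1}=\varsigma_{p}$ in the same way. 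On the left side the integral runs over $[-1,0]$. The symmetry $\mathcal{K}_{-,p,\nu}(t)=(-1)^{\nu}\mathcal{K}_{+,p,\nu}(-t)$ yields $\omega^{0,k}_{-,p,1}=(-1)^{k}\omega^{0,k}_{+,p,1}$, which matches the stated value for even $k$. It also yields the sign flip $-\varsigma_{p}$ for the cross term.

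\textbf{Part (ii).} I will Taylor-expand $\mu_{A}$ as in (\ref{eq:mu Taylor}). The reproducing property is now applied with the weighted Gram matrix $\Pi^{\mathit{eb}}_{+,p}$. By its definition, $h^{-1}\sum_{i}\widehat{w}_{p,i}K_{+,p}(X_{i}/h)r^{\top}_{p}(X_{i}/h)=\Pi^{\mathit{eb}}_{+,p}$. Hence $h^{-1}\sum_{i}\widehat{w}_{p,i}W^{\dagger\mathit{eb}}_{+,p,i}r^{\top}_{p}(X_{i})\mu_{+}=\mathrm{e}^{\top}_{p+1,2}\mathrm{H}\mu_{+}=h\mu^{(1)}_{A,+}$ holds exactly, and dividing by $h$ gives the leading term. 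The remainder is $h^{-2}\sum_{i}\widehat{w}_{p,i}W^{\dagger\mathit{eb}}_{+,p,i}X_{i}^{p+1}\mu^{(p+1)}_{A}(\dot tX_{i})/(p+1)!$. I will split it as in (\ref{eq:smoothing bias decompose 2}). The uniform-continuity piece is $o_{p}(h^{p})$ because $\max_{i}n\widehat{w}_{p,i}=1+o_{p}(1)$ and $\|(\Pi^{\mathit{eb}}_{+,p})^{-1}\|=O_{p}(1)$ by (\ref{eq:PI_eb_inv}). For the main piece I will show $h^{-1}\sum_{i}\widehat{w}_{p,i}W^{\dagger\mathit{eb}}_{+,p,i}(X_{i}/h)^{p+1}\to_{p}\omega^{p+1,1}_{+,p,1}$. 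This uses (\ref{eq:w_hat Taylor expansion}) to replace $\widehat{w}_{p,i}$ by $n^{-1}(1+O_{p}(\max_{i}|V^{\top}_{p,i}\widehat{\lambda}_{p}|))$, then (\ref{eq:PI_eb_inv}) and (\ref{eq:PI_hat_inv - PI_inv rate}), then the analogue of (\ref{eq:(X/h)^p+1 term}) with $\mathrm{e}_{p+1,2}$.

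\textbf{Part (iii).} I will decompose as in (\ref{eq:decompose 3}): replace $W_{+,p,i}W^{\dagger\mathit{eb}}_{+,p,i}$ by $\bar{W}_{+,p,i}\bar{W}^{\dagger}_{+,p,i}$, then center. The replacement error is bounded by $\|\Pi^{-1}_{+,p}-(\varphi\Lambda_{+,p})^{-1}\|$ and $\|(\Pi^{\mathit{eb}}_{+,p})^{-1}-(\varphi\Lambda_{+,p})^{-1}\|$, both $o_{p}(1)$, times $(nh)^{-1}\sum_{i}\|K_{+,p}(X_{i}/h)\|^{2}|A_{i}|=O_{p}(1)$ by Markov's inequality. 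The centered term is $o_{p}(1)$ by the von Bahr--Esseen bound (\ref{eq:1+delta absolute moment bound}). The "$-$" versions follow symmetrically.

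I expect the main obstacle to be Part (ii). There the exact reproducing identity must be used with the EB-weighted Gram matrix rather than $\Pi_{+,p}$. One must also check that the $O_{p}(n^{-1})$ normalization factor and the weight perturbations, which are only $o_{p}(1)$ uniformly, do not enter the leading $\mu^{(1)}_{A,+}$ term. They affect only the $h^{p}$ bias term, where $o_{p}(1)$ relative error suffices.
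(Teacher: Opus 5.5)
Your proposal is correct and follows essentially the same route as the paper. Part (i) uses the law of iterated expectations and the change of variables from Lemma \ref{lem: reg weights basic}(i), with $\mathrm{e}_{p+1,2}$ producing $\mathcal{K}_{\pm,p,1}$. Part (ii) uses the Taylor expansion (\ref{eq:mu Taylor}), the exact reproducing identity under $\Pi^{\mathit{eb}}_{+,p}$, and (\ref{eq:w_hat Taylor expansion}) with (\ref{eq:PI_eb_inv}) to obtain $h^{-1}\sum_{i}\widehat{w}_{p,i}W^{\dagger\mathit{eb}}_{+,p,i}(X_{i}/h)^{p+1}\rightarrow_{p}\omega^{p+1,1}_{+,p,1}$. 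Part (iii) uses the split in (\ref{eq:decompose 3}) together with the von Bahr--Esseen bound; your write-up is simply more explicit than the paper's terse proof, for instance on the identity $h^{-1}\sum_{i}\widehat{w}_{p,i}W^{\dagger\mathit{eb}}_{+,p,i}r^{\top}_{p}(X_{i})\mu_{+}=h\mu^{(1)}_{A,+}$ and on the parity and sign bookkeeping for the left side.
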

\begin{proof}[Proof of Lemma \ref{lem:kink}]
Part (i) follows from the same arguments used to prove Lemma \ref{lem: reg weights basic}(i).
For Part (ii), by (\ref{eq:mu Taylor}), 
\begin{equation}
\frac{1}{h^{2}}\sum_{i}\widehat{w}_{p,i}W^{\dagger\mathit{eb}}_{+,p,i}\mu_{A}\left(X_{i}\right)=\frac{1}{h^{2}}\sum_{i}\widehat{w}_{p,i}W^{\dagger\mathit{eb}}_{+,p,i}\left(r^{\top}_{p}\left(X_{i}\right)\mu_{+}\right)+\frac{1}{h^{2}}\sum_{i}\widehat{w}_{p,i}W^{\dagger\mathit{eb}}_{+,p,i}\frac{\mu^{\left(p+1\right)}_{A}\left(\dot{t}X_{i}\right)}{\left(p+1\right)!}X^{p+1}_{i}.\label{eq:smoothing bias decompose 1 ted}
\end{equation}
By the definition of $W^{\dagger\mathit{eb}}_{+,p,i}$, $h^{-2}\sum_{i}\widehat{w}_{p,i}W^{\dagger\mathit{eb}}_{+,p,i}\left(r^{\top}_{p}\left(X_{i}\right)\mu_{+}\right)=\mu^{(1)}_{A,+}$.
Write 
\begin{eqnarray}
\frac{1}{h^{2}}\sum_{i}\widehat{w}_{p,i}W^{\dagger\mathit{eb}}_{+,p,i}\frac{\mu^{\left(p+1\right)}_{A}\left(\dot{t}X_{i}\right)}{\left(p+1\right)!}X^{p+1}_{i} & = & \frac{1}{h}\sum_{i}\widehat{w}_{p,i}W^{\dagger\mathit{eb}}_{+,p,i}\frac{\left(\mu^{\left(p+1\right)}_{A}\left(\dot{t}X_{i}\right)-\mu^{\left(p+1\right)}_{A,+}\right)}{\left(p+1\right)!}\left(\frac{X_{i}}{h}\right)^{p+1}h^{p}\nonumber \\
 &  & +\left(\frac{\mu^{\left(p+1\right)}_{A,+}}{\left(p+1\right)!}\right)\left(\frac{1}{h}\sum_{i}\widehat{w}_{p,i}W^{\dagger\mathit{eb}}_{+,p,i}\left(\frac{X_{i}}{h}\right)^{p+1}\right)h^{p}.\label{eq:smoothing bias decompose 2 ted}
\end{eqnarray}
By (\ref{eq:w_hat Taylor expansion}), Lemma \ref{lem:reg weights lambda},
and arguments similar to those used in the proof of Lemma \ref{lem: reg weights basic}(ii),
one can show that $h^{-1}\sum_{i}\widehat{w}_{p,i}W^{\dagger\mathit{eb}}_{+,p,i}\left(X_{i}/h\right)^{p+1}=\omega^{p+1,1}_{+,p,1}+o_{p}\left(1\right)$
and the first term on the right-hand side of (\ref{eq:smoothing bias decompose 2 ted})
is $o_{p}\left(h^{p}\right)$.

Part (iii) follows from (\ref{eq:PI_hat_inv - PI_inv rate}), \eqref{eq:PI_eb_inv}
and arguments similar to those in the proof of Lemma \ref{lem: reg weights basic}(iii). 
\end{proof}

\begin{proof}[Proof of Theorem \ref{thm:kink}]
By the partitioned regression argument as in CCFT, 
\begin{multline}
\widehat{\gamma}^{\mathit{eb}}_{p}=\\
\left\{ \frac{1}{h}\sum_{i}\widehat{w}_{p,i}K\left(\frac{X_{i}}{h}\right)Z_{i}Z^{\top}_{i}-\sum_{s\in\{+,-\}}\Bigl(\sum_{i}\frac{1}{h}\widehat{w}_{p,i}Z_{i}K^{\top}_{s,p}\left(\frac{X_{i}}{h}\right)\Bigr)\left(\Pi^{\mathit{eb}}_{s,p}\right)^{-1}\Bigl(\sum_{i}\frac{1}{h}\widehat{w}_{p,i}K_{s,p}\left(\frac{X_{i}}{h}\right)Z^{\top}_{i}\Bigr)\right\} ^{-1}\\
\times\left\{ \frac{1}{h}\sum_{i}\widehat{w}_{p,i}K\left(\frac{X_{i}}{h}\right)Z_{i}Y_{i}-\sum_{s\in\{+,-\}}\Bigl(\sum_{i}\frac{1}{h}\widehat{w}_{p,i}Z_{i}K^{\top}_{s,p}\left(\frac{X_{i}}{h}\right)\Bigr)\left(\Pi^{\mathit{eb}}_{s,p}\right)^{-1}\Bigl(\sum_{i}\frac{1}{h}\widehat{w}_{p,i}K_{s,p}\left(\frac{X_{i}}{h}\right)Y_{i}\Bigr)\right\} .\label{eq:gamma_hat FWL}
\end{multline}
By \eqref{eq:w_hat Taylor expansion}, Lemma \ref{lem: reg weights basic}(iv)
and Lemma \ref{lem:reg weights lambda}, we have $\max_{i}\left|n\widehat{w}_{p,i}-1\right|=o_{p}\left(1\right)$.
By this result, 
\[
\frac{1}{h}\sum_{i}\widehat{w}_{p,i}K\left(\frac{X_{i}}{h}\right)Z_{i}Z^{\top}_{i}=\frac{1}{nh}\sum_{i}K\left(\frac{X_{i}}{h}\right)Z_{i}Z^{\top}_{i}+o_{p}\left(1\right).
\]
By the von Bahr--Esseen inequality, the probability limit of the
leading term on the right-hand side is $\mu_{ZZ^{\top},\pm}\varphi/2$.
By similar arguments, we derive the probability limits of all other
terms on the right-hand side of \eqref{eq:gamma_hat FWL}. Then, we
have $\widehat{\gamma}^{\mathit{eb}}_{p}\rightarrow_{p}\gamma$. Let
$\epsilon_{i}\coloneqq Y_{i}-Z^{\top}_{i}\gamma$. By using the first-order
conditions for (\ref{eq:mu_hat_rho definition}), we can write 
\begin{equation}
\widehat{\vartheta}^{\mathit{eb}}_{\dagger,p}=\frac{1}{h^{2}}\sum_{i}\widehat{w}_{p,i}W^{\dagger\mathit{eb}}_{p,i}\epsilon_{i}-\left(\frac{1}{h^{2}}\sum_{i}\widehat{w}_{p,i}W^{\dagger\mathit{eb}}_{p,i}Z^{\top}_{i}\right)\left(\widehat{\gamma}^{\mathit{eb}}_{p}-\gamma\right).\label{eq:theta_hat_dag decompose}
\end{equation}
By \eqref{eq:w_hat Taylor expansion}, Lemma \ref{lem:reg weights lambda},
and Chebyshev's inequality, 
\[
\frac{1}{h^{2}}\sum_{i}\widehat{w}_{p,i}K_{+,p}\left(\frac{X_{i}}{h}\right)\xi_{i}=O_{p}\left(\left(nh^{3}\right)^{-1/2}\right),
\]
where $\xi_{i}\coloneqq Z_{i}-\mu_{Z}\left(X_{i}\right)$. Then, by
this result, (\ref{eq:PI_hat_inv - PI_inv rate}), and \eqref{eq:PI_eb_inv}
\[
\left|\frac{1}{h^{2}}\sum_{i}\widehat{w}_{p,i}W^{\dagger\mathit{eb}}_{+,p,i}\xi_{i}\right|\leq\left\Vert \left(\Pi^{\mathit{eb}}_{+,p}\right)^{-1}\right\Vert \left\Vert \frac{1}{h^{2}}\sum_{i}\widehat{w}_{p,i}K_{+,p}\left(\frac{X_{i}}{h}\right)\xi_{i}\right\Vert =O_{p}\left(\left(nh^{3}\right)^{-1/2}\right).
\]
By Lemma \ref{lem:kink}(ii), the balancing condition $\mu^{\left(1\right)}_{Z,+}=\mu^{\left(1\right)}_{Z,-}$,
and the above result, the coefficients of $\widehat{\gamma}^{\mathit{eb}}_{p}-\gamma$
in \eqref{eq:theta_hat_dag decompose} are $O_{p}\left(\left(nh^{3}\right)^{-1/2}\right)$.
It follows that the second term on the right-hand side of (\ref{eq:theta_hat_dag decompose})
is $o_{p}\left(\left(nh^{3}\right)^{-1/2}\right)$. By Lemma \ref{lem:kink}(ii),
\[
\frac{1}{h^{2}}\sum_{i}\widehat{w}_{p,i}W^{\dagger\mathit{eb}}_{+,p,i}\epsilon_{i}=\mu^{\left(1\right)}_{\epsilon,+}+\frac{\mu^{\left(p+1\right)}_{\epsilon,+}}{\left(p+1\right)!}\omega^{p+1,1}_{+,p,1}h^{p}+\frac{1}{h^{2}}\sum_{i}\widehat{w}_{p,i}W^{\dagger\mathit{eb}}_{+,p,i}\zeta_{i}+o_{p}\left(\left(nh^{3}\right)^{-1/2}\right),
\]
where $\zeta_{i}\coloneqq\epsilon_{i}-\mu_{\epsilon}\left(X_{i}\right)$.
A similar result with $+$ replaced by $-$ also holds. It now follows
from the above result, (\ref{eq:theta_hat_dag decompose}) and the
fact that $\mu^{\left(1\right)}_{\epsilon,+}-\mu^{\left(1\right)}_{\epsilon,-}=\vartheta_{\dagger}$
that 
\begin{equation}
\widehat{\vartheta}^{\mathit{eb}}_{\dagger,p}=\vartheta_{\dagger}+\mathscr{B}_{\dagger,p}h^{p}+\frac{1}{h^{2}}\sum_{i}\widehat{w}_{p,i}W^{\dagger\mathit{eb}}_{p,i}\zeta_{i}+o_{p}\left(\left(nh^{3}\right)^{-1/2}\right).\label{eq:kink leading appr 1}
\end{equation}
By (\ref{eq:w_hat Taylor expansion}) and Lemma \ref{lem:reg weights lambda},
\begin{eqnarray}
\frac{1}{h^{2}}\sum_{i}\widehat{w}_{p,i}W^{\dagger\mathit{eb}}_{p,i}\zeta_{i} & = & \frac{1}{nh^{2}}\sum_{i}\left\{ \iota_{\varrho}\left(1\right)+\iota_{\varrho}'\left(1\right)\left(V^{\top}_{p,i}\widehat{\lambda}_{p}\right)\right\} W^{\dagger\mathit{eb}}_{p,i}\zeta_{i}+o_{p}\left(\left(nh^{3}\right)^{-1/2}\right)\nonumber \\
 & = & \frac{1}{nh^{2}}\sum_{i}W^{\dagger\mathit{eb}}_{p,i}\zeta_{i}-\left(\frac{1}{nh}\sum_{i}W^{\dagger\mathit{eb}}_{p,i}\zeta_{i}V^{\top}_{p,i}\right)\left(\frac{1}{nh}\sum_{i}V_{p,i}V^{\top}_{p,i}\right)^{-1}\left(\frac{1}{nh^{2}}\sum_{i}V_{p,i}\right)\nonumber \\
 &  & +o_{p}\left(\left(nh^{3}\right)^{-1/2}\right).\label{eq:kink leading appr 2}
\end{eqnarray}
By Lemma \ref{lem:kink}(i, iii) applied coordinatewise, 
\begin{eqnarray*}
\frac{1}{nh}\sum_{i}W^{\dagger\mathit{eb}}_{p,i}\zeta_{i}V^{\top}_{p,i} & = & \frac{1}{nh}\sum_{i}\left(W_{+,p,i}W^{\dagger\mathit{eb}}_{+,p,i}+W_{-,p,i}W^{\dagger\mathit{eb}}_{-,p,i}\right)\zeta_{i}\bar{Z}^{\top}_{i}\\
 & = & \left(\frac{\varsigma_{p}}{\varphi}\right)\left(\mu_{\zeta\bar{Z}^{\top},+}-\mu_{\zeta\bar{Z}^{\top},-}\right)+o_{p}\left(1\right).
\end{eqnarray*}
By this result and \eqref{eq:VV limit}, 
\[
\left(\frac{1}{nh}\sum_{i}W^{\dagger\mathit{eb}}_{p,i}\zeta_{i}V^{\top}_{p,i}\right)\left(\frac{1}{nh}\sum_{i}V_{p,i}V^{\top}_{p,i}\right)^{-1}=\left(\frac{\varsigma_{p}}{\omega^{0,2}_{p,0}}\right)\left(\mu_{\zeta\bar{Z}^{\top},+}-\mu_{\zeta\bar{Z}^{\top},-}\right)\mu^{-1}_{\bar{Z}\bar{Z}^{\top},\pm}+o_{p}\left(1\right).
\]
By the above result and (\ref{eq:V_average rate}), 
\begin{eqnarray}
 &  & \left(\frac{1}{nh}\sum_{i}W^{\dagger\mathit{eb}}_{p,i}\zeta_{i}V^{\top}_{p,i}\right)\left(\frac{1}{nh}\sum_{i}V_{p,i}V^{\top}_{p,i}\right)^{-1}\left(\frac{1}{nh^{2}}\sum_{i}V_{p,i}\right)\nonumber \\
 & = & \left(\frac{\varsigma_{p}}{\omega^{0,2}_{p,0}}\right)\left(\mu_{\zeta\bar{Z}^{\top},+}-\mu_{\zeta\bar{Z}^{\top},-}\right)\mu^{-1}_{\bar{Z}\bar{Z}^{\top},\pm}\left(\frac{1}{nh^{2}}\sum_{i}V_{p,i}\right)+o_{p}\left(\left(nh^{3}\right)^{-1/2}\right).\label{eq:kink leading appr}
\end{eqnarray}
Since $\mathrm{E}\left[\zeta\mid X\right]=0$ and $\bar{Z}\coloneqq\left(1,Z^{\top}\right)^{\top}$,
we have $\mu_{\zeta\bar{Z}^{\top},s}=\left(0,\mathrm{Cov}_{s}\left[Z,\epsilon\right]^{\top}\right)$
for $s\in\left\{ -,+\right\} $. Combined with (\ref{eq:V decompose})
and (\ref{eq:block inversion}), the leading term equals 
\[
\left(\frac{\varsigma_{p}}{\omega^{0,2}_{p,0}}\right)\left(\mathrm{Cov}_{+}\left[Z,\epsilon\right]-\mathrm{Cov}_{-}\left[Z,\epsilon\right]\right)^{\top}\left(\mathrm{Var}_{\pm}\left[Z\right]\right)^{-1}\frac{1}{nh^{2}}\sum_{i}W_{p,i}Z_{i}=\frac{1}{nh^{2}}\sum_{i}W_{p,i}Z^{\top}_{i}\gamma_{\dagger,p},
\]
where the last equality uses the definition of $\gamma_{\dagger,p}$.
Recall the residualization $\xi_{i}\coloneqq Z_{i}-\mu_{Z}\left(X_{i}\right)$,
so that $Z_{i}=\mu_{Z}(X_{i})+\xi_{i}$. By (\ref{eq:PSI_Z_hat decompose}),
\[
\frac{1}{nh^{2}}\sum_{i}W_{p,i}Z^{\top}_{i}\gamma_{\dagger,p}=\frac{1}{nh^{2}}\sum_{i}W_{p,i}\xi^{\top}_{i}\gamma_{\dagger,p}+\frac{\left(\mu^{\left(p+1\right)}_{Z,+}\omega^{p+1,1}_{+,p,0}-\mu^{\left(p+1\right)}_{Z,-}\omega^{p+1,1}_{-,p,0}\right)^{\top}\gamma_{\dagger,p}}{\left(p+1\right)!}\cdot h^{p}+o_{p}\left(h^{p}\right),
\]
which converts the $\mathscr{B}_{\dagger,p}$ bias into $\mathscr{B}_{\star\dagger,p}$.
Then, 
\begin{eqnarray}
\widehat{\vartheta}^{\mathit{eb}}_{\dagger,p} & = & \vartheta_{\dagger}+\mathscr{B}_{\dagger,p}h^{p}+\frac{1}{nh^{2}}\sum_{i}W^{\dagger\mathit{eb}}_{p,i}\zeta_{i}-\frac{1}{nh^{2}}\sum_{i}W_{p,i}Z^{\top}_{i}\gamma_{\dagger,p}+o_{p}\left(\left(nh^{3}\right)^{-1/2}\right)\nonumber \\
 & = & \vartheta_{\dagger}+\mathscr{B}_{\star\dagger,p}h^{p}+\frac{1}{nh^{2}}\sum_{i}\left(W^{\dagger\mathit{eb}}_{p,i}\zeta_{i}-W_{p,i}\xi^{\top}_{i}\gamma_{\dagger,p}\right)+o_{p}\left(\left(nh^{3}\right)^{-1/2}\right).\label{eq:theta_hat_dag decompose 2}
\end{eqnarray}
By (\ref{eq:PI_hat_inv - PI_inv rate}), (\ref{eq:PI_eb_inv}), and
Chebyshev's inequality, we have 
\begin{eqnarray*}
\frac{1}{\sqrt{nh}}\sum_{i}\left(W^{\dagger\mathit{eb}}_{p,i}\zeta_{i}-W_{p,i}\xi^{\top}_{i}\gamma_{\dagger,p}\right) & = & \frac{1}{\sqrt{nh}}\sum_{i}\left(W^{\dagger}_{p,i}\zeta_{i}-W_{p,i}\xi^{\top}_{i}\gamma_{\dagger,p}\right)+o_{p}(1)\\
 & = & \frac{1}{\sqrt{nh}}\sum_{i}\left(\bar{W}^{\dagger}_{p,i}\zeta_{i}-\bar{W}_{p,i}\xi^{\top}_{i}\gamma_{\dagger,p}\right)+o_{p}(1).
\end{eqnarray*}
By this result and (\ref{eq:theta_hat_dag decompose 2}), 
\[
\sqrt{nh^{3}}\left(\widehat{\vartheta}^{\mathit{eb}}_{\dagger,p}-\vartheta_{\dagger}-\mathscr{B}_{\star\dagger,p}h^{p}\right)=\frac{1}{\sqrt{nh}}\sum_{i}\left(\bar{W}^{\dagger}_{p,i}\zeta_{i}-\bar{W}_{p,i}\xi^{\top}_{i}\gamma_{\dagger,p}\right)+o_{p}(1).
\]

Since $\mathrm{E}\left[\zeta\mid X\right]=0$ and $\mathrm{E}\left[\xi\mid X\right]=0$,
while $\bar{W}^{\dagger}_{p}$ and $\bar{W}_{p}$ are functions of
$X$ alone, each summand in the preceding display has mean zero by
the LIE. As the summands are i.i.d., the variance of their normalized
sum equals the expectation of the squared summand divided by $h$,
which expands as 
\begin{eqnarray*}
\mathrm{Var}\left[\frac{1}{\sqrt{nh}}\sum_{i}\left(\bar{W}^{\dagger}_{p,i}\zeta_{i}-\bar{W}_{p,i}\xi^{\top}_{i}\gamma_{\dagger,p}\right)\right] & = & \mathrm{E}\left[\frac{1}{h}\left(\bar{W}^{\dagger}_{p}\right)^{2}\zeta^{2}\right]-2\cdot\mathrm{E}\left[\frac{1}{h}\bar{W}^{\dagger}_{p}\bar{W}_{p}\zeta\xi^{\top}\gamma_{\dagger,p}\right]\\
 &  & +\mathrm{E}\left[\frac{1}{h}\bar{W}^{2}_{p}\left(\xi^{\top}\gamma_{\dagger,p}\right)^{2}\right].
\end{eqnarray*}
We evaluate the three pieces. By Lemma \ref{lem:kink}(i) applied
with $A=\zeta^{2}$ (so that $\mu_{\zeta^{2},\pm}=\sigma^{2}$) on
each side and summed, 
\[
\mathrm{E}\left[\frac{1}{h}\left(\bar{W}^{\dagger}_{p}\right)^{2}\zeta^{2}\right]=\frac{\omega^{0,2}_{p,1}\sigma^{2}}{\varphi}+o\left(1\right).
\]
By Lemma \ref{lem: reg weights basic}(i) applied with $A=\left(\xi^{\top}\gamma_{\dagger,p}\right)^{2}$,
\[
\mathrm{E}\left[\frac{1}{h}\bar{W}^{2}_{p}\left(\xi^{\top}\gamma_{\dagger,p}\right)^{2}\right]=\frac{\omega^{0,2}_{p,0}\gamma^{\top}_{\dagger,p}\mathrm{Var}_{\pm}\left[Z\right]\gamma_{\dagger,p}}{\varphi}+o\left(1\right).
\]
For the cross-moment, by Lemma \ref{lem:kink}(i) applied with $A=\zeta\xi^{\top}\gamma_{\dagger,p}$
(so that $\mu_{A,\pm}=\mathrm{Cov}_{\pm}\left[Z,\epsilon\right]^{\top}\gamma_{\dagger,p}$),
\[
\mathrm{E}\left[\frac{1}{h}\bar{W}^{\dagger}_{p}\bar{W}_{p}\zeta\xi^{\top}\gamma_{\dagger,p}\right]=\left(\frac{\varsigma_{p}}{\varphi}\right)\left(\mathrm{Cov}_{+}\left[Z,\epsilon\right]-\mathrm{Cov}_{-}\left[Z,\epsilon\right]\right)^{\top}\gamma_{\dagger,p}+o\left(1\right).
\]
By the definition of $\gamma_{\dagger,p}$, $\mathrm{Cov}_{+}[Z,\epsilon]-\mathrm{Cov}_{-}[Z,\epsilon]=(\omega^{0,2}_{p,0}/\varsigma_{p})\mathrm{Var}_{\pm}[Z]\gamma_{\dagger,p}$;
hence 
\[
2\cdot\mathrm{E}\left[\frac{1}{h}\bar{W}^{\dagger}_{p}\bar{W}_{p}\zeta\xi^{\top}\gamma_{\dagger,p}\right]=\frac{2\omega^{0,2}_{p,0}\gamma^{\top}_{\dagger,p}\mathrm{Var}_{\pm}\left[Z\right]\gamma_{\dagger,p}}{\varphi}+o\left(1\right).
\]
Collecting terms, we have 
\begin{equation}
\mathrm{Var}\left[\frac{1}{\sqrt{nh}}\sum_{i}\left(\bar{W}^{\dagger}_{p,i}\zeta_{i}-\bar{W}_{p,i}\xi^{\top}_{i}\gamma_{\dagger,p}\right)\right]=\mathscr{V}_{\star\dagger,p}+o\left(1\right).\label{eq:variance limit}
\end{equation}
Now we show that $\mathscr{V}_{\star\dagger,p}$ is strictly positive.
To see this, note that $\varphi\mathscr{V}_{\star\dagger,p}$ is the
sum over $s$ of the integrals of $\mathrm{Var}_{s}\left[\mathcal{K}_{s,p,1}\left(t\right)\epsilon-\mathcal{K}_{s,p,0}\left(t\right)Z^{\top}\gamma_{\dagger,p}\right]$
over $\left[0,1\right]$ for $s=+$ and over $\left[-1,0\right]$
for $s=-$. The integrand can be written as $a^{\top}_{s}\left(t\right)\mathrm{Var}_{s}\left[B\right]a_{s}\left(t\right)$
for $a_{s}\left(t\right)\coloneqq\left(\begin{array}{cc}
\mathcal{K}_{s,p,1}\left(t\right) & -\mathcal{K}_{s,p,1}\left(t\right)\gamma^{\top}-\mathcal{K}_{s,p,0}\left(t\right)\gamma^{\top}_{\dagger,p}\end{array}\right)^{\top}$, which is bounded below by $\mathrm{mineig}\left(\mathrm{Var}_{s}\left[B\right]\right)\mathcal{K}^{2}_{s,p,1}\left(t\right)$.
It follows from Assumption \ref{assu:data generating process}(iii)
that $\mathscr{V}_{\star\dagger,p}>0$. Let $\delta>0$ be the moment
exponent in Assumption \ref{assu:data generating process}(iv). By
the $c_{r}$ inequality, 
\begin{eqnarray*}
\sum_{i}\mathrm{E}\left[\left|\frac{1}{\sqrt{nh}}\left(\bar{W}^{\dagger}_{p,i}\zeta_{i}-\bar{W}_{p,i}\xi^{\top}_{i}\gamma_{\dagger,p}\right)\right|^{2+\delta}\right] & = & \frac{1}{\left(nh\right)^{1+\delta/2}}\sum_{i}\mathrm{E}\left[\left|\bar{W}^{\dagger}_{p,i}\zeta_{i}-\bar{W}_{p,i}\xi^{\top}_{i}\gamma_{\dagger,p}\right|^{2+\delta}\right]\\
 & \apprle & \frac{\mathrm{E}\left[h^{-1}\left|\bar{W}^{\dagger}_{p}\right|^{2+\delta}\left|\zeta\right|^{2+\delta}\right]+\mathrm{E}\left[h^{-1}\left|\bar{W}_{p}\right|^{2+\delta}\left|\xi^{\top}\gamma_{\dagger,p}\right|^{2+\delta}\right]}{\left(nh\right)^{\delta/2}}.
\end{eqnarray*}
By a change of variables and the envelope condition in Assumption
\ref{assu:data generating process}(iv), the two expectations in the
numerator are $O\left(1\right)$. By this result, \eqref{eq:variance limit}
and $\mathscr{V}_{\star\dagger,p}>0$, Lyapunov's condition 
\[
\frac{\sum_{i}\mathrm{E}\left[\left|\left(nh\right)^{-1/2}\left(\bar{W}^{\dagger}_{p,i}\zeta_{i}-\bar{W}_{p,i}\xi^{\top}_{i}\gamma_{\dagger,p}\right)\right|^{2+\delta}\right]}{\left(\mathrm{Var}\left[\left(nh\right)^{-1/2}\sum_{i}\left(\bar{W}^{\dagger}_{p,i}\zeta_{i}-\bar{W}_{p,i}\xi^{\top}_{i}\gamma_{\dagger,p}\right)\right]\right)^{1+\delta/2}}\rightarrow0
\]
holds. The conclusion follows from Lyapunov's central limit theorem
and Slutsky's lemma. 
\end{proof}

\end{document}